\documentclass[runningheads]{llncs} 

\usepackage{amsmath,amssymb,amsfonts}
\usepackage{graphicx}
\usepackage{textcomp}
\usepackage{xcolor}
\def\BibTeX{{\rm B\kern-.05em{\sc i\kern-.025em b}\kern-.08em
    T\kern-.1667em\lower.7ex\hbox{E}\kern-.125emX}}

\usepackage{hyperref}

\usepackage{mathpartir}
\usepackage{color}
\usepackage{url}
\usepackage{todonotes} 
\usepackage{enumerate}
\usepackage{algpseudocode}
\usepackage{stmaryrd} 

\usepackage{listings}
\lstset{language=ML}
\lstset{basicstyle=\small\ttfamily,escapeinside={(*}{*)}}

\newtheorem{assumption}[theorem]{Assumption}

\def\EmptyRule#1#2{\ensuremath{\inferrule*[Left={}]{#1}{#2}}}
\def\LabelRule#1#2#3{\ensuremath{{\inferrule*[left={#1}]{#2}{#3}}}}
\def\LabelRuleProof#1#2#3{\ensuremath{{\inferrule*[Left={#1}]{#2}{#3}}}}

\def\redtext#1{{\color{black}{#1}}}        
\def\bluetext#1{{\color{black}{#1}}}       
\def\purpletext#1{{\color{black}{#1}}}     
\def\airforcetext#1{{\color{black}{#1}}}   


\def\orangetext#1{{\color{black}{#1}}}
\def\tealtext#1{{\color{black}{#1}}}       
\def\newtext#1{{\color{black}{#1}}}
\def\trimming#1{{\color{black}{#1}}}


\def\emcase#1{\bluetext{{\textbf{#1}}}}

\def\noteinline#1{\todo[inline,caption={},backgroundcolor = white]{#1}}

\def\tbupdated{{\color{red}{TBU}}}

\newif\ifshow
\showfalse

\def\sep{\ensuremath{\ |\  }}

\def\appapp{\mathbin{\cdot}} 
\def\intType{\ensuremath{\textbf{int}}}
\def\prdVal#1{\ensuremath{\langle #1 \rangle}}

\def\bind#1#2#3{\ensuremath{\text{bind}\ #1 = #2\ \text{in}\ e_2}}
\def\prj#1#2{\ensuremath{\pi_{#1}#2}}
\def\prjtwo#1#2{\ensuremath{\pi_{#1}(#2)}}
\newcommand{\Abstract}[2]{\ensuremath{\lambda #1.#2}}

\def\transit{\ensuremath{\rightarrowtriangle}}
\def\transitml{\ensuremath{\rightarrowtriangle}}

\def\termEnv{\ensuremath{\Gamma}}
\def\typeEnv{\ensuremath{\Delta}}
\def\typeEval#1#2{\ensuremath{#1 \vdash #2}}
\def\typeEvalI#1#2{\ensuremath{#1 \vdash_{\textsf{I}} #2}}
\def\typeEvalP#1#2{\ensuremath{#1 \vdash_{\textsf{P}} #2}}
\def\typeEvalK#1#2{\ensuremath{#1 \vdash_{\mathsf{\kappa}} #2}}
\def\wfmod#1#2#3{\ensuremath{#1 \vdash_{\mathsf{#2}} #3}}

\def\smalleq{\ensuremath{\sqsubseteq}}

\def\valrelation#1#2{\ensuremath{[\![#1]\!]_{#2}}}

\def\termrelation#1#2{\ensuremath{[\![#1]\!]_{#2}^{\mathsf{ev}}}}

\def\trg#1{{\color{black}{\ensuremath{#1}}}}
\def\typeVar{\ensuremath{\alpha}}

\def\reduce{\ensuremath{\rightarrowtriangle^*}}

\def\dom#1{\ensuremath{\textit{dom}(#1)}}
\def\range#1{\ensuremath{\textit{range}(#1)}}

\def\tuple#1{\ensuremath{\langle #1 \rangle}}
\def\tupletwo#1{\ensuremath{( #1 )}}
\def\inttype{\ensuremath{\textbf{int}}}


\def\respect{\ensuremath{\models}}
\def\respectfull{\ensuremath{\models}^{\mathsf{full}}}

\def\indval#1{\ensuremath{\mathcal{I}_V[\![#1]\!]}}
\def\indterm#1{\ensuremath{\mathcal{I}_E[\![#1]\!]}}

\def\policy{\ensuremath{\mathcal{P}}}

\def\encpub#1{\ensuremath{\langle\!\langle #1 \rangle \!\rangle_P}}
\def\enccon#1{\ensuremath{\langle\!\langle #1 \rangle \!\rangle_C}}
\def\conView#1{\ensuremath{\Gamma^{#1}_C}}
\def\pubViewType#1{\ensuremath{\Delta^{#1}_P}}
\def\pubViewTerm#1{\ensuremath{\Gamma^{#1}_P}}

\def\confInp{\ensuremath{\mathbf{V}}}
\def\decFunc{\ensuremath{{\mathbf{F}}}}
\def\TRNI#1{\ensuremath{\text{TRNI}(#1)}}

\def\set#1{\ensuremath{\{#1\}}}

\def\varPolicy#1{\ensuremath{\confInp_{#1}}}
\def\decPolicy#1{\ensuremath{\decFunc_{#1}}}
\def\deltaPol#1{\ensuremath{\delta_{#1}}}

\def\rhoPol#1{\ensuremath{\rho_{#1}}}

\def\logeq{\ensuremath{\sim}}

\def\relDel#1{\ensuremath{\textit{Rel}(#1)}}


\def\unitkind{\ensuremath{\mathsf{1}}}
\def\unitcon{\ensuremath{\star}}
\def\unittype{\ensuremath{\textbf{unit}}}
\def\unitval{\ensuremath{\star}}
\def\unitmod{\ensuremath{\star}}
\def\unitsign{\ensuremath{\textsf{1}}}
\def\basekind{\ensuremath{\mathsf{T}}}
\def\singleton#1{\ensuremath{{\mathsf{S}}(#1)}}
\def\pack#1#2{\ensuremath{\textsf{pack}[#1]\ \textsf{as}\ #2 }}
\def\unpack#1#2#3{\ensuremath{\textsf{unpack}[#1] = #2\ \textsf{in}\ #3 }}
\def\fix#1#2{\ensuremath{\textsf{fix}_{#1}#2}}
\def\letexp#1#2{\ensuremath{\textsf{let}\ #1\ \textsf{in}\ #2}}
\def\extract#1{\ensuremath{\textsf{Ext}\ #1}}
\def\extracttwo#1{\ensuremath{\textsf{Ext(#1)}}}
\def\seal{\ensuremath{:>}}
\def\atcmod#1{\ensuremath{{(\!|#1|\!)}}}
\def\attmod#1{\ensuremath{{\langle\!|#1|\!\rangle}}}

\def\atksign#1{\ensuremath{{(\!|#1|\!)}}}
\def\atcsign#1{\ensuremath{{\langle\!|#1|\!\rangle}}}

\def\lambdagn{\ensuremath{\lambda^{\text{gn}}}}
\def\lambdaap{\ensuremath{\lambda^{\text{ap}}}}
\def\Pign{\ensuremath{\Pi^{\text{gn}}}}
\def\Piap{\ensuremath{\Pi^{\text{ap}}}}
\def\kind{\ensuremath{\textsf{kind}}}
\def\ok{\ensuremath{\textsf{ok}}}
\def\sign{\ensuremath{\textsf{sig}}}

\def\fstsign#1{\ensuremath{{\textsf{Fst}}(#1)}}
\def\fstop#1#2{\ensuremath{{\textsf{Fst}}(#1) \gg  #2}}
\def\fstoptwo#1{\ensuremath{{\textsf{Fst}}(#1)}}

\def\inter#1#2{\ensuremath{[\![#1]\!]_{#2}}}
\def\interset#1#2{\ensuremath{[\![#1]\!]_{#2}^{\mathsf{set}}}}

\def\rev#1{\ensuremath{#1^{\mathsf{ev}}}}
\def\rst#1{\ensuremath{#1^{\mathsf{st}}}}
\def\rpev#1{\ensuremath{#1^{\mathsf{pev}}}}
\def\riev#1{\ensuremath{#1^{\mathsf{iev}}}}
\def\rarb#1#2{\ensuremath{#2^{\mathsf{#1}}}}

\def\precan#1{\ensuremath{\mathsf{PreCand}_{{#1}}}}
\def\simple#1{\ensuremath{\textit{simp}(#1)}}

\def\terminating#1{\ensuremath{#1\downarrow}}

\def\valSet{\ensuremath{\mathsf{Val}}}
\def\conSet{\ensuremath{\mathsf{Con}}}
\def\power#1{\ensuremath{\mathcal{P}(#1)}}

\def\interenv#1{\ensuremath{[\![#1]\!]}}
\def\interenvfull#1{\ensuremath{[\![#1]\!]^{\mathsf{full}}}}
\def\envfull#1{\ensuremath{[\![#1]\!]^{\mathsf{full}}}}

\def\code#1{\text{{\ttfamily{#1}}}}
\def\wrappub#1{\ensuremath{\textit{wrap}_{\policy}(#1)}}

\def\sigmaPol#1{\ensuremath{\sigma_{#1}}}
\def\sigmaPolCon#1{\ensuremath{\sigma_{#1}^{C}}}

\def\concat{\ensuremath{::}}

\def\emptylist{\ensuremath{[]}}

\definecolor{burntumber}{rgb}{0.54, 0.2, 0.14}
\definecolor{airforceblue}{rgb}{0.36, 0.54, 0.66}

\def\predenv#1#2{\ensuremath{{|#1|_{#2}}}}
\def\condenv#1#2#3{\ensuremath{{#1 \in \predenv{#2}{#3}}}}
\def\condenvalt#1#2{\ensuremath{{#1 \respectfull #2}}}

\def\WfeE{{{wf\_nil}}}
\def\WfeC{{{wf\_cn}}}
\def\WfeT{{{wf\_tm}}}
\def\WfeM{{{wf\_md}}}

\def\WfkB{{{ofk\_type}}}
\def\WfkS{{{ofk\_sing}}}
\def\WfkU{{{ofk\_one}}}
\def\WfkF{{{ofk\_pi}}}
\def\WfkP{{{ofk\_sigma}}}

\def\EqkR{{{eqk\_refl}}}
\def\EqkS{{{eqk\_symm}}}
\def\EqkT{{{eqk\_trans}}}
\def\EqkSg{{{eqk\_sing}}}
\def\EqkF{{{eqk\_pi}}}
\def\EqkP{{{eqk\_sigma}}}

\def\SubkE{{{subk\_refl}}}
\def\SubkT{{{subk\_trans}}}
\def\SubkSg{{{subk\_sing\_t}}}
\def\SubkDFn{{{subk\_pi}}}
\def\SubkDPr{{{subk\_sigma}}}

\def\WfcV{{{ofc\_var}}}
\def\WfcDFun{{{ofc\_lam}}}
\def\WfcDApp{{{ofc\_app}}}
\def\WfcPr{{{ofc\_pair}}}
\def\WfcPrj{{{ofc\_pi}}}
\def\WfcUc{{{ofc\_star}}}
\def\WfcUt{{{ofc\_unit}}}
\def\WfcInt{{{ofc\_int}}}
\def\WfcFn{{{ofc\_arrow}}}
\def\WfcPrd{{{ofc\_prod}}}
\def\WfcAbs{{{ofc\_all}}}
\def\WfcEx{{{ofc\_exists}}}
\def\WfcSg{{{ofc\_sing}}}
\def\WfcROne{{{ofc\_extpi}}}
\def\WfcRTwo{{{ofc\_extsigma}}}
\def\WfcSub{{{ofc\_subsume}}}

\def\EqcR{{{eqc\_refl}}}
\def\EqcS{{{eqc\_symm}}}
\def\EqcT{{{eqc\_trans}}}
\def\EqcDFn{{{eqc\_lam}}}
\def\EqcDApp{{{eqc\_app}}}
\def\EqcDPr{{{eqc\_pair}}}
\def\EqcPrj{{{eqc\_pi}}}
\def\EqcFn{{{eqc\_arrow}}}
\def\EqcPrd{{{eqc\_prod}}}
\def\EqcAbs{{{eqc\_all}}}
\def\EqcEx{{{eqc\_exists}}}
\def\EqcSg{{{eqc\_sing}}}
\def\EqcSgTwo{{{eqc\_singelim}}}
\def\EqcROne{{{eqc\_extpi}}}
\def\EqcRTwo{{{eqc\_extpiw}}}
\def\EqcRThree{{{eqc\_extsigma}}}
\def\EqcRFour{{{eqc\_extone}}}
\def\EqcSub{{{eqc\_subsume}}}
\def\EqcBeta{{{eqc\_beta}}}
\def\EqcRFive{{{eqc\_beta1}}}
\def\EqcRSix{{{eqc\_beta2}}}

\def\WfsOne{{{ofs\_one}}}
\def\WfsTwo{{{ofs\_stat}}}
\def\WfsThree{{{ofs\_dyn}}}
\def\WfsFour{{{ofs\_piapp}}}
\def\WfsFive{{{ofs\_pigen}}}
\def\WfsSix{{{ofs\_sigma}}}

\def\SeqR{{{eqs\_refl}}}
\def\SeqS{{{eqs\_symm}}}
\def\SeqT{{{eqs\_trans}}}
\def\SeqAtk{{{eqs\_stat}}}
\def\SeqAtc{{{eqs\_dyn}}}
\def\SeqDFnGn{{{eqs\_pigen}}}
\def\SeqDFnAp{{{eqs\_piapp}}}
\def\SeqDPr{{{eqs\_sigma}}}

\def\SubsEq{{{subs\_refl}}}
\def\SubsT{{{subs\_trans}}}
\def\SubsK{{{subs\_stat}}}
\def\SubsDFnGn{{{subs\_pigen}}}
\def\SubsDFnAp{{{subs\_piapp}}}
\def\SubsDPr{{{subs\_sigma}}}

\def\WttV{{{oft\_var}}}
\def\WttUnit{{{oft\_star}}}
\def\WttInt{{{oft\_int}}}
\def\WttAbs{{{oft\_lam}}}
\def\WttApp{{{oft\_app}}}
\def\WttPr{{{oft\_pair}}}
\def\WttPrj{{{oft\_pi}}}
\def\WttUnv{{{oft\_plam}}}
\def\WttInst{{{oft\_papp}}}
\def\WttPk{{{oft\_pack}}}
\def\WttUp{{{oft\_unpack}}}
\def\WttRc{{{oft\_fix}}}
\def\WttLtOne{{{oft\_lett}}}
\def\WttLtTwo{{{oft\_letm}}}
\def\WttExt{{{oft\_ext}}}
\def\WttEq{{{oft\_equiv}}}

\def\WfmV{{{ofm\_var}}}
\def\WfmU{{{ofm\_star}}}
\def\WfmAtc{{{ofm\_stat}}}
\def\WfmAtt{{{ofm\_dyn}}}
\def\WfmAbsI{{{ofm\_lamgn}}}
\def\WfmAppI{{{ofm\_appgn}}}
\def\WfmAbsP{{{ofm\_lamap}}}
\def\WfmAppK{{{ofm\_appap}}}
\def\WfmPr{{{ofm\_pair}}}
\def\WfmPrj{{{ofm\_pi}}}
\def\WfmUnp{{{ofm\_unpack}}}
\def\WfmLetOne{{{ofm\_lett}}}
\def\WfmLetTwo{{{ofm\_letm}}}
\def\WfmSeal{{{ofm\_seal}}}
\def\WfmROne{{{ofm\_extstat}}}
\def\WfmRTwo{{{ofm\_extpi}}}
\def\WfmRThree{{{ofm\_extsigma}}}
\def\WfmRFour{{{ofm\_forget}}}
\def\WfmSub{{{ofm\_subsume}}}

\def\modop{\ensuremath{\mathbin{\mathit{mod}}}}
\def\oprt{\ensuremath{\oplus}}

\def\policyoe{\ensuremath{{\policy_{\textit{OE}}}}}

\def\policyAut{\ensuremath{{\policy_{\text{Aut}}}}}

\def\comma{\ensuremath{\!:\!}}

\def\dummy{\ensuremath{V_{\policy}}}

\newcommand{\alphaPol}[1][\policy]{\ensuremath{\alpha_{#1}}}
\newcommand{\mPol}[1][\policy]{\ensuremath{m_{#1}}}

\newcommand{\tuplemt}[1]{{\ensuremath{\langle #1 \rangle}}}
\newcommand{\lattice}{\ensuremath{\mathcal{L}}}
\newcommand{\latticethree}{\ensuremath{\mathcal{L}_{3}}}
\newcommand{\latticedm}{\ensuremath{\mathcal{L}_{\diamond}}}

\newcommand{\inpSet}{\ensuremath{\mathbf{I}}}
\newcommand{\lvln}{\ensuremath{\textit{lvl}}}
\newcommand{\lvl}[1]{\ensuremath{\textit{lvl}(#1)}}

\newcommand{\plusOne}{\ensuremath{\textit{plus\_one}}}
\newcommand{\wrapPlusOne}{\ensuremath{\textit{wrap\_plus\_one}}}
\newcommand{\add}{\ensuremath{\textit{add}}}
\newcommand{\wrapAddC}{\ensuremath{\textit{wrap\_addc}}}
\newcommand{\expl}[1]{\quad\quad \text{(#1)}}
\newcommand{\lambdaeq}{\ensuremath{\cong}}
\newcommand{\wraptype}[2][l]{\ensuremath{\textsf{W}_{#1}\ #2}}

\newcommand{\unwrap}{\textit{unwrap}}
\newcommand{\prjcode}[2]{\ensuremath{\texttt{#1}_{#2}}}

\newcommand{\comp}[1]{{\ensuremath{{\mathsf{cp}}_{#1}}}}
\newcommand{\convup}[2]{{\ensuremath{{\mathsf{cvu}}_{#1}^{#2}}}}
\newcommand{\conv}[1]{{\ensuremath{{\mathsf{cv}}_{#1}}}}
\newcommand{\wrap}[1]{{\ensuremath{{\mathsf{wr}}_{#1}}}}
\newcommand{\smallst}{\ensuremath{\sqsubset}}

\newcommand{\contextni}{\ensuremath{\Gamma_{\textit{NI}}}}
\newcommand{\tcontextni}{\ensuremath{\Delta_{\textit{NI}}}}
\newcommand{\hinp}{\textit{hi}}
\newcommand{\minp}{\textit{mi}}
\newcommand{\minpl}{\textit{mi1}}
\newcommand{\minpr}{\textit{mi2}}
\newcommand{\linp}{\textit{li}}
\newcommand{\contextp}[1][\policy]{\ensuremath{\Gamma_{#1}}}
\newcommand{\tcontextp}[1][\policy]{\ensuremath{\Delta_{#1}}}

\newcommand{\unitType}{\ensuremath{\textbf{unit}}}
\newcommand{\unitVal}{\ensuremath{\tuple{}}}
\newcommand{\InferRule}[3][]{\ensuremath{{\inferrule*[left={#1}]{#2}{#3}}}}
\newcommand{\Rel}[1]{\ensuremath{\textit{Rel}(#1)}}
\newcommand{\restrict}[1]{\ensuremath{#1|_{\textit{tv}}}}
\newcommand{\logrel}{\ensuremath{\sim}}

\newcommand{\obs}{\ensuremath{\zeta}}

\newcommand{\full}[1]{\ensuremath{\mathsf{Full}_{#1}}}
\newcommand{\respectni}[1][\obs]{\ensuremath{\models_{#1} \text{NI}}}
\newcommand{\respectfullni}[1][\obs]{\ensuremath{\models_{#1}^{\text{full}} \text{NI}}}

\newcommand{\compcr}{{\ensuremath{\mathsf{comp}}}}
\newcommand{\convcr}{{\ensuremath{\mathsf{conv}}}}
\newcommand{\convupcr}{{\ensuremath{\mathsf{convup}}}}
\newcommand{\wrapcr}{{\ensuremath{{\mathsf{wrap}}}}}

\newcommand{\indvalp}[3][\obs]{\ensuremath{\textit{I}_{#3}^{#1}[\![#2]\!]}}
\newcommand{\indtermp}[3][\obs]{\ensuremath{\textit{I}_{#3}^{#1}[\![#2]\!]^{\textit{ev}}}}
\newcommand{\respectp}[1][\obs]{\ensuremath{\models_{#1}}}
\newcommand{\respectfullp}[1][\obs]{\ensuremath{\models_{#1}^{\text{full}}}}

\newcommand{\dec}[1]{\ensuremath{\textsf{dec}_{#1}}}

\newcommand{\compdec}[1]{\ensuremath{\textsf{cpdec}_{#1}}}

\newcommand{\decSet}{\ensuremath{\mathbf{Dec}}}

\newcommand{\decMap}{\ensuremath{\mathbf{F}}}

\newcommand{\policymul}{{\ensuremath{\mathcal{P}_{\textit{M1}}}}}
\newcommand{\policyglb}{{\ensuremath{\mathcal{P}_{\textit{Av}}}}}

\newcommand{\relDec}[1][f]{\ensuremath{R_{#1}^{\bullet}}}

\newcommand{\btsApply}{\textit{cpNewSt}}
\newcommand{\btsApplyWrap}{\textit{wrapCpNewSt}}
\newcommand{\btsGetMove}{\textit{getMv}}
\newcommand{\btsGetMoveWrap}{\textit{wrapGetMv}}

\newcommand{\btsEndorse}{\textit{endorse}}
\newcommand{\btsDeclassify}{\textit{declassify}}
\newcommand{\btsNotDone}{\textit{notDone}}
\newcommand{\btsNewState}{\textit{newSt}}
\newcommand{\btsNewMove}{\textit{newMv}}


\newcommand{\indvalpnew}[3][\obs]{\ensuremath{\textit{I}_{#3}^{#1}[\![#2]\!]}}
\newcommand{\indtermpnew}[3][\obs]{\ensuremath{\textit{I}_{#3}^{#1}[\![#2]\!]^{\textit{ev}}}}

\title{Type-based Declassification for Free {\normalsize (with appendix)}}

\author{Minh Ngo\inst{1,2}\and
David A. Naumann\inst{1} \and
Tamara Rezk\inst{2}}

\institute{Stevens Institute of Technology \and Inria}

\begin{document}

\maketitle

\begin{abstract}
This work provides a study to demonstrate the potential of
using off-the-shelf programming languages 
and their theories 
to build sound language-based-security tools.
Our study focuses on information flow security  
encompassing declassification policies that allow us to express 
flexible security policies needed for practical requirements. 
We translate security policies, with declassification, 
into an interface for which an unmodified standard typechecker 
can be applied to a source program---if the program typechecks, it provably satisfies the policy.
Our proof reduces security soundness---with declassification---to the mathematical foundation of data abstraction, Reynolds' abstraction theorem.
\end{abstract}

\begin{center}
\color{blue} To appear in ICFEM 2020
\end{center}

\section{Introduction}
\label{sec:intro}
A longstanding challenge for software systems  is the enforcement of security in applications implemented in conventional general-purpose programming languages.
For high assurance, precise mathematical definitions are needed for policies, enforcement mechanism, and program semantics.  The latter, {in particular}, is a major challenge for languages in practical use. {In order to} minimize the cost of assurance, especially over time as systems evolve, it is desirable to leverage work on formal modeling with other goals such as functional verification, equivalence checking, and compilation.  

To be auditable by stakeholders, policy should be expressed in an accessible way.
This is one of several reasons why types play an important role in many works on information flow (IF) security.
For example, Flowcaml~\cite{PottierSimonet:flowcaml} and Jif~\cite{Myers:jif} express policy using types that include IF labels.  
They statically enforce policy using dedicated IF type checking and inference.
Techniques  from type theory are also used in security proofs such as those for Flowcaml and the calculus DCC~\cite{Abadi-etal-99-POPL}.

IF is typically formalized as the preservation of indistinguishability relations between executions.
Researchers have hypothesized that this should be 
an instance of a celebrated semantics basis in type theory: relational parametricity~\cite{Reynolds-83}.
Relational parametricity 
provides an effective basis for formal reasoning about program transformations 
(``theorems for free''~\cite{Wadler-89-FPCA}),
representation independence and information hiding for program verification~\cite{Mitchell96,BanerjeeNaumann02c}.
The connection between IF and relational parametricity has been made precise in 2015, for DCC, by translation to the calculus $F_\omega$ and use of the existing parametricity 
theorem for $F_\omega$~\cite{Bowman-Ahmed-15-ICFP}.
The connection is also made, perhaps more transparently, in a translation of DCC to dependent type theory, 
specifically the calculus of constructions and its parametricity theorem~\cite{AlgehedBernardy19}. 

In this work, we advance the state of the art in the connection between IF and relational parametricity,
guided by three main goals.
One of the goals motivating our work is to \emph{reduce the burden of defining dedicated type checking, inference, and security proofs} for high assurance in programming languages. 
A promising approach towards this goal is the idea of leveraging type abstraction to enforce policy, and in particular, \emph{leveraging the parametricity theorem 
 to obtain security guarantees}.
A concomitant goal is \emph{to do so for practical IF policies}
that encompass selective declassification, which is needed for most policies in practice. 
For example,  a password checker program or a program that  calculates aggregate or statistical information  must be considered  insecure
 without declassification. 


To build on the type system and theory of a language without \emph{a priori}\, IF features, policy needs to be encoded somehow, and the program may need to be transformed.
For example, to prove that a typechecked DCC term is secure with respect to the policy expressed by its type, 
Bowman and Ahmed~\cite{Bowman-Ahmed-15-ICFP} encode the typechecking judgment by nontrivial 
translation of both types and terms into $F_\omega$.
Any translation becomes part of the assurance argument.
Most likely, complicated translation will also make it more difficult to use extant 
type checking/inference (and other development tools) in diagnosing security errors and developing secure code.
This leads us to highlight a third goal, needed to achieve the first goal, namely to
\emph{minimize the complexity of translation.}

There is a major impediment to leveraging type abstraction: few languages are relationally parametric or have parametricity theorems.   
The lack of parametricity can be addressed by focusing on well behaved subsets and leveraging additional features like ownership types that may be available for other purposes 
(e.g., in the Rust language).  
As for the paucity of parametricity theorems, we take hope in the recent advances in machine-checked metatheory, such as correctness of the CakeML and CompCert compilers,
the VST logic for C, the relational logic of Iris.
For parametricity specifically, the most relevant work is Crary's formal proof of parametricity for the ML module calculus~\cite{Crary-POPL-17}.


\paragraph*{Contributions.}  
Our \emph{first contribution} is to translate policies with declassification---in the style of relaxed noninterference~\cite{Li-Zdancewic-05-POPL}---into abstract types in a functional language, in such a way that
typechecking the original program implies its security.  For doing so, we neither rely on  a specialized security type system~\cite{Bowman-Ahmed-15-ICFP} nor on modifications of existing type systems~\cite{Cruz-etal-17-ECOOP}.
A program that typechecks may use the secret inputs parametrically, e.g., storing in data structures, but cannot look at the data until declassification has been applied.
Our \emph{second contribution} is to prove security by direct application of a parametricity theorem.  
We carry out this development for the polymorphic lambda calculus, using the original theorem of Reynolds. 
We also provide an appendix that shows this development  for the ML module calculus using Crary's theorem~\cite{Crary-POPL-17}, enabling the use of ML to check security.

\section{Background: Language and Abstraction Theorem}
\label{sec:abstraction}
To present our results we choose the simply typed and call-by-value lambda calculus, with integers and type variables, for two reasons: (1) the chosen language is similar to the language used in the paper of Reynolds \cite{Reynolds-83} where the abstraction theorem was first proven, and (2) we want to illustrate our encoding approach (\S\ref{sec:trni}) in a minimal calculus. 
This section defines the language we use and recalls the abstraction theorem, a.k.a. parametricity.  
Our language is very close to the one in Reynolds~\cite[\S~2]{Reynolds-83}; we prove the abstraction theorem using contemporary notation.\footnote{
Some readers may find it helpful to consult the following references  for background on logical relations and parametricity:~\cite[Chapt.~49]{harper2016practical},
\cite[Chapt.~8]{Mitchell96}, 
\cite{Crary-05-chapter}, 
\cite{Pitts-05-chapter}.
}  

\paragraph{Language.}

The syntax of the language is as below, where $\typeVar$ denotes a type variable, $x$ a term variable, and $n$ an integer value.
A value is {\em closed} when there is no free term variable in it.
A type is {\em closed} when there is no type variable in it.
\begin{align*}
{\tau}     ::= &\ {\intType}\ |\ {\typeVar}\ |\ {\tau_1 \times \tau_2}\ |\ {\tau_1 \rightarrow \tau_2}\ & &\text{Types} \\
{v}     ::= &\ {n}\ |\  {\prdVal{v,v}}\ |\ {\Abstract{x:\tau}{e}} & & \text{Values}\\
{e}     ::= &\ x\ |\ {v}\ |\ \tuple{e,e}\ |\ {\prj{i}{e}}\ |\ {e_1e_2} & & \text{Terms}\\
{E}           ::= &\ {[.]}\ |\ \tuple{E,e}\ |\ \tuple{v,E}\ |\ \trg{\prj{i}{E}}\ |\ \trg{E\ e}\ |\ v\ E\ & &\text{Eval.\ Contexts} 
\end{align*}
We use small-step semantics, with the reduction relation $\rightarrowtriangle$ defined inductively by these rules.
\begin{mathpar}
\inferrule{}
{\trg{\prj{i}{\tuple{v_1,v_2}} \rightarrowtriangle v_i }}\and
\inferrule{}
{\trg{(\Abstract{x:\tau}{e})v \rightarrowtriangle e[x \mapsto v]} } \and
\inferrule{\trg{e \rightarrowtriangle e'}}
{\trg{E[e] \rightarrowtriangle E[e']}}
\end{mathpar}

We write $e [x\mapsto e']$ for capture-avoiding substitution of $e'$ for free occurrences of $x$ in $e$.
We use parentheses to disambiguate term structure and write
$\rightarrowtriangle^*$ for the reflexive, transitive closure of 
$\rightarrowtriangle$.


A {\em typing context} \trg{\typeEnv} is a set of type variables.
A {\em \purpletext{term context}} \trg{\termEnv} is a mapping from term variables to types,
written like $x:\intType,y:\intType\rightarrow\intType$.
We write \typeEval{\Delta}{\tau} to mean that $\tau$ is {\em well-formed w.r.t. $\Delta$},
that is, all type variables in $\tau$ are in $\Delta$.
We  say that $e$ is {\em typable w.r.t. $\Delta$ and $\Gamma$} (denoted by \typeEval{\Delta,\Gamma}{e}) when there exists a well-formed type $\tau$ such that \typeEval{\Delta,\Gamma}{e:\tau}.
%
%
The derivable typing judgments are defined inductively in Fig.~\ref{fig:typing-rule}.
The rules are to be instantiated only with $\Gamma$ that is well-formed under $\Delta$,
in the sense that \typeEval{\Delta}{\Gamma(x)} for all $x\in\dom{\Gamma}$.
When the \purpletext{term context} and the \purpletext{type context} are empty, we write \typeEval{}{e:\tau}. 

\begin{figure}[!t]
\begin{mathpar}
\LabelRule{FT-Int}
{~}
{\trg{\typeEval{\typeEnv,\termEnv}{n:\intType}}} \and
\LabelRule{FT-Var}
{\trg{x:\tau \in \termEnv}}
{\trg{\typeEval{\typeEnv,\termEnv}{x:\tau}}}\and
\LabelRule{FT-Pair}
{\trg{\typeEval{\typeEnv,\termEnv}{e_1:\tau_1}}\\
\trg{\typeEval{\typeEnv,\termEnv}{e_2:\tau_2}}
}
{\trg{\typeEval{\typeEnv,\termEnv}{\prdVal{e_1,e_2}:\tau_1\times \tau_2}}}\and
\LabelRule{FT-Prj}
{\trg{\typeEval{\typeEnv,\termEnv}{e:\tau_1 \times \tau_2}}
}
{\trg{\typeEval{\typeEnv,\termEnv}{\prj{i}{e}:\tau_i}}}\and
\LabelRule{{FT-Fun}}
{\trg{\typeEval{\typeEnv,\termEnv,x:\tau_1}{e:\tau_2}} \and
}
{\trg{\typeEval{\typeEnv,\termEnv}{\Abstract{x:\tau_1}{e}:\tau_1 \rightarrow \tau_2}}}\\
\LabelRule{FT-App}
{\trg{\typeEval{\typeEnv,\termEnv}{e_1:\tau_1 \rightarrow \tau_2}} \and
\trg{\typeEval{\typeEnv,\termEnv}{e_2:\tau_1}} 
}
{\trg{\typeEval{\typeEnv,\termEnv}{e_1\ e_2: \tau_2}}}
\end{mathpar}
\caption{Typing rules}
\label{fig:typing-rule}
\end{figure}




\paragraph{Logical relation.}

The logical relation is a type-indexed family of relations on values, 
parameterized by given relations for type variables.  
From it, we derive a relation on terms.
The abstraction theorem says the latter is reflexive.

Let $\gamma$ be a {\em term substitution}, i.e., a finite map from term variables to closed values, and $\delta$ be a {\em type substitution}, i.e., a finite map from type variables to closed types.  In symbols:
\begin{align*}
\gamma  ::= & ~.\sep \gamma, x \mapsto v & \text{Term Substitutions}\\
\delta  ::= &~ .\sep \delta, \alpha \mapsto \tau,\ \text{where \typeEval{}{\tau}} & \text{Type Substitutions}
\end{align*}

We say $\gamma$ {\em respects} $\Gamma$ (denoted by $\gamma\respect\Gamma$) 
when $\dom{\gamma} = \dom{\Gamma}$ and \typeEval{}{\gamma(x):\Gamma(x)} for any $x$.
We say $\delta$ {\em respects} $\Delta$ (denoted by $\delta \respect\Delta$) 
when $\dom{\delta} = \Delta$.
Let $\relDel{\tau_1,\tau_2}$ be the set of all binary relations over \bluetext{closed} values of closed types $\tau_1$ and $\tau_2$.
Let $\rho$ be an \airforcetext{\em environment}, a mapping from type variables to relations
$R \in \relDel{\tau_1,\tau_2}$.
We write $\rho \in \relDel{\delta_1,\delta_2}$ 
to say that $\rho$ is compatible with $\delta_1,\delta_2$ as follows: 
$
\rho \in \relDel{\delta_1,\delta_2} \triangleq  \dom{\rho} = \dom{\delta_1} = \dom{\delta_2}  \land  \forall \alpha \in \dom{\rho}.\, \rho(\alpha) \in \relDel{\delta_1(\alpha) ,\delta_2(\alpha)}
$.
The logical relation is inductively defined in Fig.~\ref{fig:logical-relation}, where $\rho\in \relDel{\delta_1,\delta_2}$ for some $\delta_1$ and $\delta_2$.
For any $\tau$, $\valrelation{\tau}{\rho}$ is a relation on closed values.
In addition, $\termrelation{\tau}{\rho}$ is a relation on terms.





\begin{lemma}
\label{lem:logeq:related-term}
\tealtext{Suppose that $\rho \in \relDel{\delta_1,\delta_2}$ for some $\delta_1$ and $\delta_2$. 
For $i \in \set{1,2}$, it follows that:}
\begin{itemize}
\item \tealtext{if $\tuple{v_1,v_2} \in \valrelation{\tau}{\rho}$, then \typeEval{}{v_i: \delta_i(\tau)}, and}
\item \tealtext{if $\tuple{e_1,e_2} \in \termrelation{\tau}{\rho}$, then \typeEval{}{e_i: \delta_i(\tau)}}.
\end{itemize}
\end{lemma}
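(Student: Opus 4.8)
The plan is to prove the value statement by induction on the structure of the type $\tau$, treating $i \in \set{1,2}$ uniformly, and then to read off the term statement from the value statement together with the definition of the evaluation relation $\termrelation{\cdot}{\cdot}$. Throughout I would use that type substitution is a homomorphism on the type constructors, so that $\delta_i(\tau_1 \times \tau_2) = \delta_i(\tau_1) \times \delta_i(\tau_2)$ and $\delta_i(\tau_1 \rightarrow \tau_2) = \delta_i(\tau_1) \rightarrow \delta_i(\tau_2)$, while $\delta_i(\intType) = \intType$.

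First I would dispatch the base cases. When $\tau = \intType$, the relation $\valrelation{\intType}{\rho}$ relates only equal integer values $n$, and since $\delta_i(\intType) = \intType$, rule \textsc{FT-Int} yields $\typeEval{}{v_i : \delta_i(\intType)}$ immediately. The key base case is $\tau = \typeVar$: by definition $\tuple{v_1,v_2} \in \valrelation{\typeVar}{\rho}$ means $\tuple{v_1,v_2} \in \rho(\typeVar)$, and from the hypothesis $\rho \in \relDel{\delta_1,\delta_2}$ we obtain $\rho(\typeVar) \in \relDel{\delta_1(\typeVar),\delta_2(\typeVar)}$. Since $\relDel{\delta_1(\typeVar),\delta_2(\typeVar)}$ consists, by definition, of relations over closed values of the closed types $\delta_1(\typeVar)$ and $\delta_2(\typeVar)$, this delivers $\typeEval{}{v_i : \delta_i(\typeVar)}$ with no further work. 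This is precisely the place where the compatibility condition on $\rho$ is consumed. For the product case $\tau = \tau_1 \times \tau_2$, the related values are pairs $v_i = \prdVal{a_i,b_i}$ with $\tuple{a_1,a_2} \in \valrelation{\tau_1}{\rho}$ and $\tuple{b_1,b_2} \in \valrelation{\tau_2}{\rho}$; the induction hypothesis gives $\typeEval{}{a_i : \delta_i(\tau_1)}$ and $\typeEval{}{b_i : \delta_i(\tau_2)}$, so rule \textsc{FT-Pair} yields $\typeEval{}{\prdVal{a_i,b_i} : \delta_i(\tau_1) \times \delta_i(\tau_2)}$, and this type is exactly $\delta_i(\tau_1 \times \tau_2)$.

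The arrow case $\tau = \tau_1 \rightarrow \tau_2$ is where I expect the main obstacle, and its treatment hinges on the precise clause of Fig.~\ref{fig:logical-relation}. The extensional content of the relation---that related functions send related arguments to related results---does not by itself pin down a typing derivation for the function values, since well-typedness cannot in general be recovered from input/output behaviour alone. In the standard formulation the arrow clause additionally requires the two related values to be well-typed closed $\lambda$-abstractions at the types $\delta_1(\tau_1 \rightarrow \tau_2)$ and $\delta_2(\tau_1 \rightarrow \tau_2)$, in which case $\typeEval{}{v_i : \delta_i(\tau)}$ is simply read off that conjunct. I would therefore appeal to it directly; absent such a conjunct, one would have to reconstruct the domain annotation and the typing of the body from the syntactic shape of values together with rule \textsc{FT-Fun}, which is the only genuinely delicate point.

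Finally, the term statement follows from the value statement. If $\tuple{e_1,e_2} \in \termrelation{\tau}{\rho}$ then, by the definition of the evaluation relation, each $e_i$ reduces to a value $v_i$ with $\tuple{v_1,v_2} \in \valrelation{\tau}{\rho}$, and the value statement just proved gives $\typeEval{}{v_i : \delta_i(\tau)}$. Because reverse subject reduction does not hold, the typing of $e_i$ cannot be inferred from that of $v_i$ alone; rather, it is a defining requirement bundled into membership in $\termrelation{\tau}{\rho}$, from which $\typeEval{}{e_i : \delta_i(\tau)}$ is immediate.
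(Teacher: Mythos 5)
Your overall strategy---induction on the structure of $\tau$ for the value part, then reading the term part off the definition of $\termrelation{\tau}{\rho}$---is the same as the paper's, and your treatment of the $\intType$, type-variable, and product cases, as well as of the term statement, matches the paper's proof step for step. The gap is in the arrow case. You correctly observe that the extensional clause alone cannot determine a typing derivation for the function values, but you then close the case by appealing to a typing conjunct in the arrow clause of the logical relation. In this paper's Fig.~\ref{fig:logical-relation}, rule FR-Fun has no such conjunct: its only premise is $\forall \tuple{v_1',v_2'} \in \valrelation{\tau_1}{\rho}.\ \tuple{v_1\ v_1', v_2\ v_2'} \in \termrelation{\tau_2}{\rho}$. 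So the conjunct you propose to ``read off'' is not available here, and your fallback---reconstructing the domain annotation and the typing of the body from the syntactic shape of the values and FT-Fun---is only gestured at, not carried out.

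The resource you are missing is that the typing requirement lives in the \emph{term} relation rather than the value relation: rule FR-Term demands $\typeEval{}{e_i:\delta_i(\tau)}$ of any pair in $\termrelation{\tau}{\rho}$. The paper's arrow case exploits exactly this: take any $\tuple{v_1',v_2'} \in \valrelation{\tau_1}{\rho}$; FR-Fun gives $\tuple{v_1\ v_1',v_2\ v_2'} \in \termrelation{\tau_2}{\rho}$; the already-established term part of the lemma (i.e., the typing premises of FR-Term) gives $\typeEval{}{v_i\ v_i' : \delta_i(\tau_2)}$; the induction hypothesis at $\tau_1$ gives $\typeEval{}{v_i' : \delta_i(\tau_1)}$; and inverting FT-App then yields $\typeEval{}{v_i : \delta_i(\tau_1)\rightarrow\delta_i(\tau_2)}$, i.e., $\typeEval{}{v_i:\delta_i(\tau_1\rightarrow\tau_2)}$. (This route tacitly assumes $\valrelation{\tau_1}{\rho}$ is inhabited; that is a feature of the paper's argument, not something your proposal introduces.) Without routing through the typing side conditions of FR-Term in this way, the arrow case of your induction cannot be closed under the paper's actual definitions.
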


\begin{figure}[!t]
\begin{mathpar}
\LabelRule{FR-Int}
{~}
{\tuple{n,n} \in \valrelation{\intType}{\rho}}\and
\LabelRule{FR-Pair}{
\tuple{v_1,v_1'} \in \valrelation{\tau_1}{\rho} \\
\tuple{v_2,v_2'} \in \valrelation{\tau_2}{\rho}}
{\tuple{\tuple{v_1,v_2},\tuple{v_1',v_2'}}\in \valrelation{\tau_1 \times \tau_2}{\rho}}\and
\LabelRule{FR-Fun}
{\forall \tuple{v_1',v_2'} \in \valrelation{\tau_1}{\rho}.  \tuple{v_1\ v_1', v_2\ v_2'} \in \termrelation{\tau_2}{\rho} }
{\tuple{v_1, v_2}\in \valrelation{\tau_1 \rightarrow \tau_2}{\rho}}\and
\LabelRule{FR-Var}
{\tuple{v_1,v_2} \in R {\in Rel(\tau_1,\tau_2)}}
{\tuple{v_1,v_2} \in \valrelation{\alpha}{\rho[\alpha \mapsto R]}}
\and
\LabelRule{FR-Term}
{{\typeEval{}{e_1:\delta_1(\tau)}} \\
{\typeEval{}{e_2:\delta_2(\tau)}} \\
e_1 \reduce v_1\\ e_2 \reduce v_2\\ \tuple{v_1,v_2} \in \valrelation{\tau}{\rho}}
{\tuple{e_1,e_2} \in \termrelation{\tau}{\rho}}
\end{mathpar}
\caption{{The logical relation}}
\label{fig:logical-relation}
\end{figure}

We write $\delta(\Gamma)$ to mean a term substitution obtained from $\Gamma$ by applying $\delta$ on the range of $\Gamma$, i.e.:
\[ \dom{\delta(\Gamma)} = \dom{\Gamma} \mbox{ and } \forall x \in \dom{\Gamma}.\, \delta(\Gamma)(x) = \delta(\Gamma(x)).\] 

Suppose that \typeEval{\Delta,\Gamma}{e:\tau}, $\delta\respect \Delta$, and $\gamma \respect \delta(\Gamma)$.
Then we write $\delta\gamma(e)$ to mean the application of $\gamma$ and then $\delta$ to $e$.
{For example, suppose that $\delta(\alpha) = \intType$, $\gamma(x) = n$ for some $n$, and \typeEval{\alpha,x:\alpha}{\lambda y:\alpha.x:\alpha \rightarrow \alpha}, then $\delta\gamma(\lambda y:\alpha.x) = \lambda y:\intType.n$.
}
We write $\tuple{\gamma_1, \gamma_2} \in \valrelation{\Gamma}{\rho}$ for some $\rho \in Rel(\delta_1,\delta_2)$ when $\gamma_1\respect \delta_1(\Gamma)$, $\gamma_2\respect \delta_2(\Gamma)$, and $\tuple{\gamma_1(x),\gamma_2(x)} \in \valrelation{\Gamma(x)}{\rho}$
for all $x \in \dom{\Gamma}$.

\begin{definition}[Logical equivalence]
Terms $e$ and $e'$ are {\em logically equivalent} at $\tau$ in $\Delta$ and $\Gamma$ (written \typeEval{\Delta,\Gamma}{e \logeq e':\tau}) if \typeEval{\Delta,\Gamma}{e:\tau},  \typeEval{\Delta,\Gamma}{e':\tau},  and for all $\delta_1,\delta_2 \respect \Delta$, all \bluetext{$\rho \in \relDel{\delta_1,\delta_2}$}, and all $\tuple{\gamma_1,\gamma_2} \in \valrelation{\Gamma}{\rho}$,
we have 
\( \tuple{\delta_1\gamma_1(e), \delta_2\gamma_2(e')} \in \termrelation{\tau}{\rho}. \)
\end{definition}



\begin{theorem}[Abstraction \cite{Reynolds-83}]
\label{thm:type-abstraction}
If \typeEval{\Delta,\Gamma}{e:\tau}, then \typeEval{\Delta,\Gamma}{e\logeq e:\tau}.
\end{theorem}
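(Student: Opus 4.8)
The plan is to prove Theorem~\ref{thm:type-abstraction} by induction on the derivation of \typeEval{\Delta,\Gamma}{e:\tau}. Unfolding the definition of logical equivalence, it suffices to fix arbitrary $\delta_1,\delta_2\respect\Delta$, $\rho\in\relDel{\delta_1,\delta_2}$, and $\tuple{\gamma_1,\gamma_2}\in\valrelation{\Gamma}{\rho}$, and to establish $\tuple{\delta_1\gamma_1(e),\delta_2\gamma_2(e)}\in\termrelation{\tau}{\rho}$. Before the case analysis I would record two routine auxiliary facts. First, a typing-substitution lemma: since $\delta_i\respect\Delta$ and $\gamma_i\respect\delta_i(\Gamma)$, the term $\delta_i\gamma_i(e)$ is closed and satisfies \typeEval{}{\delta_i\gamma_i(e):\delta_i(\tau)}; this discharges the two typing premises of rule FR-Term in every case. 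Second, a congruence property of reduction under evaluation contexts: if $e\transit e'$ then $E[e]\transit E[e']$, so $E[e]\reduce E[e']$ whenever $e\reduce e'$. The latter lets me compose the witnessed reductions of subterms into one reduction of the whole term; note that determinacy of reduction is not needed, since FR-Term only requires exhibiting \emph{some} related reducts.

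I then proceed by cases on the last typing rule. The base cases are immediate: for FT-Int the term is the closed value $n$ on both sides, related by FR-Int and hence in $\termrelation{\intType}{\rho}$ by FR-Term; for FT-Var we have $\delta_i\gamma_i(x)=\gamma_i(x)$, and the assumption $\tuple{\gamma_1,\gamma_2}\in\valrelation{\Gamma}{\rho}$ gives $\tuple{\gamma_1(x),\gamma_2(x)}\in\valrelation{\Gamma(x)}{\rho}$, so FR-Term applies since values reduce to themselves. The congruence cases FT-Pair and FT-Prj use the induction hypothesis on the immediate subterms to obtain related value reducts, then glue the subterm reductions through the contexts $\tuple{[.],e}$, $\tuple{v,[.]}$, and $\prj{i}{[.]}$; for pairing the component relations combine via FR-Pair, while for projection I invert FR-Pair on the related pair value and observe that $\prj{i}{\tuple{v_1,v_2}}\transit v_i$ on each side.

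The two interesting cases are FT-Fun and FT-App. For FT-Fun, $e=\Abstract{x:\tau_1}{e'}$ is already a value on each side, so it suffices to show the two closures lie in $\valrelation{\tau_1\rightarrow\tau_2}{\rho}$; by FR-Fun I take arbitrary $\tuple{v_1',v_2'}\in\valrelation{\tau_1}{\rho}$, beta-reduce each application, and appeal to the induction hypothesis for $e'$ under the extended substitutions $\gamma_i[x\mapsto v_i']$, which respect $\delta_i(\Gamma,x:\tau_1)$ precisely because the $v_i'$ are related at $\tau_1$. For FT-App with $e=e_1e_2$, the induction hypothesis yields related function values $\tuple{w_1^f,w_2^f}\in\valrelation{\tau_1\rightarrow\tau_2}{\rho}$ and related argument values $\tuple{w_1^a,w_2^a}\in\valrelation{\tau_1}{\rho}$ as reducts of the subterms; using the contexts $[.]\,e_2$ and $w^f\,[.]$ I sequence these into $\delta_i\gamma_i(e_1e_2)\reduce w_i^f\,w_i^a$, and FR-Fun then supplies related final value reducts at $\tau_2$, which combine with the earlier reductions to conclude via FR-Term.

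I expect the main obstacle to be purely the bookkeeping in the FT-Pair and FT-App cases: rule FR-Term only asserts that a term reduces to \emph{some} related value, so the real work is arguing that the whole compound term reduces to exactly the value assembled from the subterm reducts. This hinges on the congruence lemma for evaluation contexts, plus careful application of the typing-substitution lemma and Lemma~\ref{lem:logeq:related-term} to keep discharging the typing side-conditions of FR-Term. None of this is deep, but it is the part most prone to slips.
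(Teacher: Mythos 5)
Your proposal is correct and follows essentially the same route as the paper's proof: induction on the typing derivation, with the same case analysis (FT-Int, FT-Var, FT-Pair, FT-Prj, FT-Fun, FT-App), the same use of FR-Term/FR-Pair/FR-Fun, and the same handling of FT-Fun via extended substitutions $\gamma_i[x\mapsto v_i']$. Your explicit statement of the typing-substitution and evaluation-context congruence lemmas makes precise some bookkeeping the paper leaves implicit, but the argument is the same.
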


\section{Declassification Policies}
\label{sec:local-policies}
Confidentiality policies can be expressed by 
information flows  of confidential sources to public sinks in programs. 
Confidential sources correspond to the secrets that the program receives and public sinks correspond
 to any results  given to a public observer, a.k.a. the attacker. 
These flows can  either  be direct ---e.g. if 
a function, whose result is  public, receives a confidential value as input and directly returns the secret---
 or indirect ---e.g. if a function, whose result is public, receives a confidential boolean value and returns 
 0 if the confidential value is false and 1 otherwise. 
Classification of program sources as confidential or public, a.k.a. {\it security policy},  must be  given by the programmer or security engineer: for a given security policy the program is said to be secure for {\it noninterference} if public resources do not depend on  confidential ones. Thus, noninterference for a program means total independence between
 public and confidential information. 
As simple and elegant as this information flow policy is, noninterference does not permit to consider 
as secure  programs that purposely need to release information in a controlled  way: for example a password-checker function that receives as confidential input a boolean value representing if the system password is equal to the user's input and returns 0  or 1 accordingly. 
In order to consider such intended dependences of public sinks from confidential sources, we need to  consider 
more relaxed security policies than noninterference, a.k.a. {\it declassification policies}.  
{Declassification security policies} allow us to specify controlled ways to release confidential inputs~\cite{sabelfeldSands:jcs2009}. 

Declassification policies that we consider in this work map  confidential inputs to functions, namely  {\em declassification functions}. 
These functions allow the programmer to specify what and how information can be released. 
The formal syntax for  declassification functions in this work is given
below,\footnote{In this paper, the type of confidential inputs is \intType. 
} 
where $n$ is an integer value, and  $\oprt$ represents primitive arithmetic operators. 
\begin{align*}
\tau   & ::=  \intType \sep \tau \rightarrow \tau   & \text{Types}  \\
e      & ::= \lambda x:\tau.e \sep e\ e \sep x \sep n \sep e \oprt e  & \text{Terms}  \\
f    & ::= \lambda x:\intType.e    & \text{Declass.\ Functions}
\end{align*}
The static and dynamic semantics are standard.
To simplify the presentation we suppose that the applications of primitive operators on well-typed arguments terminates.
Therefore, the evaluations of declassification functions on values terminate. A policy simply defines which are the confidential variables and their authorized declassifications.
 For policies we refrain from using concrete syntax and instead give a simple
formalization that facilitates later definitions.

\begin{definition}[Policy]
\label{def:policy}
A policy \policy\ is a tuple  $\tuple{\varPolicy{\policy},\decPolicy{\policy}}$, where $\varPolicy{\policy}$ is a finite set of variables for confidential inputs, and \decPolicy{\policy} is a partial mapping from variables in \varPolicy{\policy} to declassification functions. 
\end{definition}

For simplicity we require that if $f$ appears in the policy then
it is a closed term of type $\intType \rightarrow \tau_f$ for some $\tau_f$. 
In the definition of policies, if a confidential input is not associated with a declassification function, then it cannot be declassified.

\begin{example}[Policy \policyoe\ using $f$]  
\label{ex:policy:odd-even}
Consider  policy \policyoe\ given by $\tuple{\varPolicy{\policyoe}, \decPolicy{\policyoe}}$ where $\varPolicy{\policyoe} = \set{x}$
 and $\decPolicy{\policyoe}(x) = f = \lambda x:\intType.\, x \modop 2$.
 Policy \policyoe\ states that only the parity of the confidential input $x$ can be released to a public observer. 
\end{example}



\ifshow
\begin{assumption}
\label{assumption:alpha-equivalence}
\redtext{We say $f$ {\em appears} in \decPolicy{\policy} when $f \in \range{\decPolicy{\policy}}$ or $f \circ b \in \range{\decPolicy{\policy}}$ (for some $b$)}.
We say that $a$ {\em appears} in \decPolicy{\policy} when $f\circ a \in \range{\decPolicy{\policy}}$ (for some $f$).

\redtext{Suppose that $f$ and $g$ appear in \decPolicy{\policy} and they are alpha-equivalent. By changing bound variables in $f$ and $g$, we can have functions which are syntactically equivalent.
To simplify the presentation, we suppose that for any $f$ and $g$ appearing in  \decPolicy{\policy}, $f$ and $g$ are either syntactically equal or not alpha-equivalent}.

Similarly, we require that if $a$ and $b$ appearing in \decPolicy{\policy}, $a$ and $b$ are either syntactically equal or not alpha-equivalent.
\end{assumption}
\fi

%

\section{Type-based Declassification}
\label{sec:trni}

In this section, we show how to encode declassification policies as standard types in the language of \S~\ref{sec:abstraction}, we define and we prove our free theorem. 
We consider a termination-sensitive~\cite{imposs} information flow security property,\footnote{Our security property is termination sensitive but programs in the language always terminate.
In the development for ML (in an appendix), programs may not terminate and the security property is also termination sensitive.
} with declassification,  called type-based relaxed noninterference (TRNI) and  taken from Cruz et al~\cite{Cruz-etal-17-ECOOP}. 
 It is important to notice that our developement, in this section,
studies the reuse for security of standard programming languages type systems
together with  soundness proofs for security for free by using the abstraction theorem. 
In contrast, Cruz et al~\cite{Cruz-etal-17-ECOOP}
 use a modified type system for security and prove soundness from scratch, without apealing to parametricity. 

 Through this section, we consider a fixed policy \policy\ (see Def.~\ref{def:policy}) given by $\tuple{\varPolicy{\policy},\decPolicy{\policy}}$.
We treat free variables in a program as inputs and, without loss of generality, we assume that there are two kinds of inputs: integer values, which are considered as confidential,  and declassification functions, which are fixed according to policy. 
A public input can be encoded as a confidential input that can be declassified via the identity function.
We consider terms without type variables as source programs.
That is we consider terms $e$ s.t. for all type substitutions $\delta$,  $\delta(e)$ is syntactically the same as $e$.\footnote{%
An example of a term with type variables is $\lambda x:\alpha.x$. We can easily check that there exists a type substitutions $\delta$ s.t. $\delta(e)$ is syntactically different from $e$ (e.g. for $\delta$ s.t. $\delta(\alpha) = \intType$, $\delta(e) = \lambda x:\intType.x$).}


\subsection{Views and indistinguishability}


We provide two term contexts to define TRNI, called the confidential view and public view. The first view represents an observer that can access confidential inputs, while the second one represents an observer that can only observe declassified inputs.
The views are defined using fresh term and type variables.

\paragraph{Confidential view.}
Let $\confInp_\top = \set{x\sep x \in \varPolicy{\policy} \setminus \dom{\decPolicy{\policy}}}$ be the set of inputs that cannot be declassified.
First we define the encoding for these inputs as a term context:
$$
\Gamma^{\policy}_{C,\top}  \triangleq \set{x:\intType \sep x \in \confInp_\top}.
$$
Next, we specify the encoding of confidential inputs that can be declassified.
To this end, define \enccon{\_,\_} as follows,
where $f:\inttype\to\tau_f$ is in \policy.
\begin{align*}
\enccon{x,f}        & \triangleq \set{x:\intType, x_f:\intType \rightarrow {\tau_f}}
\end{align*}
Finally, we write \conView{\policy} for the term context encoding the confidential view for \policy.
$$\conView{\policy} \triangleq 
\Gamma^{\policy}_{C,\top} \cup \bigcup_{x \in \dom{\decPolicy{\policy}}} \enccon{x,\decPolicy{\policy}(x)}.$$
We assume that, for any $x$, the variable $x_f$ in the result of \enccon{x,\decPolicy{\policy}(x)} is 
distinct from the variables in \varPolicy{\policy}, distinct from each other, 
and distinct from $x_{f'}$ for distinct $f'$.
We make analogous assumptions in later definitions.


From the construction, \conView{\policy} is a mapping, and for any $x \in \dom{\conView{\policy}}$, it follows that $\conView{\policy}(x)$ is a closed type.
Therefore, \conView{\policy} is well-formed for the empty set of type variables, so it can be used in typing judgments of the form  \typeEval{\conView{\policy}}{e:\tau}.


\begin{example}[Confidential view]
\label{ex:trni:con-view}
For \policyoe\ in Example~\ref{ex:policy:odd-even}, the confidential view is: $\conView{\policyoe} = x:\intType, x_f:\intType \rightarrow \intType$.
\end{example}

\paragraph{Public view.}
The basic idea is to encode policies by using type variables.
First we define the encoding for confidential inputs that cannot be declassified.
We define a set of type variables, $\Delta^{\policy}_{P,\top}$ 
and a mapping $\Gamma^{\policy}_{P,\top}$ 
for confidential inputs that cannot be declassified.
\[
\Delta^{\policy}_{P,\top}  \triangleq \set{\alpha_x \sep x\in \confInp_\top}
\qquad
\Gamma^{\policy}_{P,\top}  \triangleq \set{x:\alpha_x \sep x \in \confInp_\top}
\]
This gives the program access to $x$ at an opaque type.

In order to define the encoding for confidential inputs that can be declassified, 
we define \encpub{\_,\_}:
\begin{align*}
\encpub{x,f}        & \triangleq \tuple{\set{\alpha_f}, \set{x:\alpha_f, x_f:\alpha_f \rightarrow  {\tau_f}}}
\end{align*}
The first form will serve to give the program access to $x$ only via function variable $x_f$ that we will 
ensure is interpreted as the policy function $f$.
We define a type context \pubViewType{\policy} and term context \pubViewTerm{\policy} 
that comprise the public view, as follows.
$$\tuple{\pubViewType{\policy}, \pubViewTerm{\policy}} \triangleq 
\tuple{\Delta^{\policy}_{P,\top},\Gamma^{\policy}_{P,\top}} \cup \bigcup_{x \in \dom{\decPolicy{\policy}}} \encpub{x,\decPolicy{\policy}(x)}, $$
where $\tuple{S_1,S_1'} \cup \tuple{S_2,S_2'} = \tuple{S_1\cup S_2, S_1'\cup S_2'}$.


\ifshow
\noteinline{Should we keep the following lemma?}
\begin{lemma}[Correctness of the construction - 1]
\begin{enumerate}
\item $x:\alpha_x \in \pubViewTerm{\policy}$ and $\alpha_x \in \pubViewType{\policy}$ iff $x \in \varPolicy{\policy} \setminus \dom{\decPolicy{\policy}}$
\item $x:\alpha_f, x_f:\alpha_f \rightarrow \intType \in \pubViewTerm{\policy}$, and $\alpha_f \in \pubViewType{\policy}$ iff $\decPolicy{\policy}(x) = f$
\item $x:\alpha_{f\circ a}, x_a:\alpha_{f\circ a} \rightarrow \alpha_f, x_f:\alpha_f\rightarrow \intType \in \pubViewTerm{\policy}$, and $\alpha_{f\circ a}, \alpha_f  \in \pubViewType{\policy}$ iff $\decPolicy{\policy}(x) = f \circ a$.
\end{enumerate}
\end{lemma}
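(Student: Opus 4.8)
The plan is to unfold the definitions of $\pubViewType{\policy}$ and $\pubViewTerm{\policy}$ and observe that, under the stated freshness assumptions, the union defining $\tuple{\pubViewType{\policy},\pubViewTerm{\policy}}$ behaves as a \emph{disjoint} combination of contexts: every type variable and every term binding that appears can be traced back to exactly one clause of the union. First I would record the shape of the contribution of each clause. The clause $\tuple{\Delta^{\policy}_{P,\top},\Gamma^{\policy}_{P,\top}}$ contributes the type variable $\alpha_x$ together with the binding $x:\alpha_x$ precisely for each $x \in \confInp_\top = \varPolicy{\policy}\setminus\dom{\decPolicy{\policy}}$, whereas each $\encpub{x,\decPolicy{\policy}(x)}$ contributes only the fresh type variable(s) and the function-typed bindings associated with that particular $x$. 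The freshness and distinctness conditions imposed after the definition of $\conView{\policy}$, assumed analogously for the public view, guarantee that the $\alpha_x$ used for a non-declassifiable input is distinct from every $\alpha_f$ and $\alpha_{f\circ a}$, and that the generated term-variable names ($x_f$, $x_a$) are distinct from one another and from the variables of $\varPolicy{\policy}$.

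With this bookkeeping in place, each item follows by a short case analysis. For item~(1), the right-to-left direction is immediate: if $x \in \varPolicy{\policy}\setminus\dom{\decPolicy{\policy}}$ then $x \in \confInp_\top$, so $\alpha_x \in \Delta^{\policy}_{P,\top} \subseteq \pubViewType{\policy}$ and $x:\alpha_x \in \Gamma^{\policy}_{P,\top} \subseteq \pubViewTerm{\policy}$. For the left-to-right direction I would argue by cases on which clause could have introduced the binding $x:\alpha_x$; since every clause $\encpub{x',\decPolicy{\policy}(x')}$ introduces bindings whose type is some $\alpha_f$, $\alpha_{f\circ a}$, or an arrow type---none of which equals the bare variable $\alpha_x$, by distinctness---the binding must originate from $\Gamma^{\policy}_{P,\top}$, forcing $x \in \confInp_\top$. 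Items~(2) and~(3) are handled by the same disjointness argument, now isolating the clause $\encpub{x,f}$ (respectively $\encpub{x,f\circ a}$): each such clause introduces exactly the type variables and bindings listed in the lemma, and no other clause can introduce bindings on the same program variable $x$ of the same shape, so the stated collection of bindings is present in the public view if and only if $\decPolicy{\policy}(x)$ equals $f$ (respectively $f\circ a$).

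The one step I expect to require genuine care---the main obstacle---is making the disjointness claim fully rigorous, i.e.\ verifying that the freshness assumptions on the type variables $\alpha_x,\alpha_f,\alpha_{f\circ a}$ and on the function variables $x_f,x_a$ are strong enough that no two clauses of the union ever collide on a variable. This is exactly where the assumption that the introduced variables are distinct from the policy variables, from each other, and from $x_{f'}$ for distinct $f'$ does the heavy lifting; with it, the union is effectively a disjoint sum and every membership question reduces to inspecting a single clause. I would also flag that, because item~(3) mentions $\decPolicy{\policy}(x) = f\circ a$, the proof presupposes the (routine) extension of the encoding $\encpub{\_,\_}$ to composed declassification functions, whose defining clause should introduce precisely $\set{\alpha_{f\circ a},\alpha_f}$ together with the three bindings named in the lemma.
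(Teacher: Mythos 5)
Your proposal is correct and follows essentially the same route as the paper, whose entire proof is the one-line remark that the claims follow directly from the definitions of $\pubViewTerm{\policy}$ and $\pubViewType{\policy}$; your elaboration simply makes explicit the disjointness bookkeeping (via the freshness assumptions) that the paper leaves implicit. Your caveat about item~(3) presupposing the extension of $\encpub{\_,\_}$ to composed declassifiers $f\circ a$ is also accurate and consistent with how the paper uses that case elsewhere.
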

\begin{proof}
Directly from the definition of \pubViewTerm{\policy} and \pubViewType{\policy}.
\end{proof}
\fi

\begin{example}[Public view]
\label{ex:trni:pub-view}
For \policyoe, the typing context in the public view has one type variable: $\pubViewType{\policyoe}  = \alpha_f$.
The term context in the public view is $\pubViewTerm{\policyoe} = x:\alpha_{f},\  x_f: \alpha_f \rightarrow \intType$.

\end{example}

From the construction, \pubViewTerm{\policy} is a mapping, and for any $x \in \dom{\pubViewTerm{\policy}}$, it follows that $\pubViewTerm{\policy}(x)$ is well-formed in \pubViewType{\policy} (i.e. $\typeEval{\pubViewType{\policy}}{\pubViewTerm{\policy}(x)}$).
Thus, \pubViewTerm{\policy} is well-formed in the typing context \pubViewType{\policy}.
Therefore, \pubViewType{\policy} and \pubViewTerm{\policy} can be used in typing judgments of the form  \typeEval{\pubViewType{\policy},\pubViewTerm{\policy}}{e:\tau}.

Notice that in the public view of a policy, types of variables for confidential inputs are not \intType. Thus, the public view does not allow programs where concrete declassifiers are applied to confidential input variables even when the applications are semantically correct according to the policy (e.g. for \policyoe, the program $f\ x$ does not typecheck in the public view). 
Instead, programs should apply named declassifers (e.g.  for \policyoe, the program $x_f\ x$ is well-typed in the public view).



%

\paragraph{Indistinguishability.} 
The security property TRNI is defined in a usual way,
using partial equivalence relations called indistinguishability.
To define indistinguishability, we define a type substitution \deltaPol{\policy} such that
$\deltaPol{\policy} \respect \pubViewType{\policy}$, as follows:
\begin{equation}\label{def:deltaPol}
\text{for all}\ \alpha_x,\alpha_f\ \text{in}\ \pubViewType{\policy},\  \text{let}\ \deltaPol{\policy}(\alpha_x) = \deltaPol{\policy}(\alpha_{f})  = \intType.
\end{equation}


The inductive definition of indistinguishability for a policy \policy\ is presented in Figure~\ref{fig:indistinguishability}, where $\alpha_x$ and $\alpha_f$ are from \pubViewType{\policy}.
Indistinguishability is defined for $\tau$ s.t. \typeEval{\pubViewType{\policy},\pubViewTerm{\policy}}{\tau}.
The definitions of indistinguishability for \intType\ and $\tau_1 \times \tau_2$ are straightforward.
We say that two functions are indistinguishable at $\tau_1 \rightarrow \tau_2$ if on any indistinguishable inputs they generate indistinguishable outputs.
Since we use $\alpha_x$  to encode confidential integer values that cannot be declassified, any integer values $v_1$ and $v_2$ are indistinguishable, according to rule Eq-Var1.
Notice that $\deltaPol{\policy}(\alpha_x) = \intType$.
Since we use $\alpha_f$ to encode confidential integer values that can be declassified via $f$ where \typeEval{}{f:\intType \rightarrow\tau_f}, we say that $\tuple{v_1,v_2} \in \indval{\alpha_f}$ when $\tuple{f\ v_1,f\ v_2} \in \indterm{\tau_f}$.

\begin{example}[Indistinguishability]
\label{ex:trni:ind}
For \policyoe\  (of Example~\ref{ex:policy:odd-even}),
two values $v_1$ and $v_2$ are indistinguishable at $\alpha_f$ when both of them are even numbers or odd numbers.
\begin{equation*}
\indval{\alpha_f} = \{\tuple{v_1,v_2} \sep \typeEval{}{v_1:\intType},\ \typeEval{}{v_2:\intType},\ (v_1 \modop 2) =_\intType (v_2 \modop 2)\}.
\end{equation*}
We write $e_1 =_\intType e_2$ to mean that {$e_1 \reduce v$ and $e_2 \reduce v$ for some integer value $v$}.



\end{example}

\begin{figure}[!t]
\begin{mathpar}
\LabelRule{Eq-Int}
{~}
{\tuple{n,n} \in \indval{\intType}}\and
\LabelRule{Eq-Pair}{
\tuple{v_1, v_1'} \in \indval{\tau_1} \\
\tuple{v_2, v_2'} \in \indval{\tau_2}}
{\tuple{\tuple{v_1,v_2},\tuple{v_1',v_2'}} \in \indval{\tau_1\times\tau_2}}\and
\LabelRule{Eq-Fun}
{\forall \tuple{v_1',v_2'}: \tuple{v_1',v_2'} \in \indval{\tau_1}.  \tuple{v_1\ v_1',  v_2\ v_2'} \in \indterm{\tau_2}}
{\tuple{v_1,v_2} \in \indval{\tau_1 \rightarrow \tau_2}}\and
\LabelRule{Eq-Var1}
{\typeEval{}{v_1,v_2:\deltaPol{\policy}(\alpha_x)}
}
{\tuple{v_1,v_2} \in \indval{\alpha_x}} \and
\LabelRule{Eq-Var2}
{\typeEval{}{v_1,v_2:\deltaPol{\policy}(\alpha_f)} \\ 
\purpletext{\tuple{f\ v_1, f\ v_2} \in \indterm{\tau_f}}
}
{\tuple{v_1,v_2} \in \indval{\alpha_f}}\and
\LabelRule{Eq-Term}
{\typeEval{}{e_1,e_2:\deltaPol{\policy}(\tau)} \\ 
 e_1 \reduce v_1 \\ 
 e_2 \reduce v_2 \\ 
 \tuple{v_1,v_2} \in \indval{\tau}
}
{\tuple{e_1,e_2} \in \indterm{\tau}}
\end{mathpar}
\caption{Indistinguishability}
\label{fig:indistinguishability}
\end{figure}


Term substitutions $\gamma_1$ and $\gamma_2$ are called {\em indistinguishable w.r.t. \policy} (denoted by $\tuple{\gamma_1,\gamma_2} \in \indval{\policy}$) if the following hold.
\begin{itemize}
\item \purpletext{$\gamma_1 \respect \deltaPol{\policy}(\pubViewTerm{\policy})$ and $\gamma_2 \respect \deltaPol{\policy}(\pubViewTerm{\policy})$},
\item for all $x_f \in \dom{\pubViewTerm{\policy}}$, $\gamma_1(x_f) = \gamma_2(x_f) = f$, 
\item for all other $x \in \dom{\pubViewTerm{\policy}}$, $\tuple{\gamma_1(x), \gamma_2(x)} \in \indval{\pubViewTerm{\policy}(x)}$.
\end{itemize}
Note that each $\gamma_i$ maps $x_f$ to the specific function $f$ in the policy.
Input variables are mapped to indistinguishable values.

We now define type-based relaxed noninterference w.r.t. \policy\ {for a type $\tau$ well-formed in \pubViewType{\policy}}.
It says that indistinguishable inputs lead to indistinguishable results.
\begin{definition} 
\label{def:trni}
A term $e$ is \TRNI{\policy,\tau} provided that \typeEval{\conView{\policy}}{e}, and \typeEval{\pubViewType{\policy}}{\tau}, and for all 
$\tuple{\gamma_1,\gamma_2} \in \indval{\policy}$
we have $\tuple{\gamma_1(e), \gamma_2(e)} \in \indterm{\tau}$.
\end{definition}

Notice that if a term is well-typed in the public view then by replacing all type variables in it with \intType, we get a term which is also well-typed in the confidential view (that is, if \typeEval{\pubViewType{\policy},\pubViewTerm{\policy}}{e:\tau}, then \typeEval{\conView{\policy}}{\delta(e):\delta(\tau)} where $\delta$ maps all type variables in \pubViewType{\policy} to \intType).
However, 
Definition~\ref{def:trni} also requires that the term $e$ is itself well-typed in the confidential view.
This merely ensures that the definition is applied, as intended, to programs that do not contain type variables.


{The definition of TRNI is indexed by a type for the result of the term.
The type can be interpreted as constraining the observations to be made by the public observer.
We are mainly interested in concrete output types, which express that the observer can do whatever they like and has full knowledge of the result.
Put differently, TRNI for an abstract type expresses security under the assumption that the observer is somehow forced to respect the abstraction.
Consider the policy \policyoe\ (of Example~\ref{ex:policy:odd-even}) where $x$ can be declassified via $f = \lambda x:\intType.x\modop 2$.
As described in Example~\ref{ex:trni:pub-view}, $\pubViewType{\policyoe}  = \alpha_f$ and $\pubViewTerm{\policyoe} = x:\alpha_{f},\  x_f: \alpha_f \rightarrow \intType$. 
We have that the program $x$ is \TRNI{\policyoe,\alpha_f} since the observer cannot do anything to $x$ except for applying $f$ to $x$ which is allowed by the policy.
This program, however, is not \TRNI{\policyoe,\intType} since the observer can apply any function of the type $\intType \rightarrow \tau'$ (for some closed $\tau'$), including the identity function, to $x$ and hence can get the value of $x$.}

\begin{example}
\label{ex:ml:trni:def}
The program $x_f\ x$ is \TRNI{\policyoe, \intType}.
Indeed, for any arbitrary $\tuple{\gamma_1,\gamma_2} \in \indval{\policy}$, we have that $\gamma_1(x_f) = \gamma_2(x_f) = f = \lambda x:\intType.x\modop 2$, and $\tuple{v_1,v_2} \in \indval{\alpha_f}$, where $\gamma_1(x) = v_1$ and $\gamma_2(x) = v_2$ for some $v_1$ and $v_2$.
When we apply $\gamma_1$ and $\gamma_2$ to the program, we get respectively $v_1 \modop 2$ and $v_2 \modop 2$.
Since $\tuple{v_1,v_2} \in \indval{\alpha_f}$, as described in Example~\ref{ex:trni:ind}, $(v_1 \modop 2) =_\intType (v_2 \modop 2)$.
Thus, $\tuple{\gamma_1(x_f\ x),\gamma_2(x_f\ x)} \in \indterm{\intType}$.
Therefore, the program $x_f\ x$ satisfies the definition of TRNI.
\end{example}

\ifshow
\begin{remark}[\orangetext{On encoding}]
\tealtext{We consider a policy where $x$ can be declassified via $f\circ a$ and $y$ can be declassified via $f \circ b$.
W.r.t. the encoding presented above, we have two term variables for $f$ and we can accept programs $f(a\ x)$ and $f(b\ y)$}. 

We consider a policy where $x$ can be declassified via $f \circ a$ and $y$ can be declassified via $g\circ a$ where $f\neq g$, where \typeEval{}{f,g:\intType \rightarrow \tau} for some $\tau$.
W.r.t. the encoding, we have:
\begin{itemize}
\item $x: \alpha_{f\circ a}$, $x_a: \alpha_{f\circ a} \rightarrow \alpha_f$, $x_f: \alpha_f \rightarrow\tau$,
\item $y: \alpha_{f\circ b}$, $y_a: \alpha_{f\circ a} \rightarrow \alpha_g$, $y_g: \alpha_g \rightarrow\tau$.
\end{itemize}

One may think that we should have only one term variable for $a$ (instead of having two) (e.g. $z_a$).
However, by doing this, we may accept programs $f(a\ y)$ and $g(a\ x)$ which are actually violate the policy.
\end{remark}
\fi


\subsection{Free theorem: typing in the public view implies security}
In order to prove security ``for free'', i.e., as consequence of Theorem~\ref{thm:type-abstraction},
we define $\rhoPol{\policy}$ as follows:
\begin{itemize}
\item for all $\alpha_x \in \pubViewType{\policy}$, $\rhoPol{\policy}(\alpha_x) = \indval{\alpha_x}$,
\item for all $\alpha_f \in \pubViewType{\policy}$, $\rhoPol{\policy}(\alpha_f) = \indval{\alpha_{f}}$.
\end{itemize} 
It is a relation on the type substitution $\deltaPol{\policy}$ defined in Eqn.~(\ref{def:deltaPol}).
\begin{lemma}
\label{lem:relation-assignment}
\tealtext{$\rhoPol{\policy}\in \relDel{\deltaPol{\policy},\deltaPol{\policy}}$}.
\end{lemma}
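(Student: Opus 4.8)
The goal is to show that $\rhoPol{\policy}\in \relDel{\deltaPol{\policy},\deltaPol{\policy}}$. Unpacking the definition of $\relDel{\delta_1,\delta_2}$ stated just before the logical relation, this amounts to checking two things: first, a domain condition, $\dom{\rhoPol{\policy}} = \dom{\deltaPol{\policy}} = \dom{\deltaPol{\policy}}$; and second, for every type variable $\alpha$ in the common domain, that $\rhoPol{\policy}(\alpha) \in \relDel{\deltaPol{\policy}(\alpha),\deltaPol{\policy}(\alpha)}$, i.e.\ that $\rhoPol{\policy}(\alpha)$ is a binary relation over closed values of the closed types $\deltaPol{\policy}(\alpha)$ and $\deltaPol{\policy}(\alpha)$.

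The plan is to verify each condition directly from the definitions of $\rhoPol{\policy}$ and $\deltaPol{\policy}$. For the domain condition, I would observe that both $\rhoPol{\policy}$ and $\deltaPol{\policy}$ are defined by clauses ranging exactly over the type variables $\alpha_x$ and $\alpha_f$ in $\pubViewType{\policy}$: $\deltaPol{\policy}$ is specified in Eqn.~(\ref{def:deltaPol}) to assign $\intType$ to every such variable, and $\rhoPol{\policy}$ is specified to assign $\indval{\alpha_x}$ (resp.\ $\indval{\alpha_f}$) to every such variable. Hence $\dom{\rhoPol{\policy}} = \pubViewType{\policy} = \dom{\deltaPol{\policy}}$, and the two copies of $\deltaPol{\policy}$ trivially share this domain.

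For the second condition, fix an arbitrary $\alpha \in \pubViewType{\policy}$; it is either some $\alpha_x$ or some $\alpha_f$. In both cases $\deltaPol{\policy}(\alpha) = \intType$, which is a closed type, so it suffices to check that $\rhoPol{\policy}(\alpha)$ is a binary relation on closed values of type $\intType$. In the $\alpha_x$ case, $\rhoPol{\policy}(\alpha_x) = \indval{\alpha_x}$, whose inhabitants are pairs $\tuple{v_1,v_2}$ satisfying rule Eq-Var1, which requires precisely $\typeEval{}{v_1,v_2:\deltaPol{\policy}(\alpha_x)}$, i.e.\ both $v_i$ are closed values of type $\intType$. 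In the $\alpha_f$ case, $\rhoPol{\policy}(\alpha_f) = \indval{\alpha_f}$, whose inhabitants satisfy rule Eq-Var2, which likewise requires $\typeEval{}{v_1,v_2:\deltaPol{\policy}(\alpha_f)}$ (plus the declassification condition on $f$), so again both components are closed values of the closed type $\intType$. Thus $\rhoPol{\policy}(\alpha) \in \relDel{\deltaPol{\policy}(\alpha),\deltaPol{\policy}(\alpha)}$ in either case.

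This argument is essentially a definition unfolding, so there is no deep obstacle; the statement is a well-definedness/type-compatibility check rather than a substantive theorem. The only point requiring a little care is to confirm that the indistinguishability relations $\indval{\alpha_x}$ and $\indval{\alpha_f}$ really do relate only \emph{closed values of type $\intType$}, which follows because the side conditions $\typeEval{}{v_1,v_2:\deltaPol{\policy}(\alpha_x)}$ and $\typeEval{}{v_1,v_2:\deltaPol{\policy}(\alpha_f)}$ built into rules Eq-Var1 and Eq-Var2, together with $\deltaPol{\policy}(\alpha_x)=\deltaPol{\policy}(\alpha_f)=\intType$, pin down the type, and because typing at the empty contexts forces closedness.
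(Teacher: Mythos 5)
Your proof is correct and follows essentially the same route as the paper's: both reduce the claim to checking that any $\tuple{v_1,v_2}$ in $\rhoPol{\policy}(\alpha)$ consists of closed values of type $\deltaPol{\policy}(\alpha)=\intType$, which is read off directly from the typing premises of rules Eq-Var1 and Eq-Var2. Your version merely adds the (trivial) explicit domain check that the paper leaves implicit.
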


\purpletext{From Lemma~\ref{lem:relation-assignment}, we can write \valrelation{\tau}{\rhoPol{\policy}} or \termrelation{\tau}{\rhoPol{\policy}} for any $\tau$ such that \typeEval{\pubViewType{\policy}}{\tau}}.
We next establish the relation between \termrelation{\tau}{\rho} and \indterm{\tau}:
under the interpretation corresponding to the desired policy \policy, they are equivalent.
In other words, indistinguishability is an instantiation of the logical relation.

\begin{lemma}
\label{lem:logeqpol-indis:eq}
For any $\tau$ such that  $\typeEval{\pubViewType{\policy}}{\tau}$, we have
\( \tuple{v_1,v_2} \in \valrelation{\tau}{\rhoPol{\policy}} \mbox{ iff } \tuple{v_1,v_2} \in \indval{\tau} \),
and also 
\( \tuple{e_1,e_2} \in \termrelation{\tau}{\rhoPol{\policy}} \mbox{ iff } \tuple{e_1,e_2} \in \indterm{\tau} \).
\end{lemma}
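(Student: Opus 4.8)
The plan is to prove both biconditionals by a single simultaneous induction on the structure of the type $\tau$, which, since $\typeEval{\pubViewType{\policy}}{\tau}$, is built only from $\intType$, the variables $\alpha_x,\alpha_f \in \pubViewType{\policy}$, products, and arrows. The two claims are mutually dependent—the value relation at an arrow type appeals to the term relation, and the term relation appeals back to the value relation—so I would carry the value-level statement and the term-level statement together through the induction. Before starting, I would record that by Lemma~\ref{lem:relation-assignment} we have $\rhoPol{\policy}\in\relDel{\deltaPol{\policy},\deltaPol{\policy}}$, so the two type substitutions associated with $\rhoPol{\policy}$ both equal $\deltaPol{\policy}$; this is exactly what makes the typing side-conditions of rule FR-Term coincide with those of rule Eq-Term.

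For the base cases I would simply compare the defining clauses. At $\intType$, rules FR-Int and Eq-Int define literally the same set $\set{\tuple{n,n}}$, so the value-level iff is immediate. At a variable $\alpha_x$ (resp.\ $\alpha_f$), rule FR-Var gives $\valrelation{\alpha_x}{\rhoPol{\policy}} = \rhoPol{\policy}(\alpha_x)$, and $\rhoPol{\policy}$ was defined precisely so that $\rhoPol{\policy}(\alpha_x) = \indval{\alpha_x}$ (resp.\ $\rhoPol{\policy}(\alpha_f)=\indval{\alpha_f}$); hence the two relations are equal by construction. No unfolding of the $\alpha_f$ clause (Eq-Var2) into $\tau_f$ is needed, since both sides are fixed relations that $\rhoPol{\policy}$ identifies.

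For the inductive cases I would observe that the relevant rules of Fig.~\ref{fig:logical-relation} and Fig.~\ref{fig:indistinguishability} are structurally identical. At $\tau_1\times\tau_2$, rules FR-Pair and Eq-Pair impose the same conjunction of conditions on the components, so the value-level iff follows from the induction hypotheses at $\tau_1$ and $\tau_2$. At $\tau_1\rightarrow\tau_2$, rules FR-Fun and Eq-Fun both quantify over related arguments and require related results; applying the value-level IH at $\tau_1$ to rewrite the quantifier domain and the term-level IH at $\tau_2$ to rewrite the conclusion yields the equivalence. Finally, the term-level iff at any $\tau$ reduces to the value-level iff at the same $\tau$: rules FR-Term and Eq-Term agree on the typing premises (using $\delta_1=\delta_2=\deltaPol{\policy}$ as noted above) and on the evaluation premises $e_i\reduce v_i$, and differ only in requiring $\tuple{v_1,v_2}$ to lie in $\valrelation{\tau}{\rhoPol{\policy}}$ versus $\indval{\tau}$, which coincide by the value-level claim already established for $\tau$.

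I do not expect any genuinely hard step: the lemma is close to definitional, since $\rhoPol{\policy}$ is engineered to make the variable cases match. The only points demanding care are bookkeeping ones—carrying the value and term statements simultaneously so that the arrow case is justified, and invoking Lemma~\ref{lem:relation-assignment} to align the typing side-conditions of the two term-level rules. If anything is the crux, it is making this alignment explicit, so that FR-Term and Eq-Term are seen to differ only in their value-relation premise.
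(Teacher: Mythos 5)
Your proposal is correct and follows essentially the same route as the paper: a structural induction on $\tau$ carrying the value-level and term-level claims together, with the variable cases discharged by the definition of $\rhoPol{\policy}$, the product and arrow cases by matching the structurally identical rules via the induction hypotheses, and the term-level biconditional reduced to the value-level one after aligning the typing premises of FR-Term and Eq-Term through $\deltaPol{\policy}$. The only detail the paper makes explicit that you elide is an appeal to a canonical-forms lemma in the product case to expose the pair structure of the values, but this is minor bookkeeping.
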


{By analyzing the type of $\pubViewTerm{\policy}(x)$, we can establish the relation of $\gamma_1$ and $\gamma_2$ when $\tuple{\gamma_1,\gamma_2} \in \indval{\policy}$}.
\begin{lemma}
\label{lem:term-sub:indis:logeq}
\tealtext{If $\tuple{\gamma_1,\gamma_2} \in \indval{\policy}$, then $\tuple{\gamma_1,\gamma_2} \in \valrelation{\pubViewTerm{\policy}}{\rhoPol{\policy}}$}.
\end{lemma}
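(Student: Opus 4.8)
The goal is to prove Lemma~\ref{lem:term-sub:indis:logeq}: if $\tuple{\gamma_1,\gamma_2} \in \indval{\policy}$, then $\tuple{\gamma_1,\gamma_2} \in \valrelation{\pubViewTerm{\policy}}{\rhoPol{\policy}}$. The plan is to unfold the definition of $\tuple{\gamma_1,\gamma_2} \in \valrelation{\pubViewTerm{\policy}}{\rhoPol{\policy}}$ and discharge each of its three obligations using the hypotheses packaged in $\tuple{\gamma_1,\gamma_2} \in \indval{\policy}$ together with Lemma~\ref{lem:logeqpol-indis:eq} and Lemma~\ref{lem:relation-assignment}. Recall that $\tuple{\gamma_1,\gamma_2} \in \valrelation{\pubViewTerm{\policy}}{\rhoPol{\policy}}$ requires, for $\rhoPol{\policy} \in \relDel{\deltaPol{\policy},\deltaPol{\policy}}$ (which holds by Lemma~\ref{lem:relation-assignment}), that $\gamma_1 \respect \deltaPol{\policy}(\pubViewTerm{\policy})$, that $\gamma_2 \respect \deltaPol{\policy}(\pubViewTerm{\policy})$, and that $\tuple{\gamma_1(x),\gamma_2(x)} \in \valrelation{\pubViewTerm{\policy}(x)}{\rhoPol{\policy}}$ for every $x \in \dom{\pubViewTerm{\policy}}$.

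First I would handle the typing (respect) conditions. The definition of $\tuple{\gamma_1,\gamma_2} \in \indval{\policy}$ directly gives $\gamma_1 \respect \deltaPol{\policy}(\pubViewTerm{\policy})$ and $\gamma_2 \respect \deltaPol{\policy}(\pubViewTerm{\policy})$, so these two obligations are immediate and require no further work. Since $\dom{\pubViewTerm{\policy}}$ is the same in both relations, the domain conditions line up.

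The substantive step is the pointwise relatedness condition, which I would prove by a case split on the shape of $x \in \dom{\pubViewTerm{\policy}}$, mirroring the construction of the public view. There are essentially two kinds of variables. For a named declassifier variable $x_f$, the definition of $\indval{\policy}$ forces $\gamma_1(x_f) = \gamma_2(x_f) = f$; I then need $\tuple{f,f} \in \valrelation{\alpha_f \rightarrow \tau_f}{\rhoPol{\policy}}$. Unfolding rule FR-Fun, this requires that for every $\tuple{v_1',v_2'} \in \valrelation{\alpha_f}{\rhoPol{\policy}}$ we have $\tuple{f\ v_1', f\ v_2'} \in \termrelation{\tau_f}{\rhoPol{\policy}}$; by Lemma~\ref{lem:logeqpol-indis:eq} this is equivalent to showing that $\tuple{v_1',v_2'} \in \indval{\alpha_f}$ implies $\tuple{f\ v_1', f\ v_2'} \in \indterm{\tau_f}$, which is exactly the content of rule Eq-Var2 in the definition of $\indval{\alpha_f}$. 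For an input variable $x$ (whose public-view type is an opaque $\alpha_x$ or $\alpha_f$, or in the degenerate non-declassifiable case the type assigned by $\Gamma^{\policy}_{P,\top}$), the definition of $\indval{\policy}$ supplies $\tuple{\gamma_1(x),\gamma_2(x)} \in \indval{\pubViewTerm{\policy}(x)}$, and Lemma~\ref{lem:logeqpol-indis:eq} converts this indistinguishability statement into the required $\tuple{\gamma_1(x),\gamma_2(x)} \in \valrelation{\pubViewTerm{\policy}(x)}{\rhoPol{\policy}}$.

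The main obstacle, and the only place demanding care, is the named-declassifier case $x_f$: there the public-view type is a \emph{function} type $\alpha_f \rightarrow \tau_f$ rather than a bare type variable, so I cannot simply cite $\rhoPol{\policy}(\alpha_f) = \indval{\alpha_f}$ and must instead push through the function clause of the logical relation and appeal to Lemma~\ref{lem:logeqpol-indis:eq} at the argument and result types to realign the quantified condition with rule Eq-Var2. Everything else reduces to a direct translation between $\indval{\cdot}$ and $\valrelation{\cdot}{\rhoPol{\policy}}$ furnished by Lemma~\ref{lem:logeqpol-indis:eq}, and the bookkeeping that $\dom{\pubViewTerm{\policy}}$ partitions cleanly into declassifier variables and input variables by the freshness assumptions made when defining $\pubViewTerm{\policy}$.
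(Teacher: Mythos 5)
Your proposal is correct and follows essentially the same route as the paper's proof: a case split on the shape of $\pubViewTerm{\policy}(x)$, with the opaque-typed input variables handled by a direct appeal to Lemma~\ref{lem:logeqpol-indis:eq}, and the declassifier variables $x_f$ handled by unfolding FR-Fun, invoking Eq-Var2, and translating back and forth with Lemma~\ref{lem:logeqpol-indis:eq}. The only cosmetic difference is that you make the respect/domain conditions and the appeal to Lemma~\ref{lem:relation-assignment} explicit, which the paper compresses into a one-line remark about domains.
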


The main result of this section is that a term is TRNI at $\tau$ if it has type $\tau$ in the public view that encodes the policy.

\begin{theorem}
\label{thm:trni}
If $e$ has no type variables and $\typeEval{\pubViewType{\policy},\pubViewTerm{\policy}}{e:\tau}$, then $e$ is \TRNI{\policy,\tau}.
\end{theorem}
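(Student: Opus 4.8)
The plan is to obtain TRNI \emph{for free} by instantiating the Abstraction Theorem (Theorem~\ref{thm:type-abstraction}) at the single ``diagonal'' interpretation that encodes the policy, namely $\delta_1 = \delta_2 = \deltaPol{\policy}$ and $\rho = \rhoPol{\policy}$. This instantiation is legitimate because $\deltaPol{\policy} \respect \pubViewType{\policy}$ holds by construction (Eqn.~(\ref{def:deltaPol})) and $\rhoPol{\policy} \in \relDel{\deltaPol{\policy},\deltaPol{\policy}}$ by Lemma~\ref{lem:relation-assignment}. Applying Theorem~\ref{thm:type-abstraction} to the hypothesis $\typeEval{\pubViewType{\policy},\pubViewTerm{\policy}}{e:\tau}$ gives $\typeEval{\pubViewType{\policy},\pubViewTerm{\policy}}{e \logeq e:\tau}$, and unfolding logical equivalence at the chosen interpretation supplies the main engine of the proof.

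First I would dispatch the two typing side-conditions of Definition~\ref{def:trni}. Well-formedness $\typeEval{\pubViewType{\policy}}{\tau}$ is immediate from $\typeEval{\pubViewType{\policy},\pubViewTerm{\policy}}{e:\tau}$ by the routine regularity of the typing judgment. For typability in the confidential view I would invoke the observation recorded just after Definition~\ref{def:trni}: from $\typeEval{\pubViewType{\policy},\pubViewTerm{\policy}}{e:\tau}$ one derives $\typeEval{\conView{\policy}}{\deltaPol{\policy}(e):\deltaPol{\policy}(\tau)}$, using the identity $\deltaPol{\policy}(\pubViewTerm{\policy}) = \conView{\policy}$, which holds term-variable-by-term-variable because $\deltaPol{\policy}$ sends every $\alpha_x,\alpha_f$ to $\intType$ and each policy result type $\tau_f$ is closed. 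Since $e$ contains no type variables, $\deltaPol{\policy}(e) = e$, so $\typeEval{\conView{\policy}}{e}$ holds.

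For the core implication I would fix an arbitrary pair $\tuple{\gamma_1,\gamma_2} \in \indval{\policy}$. Lemma~\ref{lem:term-sub:indis:logeq} upgrades this to $\tuple{\gamma_1,\gamma_2} \in \valrelation{\pubViewTerm{\policy}}{\rhoPol{\policy}}$, so the instantiated logical equivalence yields $\tuple{\deltaPol{\policy}\gamma_1(e),\deltaPol{\policy}\gamma_2(e)} \in \termrelation{\tau}{\rhoPol{\policy}}$. Lemma~\ref{lem:logeqpol-indis:eq} then rewrites the logical relation at $\tau$ as indistinguishability, giving $\tuple{\deltaPol{\policy}\gamma_1(e),\deltaPol{\policy}\gamma_2(e)} \in \indterm{\tau}$. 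The final move is to erase the type substitution: since $e$ has no type variables and each $\gamma_i$ maps variables to values typable under the empty type context (hence themselves free of type variables), the term $\gamma_i(e)$ carries no type variables and therefore $\deltaPol{\policy}\gamma_i(e) = \gamma_i(e)$. This produces exactly $\tuple{\gamma_1(e),\gamma_2(e)} \in \indterm{\tau}$, the third condition of TRNI, completing the argument.

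The hard part is not any single deduction but aligning the bookkeeping so that parametricity applies verbatim: verifying the identity $\deltaPol{\policy}(\pubViewTerm{\policy}) = \conView{\policy}$ and, above all, justifying the erasure $\deltaPol{\policy}\gamma_i(e) = \gamma_i(e)$. This last step is precisely where the hypothesis that $e$ has no type variables is indispensable --- without it, applying $\deltaPol{\policy}$ after $\gamma_i$ could change the program and break the correspondence between the plain substitution $\gamma_i(e)$ used in TRNI and the $\delta\gamma$-substitution used in logical equivalence. Everything else is a direct chaining of Lemmas~\ref{lem:relation-assignment}, \ref{lem:logeqpol-indis:eq}, and \ref{lem:term-sub:indis:logeq} with Theorem~\ref{thm:type-abstraction}, which is exactly the ``for free'' character we aim for.
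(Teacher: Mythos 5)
Your proposal is correct and follows essentially the same route as the paper's own proof: instantiate the abstraction theorem at $\deltaPol{\policy}$ and $\rhoPol{\policy}$, bridge indistinguishable substitutions to the logical relation via Lemma~\ref{lem:term-sub:indis:logeq}, convert back via Lemma~\ref{lem:logeqpol-indis:eq}, and use the absence of type variables both to erase $\deltaPol{\policy}$ from $\gamma_i(e)$ and to obtain typability in the confidential view. The only (immaterial) difference is the order in which the erasure step and the conversion to indistinguishability are performed.
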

\begin{proof}
From the abstraction theorem (Theorem~\ref{thm:type-abstraction}), for all $\delta_1,\delta_2 \respect \pubViewType{\policy}$, for all {$\tuple{\gamma_1,\gamma_2} \in \valrelation{\pubViewTerm{\policy}}{\rho}$}, and for all \bluetext{$\rho \in Rel(\delta_1,\delta_2)$}, it follows that
$$\tuple{\delta_1\gamma_1(e), \delta_2\gamma_2(e)} \in \termrelation{\tau}{\rho}.$$

Consider $\tuple{\gamma_1,\gamma_2} \in \indval{\policy}$.
\purpletext{Since $\tuple{\gamma_1,\gamma_2} \in \indval{\policy}$, from Lemma~\ref{lem:term-sub:indis:logeq}, we have that $\tuple{\gamma_1,\gamma_2} \in \valrelation{\pubViewTerm{\policy}}{\rhoPol{\policy}}$}.
Thus, we have that  $\tuple{\deltaPol{\policy}\gamma_1(e), \deltaPol{\policy}\gamma_2(e)} \in \termrelation{\tau}{\rhoPol{\policy}}$.
{Since $e$ has no type variable}, we have that $\deltaPol{\policy}\gamma_i(e) = \gamma_i(e)$.
Therefore, $\tuple{\gamma_1(e), \gamma_2(e)} \in \termrelation{\tau}{\rhoPol{\policy}}$.
Since $\tuple{\gamma_1(e), \gamma_2(e)} \in \termrelation{\tau}{\rhoPol{\policy}}$, from Lemma~\ref{lem:logeqpol-indis:eq}, it follows that $\tuple{\gamma_1(e),\gamma_2(e)} \in \indterm{\tau}$.
In addition, since $e$ has no type variable and \typeEval{\pubViewType{\policy},\pubViewTerm{\policy}}{e:\tau}, we have that \typeEval{\deltaPol{\policy}(\pubViewTerm{\policy})}{e:\deltaPol{\policy}(\tau)} and hence, \typeEval{\conView{\policy}}{e}.
Therefore, $e$ is \TRNI{\policy,\tau}.
\end{proof}

\begin{example}[Typing implies TRNI]
\label{ex:trni:program:trni}
Consider the policy \policyoe. As described in Examples~\ref{ex:trni:con-view} and~\ref{ex:trni:pub-view}, the confidential view
$\conView{\policyoe}$ is $x:\intType, x_f:\intType \rightarrow \intType$ 
and the public view $\pubViewType{\policyoe},\pubViewTerm{\policyoe}$ is 
$\alpha_f, x:\alpha_f, x_f: \alpha_f \rightarrow \intType$.
We look at the program $x_f\ x$. 
We can easily verify that \typeEval{\conView{\policyoe}}{x_f\ x: \intType} and \typeEval{\pubViewType{\policyoe},\pubViewTerm{\policyoe}}{x_f\ x: \intType}.
Therefore, by Theorem~\ref{thm:trni}, the program is \TRNI{\policyoe,\intType}.



%
%
\end{example}

\begin{example}
\label{ex:trni:program:non-trni}
If a program is well-typed in the confidential view but not 
\TRNI{\policy,\tau} for some $\tau$ well-formed in the public view of \policy, then the type of the program in the public view is not $\tau$ or the program is not well-typed in the public view. 
In policy $\policyoe$,  from Example~\ref{ex:trni:program:trni}, the public view is $\alpha_f, x:\alpha_f, x_f: \alpha_f \rightarrow \intType$.
We first look at the program $x$ that is not \TRNI{\policyoe,\intType} since $x$ itself is confidential and cannot be directly declassified.
In the public view of the policy, 
the type of this program is $\alpha_{f}$ which is not $\intType$.
We now look at the program $x \modop 3$ that
 is not \TRNI{\policyoe,\alpha_f} since it takes indistinguishable inputs at $\alpha_f$ (e.g. $2$ and $4$) and produces results that are not indistinguishable at $\alpha_f$ (e.g. $2 = 2 \modop 3$, $1=4 \modop 3$, and $\tuple{2,1} \not\in \indval{\alpha_f}$). 
We can easily verify that this program is not well-typed in the public view since the type of $x$ in the public view is $\alpha_f$, while \modop\ expects arguments of the \intType\ type.
\end{example}

\begin{remark}[Extension]
\label{rem:extension}
Our encoding can be extended to support richer policies (details in appendix).
To support policies where an input $x$ can be declassified via two declassifiers $f:\intType\rightarrow \tau_f$ and $g:\intType \rightarrow \tau_g$ for some $\tau_f$ and $\tau_g$, we use type variable $\alpha_{f,g}$ as the type for $x$ and use $\alpha_{f,g} \rightarrow \tau_f$ and $\alpha_{f,g} \rightarrow \tau_g$ as types for $x_f$ and $x_g$.
To support policies where multiple inputs can be declassified via a declassifier, e.g. inputs $x$ and $y$ can be declassified via $f=\lambda z:\intType \times \intType.(\prj{1}{z}+\prj{2}{z})/2$, we introduce a new term variable $z$ which is corresponding to a tuple of two inputs $x$ and $y$ and we require that only $z$ can be declassified. The type of $z$ is $\alpha_f$ and two tuples \tuple{v_1,v_2} and \tuple{v_1',v_2'} are indistinguishable at $\alpha_f$ when $f\ \tuple{v_1,v_2} = f\ \tuple{v_1',v_2'}$.
\end{remark}

\section{Related Work}
\label{sec:related-work}

\vspace{-1cm}. 

\paragraph{Typing secure information flow.} 

Pottier and Simonet~\cite{Pottier-Simonet-02-POPL} implement FlowCaml~\cite{PottierSimonet:flowcaml}, 
the first type system for information flow analysis dealing with a real-sized programming language (a large fragment of OCaml), and they prove soundness.  In comparison with our results, we do not consider any imperative features;
they do not consider any form of declassification, 
their type system significantly departs from ML typing, 
and their security proof is not based on an abstraction theorem.
An interesting question is whether their type system can be 
translated to system F or some other calculus with an abstraction theorem.
FlowCaml provides type inference for security types.
Our work relies on the Standard ML type system to enforce security. 
Standard ML provides type inference, which endows our approach with an inference mechanism.
Barthe et al.~\cite{Barthe-etal-08-CSF} propose a modular method to reuse type systems and proofs for noninterference~\cite{htyping} for declassification. They also provide a method to conclude declassification soundness by using an existing noninterference theorem~\cite{thesis}. 
In contrast to our work, their type system significantly departs from standard typing rules, and does not make use of parametricity. 
Tse and Zdancewic~\cite{Tse-Zdancewic-05-ESOP} 
 propose a security-typed language for robust declassification: declassification cannot be triggered unless there is a digital certificate to assert the proper authority. 
  Their language inherits many features from System F$_{<:}$ and  uses monadic labels as in DCC~\cite{Abadi-etal-99-POPL}.
In contrast to our work, security labels are based on the Decentralized Label Model (DLM)~\cite{Myers-Liskov-00-tosem}, and are not semantically unified with the standard safety types of the language.  
The Dependency Core Calculus (DCC)~\cite{Abadi-etal-99-POPL} expresses security policies using monadic types
indexed on levels in a security lattice with the usual interpretation that flows are only allowed between levels in accordance with the ordering. 
DCC does not include declassification and the noninterference theorem of~\cite{Abadi-etal-99-POPL} is proved from scratch (not leveraging parametricity).
While DCC is a theoretical calculus, its monadic types fit nicely with the monads and monad transformers used by the Haskell language for computational effects like state and I/O.  
Algehed and Russo~\cite{AlgehedR17} encode the typing judgment of DCC in Haskell 
using closed type families, one of the type system extensions supported by GHC that brings it close to dependent types.
However, they do not prove security.
Compared with type systems, relational logics can specify IF policy and prove more programs secure through semantic reasoning~\cite{NanevskiBG13,BanerjeeNN16,GrimmMFHMPRRSB18,BeckertU18},
but at the cost of more user guidance and less familiar notations.
Aguirre et al~\cite{AguirreBG0S17} use relational higher order logic to prove soundness of DCC essentially by formalizing the semantics of DCC~\cite{Abadi-etal-99-POPL}.

\paragraph{Connections between secure IF and type abstraction.}

Tse and Zdancewic~\cite{Tse-Zdancewic-04-ICFP} translate the recursion-free fragment of DCC to System F.
The main theorem for this translation aims to show that  parametricity of System F implies  noninterference.
Shikuma and Igarashi identify a mistake in the proof~\cite{Shikuma-Igarashi-08-LMCS}; they 
also give a noninterference-preserving translation for a version of DCC to the simply-typed lambda calculus. 
Although they make direct use of a specific logical relation,
their results are not obtained by instantiating a parametricity theorem. 
Bowman and Ahmed~\cite{Bowman-Ahmed-15-ICFP} finally provide a translation from 
the recursion-free fragment of DCC to System F$_{\omega}$,  proving that parametricity implies noninterference,
via a correctness theorem for the translation (which is akin to a full abstraction property).
Bowman and Ahmed's translation makes essential use of the power of System F$_{\omega}$ to encode judgments of DCC.
Algehed and Bernardy~\cite{AlgehedBernardy19} translate a label-polymorphic variant DCC (without recursion) into the calculus of constructions (CC) and prove noninterference directly from a parametricity result for CC~\cite{BernardyanssonPaterson2012}.
The authors note that it is not obvious this can be extended to languages with nontermination or other effects.
Their results have been checked in Agda and the presentation achieves elegance owing to the fact that
parametricity and noninterference can be explicitly defined in dependent type theory; indeed, CC terms can represent proof of parametricity~\cite{BernardyanssonPaterson2012}.
 Our goals do not necessitate a system like DCC for policy,
raising the question of whether a simpler target type system can suffice for security policies expressed differently from DCC.
We answer the question in the affirmative, and believe our results for polymorphic lambda (and for ML) provide transparent explication of noninterference by reduction to parametricity.
 The preceding works on DCC are ``translating noninterference to parametricity'' in the sense of translating both programs and types.  
The implication is that one might leverage an existing type checker by translating both a program and its security policy into another program such that it's typability implies the original conforms to policy.
Our work aims to cater more directly for practical application, by minimizing the need to translate the program and hence avoiding the need to prove the correctness of a translation.
 Cruz et al.~\cite{Cruz-etal-17-ECOOP}  show that type abstraction implies relaxed noninterference.
Similar to ours, their definition of relaxed noninterference is a standard extensional semantics, using partial equivalence relations.
This is in contrast with Li and Zdancewic~\cite{Li-Zdancewic-05-POPL} where the semantics is entangled with typability. 

Protzenko et al.~\cite{Protzenkoetal-17-ICFP} propose to use abstract types as the types for secrets and use standard type systems for security. 
This is very close in spirit to our work. 
Their soundness theorem is about  a property called ``secret independence'', very close to noninterference. 
In contrast to our work, there is no declassification and no use of the abstraction theorem.
 Rajani and Garg~\cite{RajaniGargCSF18} connect fine- and coarse-grained type systems for information flow in a lambda calculus with general references, defining noninterference (without declassification) as a step-indexed Kripke logical relation that expresses indistinguishability.
Further afield, a connection between security and parametricity is made 
by Devriese et al~\cite{DevriesePP18}, featuring a negative result: 
System F cannot be compiled to the the Sumii-Pierce calculus of dynamic sealing~\cite{SumiiP04} 
(an idealized model of a cryptographic mechanism).
Finally, information flow analyses have also been put at the service of parametricity~\cite{Washburn-Weirich-05-LICS}. 


\paragraph{Abstraction theorems for other languages.}

Parametricity remains an active area of study~\cite{SojakovaJ18}.
Vytiniotis and Weirich~\cite{Vytiniotis-Weirich-10-JFP} prove the abstraction theorem for R$_{\omega}$, which extends  F$_{\omega}$ with constructs that are useful for programming with type equivalence propositions. 
Rossberg et al~\cite{RossbergRD14} show another path to parametricity for ML modules, by translating them to $F_\omega$.
Crary's result~\cite{Crary-POPL-17} covers a large fragment of ML but without references and mutable state.
Abstraction theorems have been given for mutable state,
based on ownership types~\cite{BanerjeeNaumann02c} 
and on more semantically based reasoning~\cite{AhmedDR09,DreyerNRB10,banerjeeN13,Timany2017}.




\section{Discussion and Conclusion}
\label{sec:conclusion}


In this work, we show how to express declassification policies by using standard types 
of the simply typed lambda calculus. 
By means of parametricity, we prove that type checking implies relaxed noninterference,
showing a direct connection between declassification  and parametricity. 
 Our approach should be applicable to other languages that have an abstraction theorem (e.g~\cite{banerjeeN13,AhmedDR09,DreyerNRB10,Timany2017}) with the potential benefit of strong security assurance from off-the-shelf type checkers. 
 In particular, we demonstrate (in an appendix) that the results can be extended to a large fragment of ML including general recursion. 
 Although in this paper we demonstrate our results using  confidentiality and declassification, our approach applies as well to integrity
and endorsement, as they have been shown to be information flow properties analog to confidentiality~\cite{Li03informationintegrity,DBLP:conf/popl/FournetR08,crypto2,hcrypto}.  

The simple encodings in the preceding sections do not support computation and output at multiple levels.
For example, consider a policy where $x$ is a confidential input that can be declassified via $f$ and we also want to do the computation $x + 1$ of which the result is at confidential level.
Clearly, $x+1$ is ill-typed in the public interface.
We provide (in an appendix) more involved encodings supporting computation at multiple levels.
To have an encoding that support multiple levels, we add universally quantified types $\forall \alpha.\tau$ to the language presented in \S\ref{sec:abstraction}.
However, this goes against our goal of minimizing complexity of translation.
Observe that many applications are composed of programs which, individually, do not output at multiple levels; for example, the password checker, and data mining computations using sensitive inputs to calculate aggregate or statistical information.  For these the simpler encoding suffices.

Vanhoef et al.~\cite{Vanhoef-14-CSF} and others have proposed more expressive declassification policies than the ones in Li and Zdancewic~\cite{Li-Zdancewic-05-POPL}: policies that keep state and  can  be written as programs. We speculate that TRNI for stateful declassification policies 
can be obtained for free in a language with state---indeed, our work provides motivation for
development of abstraction theorems for such languages.

\paragraph{Acknowledgements.}  We thank anonymous reviewers for their suggestions. This work was partially supported by CISC ANR-17-CE25-0014-01, IPL SPAI, the European Union's Horizon 2020 research and innovation programme under grant agreement No 830892,
and US NSF award CNS 1718713.


\bibliographystyle{splncs04}

\bibliography{icfemArxiv}

\newpage
\appendix

\section*{Contents of appendix}

\S\ref{sec:extension} describes extensions for more expressive policies, 
in terms of the encoding in \S\ref{sec:trni}.
\S\ref{sec:ml:trni:summary} is an overview of our results for the ML module calculus
and \S\ref{sec:monad} is an overview of our encoding for computation at multiple security levels.
\S\ref{sec:abstraction:proof} provides proofs for \S\ref{sec:abstraction}
and \S\ref{sec:trni:proof} provides proofs for \S\ref{sec:trni}.
The following sections present the ML encoding (\S\ref{sec:ml:language}--\S\ref{app:ml:trni}) 
and the multi-level encoding (\S\ref{sec:multi_level:encoding}--\S\ref{sec:multi_level:proof})
in detail.

\section{Extensions}
\label{sec:extension}
The extensions in this section are corresponding to the encoding in \S\ref{sec:trni}.
\subsection{Declassification policies}
\label{sec:extension:local-policy}
Variations of our encoding can support richer declassification policies and accept more secure programs.
We consider two ways to extend our encoding.


\paragraph{More declassification functions.} 
The notation in~\cite{Li-Zdancewic-05-POPL} labels an input with a set of declassification functions,
so in general an input can be declassified in more than one way.  
To show how this can be accomodated,
we present an extension for a policy \policy\ where $\varPolicy{\policy} = \set{x}$, and $x$ can be declassified via $f$ or $g$ for some $f$ and $g$, where \typeEval{}{f: \intType \rightarrow \tau_f} and \typeEval{}{g:\intType \rightarrow \tau_g}.
The confidential view and the public view for this policy are as below:
\begin{align*}
\conView{\policy} & = x\comma\intType, x_f\comma \intType \rightarrow \tau_f, x_g\comma \intType \rightarrow \tau_g\\
\pubViewType{\policy} &= \alpha_{f, g}\\
\pubViewTerm{\policy} &= x\comma \alpha_{f,g}, x_f\comma \alpha_{f,g} \rightarrow \tau_f,\ x_g\comma \alpha_{f,g} \rightarrow \tau_g
\end{align*}

%

We now have a new definition of indistinguishability.
The definition is similar to the one presented in \S\ref{sec:trni}, except that we add a new rule for $\alpha_{f,g}$.
\begin{mathpar}
\LabelRule{Eq-Var4}
{\typeEval{}{v_1,v_2:\intType} \\
\tuple{f\ v_1, f\ v_2} \in \indterm{\tau_f} \\ \tuple{g\ v_1, g\ v_2} \in \indterm{\tau_g}
}
{\tuple{v_1,v_2} \in \indval{\alpha_{f,g}} }
\end{mathpar}

With the new encoding and the new definition of indistinguishability, we can define \TRNI{\policy,\tau} as in Definition~\ref{def:trni}.
From the abstraction theorem, we again obtain that for any program $e$, if \typeEval{\conView{\policy}}{e}, and \typeEval{\pubViewType{\policy},\pubViewTerm{\policy}}{e:\tau}, then $e$ is \TRNI{\policy,\tau}.

For example, we consider programs $e_1 = x_f\ x$ and $e_2 = x_g\ x$.
These two programs are well-typed in both views of \policy, and in the public view, their types are respectively $\tau_f$ and $\tau_g$.
Thus, $e_1$ is \TRNI{\policy,\tau_f}, and $e_2$ is \TRNI{\policy,\tau_g}. 





\paragraph{Using an equivalent function to declassify.}
In most type systems for declassification, the declassifier function or expression must be identical to the one in the policy.
Indeed, policy is typically expressed by writing a ``declassify'' annotation on the expression~\cite{sabelfeldSands:jcs2009}.
However, the type system presented in \cite[\S~5]{Li-Zdancewic-05-POPL} is more permissive: 
it accepts a declassification if it is semantically equivalent to the policy function, according to 
a given syntactically defined approximation of equivalence.  
Verification tools can go even further in reasoning with semantic equivalence~\cite{NanevskiBG13,GrimmMFHMPRRSB18},
but any automated checker is limited due to undecidability of semantic equivalence.

We consider a policy \policy\ where there are two confidential inputs $x$ and $y$, $x$ can be declassified via $f$, and $y$ can be declassified via $g$, ${f:\intType \rightarrow \tau}$, and ${g:\intType \rightarrow \tau}$ for some $\tau$. 
Suppose that there exists a function $a$ s.t.\ $f \circ a = g$ semantically.
With the encoding in \S\ref{sec:trni}, we accept $g\ y$, or rather $x_g\ y$, but we cannot accept $f(a\ y)$ 
even though it is semantically the same.



To accept programs like $f(a\ y)$, based on the idea of the first extension, we encode the policy as below, where $y$ is viewed as a confidential input that can be declassified via $g$ or $f \circ a$.
Note that in the following encoding, we have two type variables: $\alpha_{g,f\circ a}$ for the confidential input, and $\alpha_f$ for the result of $x_a\ x$ which can be declassified via $f$.

\begin{align*}
\conView{\policy} &= x\comma\intType,\ x_f\comma \intType \rightarrow \tau, y\comma \intType, y_g\comma \intType \rightarrow\tau, y_a\comma \intType \rightarrow \intType\\
\pubViewType{\policy} &= \alpha_f,\ \alpha_{g, f\circ a}\\
\pubViewTerm{\policy} &= x\comma \alpha_{f},\ x_f\comma \alpha_{f} \rightarrow \tau,\ y\comma \alpha_{g, f\circ a},\ y_g\comma \alpha_{g, f\circ a} \rightarrow \tau,\ x_a\comma \alpha_{g, f\circ a} \rightarrow \alpha_{f}
\end{align*}

Indistinguishability for this policy is defined similarly to the one in Section~\ref{sec:trni}, except that we have the following rule for $\alpha_{g,f\circ a}$.
\begin{mathpar}
\LabelRule{Eq-Var6}
{\typeEval{}{v_1:\intType} \\ \typeEval{}{v_2:\intType} \\
\tuple{g\ v_1, g\ v_2} \in \indterm{\tau}
}
{\tuple{v_1,v_2} \in \indval{\alpha_{g,f\circ a}} }
\end{mathpar}

As in the first extension, we can define TRNI for a type $\tau$ well-formed in \pubViewType{\policy} and we have the free theorem stating that if \typeEval{\conView{\policy}}{e}, and \typeEval{\pubViewType{\policy},\pubViewTerm{\policy}}{e:\tau}, then $e$ is \TRNI{\policy,\tau}.

W.r.t. the new encoding, both $y_g\ y$ and $x_f(x_a\ y)$ are well-typed in the public view.
In other words, we accepts both $y_g\ y$ and $x_f(x_a\ y)$.

%

%
%
%

Notice that as discussed in \cite{Li-Zdancewic-05-POPL}, the problem of establishing relations between declassification functions in general is undecidable. Thus, the relations should be provided or can be found in a predefined amount of time. Otherwise, the relations are not used in the encoding and programs like $x_f(y_a\ y)$ will not typecheck.

\subsection{Global policies}
\label{sec:extension:global}
{The policies considered in \S\ref{sec:trni} and \S\ref{sec:extension:local-policy}  are corresponding to local policies in \cite{Li-Zdancewic-05-POPL}}.
We now consider policies where a declassifier can involve more than one confidential input. 
To be consistent with \cite{Li-Zdancewic-05-POPL}, we call such policies global policies.
For simplicity, in this subsection, we consider a policy \policy\ where there are two confidential inputs, $x_1$ and $x_2$, which can be declassified via $f$ of the type $\intType_1 \times \intType_2 \rightarrow \tau_f$.\footnote{%
We can extend the encoding presented in this section to have policies where different subsets of \varPolicy{\policy} can be declassified and to have more than one declassifier associated with a set of confidential inputs.}
Notice that here we use subscripts for the input type of $f$ to mean that the confidential input $x_i$ is corresponding to $i$-th element of an input of $f$.


%
%
%


\def\policyAve{\ensuremath{\policy_{\text{Ave}}}}

\begin{example}[Average can be declassified]
\label{ex:average:description}
We consider the policy \policyAve\ where there are two confidential inputs $x_1$ and $x_2$ and their average can be declassified.
That is $x_1$ and $x_2$ can be declassified via $f = \lambda x:\intType \times \intType.(\prj{1}{x}+\prj{2}{x})/2$.

In our encoding, we need to maintain the correspondence between inputs and arguments of the declassifier since we want to  prevent laundering attacks \cite{sabelfeldSands:jcs2009}.
A laundering attack occurs, for example, when the declassifier $f$ is applied to $\tuple{x_1,x_1}$,
since then the value of $x_1$ is leaked.
\end{example}

In the general case, to encode the requirement that a specific $n$-tuple of confidential inputs can be declassified via $f$, we introduce a new variable $y$.
The basic idea is that $y$ is corresponding to that $n$-tuple of confidential inputs, $x_i$ cannot be declassified, and only $y$ can be declassified via $f$.
Therefore, the confidential and public views are as below,
where for readability we show the case $n=2$.
\begin{align*}
\conView{\policy} & \triangleq 
 \{x_1:\intType, x_2:\intType, y:\intType \times \intType, y_f: \intType \times  \intType \rightarrow \tau_f \}\\
\pubViewType{\policy} & \triangleq 
 \{\alpha_{x_1}, \alpha_{x_2}, \alpha_f \}\\ 
\pubViewTerm{\policy} & \triangleq 
 \{x_1:\alpha_{x_1}, x_2:\alpha_{x_2}, y:\alpha_f, y_f: \alpha_f \rightarrow \tau_f \}
\end{align*}


For each $i \in \set{1,\dots, n}$, since $x_i$ cannot be declassified, the indisinguishability for $\alpha_{x_i}$  is the same as the one for $\alpha_x$ described in Fig.~\ref{fig:indistinguishability}.
Since $y$ corresponds to the tuple of confidential inputs and only it can be declassified via $f$, indistinguishability for the type of $y$ in the public view $\alpha_f$ is as below (again, case $n=2$).

$$
\LabelRule{Eq-Var5}{\typeEval{}{v, v':\intType \times \intType} \\
\tuple{f\ v, f\ v'} \in \indterm{\tau_f}
}
{\tuple{v,v'} \in \indval{\alpha_f}}
$$

We next encode the correspondence between inputs and argument of the declassifier.
We say that a term substitution $\gamma$ is {\em consistent} w.r.t. $\pubViewTerm{\policy}$ if $\gamma \respect \delta_\policy(\pubViewTerm{\policy})$  and in addition, for all $i \in \set{1,2}$, $\prj{i}{(\gamma(y))} = \gamma(x_i)$.
As we can see, the additional condition takes care of the correspondence of inputs and the arguments of the intended declassifier.

We next define the type substitution and indistinguishable term substitutions for \policy.
We say that $\delta_\policy \respect \pubViewType{\policy}$ when $\delta_\policy(\alpha_f) = \intType\times \intType$ and  for all $\alpha_{x_i}$, $\delta_\policy(\alpha_{x_i}) = \intType$. 
We say that two term substitutions $\gamma_1$ and $\gamma_2$ are indistinguishable w.r.t. \policy\ (denoted by $\tuple{\gamma_1,\gamma_2} \in \indval{\policy}$) if  $\gamma_1$ and $\gamma_2$ are consistent w.r.t. \pubViewTerm{\policy}, 
$\gamma_1(y_f) = \gamma_2(y_f) = f$,
for all other $x \in \dom{\pubViewTerm{\policy}}$, $\tuple{\gamma_1(x),\gamma_2(x)} \in \indval{\pubViewTerm{\policy}(x)}$.

Then we can define \TRNI{\policy,\tau} as in Def.~\ref{def:trni} (except that we use the new definition of indistinguishable term substitutions). We also have the free theorem stating that if $e$ has no type variable and $\typeEval{\pubViewType{\policy},\pubViewTerm{\policy}}{e:\tau}$, then $e$ is \TRNI{\policy,\tau}. The proof goes through without changes.

\begin{example}[Average can be declassified - cont.]
Here we present the encoding for the policy \policyAve\ described in Example~\ref{ex:average:description}.
The confidential and public views for this policy is as below:
\begin{align*}
\conView{\policyAve} & \triangleq 
 \{x_1:\intType, x_2:\intType, y:\intType \times \intType, y_f: \intType \times \intType \rightarrow \intType \}\\
\pubViewType{\policyAve} & \triangleq 
 \{\alpha_{x_1}, \alpha_{x_2}, \alpha_f \}\\ 
\pubViewTerm{\policyAve} & \triangleq 
 \{x_1:\alpha_{x_1}, x_2:\alpha_{x_2}, y:\alpha_f, y_f: \alpha_f \rightarrow \intType \}
\end{align*}

We can easily check that the program $y_f\ y$ is \TRNI{\policyAve,\intType}; it is well-typed in both views, and in the public view its type is $\intType$.
%
\end{example}

\section{TRNI for Module Calculus: summary}
\label{sec:ml:trni:summary}
This section recapitulates the development of \S\ref{sec:trni} but using an encoding suited to the module calculus
of Crary and Dreyer \cite{Dryer-phd,Crary-POPL-17}.\footnote{Our only change is to add \intType\ 
and arithmetic primitives,  for examples.}
It is a core calculus that models Standard ML including higher order generative and applicative functors, sharing constraints (via singleton kinds), and sealing. Sealing ascribes a signature to a module expression and thereby enforces data abstraction.

%

The syntax is in Fig.~\ref{fig:ml:core-calculus:main}.
The calculus has static expressions: kinds ($k$), constructors ($c$) and signatures ($\sigma$), and dynamic expressions: terms ($e$) and modules ($M$).  
The full formal system is given in \S\ref{sec:ml:language}; here we sketch highlights.

\begin{figure}[!t]
\small
\begin{equation*}
\hspace{-30pt}
\begin{aligned}
k ::= &           & \text{kind}\\
      & \unitkind & \text{unit kind}\\
      & \sep \basekind & \text{\purpletext{base kind}}\\
      & \sep \singleton{c} & \text{singleton kind} \\
      & \sep \Pi \alpha:k.k     & \text{dependent functions}\\
      & \sep \Sigma \alpha:k.k  & \text{dependent pairs}\\
c, \tau ::= & & \text{type constructor}\\
            & \alpha & \text{constructor variable} \\
            & \sep \unitcon & \text{unit constructor} \\
            & \sep \lambda \alpha:k.c \sep c\ c & \text{lambda, application}\\
            & \sep \tuple{c,c} & \text{pair}\\
            & \sep \prj{1}{c}  \sep \prj{2}{c} &\text{projection}\\
            & \sep \unittype & \text{unit type}\\
            & \sep \purpletext{\intType} & {\text{int type}}\\
            & \sep \tau_1 \rightarrow\tau_2 & \text{functions} \\
            & \sep \tau_1 \times \tau_2 & \text{products} \\
            & \sep \forall \alpha:k.\tau & \text{universal} \\
            & \sep \exists \alpha:k.\tau & \text{existential}\\
\sigma ::= & & \text{signature}\\
           & \unitsign & \text{unit signature} \\
           & \sep \atksign{k} & \text{atomic signature}\\
           & \sep \atcsign{\tau} & \text{atomic signature}\\
           & \sep \Pign\alpha:\sigma.\sigma & \text{generative functors}\\
           & \sep \Piap\alpha:\sigma.\sigma & \text{applicative functors}\\
           & \sep \Sigma\alpha:\sigma.\sigma & \text{pairs}\\
\Gamma ::= & & \text{context} \\
           & . & \text{empty context} \\
           & \sep \Gamma,\alpha:k & \text{constructor hypothesis}\\
           & \sep \Gamma, x:\tau & \text{term hypothesis} \\
           & \sep \Gamma, \alpha/m:\sigma & \text{module hypothesis}
\end{aligned}
\qquad
\begin{aligned}
e ::= &  & \text{term}\\
     & x & \text{term variable} \\
     & \sep \unitval & \text{unit term}\\
     & \sep {n}        & {\text{integer literal}}\\
     & \sep \lambda x:\tau.e \sep e\ e & \text{lambda, application} \\
     & \sep \tuple{e,e} & \text{pair}\\
     & \sep \prj{1}{e} \sep \prj{2}{e} & \text{projection} \\
     & \sep \Lambda \alpha:k.e & \text{polymorphic fun.} \\
     & \sep e[c] & \text{polymorphic app.}\\
     & \sep \pack{c,e}{\exists \alpha:k.\tau} & \text{existential package} \\
     & \sep \unpack{\alpha,x}{e}{e} & \text{unpack}\\
     & \sep \fix{\tau}{e} & \text{recursion} \\
     & \sep \letexp{x=e}{e} & \text{term binding} \\
     & \sep \letexp{\alpha/m = M}{e} & \text{module binding}\\
     & \sep \extract{M} & \text{extraction}\\
M ::= & & \text{module}\\
      & m & \text{module variable} \\
      & \sep \unitmod & \text{unit module} \\
      & \sep \atcmod{c} & \text{{atomic module}}\\
      & \sep \attmod{e} & \text{{atomic module}}\\
      & \sep \lambdagn \alpha/m:\sigma.M & \text{generative functor}\\
      & \sep M\ M & \text{generative app.} \\
      & \sep \lambdaap \alpha/m:\sigma.M & \text{applicative functor} \\
      & \sep M\appapp M & \text{applicative app.}\\
      & \sep \tuple{M,M} & \text{pair} \\
      & \sep \prj{1}{M} \sep \prj{2}{M} & \text{projection} \\
      & \sep \unpack{\alpha,x}{e}{(M:\sigma)} & \text{unpack}\\
      & \sep \letexp{x=e}{M} & \text{term binding} \\
      & \sep \letexp{\alpha/m=M}{(M:\sigma)} & \text{module binding} \\
      & \sep M\seal\sigma & \text{sealing}
\end{aligned}
\end{equation*}
\caption{Module calculus}
\label{fig:ml:core-calculus:main}
\end{figure}

The unit kind \unitkind\ has only the unit constructor \unitcon.
The base kind, \basekind, is for types that can be used to classify terms. 
By convention, we use the metavariable $\tau$ for constructors that are types (i.e. of the kind \basekind).
The singleton kind \singleton{c} 
classifies constructors that are definitionally equivalent to $c$.
In addition, we have higher kinds: dependent functions $\Pi \alpha:k_1.k_2$ and dependent pairs $\Sigma \alpha:k_1.k_2$.

The syntax for terms is standard and includes general recursion ($\fix{\tau}{e}$).
Module expressions include unit module ($\unitmod$), 
pairing/projection, 
atomic modules with a single static or dynamic component ($\atcmod{c}$, $\attmod{e}$),
generative and applicative functors ($\lambdagn \alpha/m:\sigma.M$, $\lambdaap \alpha/m:\sigma.M$,
the applications of which are written resp.\ $M_1\ M_2$, $M_1\appapp M_2$),
and unpacking ($\unpack{\alpha,x}{e}{(M:\sigma)}$). 
While term binding is as usual ($\letexp{x=e}{M}$),
the module binding construct is unusual: $\letexp{\alpha/m=M_1}{(M_2:\sigma)}$ binds a pair of names,
where constructor variable $\alpha$ is used to refer to the static part of $M_1$ (and $m$ to the full module).
This is used to handle the phase distinction between compile-time and run-time expressions.  

A signature describes an interface for a module.
Signatures include unit signature, atomic kind and atomic type signature, 
generative and applicative functors, and dependent pairs ($\Sigma\alpha:\sigma_1.\sigma_2$).
A signature $\sigma$ is {\em transparent} when it exposes the implementation of the static part of modules of $\sigma$.
A signature $\sigma$ is {\em opaque} when it hides some information about the static part of modules of $\sigma$.
The sealing construct, $M\seal\sigma$, ascribes a signature to the module in the sense of enforcing $\sigma$ as an abstraction boundary.  






\paragraph{Abstraction theorem.} 
The static semantics includes judgments
\typeEval{}{\Gamma\ \ok},
\typeEval{\Gamma}{e:\tau}, \typeEvalP{\Gamma}{M:\sigma}, and \typeEvalI{\Gamma}{M:\sigma} for resp.\ well-formed context, well-typed term, pure well-formed module, and impure well-formed module. 
The pure and impure judgment forms roughly correspond to unsealed and sealed modules; the formal system treats sealing as an effect, introduced by application of a generative functor as well as by the sealing construct.


The dynamic semantics is call-by value, with these values:
\begin{align*}
v := & \;  x \sep \unitval \sep n \sep \lambda x:\tau.e \sep \tuple{v,v} \sep \Lambda \alpha:k.e & \text{Term values} \\
&\ \sep \pack{c,v}{\exists \alpha:k.\tau} 
\\
V := & \; m \sep \unitmod \sep \atcmod{c} \sep \attmod{v} \sep \tuple{V,V} & \text{Module values}\\
&\ \sep \lambdagn \alpha/m:\sigma.M \sep \lambdaap \alpha/m:\sigma.M
\end{align*}


The logical relation for the calculus is more complicated than the one in \S\ref{sec:abstraction}.
Even so, the statement of the abstraction theorem for terms is similar to the one in \S\ref{sec:abstraction}.

\begin{theorem}[Abstraction theorem~\cite{Crary-POPL-17}]
\label{thm:ml:abstraction:main}
Suppose that \typeEval{}{\Gamma \ok}.
If \typeEval{\Gamma}{e:\tau}, then \typeEval{\Gamma}{e \logeq e:\tau}.
\end{theorem}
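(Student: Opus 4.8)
The plan is to obtain this statement directly from Crary's parametricity theorem for the module calculus~\cite{Crary-POPL-17}; since the only difference from his system is the addition of \intType\ and primitive arithmetic, I would verify that his logical relation and its fundamental lemma extend to these constants and otherwise invoke his result verbatim. Below I sketch the structure of that argument and indicate where the extension enters.

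The abstraction theorem is the fundamental lemma---reflexivity of logical equivalence---for a logical relation defined \emph{simultaneously} over every syntactic category: kinds, constructors, signatures, terms, and modules. Because terms, modules, and the static expressions are mutually recursive, the relation cannot be defined by recursion on types alone; following Crary, it is defined by a well-founded recursion that interprets each kind and signature as a relation, with the static components hidden by sealing interpreted by quantifying over an arbitrary candidate relation (Reynolds' method). Unfolding the statement for terms, $\Gamma \vdash e \logeq e : \tau$ says: for every pair of substitutions related by $\Gamma$ under a relational environment, the two instantiations of $e$ are related at $\tau$.

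First I would establish the battery of \emph{compatibility lemmas}, one per formation/typing rule across the four judgment forms (well-formed context \ok, term typing, pure module typing $\vdash_{\mathsf{P}}$, and impure module typing $\vdash_{\mathsf{I}}$): each states that if the immediate subexpressions are related to themselves then so is the conclusion. Most cases reuse the closure conditions of the relation exactly as in the simply typed development (Theorem~\ref{thm:type-abstraction})---in particular the function, product, and quantifier cases---and the only genuinely new cases are the integer literals and the primitive arithmetic operations, both immediate since related (indeed equal) integers yield equal results. Second, I would assemble these lemmas by a single simultaneous induction on the derivation to obtain the fundamental lemma, then specialize to the term judgment.

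The main obstacle lies wholly within Crary's construction rather than our extension, and has two parts. First, the sealing rule $M \seal \sigma$ must relate a sealed module to itself at the \emph{opaque} signature; this is precisely where the quantification over an arbitrary candidate relation for the abstract static components is discharged, and it is the feature that makes the theorem usable for our security encoding. Second, general recursion $\fix{\tau}{e}$ is not well-founded, so the relation must be engineered to be \emph{admissible}---closed under the limits needed to interpret fixed points---after which the compatibility lemma for fix follows from the corresponding admissibility (fixed-point) principle. Since neither sealing nor recursion interacts with the added integer constants, the extension contributes only the two trivial compatibility cases above, and Crary's proof carries over unchanged.
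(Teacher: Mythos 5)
Your proposal matches the paper's treatment: the theorem is imported verbatim from Crary's development, with the only delta being the addition of \intType\ and arithmetic primitives, whose compatibility cases are immediate (the paper's appendix even notes this is its ``only change'' and supplies no further proof). One small terminological point: Crary handles general recursion not by a classical admissibility argument but by biorthogonal (Pitts) closure of the relations --- the $\rst{(\cdot)}$ operation in Definition~\ref{def:closure} --- though this serves exactly the closure-under-fix purpose you describe.
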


Modules and terms are interdependent, and Crary's theorem
includes corresponding results for pure and for impure modules.  
We express security in terms of sealed modules, but our security proof only relies on the abstraction theorem for expressions.



\paragraph{Free theorem: TRNI for the module calculus.}
We present the idea of the encoding for the module calculus.
(Formalization of the encoding  can be found in \S\ref{sec:ml:trni}.)
To make the presentation easier to follow, in this section, we write examples in Standard ML (SML).
These examples are checked with SML of New Jersey, version 110.96~\cite{sml}. 


For a policy \policy, we construct the public view and the confidential view by using signatures containing type information of confidential inputs and their associated declassifiers.
In particular, the signature for the confidential view is a transparent signature which exposes the concrete type of confidential input, while the signature for the public view is an opaque one which hides the type information of confidential inputs.
For example, for the policy \policyoe\ (see Example~\ref{ex:policy:odd-even}), we have the following signatures, where \code{transOE} and \code{opaqOE} are respectively the transparent signature for the confidential view and the opaque signature for the public view.

\begin{tabular}{lll}
\begin{lstlisting}
signature transOE = 
 sig
  type t = int
  val x:t 
  val f:t->int
 end
\end{lstlisting} & &
\begin{lstlisting}
signature opaqOE =
 sig
  type t
  val x:t 
  val f:t->int
 end
\end{lstlisting}
\\ 
\end{tabular}

Different from \S\ref{sec:trni}, a program has only a module input which is of the transparent signature and contains all confidential inputs and their declassifiers. A program can use the input via the module variable \code{m}.
For example, for \policyoe, we have the program $\code{m.f\ m.x}$, which is corresponding to the program $x_f\ x$ in Example~\ref{ex:ml:trni:def}.

Using the result in \S\ref{sec:trni}, we define indistinguishability 
as an instantation of the logical relation,
and we say that a term $e$ is \TRNI{\policy,\tau} if on indistinguishable substitutions w.r.t. \policy, it generates indistinguishable outputs at $\tau$.
By using the abstraction theorem \ref{thm:ml:abstraction:main} for terms, we obtain our main result.

\begin{theorem}
\label{thm:ml:free-thm:opaque}
If the type of $e$ in the public view is $\tau$, then $e$ is \TRNI{\policy,\tau}.
\end{theorem}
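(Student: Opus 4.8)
The plan is to follow the structure of the proof of Theorem~\ref{thm:trni} line by line, reading the opaque public-view signature as the module-calculus replacement for the type variables $\alpha_x,\alpha_f$ of \S\ref{sec:trni}. Writing $\Gamma$ for the context that assigns the single module variable $m$ the opaque signature (e.g.\ \texttt{opaqOE}), the hypothesis ``the type of $e$ in the public view is $\tau$'' is exactly \typeEval{\Gamma}{e:\tau}. I would start by invoking Crary's abstraction theorem (Theorem~\ref{thm:ml:abstraction:main}) to obtain \typeEval{\Gamma}{e\logeq e:\tau}; as the text emphasizes, only the theorem for terms is needed, even though the input is a module.

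First I would fix the relational interpretation under which Crary's logical relation specializes to indistinguishability: the abstract type component of the opaque signature is interpreted by the policy-induced relation (for the parity example, the relation $\indval{\alpha_f}$ relating integers that agree modulo $2$), and the dynamic declassifier component is interpreted by the fixed policy function in both substitutions. With this interpretation I would prove the two adaptations of the supporting lemmas of \S\ref{sec:trni}: the analog of Lemma~\ref{lem:logeqpol-indis:eq}, that under this interpretation the term logical relation at $\tau$ coincides with $\indterm{\tau}$; and the analog of Lemma~\ref{lem:term-sub:indis:logeq}, that any pair of indistinguishable module substitutions---those mapping $m$ to transparent-signature module values that agree on the declassifier field and whose confidential fields are policy-related---lies in Crary's logical relation for $\Gamma$.

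Granting these, the argument closes exactly as in Theorem~\ref{thm:trni}: take any indistinguishable pair $\tuple{\gamma_1,\gamma_2}$, feed it into the instance of \typeEval{\Gamma}{e\logeq e:\tau} determined by the chosen interpretation, and conclude that $\tuple{\gamma_1(e),\gamma_2(e)}$ is in the logical relation at $\tau$. Because $e$'s only free name is the module variable $m$, substitution makes the term closed and the constructor part of the substitution acts trivially, mirroring the step ``$e$ has no type variables'' in the earlier proof. Transporting across the coincidence lemma yields $\tuple{\gamma_1(e),\gamma_2(e)}\in\indterm{\tau}$, which is the definition of \TRNI{\policy,\tau}.

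The main obstacle I expect is the module-level analog of Lemma~\ref{lem:term-sub:indis:logeq}: proving that indistinguishable module substitutions are related at the opaque signature. Crary's logical relation for signatures is substantially more intricate than the type-indexed relation of \S\ref{sec:abstraction}---it tracks static and dynamic parts, singleton kinds, and treats sealing as an effect---so the real work is exhibiting the policy-induced relation as a legitimate witness for the abstract type component and checking that the declassifier components are related at the required arrow type. A secondary point is termination sensitivity: since the calculus admits general recursion, both the logical relation and $\indterm{\cdot}$ must be termination-sensitive, but this is already built into Crary's theorem and into the definition of indistinguishability, so it requires no separate handling here.
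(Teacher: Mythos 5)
Your proposal follows essentially the same route as the paper: invoke Crary's abstraction theorem for terms, instantiate the resulting logical equivalence at a policy-induced environment, and transport the conclusion to indistinguishability. The two supporting steps you identify are exactly the paper's Lemma~\ref{lem:ml:consenv:property:two} (policy environments $\condenvalt{\rho}{\policy}$ lie in $\interenvfull{\pubViewTerm{\policy}}$ --- and you are right that this is where the real work is, done by induction on the input list against the signature interpretation) and the coincidence of the logical relation with indistinguishability, which the paper actually gets for free because Definition~\ref{def:ml:indis:signature} \emph{defines} $\indterm{\tau}$ as $\rev{\inter{\tau}{\rho}}$ (with Lemma~\ref{lem:ml:interpretation-type:policy} ensuring the choice of $\rho$ is immaterial), so no analog of Lemma~\ref{lem:logeqpol-indis:eq} is needed.

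The one piece you omit: Definition~\ref{def:ml:trni:signature} also requires $\typeEval{\conView{\policy}}{e}$, i.e.\ that $e$ typechecks in the \emph{confidential} view. In the simply typed setting this was a hypothesis of Theorem~\ref{thm:trni} (``$e$ has no type variables''), but here it must be derived from public-view typability; the paper does this via Lemma~\ref{lem:ml_trni:typability_impl}, whose proof is not entirely trivial (it wraps $e$ in an applicative functor, applies it to a module of the transparent signature using $\typeEval{}{\sigmaPolCon{\policy}\leq\sigmaPol{\policy}:\sign}$, and appeals to type preservation). Your proof as written establishes only the indistinguishability clause of TRNI, so you should add this step.
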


For the module calculus, when $e$ is well-typed in the public view, $e$ is also well-typed in the confidential view.
Therefore, different from Theorem~\ref{thm:trni} which requires that $e$ has no type variable, Theorem~\ref{thm:ml:free-thm:opaque} simply requires that $e$ is well-typed in the public view.
Our example program \code{m.f\ m.x} typechecks at \intType,
so by Theorem~\ref{thm:ml:free-thm:opaque} it is \TRNI{\policyoe,\intType}.

\paragraph{Usage of our approach.}
We can use our approach with ordinary ML implementations.
In the case that the source programs are already parameterized by one module for their confidential inputs and their declassifiers, then there is no need to modify source programs at all.

For example, we consider \code{program} described below.
Here \code{M} is a module of the transparent signature \code{transOE}. By sealing this module with the opaque signature \code{opaqOE}, we get the module \code{opaqM}.
Intuitively, \code{program} is \TRNI{\policyoe,\intType} since the declassifier \code{f} is applied to the confidential input \code{x}.
We also come to the same conclusion from the fact that the type of this program is \intType.

\begin{lstlisting}
structure M = struct
   type t = int
   val x : t = 1
   val f : t -> int = fn x => x mod 2
end
structure opaqM :> opaqOE = M
val program : int = opaqM.f opaqM.x   
\end{lstlisting}

So far our discussion is about open terms but the ML type checker only applies to closed terms. 
In the case that the client program is open 
(i.e. that it can receive any module of the transparent signature as an input, as in the program \code{m.f\ m.x} presented above), in order to be able to type check it for a policy, we need to close it
by putting in a closing context, which we call wrapper.  
For any program $e$ and policy \policy,
the wrapper is written using a functor  as shown below, where 
   \code{opaqP} is the opaque signature for the public view of \policy.
Type $\tau$ is the type at which we want to check security of $e$.
(The identifiers \code{program} and \code{wrapper} are arbitrary.)
\begin{lstlisting}
functor wrapper (structure m: opaqP) =
  struct 
      val program : (*$\tau$*) = (*$e$*)
  end
\end{lstlisting}
Note that $e$ is unchanged.

We have proved that if  the wrapper \wrappub{e} 
is of the signature from \code{opaqP} to $\tau$, then the type of $e$ in the public view is $\tau$.
Therefore, from Theorem~\ref{thm:ml:free-thm:opaque},  $e$ is TRNI at $\tau$.
For instance, for the policy \policyoe, we have that \wrappub{\code{m.f\ m.x}} is of the signature from \code{transOE} to $\intType$ and hence, we infer that the type of \code{m.f\ m.x} in the public view is \intType\ and hence, \code{m.f\ m.x} is \TRNI{\policyoe,\intType}.



\paragraph{Extension.}
As in the case of the simple calculus, our encoding for ML can also be extended for policies where multiple inputs are declassified via a declassifier.
Here, for illustration purpose, we present the encoding for a policy which is inspired by two-factor authentication.

\begin{example}
\label{ex:ml_trni:two_factor}
The policy \policyAut\ involves two confidential passwords and two declassifiers \code{checking1} and \code{checking2} as below, where \code{input1} and \code{input2} are respectively the first input and the second input from a user.
Notice that \code{checking2} takes a tuple of two passwords as its input.
\begin{lstlisting}
fun checking1(password1:int) = 
  if (password1 = input1) then 1 else 0
fun checking2(passwords:int*int) = 
  if ((#1 passwords) = input1) then 
    if ((#2 passwords) = input2) then 1 else 0
  else 2
\end{lstlisting}

We next construct the confidential view and the public view for the policy. 
{To encode the requirement that two passwords can be declassified via \code{checking2}, we introduce a new variable \code{passwords} which is corresponding to the tuple of the two passwords, and only \code{passwords} can be declassified via \code{checking2}.}
The transparent signature for the confidential view of \policyAut\ is below.


\begin{minipage}{1\columnwidth}
\begin{lstlisting}
signature transAut = sig
  type t1 = int
  val password1:t1 
  val checking1:t1->int
  type t2 = int
  val password2:t2
  type t3 = int * int 
  val passwords:t3
  val checking2:t3 ->int			
end
\end{lstlisting}
\end{minipage}
The signature \lstinline{opaqAut} for the public view is the same except
the types \lstinline{t1}, \lstinline{t2}, and \lstinline{t3} are opaque.


We have that the programs $\code{m.checking2}\ \code{m.passwords}$ and $\code{m.checking1}\ \code{m.password1}$, where \code{m} is a module variable of the transparent signature \code{transAut}, have the type \intType\ in the public view. Hence both programs are \TRNI{\policyAut,\intType}.
\end{example}

\section{Computation at multiple security levels: summary}
\label{sec:monad}
The encodings in the preceding sections do not support computation and output at multiple levels.
For example, consider a policy where $x$ is a confidential input that can be declassified via $f$ and we also want to do the computation $x + 1$ of which the result is at confidential level.
Clearly, $x+1$ is ill-typed in the public interface.
To support computation at multiple levels we develop a monadic encoding inspired by DCC,
and a public interface that represents policy for multiple levels.

To have an encoding that support multiple levels, we add universally quantified types $\forall \alpha.\tau$ to the language presented in \S\ref{sec:abstraction} (already present in ML).
In addition, to simplify the encoding, we add the unit type \unitType.
W.r.t. these new types, we have new values: the unit value \unitVal\ of \unitType, and values $\Lambda \alpha.e$ of  $\forall \alpha.\tau$.

To facilitate the presentation of the idea of the encoding, we consider a lattice \lattice\ with three different levels $L$, $M$, $H$ such that  $L \smallst M \smallst H$. 
We also use a simple policy \policy\ with three inputs \hinp, \minp\ and \linp\ at resp. $H$, $M$ and $L$, and \hinp\ can be declassified via $f: \intType \rightarrow \intType$ to $M$.
(The encoding with an arbitrary finite lattice and policy is in \S\ref{sec:multi_level:encoding}.)
For simplicity, we suppose that values on input and output channels are of \intType\ type.

To model multiple outputs we consider programs that return a tuple of values, one component for each output channel.  
To model channel access being associated with different security levels, the output values 
are wrapped, in the form $\lambda x:\unitType . n$. 
To read such a value, an observer needs to provide an appropriate key.  
By giving $x$ an abstract type corresponding to a security level, we can control access.

Similar to the previous sections, we assume that free variables in programs are their inputs,
but now the values will be wrapped integers.  
Intuitively, a wrapped value $v$ can be unwrapped by $\unwrap\ k\ v$, where $k$ is an appropriate key, and $\unwrap\ k\ v$ can be implemented as the application of $v$ on $k$ (i.e. $v\ k$).
Concretely, $k$ will be the unit value $\unitVal$.

We further assume that programs are executed in a context where there are several output channels, each corresponding to a security level. 
A program will compute a tuple of wrapped values, where each element of the output tuple can be unwrapped by using an appropriate key and the unwrapped value is sent to the channel.
In short, we assume the program of interest is executed in a context that wraps its inputs, and also
unwraps each components of the output tuple and sends the value on the corresponding channel.
This assumption is illustrated in the following pseudo program, where $e$ is the program of interest, $o$ is the computed tuple, $\code{Output.Channel}_{l}$ is an output channel at $l$, $k_{l}$ is a key to unwrap value at $l$, 
and $\prj{l}$ projects the output value for the output channel $l$.

\begin{lstlisting}[escapechar=\%]
let o = e in 
   %$\code{Output.Channel}_{L}$% := %\unwrap% %$k_{L}$% (%\prj{L}{\ o}%)
   %$\code{Output.Channel}_{M}$% := %\unwrap% %$k_{M}$% (%\prj{M}{\ o}%)
   %$\code{Output.Channel}_{H}$% := %\unwrap% %$k_{H}$% (%\prj{H}{\ o}%)
\end{lstlisting}
The keys are not made directly available to $e$, which must manipulate its inputs via an interface described below.


\paragraph{Encoding.}
Different from the previous sections, we use type variables $\alpha_H, \alpha_M$ and $\alpha_L$ as the types of keys for unwrapping wrapped values at $H$, $M$ and $L$. This idea is similar to the idea in \cite{Li-Zdancewic-05-POPL}. Different from \cite{Li-Zdancewic-05-POPL}, we do not translate DCC and we support declassification.

For an input at $l$ that cannot be declassified (\minp\ or \linp), its type in the public view is $\alpha_l \rightarrow \intType$.
For the input \hinp\ which can be declassified via $f$, we use another type variable (i.e. $\alpha_H^f$) as the type of key to unwrapped values.\footnote{If we use $\alpha_H$ instead, since this input can be declassified to $M$, the indistinguishability will be incorrect: all wrapped values at $H$ are wrongly indistinguishable to observer $M$.}
Similar to the previous sections, we use $\alpha^f$ to encode number values at $H$.
Therefore, the type of \hinp\ in the public view is $\alpha_H^f \rightarrow \alpha^f$.
As we use \unitType\ as the type for key,
in the confidential view the type of \hinp, \minp, and \linp\ is $\unitType \rightarrow \intType$.

As assumed above, a program computes an output which is a tuple of three wrapped values.
Since we use type variables as keys to unwrap wrapped values, the type of outputs of programs we consider is $(\alpha_H \rightarrow \intType) \times (\alpha_M \rightarrow \intType) \times (\alpha_L \rightarrow \intType)$.


To support computing outputs at a level $l$, by using the idea of monad, we have interfaces \comp{l} and \wrap{l} which are the bind and unit expressions for a monad.
In addition, to support converting a wrapped value at $l$ to $l'$ (where $l \smallst l'$), we have interfaces \convup{l}{l'}.
To use $\hinp$ in a computation at $H$, we have $\conv{f}$.
Similar to the previous sections, we have $\hinp_f$ for the declassfier $f$.
The types of there interfaces are described in Fig.~\ref{fig:encoding:multiple}.

\begin{figure}
\begin{align*}
{\tcontextp} =&\ \{\alpha_L, \alpha_M, \alpha_H, {\alpha_H^f},\alpha^f\}\\
{\contextp} =&\ \{\hinp: {\alpha_H^f} \rightarrow \alpha^f, \minp: \alpha_M \rightarrow \intType, \linp: \alpha_L \rightarrow \intType \}\ \cup \\
   &\ \{\comp{l}: \forall \beta_1,\beta_2. (\alpha_l \rightarrow \beta_1) \rightarrow \big( \beta_1 \rightarrow (\alpha_l \rightarrow \beta_2)  \big) \\
   & \hspace{135pt} \rightarrow \alpha_l \rightarrow\beta_2 \sep l \in \lattice \}\ \cup \\
   &\ \{\convup{l}{l'}: \forall \beta. (\alpha_l \rightarrow \beta) \rightarrow (\alpha_{l'} \rightarrow \beta) \sep l,l' \in \lattice \wedge l \smallst l'\}\ \cup\\
   &\ \{\wrap{l}: \forall \beta.\beta \rightarrow \alpha_l \rightarrow \beta \sep l \in \lattice\}\ \cup\\
   &\ \{\hinp_f: (\alpha_H^f \rightarrow \alpha^f) \rightarrow (\alpha_M \rightarrow \intType)\}\ \cup\\
   &\ \{\conv{f}: (\alpha_H^f \rightarrow \alpha^f) \rightarrow (\alpha_H \rightarrow \intType)\}
\end{align*}
\caption{Contexts for \policy}
\label{fig:encoding:multiple}
\end{figure}


\begin{example}
\label{ex:ni:running_example:comp}
We illustrate the idea of the encoding by writing a program that computes the triple $\hinp+\linp+1$, $f\ \hinp$ and $\linp+1$ at resp. $H$, $M$, and $L$.

First, we will have $e_1$ that does the computation at $L$: $\linp + 1$.
{Let $\plusOne = \lambda x:\intType.x+1$}.
In order to use \comp{L}, we first wrap \plusOne\ by using \wrap{L}. 
$$\wrapPlusOne = \lambda x:\intType.\wrap{L}(\plusOne\ x)$$ 

Then $e_1$ is as below: 
$$e_1 = \comp{L}[\intType][\intType]\ {\linp}\ {\wrapPlusOne}$$

Next, we have $e_2$ that does the computation $\hinp + \linp$ at $H$. 
Let \add\ be a function of the type $\intType \rightarrow\intType\rightarrow \intType$.
From \add, we construct $\wrapAddC$ of the type $\intType \rightarrow \alpha_H \rightarrow \intType$, 
where $c$ is a variable of the type \intType.
$${\wrapAddC} = \lambda y:\intType.\wrap{H}(\add\ \ c\ \ y)$$ 

Then we have $e_2$ as below. 
Note that in order to use $\linp$ in \comp{H}, we  need to convert $\linp$ from level $L$ to level $H$ by using \convup{L}{H}.
\[ \begin{array}{l}
    e_2 = \\ 
    \comp{H}[\intType][\intType]\ {\hinp} \ 
      (\lambda c:\intType.(\comp{H}[\intType][\intType]\ (\convup{L}{H}\ {\linp})\ \wrapAddC))
      \end{array}
\]

At this point, we can write the program $e$ that computes $\hinp + \linp + 1$, $f\ \hinp$, and $\linp+1$ at resp. $H$, $M$, and $L$.
$$e = \big(\lambda \linp: \alpha_L \rightarrow \intType.\tuple{e_2 ,\, \tuple{\hinp_f\ \hinp,\, \linp}}\big )\ e_1$$
\end{example}

The implementations of the defined interfaces are straightforward.
For example, on a protected input of type $\unitType \rightarrow \beta_1$ and a continuation of type $\beta_1 \rightarrow (\unitType \rightarrow \beta_2)$, the implementation \compcr\ of \comp{l}\  first unfolds the protected input by applying it to the key \unitVal\ and then applies the continuation on the result.
$$\compcr = \Lambda \beta_1, \beta_2. \lambda x:\unitType \rightarrow \beta_1. \lambda f: \beta_1 \rightarrow (\unitType\rightarrow \beta_2).f(x\ \unitVal)
$$
The implementation of $\wrap{l}$ is $\Lambda \beta . \lambda x:\beta . \lambda \_ : \unitType . \, x$.
The conversions $\convup{l}{l'}$ are implemented by the identity function.

\paragraph{Indistinguishability.}
Different from \S\ref{sec:trni}, indistinguishability is defined for observer $\obs$ ($\obs \in \lattice$). \footnote{Following \cite{Bowman-Ahmed-15-ICFP}, we use \obs\ for observers.}
The indistinguishability relations for \obs\ at type $\tau$ on values  (denoted as \indvalpnew{\tau}{\policy}) is defined as an instance of the logical relation with a careful choice of interpretations for $\alpha_l$ and $\alpha_H^f$.
The idea is that if the observer \obs\ cannot observe data at $l$ (i.e. $l \not\smalleq \obs$), \obs\ does not have any key to unwrap these values and hence, all wrapped values at $l$ are indistinguishable to \obs. Thus, $\alpha_l$ is interpreted as the empty relation for \obs.
Otherwise, since \obs\ has key and can unwrapped values, values wrapped at $l$ are indistinguishable to \obs\ if they are equal and hence, $\alpha_l$ is interpreted as $\{\tuplemt{\unitVal,\unitVal}\}$ (note that the concrete type for key is \unitType).
Since wrapped numbers from \hinp\ are indistinguishable to observer \obs\ when they cannot be distinguish by the declassifier $f$, the interpretation of $\alpha_H^f$ for the observer $M$ is
\{\tuplemt{\unitVal,\unitVal}\}.\footnote{Therefore, if we used $\alpha_H \rightarrow \alpha_f$ for \hinp, all wrapped values from \hinp\ would be indistinguishable to the observer $M$ since this observer do not have any key to open  data at $H$ that cannot be declassified (to observer $M$, the interpretation of $\alpha_H$ is empty.}
By using the idea in the previous sections, based on \indvalpnew{\tau}{\policy}, we define indistinguishability relations for \obs\ at type $\tau$ on terms (denoted as \indtermpnew{\tau}{\policy})).

\paragraph{Free theorem.}
We write $\rho$\ as an environment that maps type variables to its interpretations of form $\tuple{\tau_1,\tau_2,R}$ and maps term variables to tuples of values. 
(This is similar to the formalization for \S\ref{sec:ml:trni}.) 
We write $\rho_L$ and $\rho_R$ for the mappings that map every variable in the domain of $\rho$ to respectively the first element and the second element of the tuple that $\rho$ maps that variable to.
The application of $\rho_L$ (resp. $\rho_R$) to $e$ is denoted by $\rho_L(e)$ (resp. $\rho_R(e)$) (this notations is similar to $\delta\gamma(e)$ in \S\ref{sec:abstraction}).
We write $\rho \respectfullp \policy$ to mean that $\rho$ maps inputs to tuples of indistinguishable values.

By leveraging the abstraction theorem, we get the free theorem saying that a well-typed program $e$ maps indistinguishable inputs to indistinguishable outputs.

\begin{theorem}
\label{thm:trni_free:maintext}
If \typeEval{\tcontextp,\contextp}{e:\tau}, then for any {$\obs \in \lattice$} and $\rho \respectfullp \policy$, 
$$\tuplemt{\rho_L(e),\rho_R(e)} \in \indtermpnew{\tau}{\policy}.$$
\end{theorem}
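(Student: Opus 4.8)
The plan is to mirror the proof of Theorem~\ref{thm:trni}: first invoke the abstraction theorem to obtain self-relatedness of $e$ in the logical relation, and then transport this to indistinguishability by showing that, under the environment determined by the observer $\obs$ and the policy \policy, the logical relation coincides with the indistinguishability relation \indtermpnew{\tau}{\policy}. The abstraction theorem I use here is the one for the extended calculus of \S\ref{sec:monad} (the System~F analogue of Theorem~\ref{thm:type-abstraction}, with universal types and \unitType), and the formalization follows the combined-environment style of the module-calculus development in \S\ref{sec:ml:trni:summary}, where $\rho$ bundles the type-variable interpretations (triples $\tuple{\tau_1,\tau_2,R}$) together with the term substitution, and $\rho_L,\rho_R$ are its left/right projections. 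The genuinely new ingredients relative to \S\ref{sec:trni} are the multiple levels, the monadic interface, and the level-indexed choice of interpretations for the key-type variables.

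First I would fix $\obs \in \lattice$ and a $\rho \respectfullp \policy$, and record the interpretations that $\respectfullp$ imposes on the type variables of \tcontextp: each $\alpha_l$ is sent to \unitType\ on both sides with relation the empty set when $l \not\smalleq \obs$ and the singleton $\set{\tuplemt{\unitVal,\unitVal}}$ when $l \smalleq \obs$; the declassification key variable $\alpha_H^f$ is interpreted as $\set{\tuplemt{\unitVal,\unitVal}}$ for observers that may see the declassified value and empty otherwise; and $\alpha^f$ is interpreted by the declassifier-based relation exactly as $\alpha_f$ was in \S\ref{sec:trni}. I would check that each of these is a legal element of \relDel{\delta_1,\delta_2} for the associated closed types (the empty relation and the singleton are both admissible relations over closed values), so that $\rho$ is a valid relational environment for \tcontextp\ and $\valrelation{\tau}{\rho}$, $\termrelation{\tau}{\rho}$ are defined for every $\tau$ with $\typeEval{\tcontextp}{\tau}$. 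The coincidence lemma, the analogue of Lemma~\ref{lem:logeqpol-indis:eq}, states that for all such $\tau$, $\valrelation{\tau}{\rho} = \indvalpnew{\tau}{\policy}$ and $\termrelation{\tau}{\rho} = \indtermpnew{\tau}{\policy}$; this is immediate if indistinguishability is literally defined as the logical relation at these interpretations, and otherwise follows by induction on $\tau$, the base cases for $\alpha_l$, $\alpha_H^f$, $\alpha^f$ holding by construction and the $\times$, $\rightarrow$, $\forall$ and \unitType\ cases by the routine unfolding of the two relations.

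Next I would prove the substitution lemma, the analogue of Lemma~\ref{lem:term-sub:indis:logeq}: that $\rho \respectfullp \policy$ yields $\tuplemt{\rho_L,\rho_R} \in \valrelation{\contextp}{\rho}$. For the confidential inputs \hinp, \minp, \linp\ this is precisely the content of $\respectfullp$ (they are mapped to indistinguishable pairs), transported through the coincidence lemma. For the monadic interface variables --- \comp{l}, \wrap{l}, \convup{l}{l'}, \conv{f}, and $\hinp_f$ --- the key observation is that $\rho$ maps each of them, on both sides, to the same fixed closed implementation (\compcr\ and its companions of \S\ref{sec:monad}). Since these implementations are closed, well-typed terms of exactly the declared polymorphic types, the abstraction theorem applied to each of them yields that it is logically related to itself at its type under any legal relation assignment, in particular under the one carried by $\rho$. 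This is where parametricity does the real work for free: we never have to verify by hand that, e.g., \compcr\ respects the empty relation at invisible levels. With $\tuplemt{\rho_L,\rho_R} \in \valrelation{\contextp}{\rho}$ in hand, the abstraction theorem applied to $\typeEval{\tcontextp,\contextp}{e:\tau}$ gives $\tuplemt{\rho_L(e),\rho_R(e)} \in \termrelation{\tau}{\rho}$, and the coincidence lemma rewrites this as $\tuplemt{\rho_L(e),\rho_R(e)} \in \indtermpnew{\tau}{\policy}$, which is the claim.

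The main obstacle is the coincidence lemma together with checking the coherence of the level-indexed interpretation across all the interface types in Fig.~\ref{fig:encoding:multiple}. Concretely, one must confirm that the empty-relation choice for invisible levels really does make the logical relation identify all wrapped outputs at those levels --- so that a program such as the one in Example~\ref{ex:ni:running_example:comp} leaks nothing on channels the observer cannot read --- and that the singleton choice for $\alpha_H^f$ relative to $\alpha^f$ correctly reconciles the declassification relation with the wrapping discipline, which is exactly the subtlety flagged in the footnote explaining why $\alpha_H^f$ rather than $\alpha_H$ must be used for \hinp. Once these interpretations are pinned down and shown to be admissible relations, everything else reduces to the abstraction theorem and routine structural inductions, just as in \S\ref{sec:trni}.
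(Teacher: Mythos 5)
Your overall architecture --- fix $\obs$ and $\rho \respectfullp \policy$, check admissibility of the level-indexed interpretations, prove a substitution lemma for \contextp, apply the abstraction theorem, and transport the result through the (here definitional) coincidence of the logical relation with indistinguishability --- matches the paper's proof of this theorem, and most of these steps are carried out the same way: the paper defines $\indtermp{\tau}{\policy}$ literally as $\termrelation{\tau}{\rho}$, so no induction on $\tau$ is needed for the coincidence step, exactly as you anticipate. The genuine gap is in your treatment of the interface variables. You claim that because \compcr, \convupcr, \wrapcr, \convcr\ and $\dec{f}$ are closed, well-typed terms ``of exactly the declared polymorphic types'', the abstraction theorem gives their self-relatedness under the relation assignment carried by $\rho$, so that ``we never have to verify by hand that \compcr\ respects the empty relation at invisible levels''. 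This step fails. The implementations are typed only at \emph{closed} types in which $\alpha_l$ has been replaced by \unitType\ --- indeed the body of \compcr\ applies its argument $x$ to \unitVal, so it cannot be typed at the open interface type $\forall\beta_1,\beta_2.(\alpha_l\to\beta_1)\to\cdots$ at all. The abstraction theorem therefore yields self-relatedness at $\unitType\to\beta_1$, where \unitType\ carries its standard interpretation $\set{\tuplemt{\unitVal,\unitVal}}$, not at $\alpha_l\to\beta_1$ under $\rho$, where $\rho(\alpha_l)$ may be the \emph{empty} relation and the argument relation is therefore a strictly different (larger) one. The paper flags precisely this point in the footnote to Lemma~\ref{lem:substitution:term:comp_convup} (``not a direct result of the abstraction theorem since the types of implementations \dots\ are closed types while the types of interfaces \dots\ are open types'') and instead proves Lemmas~\ref{lem:trni:substitution:term:interface} and~\ref{lem:trni:substitution:term:declassification} by direct case analysis on $l \smalleq \obs$ versus $l \not\smalleq \obs$.

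The same problem is sharper for $\dec{f}$: its required relatedness at $(\alpha_H^f\to\alpha^f)\to\alpha_M\to\intType$ with $\alpha^f$ interpreted as $R_f$ genuinely uses the defining property of $R_f$ (that $f$ sends $R_f$-related values to related results), together with a separate reflexivity argument for the case $l \smalleq \obs$ where $\alpha^f$ is interpreted as the identity. None of this is delivered by parametricity of the implementation term; it is exactly the place where the policy semantics must be checked by hand, and it is the only non-routine content of the whole proof. Once you replace your ``for free'' step with these hand-verifications, the remainder of your argument goes through and coincides with the paper's.
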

We state it this way to avoid spelling out the definition of TRNI for this encoding.
The encoding presented here can also be extended to support richer policies described in Remark~\ref{rem:extension}.

\begin{remark}
Our encoding supports declassification while DCC does not.
However, if we consider programs without declassification then DCC is more expressive since in our encoding, to use a wrapped value in a computation at $l$, this value must be wrapped at $l'$ such that $l' \smalleq l$.
However, in DCC, this is not the case due to the definition of the ``protected at'' judgment in DCC: if type $\tau$ is already protected at $l$ then so is $T_{l'}\tau$ for any $l'$.
Therefore, data protected at $l$ can be used in a computation protected at $l'$ even when $l' \not\smalleq l$.
For example, we consider the encoding for a policy defined in a lattice with four levels $\top$, $M_1$, $M_2$, $\bot$ where $\top \smallst M_i \smallst \top$ but $M_1$ and $M_2$ are incomparable.
We can have the following well-typed program in DCC (the program is written in the notations in \cite{Bowman-Ahmed-15-ICFP}).
$$bind\  y = (\eta_{M_1}1)\ in\ \eta_{M_2}(\eta_{M_1} (y + 1)) $$

In our encoding, this program can be rewritten as below, where $f:\intType \rightarrow \alpha_{M_2} \rightarrow \alpha_{M_1} \rightarrow \intType$ is from function $\lambda y:\intType. y+1$ (see a similar function in Example~\ref{ex:ni:running_example:comp}).
$$\comp{M_2}[\intType][\intType]\ (\wrap{M_1}{1}) \ f$$

This program is not well-typed in our encoding.
This feature of DCC, allowing multiple layers of wrapping, is needed to encode state-passing programs (in particular, to encode the Volpano-Smith system for while programs) where low data is maintained unchanged through high computations.  The feature seems unnecessary for functional programs.
\end{remark}

\section{Proofs of Section~\ref{sec:abstraction}}
\label{sec:abstraction:proof}

\textbf{Lemma~\ref{lem:logeq:related-term}.}
\tealtext{Suppose that $\rho \in \relDel{\delta_1,\delta_2}$ for some $\delta_1$ and $\delta_2$. 
For $i \in \set{1,2}$, it follows that:}
\begin{itemize}
\item \tealtext{if $\tuple{v_1,v_2} \in \valrelation{\tau}{\rho}$, then \typeEval{}{v_i: \delta_i(\tau)}, and}
\item \tealtext{if $\tuple{e_1,e_2} \in \termrelation{\tau}{\rho}$, then \typeEval{}{e_i: \delta_i(\tau)}}.
\end{itemize}
\begin{proof}
The second part of the lemma follows directly from rule FR-Term.
We prove the first part of the lemma by induction on structure of $\tau$.

\emcase{Case 1:} $\intType$.
We consider $\tuple{v_1,v_2} \in \valrelation{\intType}{\rho}$.
From FR-Int, we have that \typeEval{}{v_i:\intType}.
Since $\delta_i(\intType) = \intType$, we have that \typeEval{}{v_i:\delta(\intType)}.

\emcase{Case 2:} $\alpha$.
We consider $\tuple{v_1,v_2} \in \valrelation{\alpha}{\rho}$.
From the FR-Var rule, $\tuple{v_1,v_2} \in \rho(\alpha) \in \relDel{\delta_1(\alpha),\delta_2(\alpha)}$.
From the definition of $\relDel{\delta_1(\alpha),\delta_2(\alpha)}$, we have that \typeEval{}{v_i:\delta_i(\alpha)}.

\emcase{Case 3:} $\tau_1 \times \tau_2$.
We consider $\tuple{v_1,v_2} \in \valrelation{\tau_1 \times \tau_2}{\rho}$.
We then have that $v_1 = \tuple{v_{11},v_{12}}$ for some $v_{11}$ and $v_{12}$ and $v_2 = \tuple{v_{21}, v_{22}}$ for some $v_{21}$ and $v_{22}$.
From FR-Pair, it follows that $\tuple{v_{11},v_{21}} \in \valrelation{\tau_1}{\rho}$ and $\tuple{v_{12},v_{22}} \in \valrelation{\tau_2}{\rho}$.
From IH (on $\tau_1$ and $\tau_2$), we have that \typeEval{}{v_{11}:\delta_1(\tau_1)}, \typeEval{}{v_{21}:\delta_2(\tau_1)}, \typeEval{}{v_{12}:\delta_1(\tau_2)}, and \typeEval{}{v_{22}:\delta_2(\tau_2)}.
From FT-Pair, \typeEval{}{\tuple{v_{11},v_{12}}:\delta_1(\tau_1)\times \delta_1(\tau_2)} and \typeEval{}{\tuple{v_{21},v_{22}}:\delta_2(\tau_1)\times \delta_2(\tau_2)}.
Thus, \typeEval{}{v_1:\delta_1(\tau_1)\times \delta_1(\tau_2)} and \typeEval{}{v_2:\delta_2(\tau_1)\times \delta_2(\tau_2)}.
In other words, \typeEval{}{v_1:\delta_1(\tau_1 \times \tau_2)} and \typeEval{}{v_2:\delta_2(\tau_1 \times \tau_2)}.

\emcase{Case 4:} $\tau_1 \rightarrow \tau_2$.
We consider $\tuple{v_1,v_2} \in \valrelation{\tau_1 \rightarrow \tau_2}{\rho}$.
We now look at arbitrary $\tuple{v_1',v_2'} \in \valrelation{\tau_1}{\rho}$.
From FR-Fun, it follows that $\tuple{v_1v_1',v_2v_2'} \in \termrelation{\tau_2}{\rho}$.
From IH on $\tau_1$ and the second part of the lemma on $\tau_2$, we have that \typeEval{}{v_i':\delta_i(\tau_1)} and \typeEval{}{v_iv_i':\delta_i(\tau_2)} (for $i \in \set{1,2}$).
From FT-App, we have that \typeEval{}{v_i:\delta_i(\tau_1 \rightarrow\tau_2)}.
\end{proof}

\textbf{Theorem~\ref{thm:type-abstraction}.}
If \typeEval{\Delta,\Gamma}{e:\tau}, then \typeEval{\Delta,\Gamma}{e\logeq e:\tau}.
\begin{proof}
We prove the theorem by induction on typing derivation.

\emcase{Case 1:} Rule FT-Int.
$$\EmptyRule{~}{\typeEval{}{n:\intType}}$$

From FR-Int, we have that $\tuple{n,n} \in \valrelation{\intType}{\rho}$.
Therefore, from FR-Term, it follows that $\tuple{n,n} \in \termrelation{\intType}{\rho}$.
\purpletext{Hence, $\tuple{\delta_1\gamma_1(n),\delta_2\gamma_2(n)} \in \termrelation{\intType}{\rho}$}.
The proof is closed for this case.

\emcase{Case 2:} Rule FT-Var.
$$\EmptyRule{x:\tau \in \Gamma}{\typeEval{\Delta,\Gamma}{x:\tau}}$$

Since $\tuple{\gamma_1,\gamma_2} \in \valrelation{\Gamma}{\rho}$ and $x \in \dom{\gamma_1}= \dom{\gamma_2}$, {we have that $\tuple{\gamma_1(x), \gamma_2(x)}\in \valrelation{\Gamma(x)}{\rho}$}, and hence $\tuple{\gamma_1(x), \gamma_2(x)}\in \termrelation{\Gamma(x)}{\rho}$.

Since there is no type variable in $x$, we have that $\delta_1\gamma_1(x) = \gamma_1(x)$ and $\delta_2\gamma_2(x) = \gamma_2(x)$.
As proven above, $\tuple{\gamma_1(x), \gamma_2(x)}\in \termrelation{\tau}{\rho}$.
Thus, we have that $\tuple{\delta_1\gamma_1(x), \delta_2\gamma_2(x)}\in \termrelation{\tau}{\rho}$.
The proof is closed for this case.

\emcase{Case 3:} Rule FT-Pair.
$$\EmptyRule{\typeEval{\Delta,\Gamma}{e_1:\tau_1}\\\typeEval{\Delta,\Gamma}{e_2:\tau_2}}
{\typeEval{\Delta,\Gamma}{\tuple{e_1,e_2}:\tau_1 \times \tau_2}}$$

From IH, it follows that 
\begin{itemize}
\item $\tuple{\delta_1\gamma_1(e_1), \delta_2\gamma_2(e_1)} \in \termrelation{\tau_1}{\rho}$, and 
\item $\tuple{\delta_1\gamma_1(e_2), \delta_2\gamma_2(e_2)} \in \termrelation{\tau_2}{\rho}$.
\end{itemize}

From the FR-Term, we have that
\begin{itemize}
\item $\tuple{v_{11}, v_{12}} \in \valrelation{\tau_1}{\rho}$, {where $\tuple{\delta_1\gamma_1(e_1),\delta_2\gamma_2(e_1)} \reduce \tuple{v_{11},v_{12}}$}, and 
\item $\tuple{v_{21}, v_{22}} \in \valrelation{\tau_2}{\rho}$, {where $\tuple{\delta_1\gamma_1(e_2),\delta_2\gamma_2(e_2)} \reduce \tuple{v_{21},v_{22}}$}.
\end{itemize}

From FR-Pair, it follows that $\tuple{\tuple{v_{11},v_{21}},\tuple{v_{12},v_{22}}} \in \valrelation{\tau_1\times \tau_2}{\rho}$.
From FR-Term, we have that
$${\tuple{\tuple{\delta_1\gamma_1(e_1),\delta_1\gamma_1(e_2)},\tuple{\delta_2\gamma_2(e_1),\delta_2\gamma_2(e_2)}}  \in \termrelation{\tau_1\times \tau_2}{\rho}}.$$

Thus, $\tuple{\delta_1\gamma_1\tuple{e_1,e_2},\delta_2\gamma_2\tuple{e_1,e_2}}  \in \termrelation{\tau_1\times \tau_2}{\rho}$.
This case is closed.

\emcase{Case 4:} rule FT-Prj.
$$\EmptyRule
{\typeEval{\Delta,\Gamma}{e:\tau_1\times\tau_2}}
{\typeEval{\Delta,\Gamma}{\prj{i}{e}:\tau_i}}
$$

From IH, we have that $\tuple{\delta_1\gamma_1(e),\delta_2\gamma_2(e)} \in \termrelation{\tau_1\times\tau_2}{\rho}$.
Thus, $\tuple{\tuple{v_{11},v_{21}},\tuple{v_{12},v_{22}}} \in \valrelation{\tau_1\times\tau_2}{\rho}$, {where $\delta_1\gamma_1(e) \reduce \tuple{v_{11},v_{21}}$ and $\delta_2\gamma_2(e) \reduce \tuple{v_{12},v_{22}}$}.
From FR-Pair, we have that {$\tuple{v_{11},v_{12}} \in \valrelation{\tau_1}{\rho}$ and $\tuple{v_{21},v_{22}} \in \valrelation{\tau_2}{\rho}$.}

{Since $\delta_1\gamma_1(e) \reduce \tuple{v_{11},v_{21}}$ and $\delta_2\gamma_2(e) \reduce \tuple{v_{12},v_{22}}$}
\begin{itemize}
\item $\prj{1}{\delta_1\gamma_1(e)} \reduce v_{11}$, 
\item $\prj{2}{\delta_1\gamma_1(e)} \reduce v_{12}$,
\item $\prj{1}{\delta_2\gamma_2(e)} \reduce v_{12}$,
\item $\prj{2}{\delta_2\gamma_2(e)} \reduce v_{22}$.
\end{itemize}

Thus, \begin{itemize}
\item $\delta_1\gamma_1(\prj{1}{e}) \reduce v_{11}$, 
\item $\delta_1\gamma_1(\prj{2}{e}) \reduce v_{12}$,
\item $\delta_2\gamma_2(\prj{1}{e}) \reduce v_{12}$,
\item $\delta_2\gamma_2(\prj{2}{e}) \reduce v_{22}$.
\end{itemize}
 
Thus, 
\begin{itemize}
\item $\tuple{\delta_1\gamma_1(\prj{1}{e}),\delta_2\gamma_2(\prj{1}{e})} \reduce \tuple{v_{11},v_{12}}$ and 
\item $\tuple{\delta_1\gamma_1(\prj{2}{e}),\delta_2\gamma_2(\prj{2}{e})} \reduce \tuple{v_{12,v_{22}}}$.
\end{itemize}

As proven above {$\tuple{v_{11},v_{12}} \in \valrelation{\tau_1}{\rho}$ and $\tuple{v_{21},v_{22}} \in \valrelation{\tau_2}{\rho}$.}
From FR-Term, we have that $\tuple{\delta_1\gamma_1(\prj{1}{e}),\delta_2\gamma_2(\prj{1}{e})} \in \termrelation{\tau_1}{\rho}$ and $\tuple{\delta_1\gamma_1(\prj{2}{e}),\delta_2\gamma_2(\prj{2}{e})}  \in \termrelation{\tau_2}{\rho}$.
This case is closed.

\emcase{Case 5:} rule FT-Fun.
$$\EmptyRule
{\typeEval{\Delta,\Gamma,x:\tau_1}{e:\tau_2}}
{\typeEval{\Delta,\Gamma}{\lambda x:\tau_1.e: \tau_1 \rightarrow \tau_2}}
$$

{From IH, we have that $\tuple{\delta_1\gamma_1[x\mapsto v_1](e), \delta_2\gamma_2[x \mapsto v_2](e)} \in \termrelation{\tau_2}{\rho}$, where $\tuple{v_1,v_2} \in \termrelation{\tau_1}{\rho}$}.
Hence, we have that $\tuple{\delta_1\gamma_1(e[x\mapsto v_1]), \delta_2\gamma_2(e[x \mapsto v_2])} \in \termrelation{\tau_2}{\rho}$.
{From the semantics of the language and the FR-Term rule}, it follows that $$\tuple{\delta_1\gamma_1((\lambda x:\tau_1.e)v_1]), \delta_2\gamma_2((\lambda x:\tau_1.e) v_2)} \in \termrelation{\tau_2}{\rho}.$$

Since \tuple{v_1,v_2} are arbitrary, from FR-Fun, we have that $\tuple{\delta_1\gamma_1(\lambda x:\tau_1.e), \delta_2\gamma_2(\lambda x:\tau_1.e)} \in \valrelation{\tau_1\rightarrow\tau_2}{\rho}$, and hence $\tuple{\delta_1\gamma_1(\lambda x:\tau_1.e), \delta_2\gamma_2(\lambda x:\tau_1.e)} \in \termrelation{\tau_1\rightarrow\tau_2}{\rho}$.
The proof is closed for this case.

\emcase{Case 6:} rule FT-App.
$$
\EmptyRule
{\typeEval{\Delta,\Gamma}{e_1:\tau_1 \rightarrow \tau_2} \\
\typeEval{\Delta,\Gamma}{e_2: \tau_1}}
{\typeEval{\Delta,\Gamma}{e_1\ e_2: \tau_2}}
$$

The proof follows from IH and rule FR-Fun.
\end{proof}

\section{Proofs of Section~\ref{sec:trni}}
\label{sec:trni:proof}
\textbf{Lemma~\ref{lem:relation-assignment}}.
\tealtext{$\rhoPol{\policy}\in \relDel{\deltaPol{\policy},\deltaPol{\policy}}$}.
\begin{proof}
We need to prove that for any type variable $\alpha$, if $\tuple{v_1,v_2} \in \rhoPol{\policy}(\alpha)$, then \typeEval{}{v_i:\deltaPol{\policy}(\alpha)} and hence \typeEval{}{v_i:\intType} 
according to the definition (\ref{def:deltaPol}).
This follows directly from the definition of \rhoPol{\policy} and rules Eq-Var1, Eq-Var2, and Eq-Var3.
\end{proof}

\begin{lemma}
\label{lem:canonical-form}
\tealtext{Suppose that \typeEval{}{v:\tau}
It follows that:}
\begin{itemize}
\item \tealtext{if $\tau$ is $\intType$, $v$ is $n$ for some $n$},
\item \tealtext{if $\tau$ is $\tau_1 \rightarrow \tau_2$, $v$ is $\lambda x:\tau_1.e$ for some $e$},
\item \tealtext{if $\tau$  is $\tau_1 \times \tau_2$, it is \tuple{v_1,v_2} for some $v_1$ and $v_2$}.
\end{itemize}
\end{lemma}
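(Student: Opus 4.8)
The plan is to prove this canonical forms lemma by a straightforward case analysis on the syntactic form of the value $v$, exploiting the fact that the typing rules of Fig.~\ref{fig:typing-rule} are syntax-directed. By the grammar of the language, a value has exactly one of three forms: an integer literal $n$, a pair $\prdVal{v_1,v_2}$, or an abstraction $\Abstract{x:\tau_1}{e}$. The variable form $x$ need not be considered, since a variable is not a value (and in any case a closed variable is not typable under the empty context).

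First I would observe that, because each term constructor is the subject of exactly one typing rule, inverting the derivation of $\typeEval{}{v:\tau}$ against the form of $v$ determines the head connective of $\tau$. If $v = n$, the last rule must be FT-Int, forcing $\tau = \intType$. If $v = \prdVal{v_1,v_2}$, the last rule must be FT-Pair, forcing $\tau = \tau_1 \times \tau_2$ for some $\tau_1,\tau_2$. If $v = \Abstract{x:\tau_1}{e}$, the last rule must be FT-Fun, forcing $\tau = \tau_1 \rightarrow \tau_2$ for some $\tau_2$. No other rule (FT-Var, FT-Prj, FT-App) can conclude with a value on the left, since their subjects $x$, $\prj{i}{e}$, and $e_1\,e_2$ are not value forms.

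With this correspondence in hand, each bullet of the lemma follows by matching the assumed shape of $\tau$ against the three possibilities, using that the type constructors $\intType$, $\tau_1\times\tau_2$, and $\tau_1\rightarrow\tau_2$ have pairwise distinct head symbols. For the first bullet, if $\tau = \intType$ then $v$ can be neither a pair (which would force a product type) nor an abstraction (which would force an arrow type), so $v$ must be $n$. The second and third bullets are symmetric: assuming $\tau = \tau_1\rightarrow\tau_2$ rules out the integer and pair forms, leaving $v = \Abstract{x:\tau_1}{e}$, and assuming $\tau = \tau_1\times\tau_2$ rules out the integer and abstraction forms, leaving $v = \prdVal{v_1,v_2}$.

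I do not anticipate any genuine obstacle, as the lemma is a routine inversion argument. The only point requiring a line of care is the remark that the three type forms are syntactically incompatible, so that fixing the head connective of $\tau$ rules out two of the three value shapes and pins down the third.
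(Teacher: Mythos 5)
Your proposal is correct and amounts to the same routine inversion argument as the paper's proof: the paper fixes the shape of $\tau$ and does a case analysis on which typing rule could conclude $\typeEval{}{v:\tau}$ for a value $v$, while you fix the syntactic form of $v$ and invert to determine the head connective of $\tau$; these are dual organizations of the identical reasoning, relying on the same facts (the value grammar, syntax-directedness of Fig.~\ref{fig:typing-rule}, and distinctness of the type constructors).
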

\begin{proof}
We prove the case of $\intType$ first by case analysis on typing rules. 
The FT-Int rule gives us the desired result.
The other rules cannot be instantiated with an expression that is a value and of the \intType\ type.



We prove the case of $\tau_1 \rightarrow \tau_2$ by case analysis on typing rules.
The FT-Fun rule gives us the desired result.
The other rules cannot be instantiated with an expression that is a value and of the $\tau_1 \rightarrow \tau_2$ type.


The proof for the case $\tau_1 \times \tau_2$ is similar.
\end{proof}

\textbf{Lemma~\ref{lem:logeqpol-indis:eq}}.
\tealtext{It follows that}
\begin{itemize}
\item \tealtext{$\tuple{v_1,v_2} \in \valrelation{\tau}{\rhoPol{\policy}}$ {iff} $\tuple{v_1,v_2} \in \indval{\tau}$, and}
\item \tealtext{$\tuple{e_1,e_2} \in \termrelation{\tau}{\rhoPol{\policy}}$ {iff} $\tuple{e_1,e_2} \in \indterm{\tau}$}.
\end{itemize}
\begin{proof}
We prove the lemma by induction on the structure of $\tau$.

\emcase{Case 1:} \intType.
We consider $\indval{\intType}$ and \valrelation{\intType}{\rhoPol{\policy}}
We have that $\tuple{n,n} \in \valrelation{\intType}{\rhoPol{\policy}}$ iff $\tuple{n,n} \in \indval{\intType}$.

We consider $\indterm{\intType}$ and \termrelation{\intType}{\rhoPol{\policy}}.
We consider \tuple{e_1,e_2}. 
\begin{itemize}
\item If $\tuple{e_1,e_2} \in \indval{\intType}$, from Eq-Term and Eq-Int, there exists $n$ s.t. $e_i \transit^* n$. 
From FR-Int and FR-Term, $\tuple{e_1,e_2} \in \termrelation{\intType}{\rhoPol{\policy}}$.

\item If $\tuple{e_1,e_2} \in \termrelation{\intType}{\rhoPol{\policy}}$, from FR-Term and FR-Int, there exists $n$ s.t. $e_i \transit^* n$. 
From Eq-Int and Eq-Term, $\tuple{e_1,e_2} \in \indterm{\intType}$.
\end{itemize}

\emcase{Case 2:} $\alpha_x$.
First we consider $\indval{\alpha_x}$ and \valrelation{\alpha_x}{\rhoPol{\policy}}.
From the definition of \rhoPol{\policy} and the FR-Var rule, $\tuple{v_1,v_2} \in \valrelation{\alpha_x}{\rhoPol{\policy}}$ iff $\tuple{v_1,v_2} \in \indval{\alpha_x}$.

\emcase{Case 3:} $\alpha_f$.
The proof is similar to the one of Case 2.


\emcase{Case 4:} $\tau_1 \times \tau_2$.
We first consider $\indval{\tau_1 \times \tau_2}$ and \valrelation{\tau_1 \times \tau_2}{\rhoPol{\policy}}.
\begin{itemize}
\item Suppose that $\tuple{v_1,v_2} \in \indval{\tau_1 \times \tau_2}$. 
{From the definition of indistinguishability, we have that \typeEval{}{v_i:\deltaPol{\policy}(\tau_1 \times \tau_2)}}.
From Lemma~\ref{lem:canonical-form}, $v_1 = \tuple{v_{11},v_{12}}$ and $v_2 = \tuple{v_{21},v_{22}}$.
Thus, we have that 
$\tuple{\tuple{v_{11},v_{12}},\tuple{v_{21},v_{22}}} \in \indval{\tau_1 \times \tau_2}$.
From the Eq-Pair rule, we have that $\tuple{v_{11},v_{21}} \in \indval{\tau_1}$ and $\tuple{v_{12},v_{22}} \in \indval{\tau_2}$.
From IH on $\tau_1$ and $\tau_2$, we have that $\tuple{v_{11},v_{21}} \in \valrelation{\tau_1}{\rhoPol{\policy}}$ and $\tuple{v_{12},v_{22}} \in \valrelation{\tau_2}{\rhoPol{\policy}}$.
From the FR-Pair, it follows that $\tuple{\tuple{v_{11},v_{12}},\tuple{v_{21},v_{22}}} \in \valrelation{\tau_1\times\tau_2}{\rhoPol{\policy}}$.

\item Suppose that $\tuple{v_1,v_2} \in \valrelation{\tau_1 \times \tau_2}{\rhoPol{\policy}}$.
From Lemma~\ref{lem:logeq:related-term}, \typeEval{}{v_i:\deltaPol{\policy}(\tau_1\times \tau_2)}.
Thus, we have that $\tuple{\tuple{v_{11},v_{12}},\tuple{v_{21},v_{22}}} \in \valrelation{\tau_1 \times \tau_2}{\rhoPol{\policy}}$.
From the FR-Pair rule, we have that $\tuple{v_{11},v_{21}} \in \valrelation{\tau_1}{\rhoPol{\policy}}$ and $\tuple{v_{12},v_{22}} \in \valrelation{\tau_2}{\rhoPol{\policy}}$.
From IH on $\tau_1$ and $\tau_2$, we have that $\tuple{v_{11},v_{21}} \in \indval{\tau_1}$ and $\tuple{v_{12},v_{22}} \in \indval{\tau_2}$.
From the Eq-Pair, it follows that $\tuple{\tuple{v_{11},v_{12}},\tuple{v_{21},v_{22}}} \in \indval{\tau_1\times\tau_2}$.
\end{itemize}

We now consider $\indterm{\tau_1 \times \tau_2}$ and \termrelation{\tau_1 \times \tau_2}{\rhoPol{\policy}}.
\begin{itemize}
\item Suppose that $\tuple{e_1,e_2} \in \indterm{\tau_1 \times \tau_2}$.
From Eq-Term, we have that $e_i \reduce v_i$ for some $v_i$ and $\tuple{v_1,v_2} \in \indval{\tau_1 \times \tau_2}$.
As proven above, we have that $\tuple{v_1,v_2} \in \valrelation{\tau_1 \times \tau_2}{\rhoPol{\policy}}$.
From FR-Term, $\tuple{e_1,e_2} \in \termrelation{\tau_1 \times \tau_2}{\rhoPol{\policy}}$.

\item Suppose that $\tuple{e_1,e_2} \in \termrelation{\tau_1 \times \tau_2}{\rhoPol{\policy}}$.
From FR-Term, we have that $e_i \reduce v_i$ for some $v_i$ and $\tuple{v_1,v_2} \in \valrelation{\tau_1 \times \tau_2}{\rhoPol{\policy}}$.
As proven above, we have that $\tuple{v_1,v_2} \in \indval{\tau_1 \times \tau_2}$.
From Eq-Term, $\tuple{e_1,e_2} \in \indterm{\tau_1 \times \tau_2}$.
\end{itemize}

\emcase{Case 5:} $\tau_1 \rightarrow \tau_2$.
We first consider $\indval{\tau_1 \rightarrow \tau_2}$ and \valrelation{\tau_1 \rightarrow\tau_2}{\rhoPol{\policy}}.
\begin{itemize}
\item Suppose $\tuple{v_1,v_2} \in \indval{\tau_1 \rightarrow \tau_2}$.
We need to prove that for any $\tuple{v_1',v_2'} \in \valrelation{\tau_1}{\rhoPol{\policy}}$, $\tuple{v_1 v_1', v_2 v_2'} \in \termrelation{\tau_2}{\rhoPol{\policy}}$.

Since $\tuple{v_1',v_2'} \in \valrelation{\tau_1}{\rhoPol{\policy}}$, from IH on $\tau_1$, we have that $\tuple{v_1',v_2'} \in \indval{\tau_1}$.
Since $\tuple{v_1,v_2} \in \indval{\tau_1 \rightarrow \tau_2}$, from Eq-Fun, we have that $\tuple{v_1\ v_1', v_2\ v_2'} \in \indterm{\tau_2}$.
From IH on $\tau_2$, $\tuple{v_1\ v_1', v_2\ v_2'} \in \termrelation{\tau_2}{\rhoPol{\policy}}$.

\item Suppose that $\tuple{v_1,v_2} \in \valrelation{\tau_1 \rightarrow \tau_2}{\rhoPol{\policy}}$.
We need to prove that for any $\tuple{v_1',v_2'} \in \indval{\tau_1}$, $\tuple{v_1 v_1', v_2 v_2'} \in \indterm{\tau_2}$.

Since $\tuple{v_1',v_2'} \in \indval{\tau_1}$, from IH on $\tau_1$, we have that $\tuple{v_1',v_2'} \in \valrelation{\tau_1}{\rhoPol{\policy}}$.
Since $\tuple{v_1,v_2} \in \valrelation{\tau_1 \rightarrow \tau_2}{\rhoPol{\policy}}$, from Eq-Fun, we have that $\tuple{v_1\ v_1', v_2\ v_2'} \in \termrelation{\tau_2}{\rhoPol{\policy}}$.
From IH on $\tau_2$, $\tuple{v_1\ v_1', v_2\ v_2'} \in \indterm{\tau_2}$.

\end{itemize}

We now consider $\indterm{\tau_1 \rightarrow \tau_2}$ and \termrelation{\tau_1 \rightarrow\tau_2}{\rhoPol{\policy}}.
\begin{itemize}
\item Suppose that $\tuple{e_1,e_2} \in \indterm{\tau_1 \rightarrow\tau_2}$.
From Eq-Term, $e_i \reduce v_i$ for some $v_i$ and $\tuple{v_1,v_2} \in \indval{\tau_1 \rightarrow\tau_2}$.
As proven above, we have that $\tuple{v_1,v_2} \in \valrelation{\tau_1\rightarrow \tau_2}{\rhoPol{\policy}}$.
Thus, $\tuple{e_1,e_2} \in \termrelation{\tau_1 \rightarrow\tau_2}{\rhoPol{\policy}}$.

\item Suppose that $\tuple{e_1,e_2} \in \termrelation{\tau_1 \rightarrow\tau_2}{\rhoPol{\policy}}$.
From FR-Term, $e_i \reduce v_i$ for some $v_i$ and $\tuple{v_1,v_2} \in \valrelation{\tau_1 \rightarrow\tau_2}{\rhoPol{\policy}}$.
As proven above, we have that $\tuple{v_1,v_2} \in \indval{\tau_1\rightarrow \tau_2}$.
Thus, $\tuple{e_1,e_2} \in \indterm{\tau_1 \rightarrow\tau_2}$.
\end{itemize}
\end{proof}

\textbf{Lemma~\ref{lem:term-sub:indis:logeq}.}
\tealtext{If $\tuple{\gamma_1,\gamma_2} \in \indval{\policy}$, then $\tuple{\gamma_1,\gamma_2} \in \valrelation{\pubViewTerm{\policy}}{\rhoPol{\policy}}$}.
\begin{proof}
We first prove that $\dom{\gamma_1} =\dom{\gamma_2} = \dom{\pubViewTerm{\policy}}$.
This is directly from the definition of $\tuple{\gamma_1,\gamma_2} \in \indval{\policy}$.

We now need to prove that $\tuple{\gamma_1(x),\gamma_2(x)} \in \valrelation{\pubViewTerm{\policy}}{\rhoPol{\policy}}$ for all $x$.
From the construction of \pubViewTerm{\policy}, we have the following cases.

\emcase{Case 1:} $\pubViewTerm{\policy}(x) = \alpha_x$.
From the assumption, we have that 
$$\tuple{\gamma_1(x),\gamma_2(x)} \in \indval{\alpha_x}.$$

From Lemma~\ref{lem:logeqpol-indis:eq}, it follows that $\tuple{\gamma_1(x),\gamma_2(x)} \in \valrelation{\alpha_x}{\rhoPol{\policy}}$.

\emcase{Case 2:} $\pubViewTerm{\policy}(x) = \alpha_f$
From the assumption, we have that 
$$\tuple{\gamma_1(x),\gamma_2(x)} \in \indval{\alpha_f}.$$

From Lemma~\ref{lem:logeqpol-indis:eq}, it follows that $\tuple{\gamma_1(x),\gamma_2(x)} \in \valrelation{\alpha_f}{\rhoPol{\policy}}$.

\emcase{Case 3:} $\pubViewTerm{\policy}(x_f) = \alpha_f \rightarrow \tau$.
{We need to prove that $\tuple{f,f} \in \valrelation{\alpha_f \rightarrow\tau}{\rhoPol{\policy}}$, where \typeEval{}{f:\intType \rightarrow\tau} for some $\tau$ s.t. \typeEval{}{\tau}}.
From FR-Fun, we need to prove that for any $\tuple{v_1,v_2} \in \valrelation{\alpha_f}{\rhoPol{\policy}}$, $\tuple{f\ v_1, f\ v_2}\in \termrelation{\tau}{\rhoPol{\policy}}$.

We consider an arbitrary $\tuple{v_1,v_2} \in \valrelation{\alpha_f}{\rhoPol{\policy}}$.
From the Eq-Var2 rule, we have that $\tuple{f\ v_1,f\ v_2} \in \indterm{\tau}$.
{From Lemma~\ref{lem:logeqpol-indis:eq}, $\tuple{f\ v_1,f\ v_2} \in \termrelation{\tau}{\rhoPol{\policy}}$}.

%

\emcase{Case 4:} $\pubViewTerm{\policy}(x_a) = \alpha_{f\circ a} \rightarrow \alpha_{f}$. 
We need to prove that $\tuple{a,a} \in \valrelation{\alpha_{f\circ a} \rightarrow\alpha_f}{\rhoPol{\policy}}$, where \typeEval{}{a:\intType \rightarrow\intType}.
From FR-Fun, we need to prove that for any $\tuple{v_1,v_2} \in \valrelation{\alpha_{f\circ a}}{\rhoPol{\policy}}$, $\tuple{a\ v_1, a\ v_2}\in \termrelation{\alpha_f}{\rhoPol{\policy}}$.

We consider an arbitrary $\tuple{v_1,v_2} \in \valrelation{\alpha_{f\circ a}}{\rhoPol{\policy}}$.
From the Eq-Var3 rule, we have that $\tuple{a\ v_1,a\ v_2} \in \indterm{\alpha_f}$.
{From Lemma~\ref{lem:logeqpol-indis:eq}, $\tuple{a\ v_1,a\ v_2} \in \termrelation{\alpha_f}{\rhoPol{\policy}}$}.
\end{proof}

\ifshow
\section{Proofs of Section~\ref{sec:ctx-eq}}
\textbf{Lemma~\ref{lem:ctx-eq:substitution}}.

\begin{proof}
We prove this lemma by induction on structure of  $e$.
 
\emcase{Case 1:} $n$.
We have that $n[x\mapsto v] = n$.

\emcase{Case 2:} $y$.
Since $x$ is not a free variable in $y$, we have that $x \neq y$
Hence, $[x\mapsto v](y) = \gamma(y)$.

\emcase{Case 3:} \tuple{e_1,e_2}.
\purpletext{Since $x \not \in FV(e)$, $x \not\in FV(e_1)$ and $x \not\in FV(e_2)$}.
From IH, $[x\mapsto v](e_1) = e_1$ and $[x\mapsto v](e_2) = e_2$.
Therefore, $\tuple{[x\mapsto v](e_1),[x\mapsto v](e_2)} = \tuple{e_1,e_2}$.

We have that:
\begin{itemize}
\item $[x\mapsto v](\tuple{e_1,e_2}) = \tuple{[x\mapsto v](e_1),[x\mapsto v](e_2)}$ (definition of substitution),
\item $\tuple{[x\mapsto v](e_1),[x\mapsto v](e_2)} = \tuple{e_1,e_2}$ (proven above).
\end{itemize}

Thus, $[x\mapsto v](\tuple{e_1,e_2}) = \tuple{e_1,e_2}$.

\emcase{Case 4:} \prj{i}{e}.
We prove only \prj{1}{e}. The proof of \prj{2}{e} is similar.
\purpletext{Since $x \not \in FV(\prj{1}{e})$, $x \not \in FV(e)$.}
From IH, $[x\mapsto v](e) = e$.

We have that:
\begin{itemize}
\item $[x\mapsto v](\prj{1}{e}) = \prj{1}{[x\mapsto v](e)}$ (from the definition of substitution),
\item $[x\mapsto v](e) = e$ (proven above),
\end{itemize}

Thus, $[x\mapsto v](\prj{1}{e}) = \prj{1}{e}$.

\emcase{Case 5:} $\lambda y:\tau_1.e$.
Without loss of generality, we suppose that $y \neq x$ (since we can change bound variables). 

Since $x \not \in FV(\lambda y:\tau_1.e)$ and $FV(\lambda y:\tau_1.e) = FV(e) \setminus \{y\}$, either $x = y$ or $x \neq y$ and $x \not\in FV(e)$.
Since $x \neq y$, it follows that $x \not \in FV(e)$.
From IH, we have that $[x\mapsto v](e) = e$.

We have that:
\begin{itemize}
\item $[x\mapsto v] (\lambda y:\tau_1.e) = \lambda y:\tau_1.([x\mapsto v](e))$ (since $y \neq x$ and $v$ is a closed value),
\item $[x\mapsto v](e) = e$ (proven above),
\end{itemize}

Thus, $[x\mapsto v] (\lambda y:\tau_1.e) = \lambda y:\tau_1.e$.

\emcase{Case 6:} $e_1\ e_2$.
Since $x \not\in FV(e)$, $x \not \in FV(e_1)$ and $x \not \in FV(e_2)$.
From IH, $[x\mapsto v](e_1) = e_1$ and $[x\mapsto v](e_2) = e_2$.

We have that:
\begin{itemize}
\item $[x\mapsto v](e_1\ e_2) = [x\mapsto v](e_1)\ [x\mapsto v](e_2)$ (definition of substitution),
\item $[x\mapsto v](e_1) = e_1$ and $[x\mapsto v](e_2) = e_2$ (proven above).
\end{itemize}

Therefore, $[x\mapsto v](e_1\ e_2) = e_1\ e_2$.
\end{proof}
\fi

\section{Module Calculus}
\label{sec:ml:language}
\subsection{Syntax and semantics}
This section presents a module calculus, essentially the same as that of Crary~\cite{Crary-POPL-17} except that we add \intType\ for integers.  
Crary's calculus is adapted from Dreyer's thesis~\cite{Dryer-phd}, and  
the reader should consult these references for explanations and motivation.
As described in \S\ref{sec:ml:trni:summary}, the calculus has static expressions: kinds ($k$), constructors ($c$) and signatures ($\sigma$), and dynamic expressions: terms ($e$) and modules ($M$).
The syntax is in Fig.~\ref{fig:ml:core-calculus:main}.

\paragraph{Kinds and constructors.}
The unit kind \unitkind\ has only the unit constructor \unitcon.
The kind \basekind\ have base types that can be used to classify terms. 
The singleton kind \singleton{c} ({where $c$ is of the base kind}) has constructors that are definitionally equivalent to $c$.
In addition, we have higher kinds: dependent functions $\Pi \alpha:k_1.k_2$ and dependent pairs $\Sigma \alpha:k_1.k_2$.
A constructor $c$ of the kind $\Sigma \alpha:k_1.k_2$ has pairs of constructors where the first component \prj{1}{c} is of the kind $k_1$ and the second component \prj{2}{c} is of the kind $k_2[\alpha\mapsto\prj{1}{c}]$.
A constructor $c$ of the kind $\Pi \alpha:k_1.k_2$ takes a constructor $c'$ of kind $k_1$ as a parameter and returns a constructor of the kind $k_2[\alpha\mapsto c']$.
When $\alpha$ does not appear free in $k_2$, we write $k_1 \rightarrow k_2$ instead of $\Pi \alpha:k_1.k_2$ and $k_1 \times k_2$ instead of $\Sigma \alpha:k_1.k_2$.
We use the metavariable $\tau$ for constructors that are types (i.e. of the kind \basekind).

\paragraph{Terms and modules.} 
The syntax for terms is standard.
Modules can be unit module, static atomic module, dynamic atomic module, {generative functor}, {applicative functor}, application, pair, projection, unpack, term binding, module binding, and sealing.
Applications of generative functors and applicative functors are syntactically distinguished.

\ifshow

\begin{figure*}[!t]
\vspace{-20pt}
\begin{align*}
k :: = & \ \unitkind \sep \basekind \sep \singleton{c} \sep \Pi \alpha:k.k \sep \Sigma \alpha:k.k & \text{kind}\\
c, \tau ::= & \ \alpha  
             \sep \unitcon 
             \sep \lambda \alpha:k.c \sep c\ c 
             \sep \tuple{c,c} 
             \sep \prj{1}{c}  \sep \prj{2}{c}  & \text{type constructor} \\
             & \sep \unittype 
             \sep \purpletext{\intType} 
             \sep \tau_1 \rightarrow\tau_2  
             \sep \tau_1 \times \tau_2 
             \sep \forall \alpha:k.\tau  
             \sep \exists \alpha:k.\tau \\
\sigma ::= & \ \unitsign  
            \sep \atksign{k} 
            \sep \atcsign{\tau} 
            \sep \Pign\alpha:\sigma.\sigma 
            \sep \Piap\alpha:\sigma.\sigma 
            \sep \Sigma\alpha:\sigma.\sigma & \text{signature} \\
e ::= &\  x  
      \sep \unitval 
      \sep {n}        
      \sep \lambda x:\tau.e \sep e\ e 
      \sep \tuple{e,e} 
      \sep \prj{1}{e} \sep \prj{2}{e} \
      \sep \Lambda \alpha:k.e  
      \sep e[c] & \text{term} \\
      & \sep \pack{c,e}{\exists \alpha:k.\tau}  
      \sep \unpack{\alpha,x}{e}{e} 
      \sep \fix{\tau}{e}   \\
      & \sep \letexp{x=e}{e}  
      \sep \letexp{\alpha/m = M}{e} 
      \sep \extract{M}\\                           
M ::= & \ m  
       \sep \unitmod  
       \sep \atcmod{c} 
       \sep \attmod{e} 
       \sep \lambdagn \alpha/m:\sigma.M 
       \sep M\ M  
       \sep \lambdaap \alpha/m:\sigma.M  & \text{module}\\
       & \sep M\appapp M  
        \sep \tuple{M,M}  
       \sep \prj{1}{M} \sep \prj{2}{M}  
       \sep \unpack{\alpha,x}{e}{(M:\sigma)}  \\
       & \sep \letexp{x=e}{M}  
       \sep \letexp{\alpha/m=M}{(M:\sigma)}  
       \sep M\seal\sigma \\
\Gamma ::= & \ .   
            \sep \Gamma,\alpha:k 
            \sep \Gamma, x:\tau  
            \sep \Gamma, \alpha/m:\sigma  & \text{context}     
\end{align*}
\vspace{-4ex}
\caption{Module calculus}
\label{fig:ml:core-calculus}
\vspace{-10pt}
\end{figure*}
\fi

Abstraction is introduced by {\em sealing}: in the module $M \seal \sigma$, access to the component $M$ is limited to the interface $\sigma$.
A term is extracted from a module $\attmod{e}$ by \extract{\attmod{e}}.
To extract the static part of a module, we have the operation $\typeEval{\Gamma}{\fstop{M}{c}}$ meaning that the static part of $M$ is $c$ in $\Gamma$. 
When the context is empty, we write \fstoptwo{M} for $c$ where \typeEval{}{\fstop{M}{c}}.

Notice that every module variable is associated with a constructor variable that represents its static part \cite{Crary-POPL-17}.
The relation is maintained by twinned variables: $\alpha/m: \sigma$ meaning that $m$  has signature $\sigma$ and its static part is $\alpha$ which is of the kind \fstsign{\sigma}, where for any signature $\sigma$, \fstsign{\sigma} extracts the information about kind from $\sigma$.
Whenever $m:\sigma$ and \fstop{m}{c}, it follows that $c$ is of the kind \fstsign{\sigma}.

\begin{figure}
\vspace{-10pt}
\small
\begin{mathpar}
\EmptyRule{\alpha/m \in \dom{\Gamma}}
{\typeEval{\Gamma}{\fstop{m}{\alpha}}}\and
\EmptyRule{~}{
    \typeEval{\Gamma}{\fstop{\unitmod}{\unitcon}}
}\and
\EmptyRule{~}{
    \typeEval{\Gamma}{\fstop{\atcmod{c}}{c}}
}\and
\EmptyRule{~}{
    \typeEval{\Gamma}{\fstop{\attmod{e}}{\unitcon}}
}\and
\EmptyRule{~}{
    \typeEval{\Gamma}{\fstop{\lambdagn \alpha/m:\sigma.M}{\unitcon}}
}\and
\EmptyRule{
    \typeEval{\Gamma,\alpha/m:\sigma}{\fstop{M}{c}}
}
{\typeEval{\Gamma}{\fstop{\lambdaap \alpha/m:\sigma.M}}{\lambda \alpha:\fstsign{\sigma}.c}}\and
\EmptyRule{
    \typeEval{\Gamma}{\fstop{M_1}{c_1}}\\
    \typeEval{\Gamma}{\fstop{M_2}{c_2}}
}
{\typeEval{\Gamma}{\fstop{M_1\appapp M_2}{c_1c_2}}}\and
\EmptyRule{
    \typeEval{\Gamma}{\fstop{M_1}{c_1}}\\
    \typeEval{\Gamma}{\fstop{M_2}{c_2}}}
{\typeEval{\Gamma}{\fstop{\tuple{M_1,M_2}}{\tuple{c_1,c_2}}} }\and
\EmptyRule{
    \typeEval{\Gamma}{\fstop{M}{c}}
}
{\typeEval{\Gamma}{\fstop{\prj{i}{M}}{\prj{i}{c}}}}\and
\EmptyRule{
    \typeEval{\Gamma}{\fstop{M}{c}}
}
{\typeEval{\Gamma}{\fstop{\letexp{x=e}{M}}{c}}}
\end{mathpar}
\vspace*{-4ex}
\caption{Extracting constructor information from modules}
\label{fig:ml:static-portion:constructor}
\vspace{-10pt}
\end{figure}

\begin{figure}
\small
\begin{align*}
\fstsign{1} & \triangleq \unitkind 
    & \fstsign{\Pign \alpha:\sigma_1.\sigma_2} & \triangleq \unitkind\\
\fstsign{\atksign{k}} & \triangleq k
    & \fstsign{\Piap \alpha:\sigma_1.\sigma_2} & \triangleq \Pi \alpha:\fstsign{\sigma_1}.\fstsign{\sigma_2}\\
\fstsign{\atcsign{\tau}} & \triangleq \unitkind & 
    \fstsign{\Sigma \alpha:\sigma_1.\sigma_2} & \triangleq \Sigma \alpha:
\fstsign{\sigma_1}.\fstsign{\sigma_2}
\end{align*}
\caption{Extracting kind information from signatures}
\label{fig:ml:static-portion:kind}
\end{figure}

\paragraph{Signature.}
Signatures include unit signature, atomic kind signature, atomic type signature, signatures for generative functors, applicative functors and pairs.
Since a module does not appear in static part of a signature, we have only $\alpha$ in dependent signatures (instead of twinned variables, e.g. $\alpha/m$, as in the case of modules).
In the binding $\alpha:\sigma$ within a dependent signature, $\alpha$ corresponds to the static part of some module of the signature $\sigma$.
Thus, $\alpha$ has the kind \fstsign{\sigma}.

As described in \cite{Dryer-phd}, a signature $\sigma$ is {\em transparent} when it exposes the implementation of the static part of modules of $\sigma$.
A signature $\sigma$ is {\em opaque} when it hides some information about the static part of modules of $\sigma$.

\vspace{-5pt}
\begin{example}
\label{ex:ml:module-signature}
We consider the following module and signatures. 
Suppose that $f = \lambda x:\intType.e$ for some $e$ which is a closed function of the type $\intType \rightarrow \tau_f$ for some closed $\tau_f$.

\begin{tabular}{@{\!\!}lllll}
\begin{minipage}{0.35\columnwidth} 
\small
\begin{lstlisting}
structure M = 
  struct
    type t = int
    val x:t = 0
    val f:t->int = ...
  end
\end{lstlisting} 
\end{minipage} &  & 
\begin{minipage}{0.33\columnwidth}  
\small
\begin{lstlisting}
signature (*$\sigma_T$*) = 
  sig
    type t = int
    val x:int
    val f:int -> (*$\tau_f$*)
  end    
\end{lstlisting}
\end{minipage} & &
\begin{minipage}{0.33\columnwidth} 
\small
\begin{lstlisting}
signature (*$\sigma_O$*) = 
  sig
    type t
    val x:t
    val f:t -> (*$\tau_f$*)
  end    
\end{lstlisting}
\end{minipage}
\end{tabular}


In the module calculus, $M$ is $\tuple{\atcmod{\intType},\tuple{\attmod{0},\attmod{\lambda x:\intType.e}}}$.
Using abbreviations, 
$\sigma_T$ is $\tuple{\atksign{S(\intType)},\tuple{\atcsign{\intType}, \atcsign{\intType \rightarrow \tau_f}}}$ 
and $\sigma_O$ is 
$\Sigma \alpha:\atksign{\basekind}.\tuple{\atcsign{\alpha},\atcsign{\alpha \rightarrow \tau_f}}$.\footnote{Expanding abbreviations, 
$\sigma_T$ is $\Sigma \alpha:\atksign{S(\intType)}.\Sigma \beta:\atcsign{\intType}.\atcsign{\intType \rightarrow \tau_f}$ and 
$\sigma_O$ is $\Sigma \alpha:\atksign{\basekind}.\Sigma \beta:\atcsign{\alpha}.\atcsign{\alpha \rightarrow \tau_f}$.}
The signature $\sigma_T$ is a transparent signature of $M$ since $\sigma_T$ exposes the information of the static part of $M$,
as $\atksign{S(\intType)}$.
The signature $\sigma_O$ is an opaque signature of $M$ since $\sigma_O$ hides the information of the static part of $M$,
as $\atksign{\basekind}$.
\end{example}

\ifshow
\begin{example}
\label{ex:ml}
We consider the following module with a type component $\tau$ and three values.

\begin{lstlisting}
struct
    type (*$\tau$*) = int
    val secret:(*$\tau$*) = 10
    fun f(x:(*$\tau$*)) = ...
    fun a(x:(*$\tau$*)) = ...
end    
\end{lstlisting}

In the core calculus, the module is represented as $M = \tuple{\intType, \tuple{10, \tuple{\text{definition of $f$},\text{definition of $a$}}}}$.
For simplicity, we write $M$ as \tuple{\intType, 10, \text{definition of $f$}, \text{definition of $a$}} and we write $M.\tau$, $M.secret$, $M.f$, and $M.a$ to project components from $M$.

A signature of $M$ is \tuple{\atksign{S(\intType)},\atcsign{\intType},\atcsign{\intType \rightarrow \intType}, \atcsign{\intType \rightarrow\intType}}.
\redtext{Another signature of $M$ where we do not know detailed information about $\tau$ is \tuple{\atksign{\basekind},\atcsign{\intType},\atcsign{\intType \rightarrow \intType}, \atcsign{\intType \rightarrow\intType}}}.
\end{example}
\fi

\begin{figure*}[!t]
\centering
\small
\begin{tabular}{lllll}
\typeEval{}{\Gamma\ \ok} & $\quad\quad$ & {Well-formed context} & $\quad\quad$ & (Fig~\ref{fig:apd:well-formed-context}) \\
\typeEval{\Gamma}{k:\kind} & & {Well-formed kind} && (Fig~\ref{fig:apd:well-formed-kind})\\
\typeEval{\Gamma}{k_1 \equiv k_2: \kind} & & {Kind equivalence} && (Fig~\ref{fig:apd:kind-equivalence})\\
\typeEval{\Gamma}{k_1 \leq k_2:\kind} & & {Subkinding} && (Fig~\ref{fig:apd:subkinding})\\
\typeEval{\Gamma}{c:k} & & Well-formed constructor && (Fig~\ref{fig:apd:well-formed-constructor})\\
\typeEval{\Gamma}{c_1 \equiv c_2:\kind} & & Constructor equivalence && (Fig~\ref{fig:apd:constructor-equivalence})\\
\typeEval{\Gamma}{e:\tau} & & \purpletext{Well-typed term} && (Fig~\ref{fig:apd:well-typed-term})\\
\typeEval{\Gamma}{\sigma:\sign} & & Well-formed signature && (Fig~\ref{fig:apd:well-formed-signature})\\
\typeEval{\Gamma}{\sigma_1 \equiv \sigma_2: \sign} & & Equivalence signature && (Fig~\ref{fig:apd:signature-equivalence})\\
\typeEval{\Gamma}{\sigma_1 \leq \sigma_2:\sign} & & {Subsignature} && (Fig~\ref{fig:apd:subsignature})\\
\typeEvalP{\Gamma}{M:\sigma} & & \purpletext{Pure well-formed module} && (Fig~\ref{fig:apd:well-formed-module})\\
\typeEvalI{\Gamma}{M:\sigma} & & \purpletext{Impure well-formed module} && (Fig~\ref{fig:apd:well-formed-module})
\end{tabular}
\vspace*{-0.7ex}
\caption{Judgment forms in the static semantics}
\label{fig:ml:judgment-forms}
\vspace{-15pt}
\end{figure*}

\paragraph{Static semantics.}
The judgment forms in the static semantics are described in Figure~\ref{fig:ml:judgment-forms}. 
W.r.t. the static semantics, for the signatures described in Example~\ref{ex:ml:module-signature}, it follows that the transparent signature $\sigma_T$ is a sub-signature of the opaque signature $\sigma_O$.

\begin{figure}[!h]
\begin{mathpar}
\footnotesize
\LabelRule{\WfeE}{~}
{\typeEval{}{.\ \ok}}\and
\LabelRule{\WfeC}{\typeEval{}{\Gamma\ \ok} \\ 
\typeEval{\Gamma}{k:\kind}
}
{\typeEval{}{\Gamma,\alpha:k\ \ok}}\and
\LabelRule{\WfeT}{
    \typeEval{}{\Gamma\ \ok}\\
    \typeEval{\Gamma}{\tau:\basekind}
}
{\typeEval{}{\Gamma,x:\tau\ \ok}}\and
\LabelRule{\WfeM}{
    \typeEval{}{\Gamma\ \ok}\\
    \typeEval{\Gamma}{\sigma:\sign}
}
{\typeEval{}{\Gamma,\alpha/m:\sigma\ \ok}}\\
\end{mathpar}
\caption{\text{Well-formed context}\ \typeEval{}{\Gamma\ \ok}}
\label{fig:apd:well-formed-context}
\end{figure}

\begin{figure}[!h]
\begin{mathpar}
\footnotesize
\LabelRule{\WfkB}{~}
{\typeEval{\Gamma}{\basekind:\kind}}\and
\LabelRule{\WfkS}{\typeEval{\Gamma}{c:\basekind}
}
{\typeEval{\Gamma}{S(c):\kind}}\and
\LabelRule{\WfkU}{~
}
{\typeEval{\Gamma}{\unitkind:\kind}}\and
\LabelRule{\WfkF}{
    \typeEval{\Gamma}{k_1:\kind}\\
    \typeEval{\Gamma,\alpha:k_1}{k_2:\kind}
}
{\typeEval{\Gamma}{\Pi \alpha:k_1.k_2:\kind}}\and
\LabelRule{\WfkP}{
    \typeEval{\Gamma}{k_1:\kind}\\
    \typeEval{\Gamma,\alpha:k_1}{k_2:\kind}
}
{\typeEval{\Gamma}{\Sigma \alpha:k_1.k_2:\kind}}
\end{mathpar}

\caption{\text{Well-formed kind}\ \typeEval{\Gamma}{k:\kind}}
\label{fig:apd:well-formed-kind}
\end{figure}

\begin{figure}[!h]
\begin{mathpar}
\footnotesize
\LabelRule{\EqkR}{
    \typeEval{\Gamma}{k:\kind}   
}
{\typeEval{\Gamma}{k \equiv k:\kind}}\and
\LabelRule{\EqkS}{
    \typeEval{\Gamma}{k_1 \equiv k_2:\kind}   
}
{\typeEval{\Gamma}{k_2 \equiv k_1:\kind}}\and
\LabelRule{\EqkT}{
    \typeEval{\Gamma}{k_1 \equiv k_2:\kind} \\
    \typeEval{\Gamma}{k_2 \equiv k_3:\kind}   
}
{\typeEval{\Gamma}{k_1 \equiv k_3:\kind}}\and
\LabelRule{\EqkSg}{
    \typeEval{\Gamma}{c_1 \equiv c_2:\basekind}   
}
{\typeEval{\Gamma}{S(c_1) \equiv S(c_2):\kind}}\and
\LabelRule{\EqkF}{
    \typeEval{\Gamma}{k_1 \equiv k_2:\kind} \\
    \typeEval{\Gamma,\alpha:k_1}{k_3 \equiv k_4:\kind}     
}
{\typeEval{\Gamma}{\Pi \alpha:k_1.k_3 \equiv \Pi \alpha:k_2.k_4:\kind}}\and
\LabelRule{\EqkP}{
    \typeEval{\Gamma}{k_1 \equiv k_2:\kind} \\
    \typeEval{\Gamma,\alpha:k_1}{k_3 \equiv k_4:\kind}     
}
{\typeEval{\Gamma}{\Sigma \alpha:k_1.k_3 \equiv \Sigma \alpha:k_2.k_4:\kind}}
\end{mathpar}
\caption{\text{Kind equivalence}\ \typeEval{\Gamma}{k_1 \equiv k_2:\kind}}
\label{fig:apd:kind-equivalence}
\end{figure}

\begin{figure}[!h]
\begin{mathpar}
\footnotesize
\LabelRule{\SubkE}{
    \typeEval{\Gamma}{k_1 \equiv k_2:\kind}   
}
{\typeEval{\Gamma}{k_1 \leq k_2:\kind}}\and
\LabelRule{\SubkT}{
    \typeEval{\Gamma}{k_1 \leq k_2:\kind} \\
    \typeEval{\Gamma}{k_2 \leq k_3:\kind}   
}
{\typeEval{\Gamma}{k_1 \leq k_3:\kind}}\and
\LabelRule{\SubkSg}{
    \typeEval{\Gamma}{c:\basekind}   
}
{\typeEval{\Gamma}{S(c) \leq \basekind:\kind}}\and
\LabelRule{{\SubkDFn}}{
    \typeEval{\Gamma}{k_1' \leq k_1:\kind} \\
    \typeEval{\Gamma,\alpha: k_1'}{k_2 \leq k_2':\kind}\\
    \typeEval{\Gamma,\alpha:k_1}{k_2:\kind}     
}
{\typeEval{\Gamma}{\Pi \alpha:k_1.k_2 \leq \Pi \alpha:k_1'.k_2':\kind}}\and
\LabelRule{{\SubkDPr}}{
    \typeEval{\Gamma}{k_1 \leq k_1':\kind} \\
    \typeEval{\Gamma,\alpha:k_1}{k_2 \leq k_2':\kind}\\
    \typeEval{\Gamma,\alpha:k_1'}{k_2':\kind}
}
{\typeEval{\Gamma}{\Sigma \alpha:k_1.k_2 \leq \Sigma \alpha:k_1'.k_2':\kind}}
\end{mathpar}
\caption{\text{Subkinding}\ \typeEval{\Gamma}{k_1 \leq k_2:\kind}}
\label{fig:apd:subkinding}
\end{figure}


\begin{figure}[!h]
\begin{mathpar}
\footnotesize
\LabelRule{\WfcV}{
    \Gamma(\alpha) = k
}
{\typeEval{\Gamma}{\alpha:k}}\and
\LabelRule{\WfcDFun}{
    \typeEval{\Gamma}{k_1:\kind}\\
    \typeEval{\Gamma,\alpha:k_1}{c:k_2}
}
{\typeEval{\Gamma}{\lambda\alpha:k_1.c: \Pi \alpha: k_1.k_2 }}\and
\LabelRule{\WfcDApp}{
    \typeEval{\Gamma}{c_1: \Pi \alpha:k_1.k_2}\\
    \typeEval{\Gamma}{c_2:k_1}
}
{\typeEval{\Gamma}{c_1c_2: k_2[\alpha\mapsto c_2]} }\and
\LabelRule{{\WfcPr}}{
    \typeEval{\Gamma}{c_1: k_1}\\
    \typeEval{\Gamma}{c_2:k_2[\alpha\mapsto c_1]}\\
    {\typeEval{\Gamma,\alpha:k_1}{k_2:\kind}}
}
{\typeEval{\Gamma}{\tuple{c_1,c_2}: \Sigma \alpha:k_1.k_2 }}\and
\LabelRule{{\WfcPrj1}}{
    \typeEval{\Gamma}{c: \Sigma \alpha:k_1.k_2}
}
{\typeEval{\Gamma}{\prj{1}{c}: k_1 }}\and
\LabelRule{{\WfcPrj2}}{
    \typeEval{\Gamma}{c: \Sigma \alpha:k_1.k_2}
}
{\typeEval{\Gamma}{\prj{2}{c}: k_2[\alpha\mapsto \prj{1}{c}] }}\and
\LabelRule{\WfcUc}{~}
{\typeEval{\Gamma}{\unitcon:\unitkind}}\and
\LabelRule{\WfcUt}{~}
{\typeEval{\Gamma}{\unittype:\basekind}}\and
\LabelRule{\WfcInt}{~}
{\typeEval{\Gamma}{\intType:\basekind}}\and
\LabelRule{\WfcFn}{
    \typeEval{\Gamma}{\tau_1:\basekind}\\
    \typeEval{\Gamma}{\tau_2:\basekind}\\
}
{\typeEval{\Gamma}{\tau_1 \rightarrow \tau_2: \basekind}}\and
\LabelRule{\WfcPrd}{
    \typeEval{\Gamma}{\tau_1:\basekind}\\
    \typeEval{\Gamma}{\tau_2:\basekind}\\
}
{\typeEval{\Gamma}{\tau_1 \times\tau_2: \basekind}}\and
\LabelRule{\WfcAbs}{
    \typeEval{\Gamma}{k:\kind}\\
    \typeEval{\Gamma,\alpha:k}{\tau:\basekind}\\
}
{\typeEval{\Gamma}{\forall \alpha:k.\tau: \basekind}}\and
\LabelRule{\WfcEx}{
    \typeEval{\Gamma}{k:\kind}\\
    \typeEval{\Gamma,\alpha:k}{\tau:\basekind}\\
}
{\typeEval{\Gamma}{\exists \alpha:k.\tau: \basekind}}\and
\LabelRule{{\WfcSg}}{
    \typeEval{\Gamma}{c:\basekind}
}
{\typeEval{\Gamma}{c: S(c)}}\and
\LabelRule{{\WfcROne}}{
\typeEval{\Gamma}{c:\Pi \alpha:k_1.k_2}\\
{\typeEval{\Gamma,\alpha:k_1}{c\alpha: k_2'}}
}
{\typeEval{\Gamma}{c:\Pi \alpha:k_1.k_2'}}\and
\LabelRule{{{\WfcRTwo}}}{
    \typeEval{\Gamma}{\prj{1}{c}:k_1}\\
    \typeEval{\Gamma}{\prj{2}{c}:k_2[\alpha\mapsto\prj{1}{c}]}\\
    \typeEval{\Gamma,\alpha:k_1}{k_2:\kind}
}
{\typeEval{\Gamma}{c:\Sigma \alpha:k_1.k_2}}\and
\LabelRule{\WfcSub}{
    \typeEval{\Gamma}{c:k}\\
    \typeEval{\Gamma}{k \leq k':\kind}
}
{\typeEval{\Gamma}{c:k'}}
\end{mathpar}
\caption{\text{Well-formed constructor}\ \typeEval{\Gamma}{c:k}}
\label{fig:apd:well-formed-constructor}
\end{figure}

\begin{figure*}[!h]
\begin{mathpar}
\footnotesize
\mbox{\LabelRule{\EqcR}{
    \typeEval{\Gamma}{c:k}
}
{\typeEval{\Gamma}{c\equiv c: k}}\and
\LabelRule{\EqcS}{
    \typeEval{\Gamma}{c_1 \equiv c_2:k}
}
{\typeEval{\Gamma}{c_2\equiv c_1: k}}\and
\LabelRule{\EqcT}{
    \typeEval{\Gamma}{c_1 \equiv c_2:k}\\
    \typeEval{\Gamma}{c_2 \equiv c_3:k}
}
{\typeEval{\Gamma}{c_1\equiv c_3: k}}}\and
\mbox{\LabelRule{{\EqcDFn}}{
    \typeEval{\Gamma}{k_1 \equiv k_1':\kind}\\
    \typeEval{\Gamma,\alpha:k_1}{c \equiv c':k_2}
}
{\typeEval{\Gamma}{\lambda \alpha:k_1.c \equiv \lambda \alpha:k_1'.c': \Pi \alpha:k_1.k_2}}\and
\LabelRule{{\EqcDApp}}{
    \typeEval{\Gamma}{c_1 \equiv c_1': \Pi \alpha:k_1.k_2} \\
    \typeEval{\Gamma}{c_2 \equiv c_2':k_1}
}
{\typeEval{\Gamma}{c_1 c_2 \equiv c_1'c_2': k_2[\alpha\mapsto c_2]}}} \and
\LabelRule{{\EqcDPr}}{
    \typeEval{\Gamma}{c_1 \equiv c_1':k_1}\\
    \typeEval{\Gamma}{c_2 \equiv c_2':k_2[\alpha\mapsto c_1]}\\
    \typeEval{\Gamma,\alpha:k_1}{k_2:\kind}
}
{\typeEval{\Gamma}{\tuple{c_1,c_2} \equiv \tuple{c_1',c_2'}:\Sigma \alpha:k_1.k_2}}\and
\LabelRule{{\EqcPrj1}}{
    \typeEval{\Gamma}{c \equiv c':\Sigma \alpha:k_1.k_2}
}
{\typeEval{\Gamma}{\prj{1}{c} \equiv \prj{1}{c'}:k_1}}\and
\LabelRule{{\EqcPrj2}}{
    \typeEval{\Gamma}{c \equiv c':\Sigma \alpha:k_1.k_2}
}
{\typeEval{\Gamma}{\prj{2}{c} \equiv \prj{2}{c'}:k_2[\alpha\mapsto \prj{1}{c}]}}\and
\mbox{\LabelRule{{\EqcFn}}{
    \typeEval{\Gamma}{\tau_1 \equiv \tau_1':\basekind}\\
    \typeEval{\Gamma}{\tau_2 \equiv \tau_2':\basekind}
}
{\typeEval{\Gamma}{\tau_1 \rightarrow \tau_2 \equiv \tau_1' \rightarrow \tau_2':\basekind}}\and
\LabelRule{{\EqcPrd}}{
    \typeEval{\Gamma}{\tau_1 \equiv \tau_1':\basekind}\\
    \typeEval{\Gamma}{\tau_2 \equiv \tau_2':\basekind}
}
{\typeEval{\Gamma}{\tau_1 \times \tau_2 \equiv \tau_1' \times \tau_2':\basekind}}}\and
\LabelRule{{\EqcAbs}}{
    \typeEval{\Gamma}{k \equiv k':\kind}\\
    \typeEval{\Gamma,\alpha:k}{\tau \equiv \tau':\basekind}
}
{\typeEval{\Gamma}{\forall \alpha:k.\tau \equiv \forall \alpha:k'.\tau':\basekind}}\and
\LabelRule{{\EqcEx}}{
    \typeEval{\Gamma}{k \equiv k':\kind}\\
    \typeEval{\Gamma,\alpha:k}{\tau \equiv \tau':\basekind}
}
{\typeEval{\Gamma}{\exists \alpha:k.\tau \equiv \exists \alpha:k'.\tau':\basekind}}\and
\LabelRule{{\EqcSg}}{
    \typeEval{\Gamma}{c \equiv c':\basekind}
}
{\typeEval{\Gamma}{c \equiv c':S(c)}}\and
\LabelRule{{\EqcSgTwo}}{
    \typeEval{\Gamma}{c:S(c')}
}
{\typeEval{\Gamma}{c \equiv c':\basekind}}\and
\LabelRule{{\EqcROne}}{
    \typeEval{\Gamma}{c:\Pi \alpha:k_1.k_2'}\\
    \typeEval{\Gamma}{c':\Pi \alpha:k_1.k_2''}\\    
    \typeEval{\Gamma,\alpha:k_1}{c\alpha \equiv c'\alpha:k_2}
}
{\typeEval{\Gamma}{c \equiv c':\Pi\alpha:k_1.k_2}}\and
\LabelRule{{\EqcRTwo}}{
    \typeEval{\Gamma}{c \equiv c':\Pi \alpha:k_1.k_2'}\\
    {\typeEval{\Gamma,\alpha:k_1}{c\alpha \equiv c'\alpha:k_2}}
}
{\typeEval{\Gamma}{c\equiv c': \Pi\alpha:k_1.k_2}}\and
\LabelRule{{\EqcRThree}}{
    \typeEval{\Gamma}{\prj{1}{c} \equiv \prj{1}{c'}:k_1}\\
    \typeEval{\Gamma}{\prj{2}{c} \equiv \prj{2}{c'}:k_2[\alpha\mapsto\prj{1}{c}]} \\
    \typeEval{\Gamma,\alpha:k_1}{k_2:\kind}
}
{\typeEval{\Gamma}{c \equiv c': \Sigma \alpha:k_1.k_2}}\and
\mbox{\LabelRule{{\EqcRFour}}{
    \typeEval{\Gamma}{c:\unitkind}\\
    \typeEval{\Gamma}{c':\unitkind}
}
{\typeEval{\Gamma}{c \equiv c':\unitkind}}\and
\LabelRule{{\EqcSub}}{
    \typeEval{\Gamma}{c\equiv c':k}\\
    \typeEval{\Gamma}{k \leq k':\kind}
}
{\typeEval{\Gamma}{c \equiv c':k'}}} \and
\LabelRule{{\EqcBeta}}{
    \typeEval{\Gamma,\alpha:k_1}{c_2:k_2} \\
    \typeEval{\Gamma}{c_1:k_1}
}
{\typeEval{\Gamma}{(\lambda \alpha:k_1.c_2)c_1 \equiv c_2[\alpha\mapsto c_1]:k_2[\alpha\mapsto c_2]}}\and
\LabelRule{{\EqcRFive}}{
    \typeEval{\Gamma}{c_1:k_1}\\
    \typeEval{\Gamma}{c_2:k_2}       
}
{\typeEval{\Gamma}{\prj{1}{\tuple{c_1,c_2}}\equiv c_1:k_1}}\and
\LabelRule{{\EqcRSix}}{
    \typeEval{\Gamma}{c_1:k_1}\\
    \typeEval{\Gamma}{c_2:k_2}       
}
{\typeEval{\Gamma}{\prj{2}{\tuple{c_1,c_2}}\equiv c_2:k_2}}
\end{mathpar}
\caption{\text{Constructor equivalence}\ \typeEval{\Gamma}{c_1 \equiv c_2:k}}
\label{fig:apd:constructor-equivalence}
\end{figure*}


\begin{figure}[!h]
\begin{mathpar}
\footnotesize
\LabelRule{{\WttV}}{
    \Gamma(x) = \tau
}
{\typeEval{\Gamma}{x:\tau}}\and
\LabelRule{{\WttUnit}}{
    ~
}
{\typeEval{\Gamma}{\unitval:\unittype}}\and
\LabelRule{{\WttInt}}{
    ~
}
{\typeEval{\Gamma}{n:\intType}}\and
\LabelRule{{\WttAbs}}{
    \typeEval{\Gamma}{\tau_1:\basekind}\\
    \typeEval{\Gamma,x:\tau_1}{e:\tau_2}
}
{\typeEval{\Gamma}{\lambda x:\tau_1.e:\tau_1 \rightarrow \tau_2}}\and
\LabelRule{{\WttApp}}{
    \typeEval{\Gamma}{e_1:\tau_1 \rightarrow \tau_2}\\
    \typeEval{\Gamma}{e_2:\tau_1}
}
{\typeEval{\Gamma}{e_1e_2:\tau_2}}\and
\LabelRule{{\WttPr}}{
    \typeEval{\Gamma}{e_1:\tau_1}\\
    \typeEval{\Gamma}{e_2:\tau_2}
}
{\typeEval{\Gamma}{\tuple{e_1,e_2}:\tau_1\times \tau_2}}\and
\LabelRule{{\WttPrj1}}{
    \typeEval{\Gamma}{e:\tau_1 \times \tau_2}
}
{\typeEval{\Gamma}{\prj{1}{e}:\tau_1}}\and
\LabelRule{{\WttPrj2}}{
    \typeEval{\Gamma}{e:\tau_1\times\tau_2}
}
{\typeEval{\Gamma}{\prj{2}{e}:\tau_2}}\and
\LabelRule{{\WttUnv}}{
    \typeEval{\Gamma}{k:\kind}\\
    \typeEval{\Gamma,\alpha:k}{e:\tau}
}
{\typeEval{\Gamma}{\Lambda \alpha:k.e: \forall \alpha:k.\tau}}\and
\LabelRule{{\WttInst}}{
    \typeEval{\Gamma}{e:\forall\alpha:k.\tau}\\
    \typeEval{\Gamma}{c:k}
}
{\typeEval{\Gamma}{e[c]:\tau[\alpha\mapsto c]}}\and
\LabelRule{{\WttPk}}{
    \typeEval{\Gamma}{c:k}\\
    \typeEval{\Gamma}{e:\tau[\alpha\mapsto c]}\\
    \typeEval{\Gamma,\alpha:k}{\tau:\basekind}
}
{\typeEval{\Gamma}{\pack{c,e}{\exists \alpha:k.\tau}:\exists \alpha:k.\tau}}\and
\LabelRule{{\WttUp}}{
    \typeEval{\Gamma}{e_1:\exists\alpha:k.\tau}\\
    \typeEval{\Gamma,\alpha:k,x:\tau}{e_2:\tau'}\\
    \typeEval{\Gamma}{\tau':\basekind}
}
{\typeEval{\Gamma}{\unpack{\alpha,x}{e_1}{e_2}:\tau'}}\and
\LabelRule{{\WttRc}}{
    \typeEval{\Gamma}{e:(\unittype \rightarrow \tau) \rightarrow \tau}
}
{\typeEval{\Gamma}{\fix{\tau}{e}:\tau}}\and
\LabelRule{{\WttLtOne}}{
    \typeEval{\Gamma}{e_1:\tau_1}\\
    \typeEval{\Gamma,x:\tau_1}{e_2:\tau_2}
}
{\typeEval{\Gamma}{\letexp{x=e_1}{e_2}:\tau_2}}\and
\LabelRule{{\WttLtTwo}}{
    \typeEvalI{\Gamma}{M:\sigma}\\
    \typeEval{\Gamma,\alpha/m:\sigma}{e:\tau}\\
    \typeEval{\Gamma}{\tau:\basekind}
}
{\typeEval{\Gamma}{\letexp{\alpha/m=M}{e}:\tau}}\and
\LabelRule{{\WttExt}}{
    \typeEvalI{\Gamma}{M:\atcmod{\tau}}
}
{\typeEval{\Gamma}{\extract{M}:\tau}}\and
\LabelRule{{\WttEq}}{
    \typeEval{\Gamma}{e:\tau}\\
    \typeEval{\Gamma}{\tau \equiv \tau':\basekind}
}
{\typeEval{\Gamma}{e:\tau'}}
\end{mathpar}
\caption{\text{Well-typed term}\ \typeEval{\Gamma}{e:\tau}}
\label{fig:apd:well-typed-term}
\end{figure}


\begin{figure}[!h]

\begin{mathpar}
\footnotesize
\LabelRule{\WfsOne}{
    ~
}
{\typeEval{\Gamma}{\unitsign:\sign}}\and
\LabelRule{\WfsTwo}{
    \typeEval{\Gamma}{k:\kind}
}
{\typeEval{\Gamma}{\atksign{k}:\sign}}\and
\LabelRule{\WfsThree}{
    \typeEval{\Gamma}{\tau:\basekind}
}
{\typeEval{\Gamma}{\atcsign{\tau}:\sign}}\and
\LabelRule{\WfsFour}{
    \typeEval{\Gamma}{\sigma_1:\sign}\\
    \typeEval{\Gamma,\alpha:\fstsign{\sigma_1}}{\sigma_2:\sign}\\
}
{\typeEval{\Gamma}{\Piap \alpha:\alpha:\sigma_1.\sigma_2:\sign}}\and
\LabelRule{\WfsFive}{
    \typeEval{\Gamma}{\sigma_1:\sign}\\
    \typeEval{\Gamma,\alpha:\fstsign{\sigma_1}}{\sigma_2:\sign}\\
}
{\typeEval{\Gamma}{\Pign \alpha:\alpha:\sigma_1.\sigma_2:\sign}}\and
\LabelRule{\WfsSix}{
    \typeEval{\Gamma}{\sigma_1:\sign}\\
    \typeEval{\Gamma,\alpha:\fstsign{\sigma_1}}{\sigma_2:\sign}\\
}
{\typeEval{\Gamma}{\Sigma \alpha:\sigma_1.\sigma_2:\sign}}
\end{mathpar}
\caption{\text{Well-formed signature}\ \typeEval{\Gamma}{\sigma:\sign}}
\label{fig:apd:well-formed-signature}
\end{figure}

\begin{figure}[!h]
\begin{mathpar}
\footnotesize
\LabelRule{\SeqR}{
    \typeEval{\Gamma}{\sigma:\sign}
}
{\typeEval{\Gamma}{\sigma \equiv \sigma:\sign}}\and
\LabelRule{\SeqS}{
    \typeEval{\Gamma}{\sigma \equiv \sigma':\sign}
}
{\typeEval{\Gamma}{\sigma' \equiv \sigma:\sign}}\and
\LabelRule{\SeqT}{
    \typeEval{\Gamma}{\sigma \equiv \sigma'':\sign}\\
    \typeEval{\Gamma}{\sigma'' \equiv \sigma':\sign}\\
}
{\typeEval{\Gamma}{\sigma \equiv \sigma':\sign}}\and
\LabelRule{\SeqAtk}{
    \typeEval{\Gamma}{k \equiv k':\kind}
}
{\typeEval{\Gamma}{\atksign{k} \equiv \atksign{k'}:\sign}}\and
\LabelRule{\SeqAtc}{
    \typeEval{\Gamma}{\tau \equiv \tau':\basekind}
}
{\typeEval{\Gamma}{\atcsign{\tau} \equiv \atcsign{\tau'}:\sign}}\and
\LabelRule{\SeqDFnGn}{
    \typeEval{\Gamma}{\sigma_1 \equiv \sigma_1':\sign}\\
    \typeEval{\Gamma,\alpha:\fstsign{\sigma_1}}{\sigma_2 \equiv \sigma_2':\sign}
}
{\typeEval{\Gamma}{\Pign \alpha:\sigma_1.\sigma_2 \equiv \Pign \alpha:\sigma_1':\sigma_2':\sign}}\and
\LabelRule{\SeqDFnAp}{
    \typeEval{\Gamma}{\sigma_1 \equiv \sigma_1':\sign}\\
    \typeEval{\Gamma,\alpha:\fstsign{\sigma_1}}{\sigma_2 \equiv \sigma_2':\sign}
}
{\typeEval{\Gamma}{\Piap \alpha:\sigma_1.\sigma_2 \equiv \Piap \alpha:\sigma_1'.\sigma_2':\sign}}\and
\LabelRule{\SeqDPr}{
    \typeEval{\Gamma}{\sigma_1 \equiv \sigma_1':\sign}\\
    \typeEval{\Gamma,\alpha:\fstsign{\sigma_1}}{\sigma_2 \equiv \sigma_2':\sign}
}
{\typeEval{\Gamma}{\Sigma \alpha:\sigma_1.\sigma_2 \equiv \Sigma \alpha:\sigma_1'.\sigma_2':\sign}}
\end{mathpar}
\caption{\text{Signature equivalence}\ \typeEval{\Gamma}{\sigma\equiv \sigma':\sign}}
\label{fig:apd:signature-equivalence}
\end{figure}

\begin{figure}[!h]
\begin{mathpar}
\footnotesize
\LabelRule{\SubsEq}{
    \typeEval{\Gamma}{\sigma \equiv \sigma':\sign}
}
{\typeEval{\Gamma}{\sigma\leq \sigma':\sign}}\and
\LabelRule{\SubsT}{
    \typeEval{\Gamma}{\sigma \leq \sigma'':\sign}\\
    \typeEval{\Gamma}{\sigma'' \leq \sigma':\sign}
}
{\typeEval{\Gamma}{\sigma \leq \sigma':\sign}}\and
\LabelRule{\SubsK}{
    \typeEval{\Gamma}{k \leq k':\kind}
}
{\typeEval{\Gamma}{\atksign{k} \leq \atksign{k'}:\sign}}\and
\LabelRule{\SubsDFnGn}{
    \typeEval{\Gamma}{\sigma_1' \leq \sigma_1:\sign}\\
    \typeEval{\Gamma,\alpha:\fstsign{\sigma_1'}}{\sigma_2 \leq \sigma_2':\sign}\\
    \typeEval{\Gamma,\alpha:\fstsign{\sigma_1}}{\sigma_2:\sign}
}
{\typeEval{\Gamma}{\Pign \alpha:\sigma_1.\sigma_2 \leq \Pign \alpha:\sigma_1'.\sigma_2':\sign}}\and
\LabelRule{\SubsDFnAp}{
    \typeEval{\Gamma}{\sigma_1' \leq \sigma_1:\sign}\\
    \typeEval{\Gamma,\alpha:\fstsign{\sigma_1'}}{\sigma_2 \leq \sigma_2':\sign}\\
    \typeEval{\Gamma,\alpha:\fstsign{\sigma_1}}{\sigma_2:\sign}
}
{\typeEval{\Gamma}{\Piap \alpha:\sigma_1.\sigma_2 \leq \Piap \alpha:\sigma_1'.\sigma_2':\sign}}\and
\LabelRule{\SubsDPr}{
    \typeEval{\Gamma}{\sigma_1 \leq \sigma_1':\sign}\\
    \typeEval{\Gamma,\alpha:\fstsign{\sigma_1}}{\sigma_2 \leq \sigma_2':\sign}\\
    \typeEval{\Gamma,\alpha:\fstsign{\sigma_1'}}{\sigma_2':\sign}
}
{\typeEval{\Gamma}{\Sigma \alpha:\sigma_1.\sigma_2 \leq \Sigma \alpha:\sigma_1'.\sigma_2':\sign}}
\end{mathpar}
\caption{\text{Subsignature}\ \typeEval{\Gamma}{\sigma \leq \sigma':\sign}}
\label{fig:apd:subsignature}
\end{figure}

\begin{figure}[!h]
\begin{mathpar}
\footnotesize
\LabelRule{\WfmV}{
    \Gamma(m) = \sigma
}
{\wfmod{\Gamma}{P}{m:\sigma}}\and
\LabelRule{\WfmU}{~
}
{\wfmod{\Gamma}{P}{\unitmod:\unitsign}}\and
\LabelRule{\WfmAtc}{
    \typeEval{\Gamma}{c:k}
}
{\wfmod{\Gamma}{P}{\atcmod{c}:\atksign{k}}}\and
\LabelRule{\WfmAtt}{
    \typeEval{\Gamma}{e:\tau}
}
{\wfmod{\Gamma}{P}{\attmod{e}:\atksign{\tau}}}\and
\LabelRule{\WfmAbsI}{
    \typeEval{\Gamma}{\sigma:\sign}\\
    \typeEvalI{\Gamma,\alpha/m:\sigma}{M:\sigma'}
}
{{\wfmod{\Gamma}{P}{\lambdagn \alpha/m:\sigma.M: \Pign \alpha:\sigma.\sigma'}}}\and
\LabelRule{\WfmAppI}{
    \typeEvalI{\Gamma}{M_1:\Pign \alpha:\sigma.\sigma'}\\
    \typeEvalP{\Gamma}{M_2:\sigma}\\
    \typeEval{\Gamma}{\fstop{M_2}{c_2}}
}
{\wfmod{\Gamma}{I}{M_1M_2:\sigma'[\alpha\mapsto c_2]}}\and
\LabelRule{\WfmAbsP}{
    \typeEval{\Gamma}{\sigma:\sign}\\
    \typeEvalP{\Gamma,\alpha/m:\sigma}{M:\sigma'}
}
{\wfmod{\Gamma}{P}{\lambdaap \alpha/m:\sigma.M: \Piap \alpha:\sigma.\sigma'}}\and
\LabelRule{\WfmAppK}{
    \typeEvalK{\Gamma}{M_1:\Piap \alpha:\sigma.\sigma'}\\
    \typeEvalP{\Gamma}{M_2:\sigma}\\
    \typeEval{\Gamma}{\fstop{M_2}{c_2}}
}
{\wfmod{\Gamma}{\kappa}{M_1\appapp M_2: \sigma'[\alpha\mapsto c_2]}}\and
\LabelRule{\WfmPr}{
    \typeEvalK{\Gamma}{M_1:\sigma_1}\\
    \typeEvalK{\Gamma}{M_2:\sigma_2}\\
    {\alpha\not\in FV(\sigma_2)}
}
{\wfmod{\Gamma}{\kappa}{\tuple{M_1,M_2}:\Sigma \alpha:\sigma_1.\sigma_2}}\and
\LabelRule{\WfmPrj1}{
    {\typeEvalP{\Gamma}{M:\Sigma\alpha:\sigma_1.\sigma_2}}
}
{{\wfmod{\Gamma}{P}{\prj{1}{M}:\sigma_1}}}\and
\LabelRule{\WfmPrj2}{
    \typeEvalP{\Gamma}{M:\Sigma \alpha:\sigma_1.\sigma_2}\\
    \typeEval{\Gamma}{\fstop{M}{c}}
}
{\wfmod{\Gamma}{P}{\prj{2}{M}:\sigma_2[\alpha\mapsto \purpletext{\prj{1}{c}}]}}\and
\LabelRule{\WfmUnp}{
    \typeEval{\Gamma}{e:\exists\alpha:k.\tau}\\
    \typeEvalI{\Gamma,\alpha:k,x:\tau}{M:\sigma}\\
    \typeEval{\Gamma}{\sigma:\sign}
}
{\wfmod{\Gamma}{I}{\unpack{\alpha,x}{e}{M:\sigma}:\sigma}}\and
\LabelRule{\WfmLetOne}{
    \typeEval{\Gamma}{e:\tau}\\
    \typeEvalK{\Gamma,x:\tau}{M:\sigma}
}
{\wfmod{\Gamma}{K}{\letexp{x=e}{M}:\sigma}}\and
\LabelRule{\WfmLetTwo}{
    \typeEvalI{\Gamma}{M_1:\sigma}\\
    \typeEvalI{\Gamma,\alpha/m:\sigma}{M_2:\sigma'}\\
    \typeEval{\Gamma}{\sigma':\sign}
}
{\wfmod{\Gamma}{I}{\letexp{\alpha/m = M_1}{(M_2:\sigma')}:\sigma'}}\and
\LabelRule{\WfmSeal}{
    \typeEvalI{\Gamma}{M:\sigma}\\
}
{\wfmod{\Gamma}{I}{(M\seal \sigma):\sigma)}}\and
\LabelRule{{\WfmROne}}{
    \typeEvalP{\Gamma}{M:\atksign{k'}}\\
    \typeEval{\Gamma}{\fstop{M}{c}}\\
    \typeEval{\Gamma}{c:k}
}
{\wfmod{\Gamma}{P}{M:\atksign{k}}}\and
\LabelRule{\WfmRTwo}{
    \typeEvalP{\Gamma}{M:\Piap \alpha:\sigma_1.\sigma_2'}\\
    \typeEvalP{\Gamma,\alpha/m:\sigma_1}{M\appapp m:\sigma_2}
}
{\wfmod{\Gamma}{P}{M: \Piap \alpha:\sigma_1.\sigma_2}}\and
\LabelRule{\WfmRThree}{
    \typeEvalP{\Gamma}{\prj{1}{M}:\sigma_1}\\
    \typeEvalP{\Gamma}{\prj{2}{M}:\sigma_2}\\
    \alpha\not\in FV(\sigma_2)
}
{\wfmod{\Gamma}{P}{M:\Sigma \alpha:\sigma_1.\sigma_2}}\and
\LabelRule{\WfmRFour}{
    \typeEvalP{\Gamma}{M:\sigma}
}
{\wfmod{\Gamma}{I}{M:\sigma}}\and
\LabelRule{\WfmSub}{
    \typeEvalK{\Gamma}{M:\sigma}\\
    \typeEval{\Gamma}{\sigma\leq \sigma':\sign}
}
{\wfmod{\Gamma}{K}{M:\sigma'}}
\end{mathpar}
\caption{\text{Well-formed module}\ \wfmod{\Gamma}{\kappa}{M:\sigma}}
\label{fig:apd:well-formed-module}
\end{figure}

%

\clearpage  

\begin{figure*}[!t]
$$\small
\LabelRule{\SubsDPr}{
\LabelRuleProof{\SubsK}{
\LabelRuleProof{\SubkSg}{
\LabelRuleProof{\WfcInt}{~}
{\typeEval{}{\intType:\basekind}}}
{\typeEval{}{S(\intType)\leq \basekind:\kind}}}
{\typeEval{}{\atksign{S(\intType)} \leq \atksign{\basekind}:\sign}}\hspace{40pt}
\LabelRuleProof{\SubsEq}{
\LabelRuleProof{\SeqAtc}{
\LabelRuleProof{\EqcS}{
\LabelRuleProof{\EqcSgTwo}{
\LabelRuleProof{\WfcV}{~}
{\typeEval{\alpha:S(\intType)}{\alpha:S(\intType)}}}
{\typeEval{\alpha:S(\intType)}{\alpha \equiv \intType:\basekind}}}
{\typeEval{\alpha:S(\intType)}{\intType \equiv \alpha:\basekind}}   }
{\typeEval{\alpha:S(\intType)}{\atcsign{\intType} \equiv \atcsign{\alpha}:\sign}}}
{\typeEval{\alpha:S(\intType)}{\atcsign{\intType} \leq \atcsign{\alpha}:\sign }}\hspace{40pt}
\LabelRuleProof{\WfsThree}{
\LabelRule{\WfcV}{~}
{\typeEval{\alpha:\basekind}{\alpha:\basekind}}}
{\typeEval{\alpha:\basekind}{\atcsign{\alpha}:\sign}}
}
{\typeEval{}{\Sigma \alpha:\atksign{S(\intType)}.\atcsign{\intType} \leq \Sigma \alpha:\atksign{\basekind}.\atcsign{\alpha}}}
$$
\caption{Derivation of \typeEval{}{\Sigma \alpha:\atksign{S(\intType)}.\atcsign{\intType} \leq \Sigma \alpha:\atksign{\basekind}.\atcsign{\alpha}}}
\label{fig:derivation:example:sub_signatures}
\end{figure*}

\begin{example}[Opaque signature]
\label{ex:ml:opaque-sign}
We consider a module $M = \tuple{\atcmod{\intType},\attmod{0}}$ and show that it has the opaque signature $\sigma_O = \Sigma \alpha:\atksign{\basekind}.\atcsign{\alpha}$ 
\footnote{Notice that in Example~\ref{ex:ml:module-signature}, we have $M = \tuple{\atcmod{\intType},\tuple{\attmod{0},\attmod{\lambda x:\intType.e}}}$ and $\sigma_O = \Sigma \alpha:\atksign{\basekind}.\Sigma \beta:\atcsign{\alpha}.\atcsign{\alpha \rightarrow \tau_f}$. From the static semantics, we can also derive that \typeEvalP{}{M:\sigma_O}. Here, to simplify the presentation, we have $M = \tuple{\atcmod{\intType},\attmod{0}}$ and $\sigma_O = \Sigma \alpha:\atksign{\basekind}.\atcsign{\alpha}$}.
We then have that $M$ is a pure module of the signature $\sigma_O$ (i.e. \typeEvalP{}{M:\sigma_O}).

First, we have that $M$ is a module of a transparent signature, i.e. \typeEvalP{}{M:\Sigma\alpha:\atksign{S(\intType)}.\atcsign{\intType}} by instantiating the \WfmPr\ rule.
$$\small \LabelRule{\WfmPr}{
    \typeEvalP{}{\atcmod{\intType}:\atksign{S(\intType)}}\\
    \typeEvalP{}{\attmod{0}:\atcsign{\intType}}\\
    \alpha \not\in FV(\atcsign{\intType})
}
{\typeEvalP{}{M: \Sigma \alpha:\atksign{S(\intType)}.\atcsign{\intType}}}$$

Next, we have that \typeEval{}{\Sigma \alpha:\atksign{S(\intType)}.\atcsign{\intType} \leq \Sigma \alpha:\atksign{\basekind}.\atcsign{\alpha}}, by the derivation described in Fig.~\ref{fig:derivation:example:sub_signatures}:

Finally, it follows that \typeEvalP{}{M:\sigma_O}.
$$\small
\LabelRule{\WfmSub}{
    \typeEvalP{}{\tuple{\atcmod{\intType},\attmod{0}}:\Sigma \alpha:\atksign{S(\intType)}.\atcsign{\intType} } \\
     \typeEval{}{\Sigma \alpha:\atksign{S(\intType)}.\atcsign{\intType} \leq \Sigma \alpha:\atksign{\basekind}.\atcsign{\alpha}}}
{\typeEvalP{}{M:\Sigma \alpha:\atksign{\basekind}.\atcsign{\alpha} }}
$$
\end{example}

From \cite{Crary-POPL-17}, we have the following lemma about the correctness of \fstop{M}{c} operation.
\begin{lemma}
If \typeEvalP{\Gamma}{M:\sigma} then \typeEval{\Gamma}{\fstop{M}{c}} and \typeEval{\Gamma}{c:\fstsign{\sigma}}.
\end{lemma}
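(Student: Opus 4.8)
The plan is to prove the lemma by induction on the derivation of the pure module judgment $\typeEvalP{\Gamma}{M:\sigma}$, establishing simultaneously that $\fstop{M}{\cdot}$ is defined (some $c$ with $\typeEval{\Gamma}{\fstop{M}{c}}$) and that this $c$ satisfies $\typeEval{\Gamma}{c:\fstsign{\sigma}}$. Although the pure and impure judgments are mutually inductive, every premise of a rule that can conclude purity $P$ is itself pure (in \WfmAppK, \WfmPr, \WfmLetOne, \WfmSub the purity of the premises matches the conclusion), so a single induction over pure derivations suffices; the genuinely impure constructs — generative application (\WfmAppI), sealing (\WfmSeal), unpack (\WfmUnp), module binding (\WfmLetTwo), and the coercion \WfmRFour — never conclude $\typeEvalP$ and, correspondingly, carry no \textsf{Fst} rule in Fig.~\ref{fig:ml:static-portion:constructor}.

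Before the induction I would isolate four auxiliary facts. (i) \emph{\textsf{Fst} is syntax-directed}: the rules of Fig.~\ref{fig:ml:static-portion:constructor} are deterministic in $M$, so $c$ is unique, and I can invert $\fstop{\prj{i}{M}}{\cdot}$ and $\fstop{M\appapp m}{\cdot}$ to recover an underlying $\fstop{M}{c}$. (ii) \emph{Commutation with substitution}: $\fstsign{\sigma[\alpha\mapsto c]} = \fstsign{\sigma}[\alpha\mapsto c]$, by a routine induction on $\sigma$ using Fig.~\ref{fig:ml:static-portion:kind}. (iii) \emph{Monotonicity}: $\typeEval{\Gamma}{\sigma\leq\sigma':\sign}$ implies $\typeEval{\Gamma}{\fstsign{\sigma}\leq\fstsign{\sigma'}:\kind}$, by induction on the subsignature derivation (e.g.\ \SubsDFnAp maps to the contravariant/covariant \SubkDFn, while the generative and unit cases collapse to $\unitkind\leq\unitkind$). (iv) \emph{Phase separation}: a term variable never occurs in a constructor and a module variable $m$ occurs only through its twin $\alpha$, so constructor and \textsf{Fst} judgments in $\Gamma,x:\tau$ (resp.\ $\Gamma,\alpha/m:\sigma$) are equivalent to the corresponding judgments in $\Gamma$ (resp.\ $\Gamma,\alpha:\fstsign{\sigma}$); this converts the twinned contexts produced by the IH into the constructor contexts demanded by \WfcDFun, \WfcPrj, and \WfcROne.

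The base and congruence cases are then direct. For $m$, $\unitmod$, $\atcmod{c}$, $\attmod{e}$, and $\lambdagn$ the \textsf{Fst} value is read off immediately and the kind check unfolds one clause of $\fstsign$ and one constructor rule. For \WfmAbsP, \WfmAppK, \WfmPr, and the projections I apply the IH to the module premises, take the matching \textsf{Fst} clause, and close the kinding goal with \WfcDFun, \WfcDApp, \WfcPr, \WfcPrj respectively, invoking (ii) to rewrite $\fstsign$ across the substitutions occurring in the conclusions ($\sigma'[\alpha\mapsto c_2]$ in \WfmAppK, $\sigma_2[\alpha\mapsto\prj{1}{c}]$ in the second projection) and the side condition $\alpha\notin FV(\sigma_2)$ in the pair case to trivialize the dependent-pair substitution. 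For \WfmSub I combine the IH with (iii) and a single use of \WfcSub.

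The main obstacle is the pair of \emph{extensionality} rules \WfmRTwo and \WfmRThree, where the signature in the conclusion ($\Piap\alpha:\sigma_1.\sigma_2$, resp.\ $\Sigma\alpha:\sigma_1.\sigma_2$) is not the one to which the IH applies directly. In \WfmRTwo the first premise only yields $\typeEval{\Gamma}{c:\Pi\alpha:\fstsign{\sigma_1}.\fstsign{\sigma_2'}}$, whereas the goal is at $\Pi\alpha:\fstsign{\sigma_1}.\fstsign{\sigma_2}$. I would feed the second premise $\typeEvalP{\Gamma,\alpha/m:\sigma_1}{M\appapp m:\sigma_2}$ into the IH, then invert its \textsf{Fst} derivation using (i) together with $\fstop{m}{\alpha}$ and the weakened $\fstop{M}{c}$ to see that the extracted constructor is $c\,\alpha$, giving $\typeEval{\Gamma,\alpha:\fstsign{\sigma_1}}{c\,\alpha:\fstsign{\sigma_2}}$; the constructor extensionality rule \WfcROne then upgrades $c$ to the desired kind. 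The case \WfmRThree is analogous, inverting the two projection \textsf{Fst} clauses to a single shared $\fstop{M}{c}$ and applying \WfcRTwo with $\alpha\notin FV(\sigma_2)$. Lining up these two cases — matching the constructor-level extensionality rules to the signature-level ones and tracking that the $c$ extracted from $M$ is the \emph{same} across both functor premises / both projections — is where the real care is needed; the remaining cases are bookkeeping.
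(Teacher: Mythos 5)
Your proposal cannot be checked against an in-paper argument, because the paper does not prove this lemma at all: it is stated immediately after the sentence ``From \cite{Crary-POPL-17}, we have the following lemma about the correctness of \fstop{M}{c}'', i.e.\ it is imported wholesale from Crary's development of the module calculus, and the present paper supplies neither a proof nor a proof sketch. Judged on its own terms, your reconstruction is sound and follows the standard metatheoretic route one would expect (and that Crary's formalization takes): induction on the pure typing derivation, using that every rule concluding $\vdash_{\mathsf{P}}$ has pure module premises, that the \textsf{Fst} relation of Fig.~\ref{fig:ml:static-portion:constructor} is deterministic and defined exactly on the pure constructs, and that $\fstsign{-}$ commutes with substitution and is monotone under subsignatures. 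You correctly identify the only non-routine cases as the extensionality rules \WfmRTwo\ and \WfmRThree, where the IH lands at the ``wrong'' signature and must be repaired by inverting the \textsf{Fst} derivations of $M\appapp m$ and $\prj{i}{M}$ back to a shared $\fstop{M}{c}$ and then applying the constructor-level extensionality rules \WfcROne\ and \WfcRTwo; your auxiliary facts (i) and (iv) are exactly what makes that inversion and the context strengthening in \WfmLetOne\ legitimate. The one point worth stating explicitly rather than leaving implicit is that in \WfmRThree\ the side condition $\alpha\notin FV(\sigma_2)$ is what lets you discharge the substitution $k_2[\alpha\mapsto\prj{1}{c}]$ demanded by \WfcRTwo, since then $\fstsign{\sigma_2}$ is unaffected; you gesture at this but it is the hinge of that case. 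With that made precise, the argument goes through, though for the purposes of this paper the honest citation is simply to Crary's Lemma rather than to a new proof.
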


To facilitate the proofs about TRNI for ML, from the static semantics, we have the following lemma.


\begin{lemma}[Weakening]
\label{lem:ml:weakening}
{Suppose that \typeEval{}{\Gamma,\alpha:k\ \ok}.}
It follows that:
\begin{itemize}
\item if \typeEval{\Gamma}{\sigma:\sign}, then \typeEval{\Gamma,\alpha:k}{\sigma:\sign},
\item if \typeEval{\Gamma}{c:k'}, then \typeEval{\Gamma,\alpha:k}{c:k'},
\item if \typeEval{\Gamma}{k':\kind}, then \typeEval{\Gamma,\alpha:k}{k':\kind},
\item if \typeEval{\Gamma}{k_1 \equiv k_2:\kind}, then \typeEval{\Gamma,\alpha:k}{k_1 \equiv k_2:\kind},
\item if \typeEval{\Gamma}{k_1 \leq k_2:\kind}, then \typeEval{\Gamma,\alpha:k}{k_1 \leq k_2:\kind},
\item if \typeEval{\Gamma}{c_1 \equiv c_2:k'}, then \typeEval{\Gamma,\alpha:k}{c_1 \equiv c_2:k'}.
\end{itemize}
\end{lemma}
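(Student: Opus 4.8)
The plan is to establish all six implications \emph{simultaneously} by a single structural induction on the derivation of the hypothesis judgment. A simultaneous induction is forced here because the six judgment forms are mutually recursive: rule \WfkS\ appeals to a constructor judgment \typeEval{\Gamma}{c:\basekind}, rule \WfcSub\ appeals to subkinding, subkinding (\SubkE) appeals to kind equivalence, and kind equivalence (\EqkSg) appeals back to constructor equivalence. These six forms are moreover closed under the ``premise-of'' relation: their premises never mention term, module, or signature-equivalence judgments---only \fstsign{\cdot}, which is a syntactic metafunction---so nothing outside the six is needed. Before the induction I would record the freshness facts implied by the hypothesis \typeEval{}{\Gamma,\alpha:k\ \ok}: inverting rule \WfeC\ gives $\alpha \notin \dom{\Gamma}$, and since every kind, type, and signature recorded in $\Gamma$ is by assumption well-formed over $\Gamma$ alone, $\alpha$ occurs free in none of them. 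Hence inserting $\alpha:k$ neither shadows nor captures any variable.

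With these facts in hand most cases are immediate. For the leaf rules (\WfkB, \WfkU, \WfcUt, \WfcInt, \WfcUc, and the reflexivity rules \EqkR\ and \EqcR) the only premises are well-formedness side conditions, which transfer directly using the assumed well-formedness of $\Gamma,\alpha:k$. For the variable rule \WfcV\ the lookup $\Gamma(\beta)=k'$ still holds in $\Gamma,\alpha:k$ precisely because $\alpha$ is fresh and so does not disturb the binding of $\beta$. For every rule whose premises are all stated over the \emph{same} context $\Gamma$---application, pairing, projection, the arrow/product/arithmetic formation rules, and the symmetry and transitivity rules \EqkS, \EqkT, \EqcS, \EqcT, \SubkT---I would apply the induction hypotheses to the premises and reassemble the conclusion with the same rule.

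The only genuinely delicate cases are the rules that introduce a \emph{local binder}: \WfkF\ and \WfkP, the constructor rules \WfcDFun, \WfcAbs, \WfcEx, the subkinding rules \SubkDFn, \SubkDPr, the corresponding equivalence rules \EqkF, \EqkP, \EqcDFn\ (and siblings), and the signature rules \WfsFour, \WfsFive, \WfsSix. Here a premise has the form \typeEval{\Gamma,\beta:k_1}{J'}, and the induction hypothesis only delivers weakening \emph{at the end}, namely \typeEval{\Gamma,\beta:k_1,\alpha:k}{J'}, whereas reassembling the rule demands \typeEval{\Gamma,\alpha:k,\beta:k_1}{J'}. Bridging this gap is the crux. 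The cleanest route, which I would take, is to \emph{strengthen the statement} so that the new hypothesis may be placed at an arbitrary position: ``if \typeEval{\Gamma_1,\Gamma_2}{J} and \typeEval{}{\Gamma_1,\alpha:k,\Gamma_2\ \ok}, then \typeEval{\Gamma_1,\alpha:k,\Gamma_2}{J}'', with the present lemma being the instance where $\Gamma_2$ is empty. Then descending under a binder merely enlarges $\Gamma_2$ by $\beta:k_1$ (after alpha-renaming so that $\beta \neq \alpha$), and the induction hypothesis applies verbatim. Equivalently one could retain the present statement and invoke a separate exchange lemma to commute $\alpha:k$ past $\beta:k_1$, which is sound because $\alpha$ does not occur in $k_1$ and $\beta \neq \alpha$. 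I expect this binder bookkeeping---fixing the generalized induction invariant and discharging the freshness obligations uniformly across all binder rules---to be the main obstacle; with the invariant chosen correctly each individual case is routine.
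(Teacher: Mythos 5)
Your proposal is correct and follows essentially the same route as the paper, which simply states that the lemma is proved by induction on the derivations of the six (mutually recursive) judgments and omits all details. Your additional observations---that the induction must be simultaneous, and that the binder cases force either a generalization to context insertion at an arbitrary position or an auxiliary exchange lemma---are exactly the standard bookkeeping the paper leaves implicit.
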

\begin{proof}
We prove this lemma by induction on the derivation of \typeEval{\Gamma}{\sigma:\sign}, \typeEval{\Gamma}{c:k'}, \typeEval{\Gamma}{k':\kind}, \typeEval{\Gamma}{k_1 \equiv k_2:\kind}, \typeEval{\Gamma}{k_1 \leq k_2:\kind}, and \typeEval{\Gamma}{c_1 \equiv c_2:k'}.
\end{proof}

\paragraph{Dynamic semantics.}
The dynamic semantics is given by call-by value semantics. 
We have dynamic semantics for terms \typeEval{\Gamma}{e \transitml e'} and for modules \typeEval{\Gamma}{M \transitml M'} (see Fig.~\ref{fig:ml:semantics:dynamic:term} and Fig.~\ref{fig:ml:semantics:dynamic:module}), where 
{the context $\Gamma$ is only used to extract the static part of module values}.
Open term values and module values are as below.
\begin{align*}
v & := x \sep \unitval \sep n \sep \lambda x:\tau.e \sep \tuple{v,v} \sep \Lambda \alpha:k.e & \text{Term values} \\ 
	& \quad\sep \pack{c,v}{\exists \alpha:k.\tau} 
\\
V & := m \sep \unitmod \sep \atcmod{c} \sep \attmod{v} \sep \tuple{V,V} & \text{Module values}\\
	&\quad \sep \lambdagn \alpha/m:\sigma.M \sep \lambdaap \alpha/m:\sigma.M
\end{align*}

In the tstep\_fix rule, $\lambda \_:\unittype.\fix{\tau}{e}$ means that the term variable bound by $\lambda$ is a fresh variable.
To be precise, the variable must not be in \dom{\Gamma}, and in addition it should be canonically chosen, to maintain strict determinacy of evaluation. In Crary's deBruin representation this is automatic.

\purpletext{We use $V,W$ as metavariables for module values.
We write \terminating{e} when the evaluation of $e$ terminates.
Similarly, we have \terminating{M}}.


\begin{figure}
\begin{mathpar}
\footnotesize
\LabelRule{tstep\_app1}{
\typeEval{\Gamma}{e_1 \transitml e_1'}}
{\typeEval{\Gamma}{e_1\ e_2 \transitml e_1'\ e_2}} \and
\LabelRule{tstep\_app2}{
    \typeEval{\Gamma}{e_2 \transitml e_2'}
}
{\typeEval{\Gamma}{v_1\ e_2 \transitml v_1\ e_2'}} \and
\LabelRule{tstep\_app3}{
    ~
}
{\typeEval{\Gamma}{(\lambda x:\tau.e_1)\ v_2 \transitml e_1[x\mapsto v_2]}} \and
\LabelRule{tstep\_pair1}{
    \typeEval{\Gamma}{e_1 \transitml e_1'}
}
{\typeEval{\Gamma}{\tuple{e_1,e_2} \transitml \tuple{e_1',e_2}}} \and
\LabelRule{tstep\_pair2}{
    \typeEval{\Gamma}{e_2 \transitml e_2'}
}
{\typeEval{\Gamma}{\tuple{v_1,e_2} \transitml \tuple{v_1,e_2'}}} \and
\LabelRule{tstep\_pi11}{
    \typeEval{\Gamma}{e \transitml e'}
}
{\typeEval{\Gamma}{\prj{1}{e} \transitml \prj{1}{e'}}} \and
\LabelRule{tstep\_pi12}{
    ~
}
{\typeEval{\Gamma}{\prj{1}{\tuple{v_1,v_2}} \transitml v_1}} \and
\LabelRule{tstep\_pi21}{
    \typeEval{\Gamma}{e \transitml e'}
}
{\typeEval{\Gamma}{\prj{2}{e} \transitml \prj{2}{e'} }} \and
\LabelRule{tstep\_pi22}{
    ~
}
{\typeEval{\Gamma}{\prj{2}{\tuple{v_1,v_2}} \transitml v_2}} \and
\LabelRule{tstep\_papp1}{
    \typeEval{\Gamma}{e \transitml e'}
}
{\typeEval{\Gamma}{e[c] \transitml e'[c]}} \and
\LabelRule{tstep\_papp2}{
    ~
}
{\typeEval{\Gamma}{(\Lambda \alpha:k.e)[c] \transitml e[\alpha\mapsto c]}} \and
\LabelRule{tstep\_pack}{
    \typeEval{\Gamma}{e \transitml e'}
}
{\typeEval{\Gamma}{\pack{c,e}{\exists\alpha:k.\tau} \transitml \pack{c,e'}{\exists\alpha:k.\tau}}} \and
\LabelRule{tstep\_unpack1}{
    \typeEval{\Gamma}{e_1 \transitml e_1'}
}
{\typeEval{\Gamma}{\unpack{\alpha,x}{e_1}{e_2} \transitml \unpack{\alpha,x}{e_1'}{e_2}}} \and
\LabelRule{tstep\_unpack2}{
    ~
}
{\typeEval{\Gamma}{\unpack{\alpha,x}{(\pack{c,v}{\exists \alpha:k.\tau})}{e_2} \transitml} \\\\ {e[\alpha\mapsto c,x\mapsto v]}} \and
\LabelRule{tstep\_fix}{
    ~
}
{\typeEval{\Gamma}{\fix{\tau}{e} \transitml e\ (\lambda \_:\unittype.\fix{\tau}{e})}} \and
\LabelRule{tstep\_lett1}{
    \typeEval{\Gamma}{e_1 \transitml e_1'}
}
{\typeEval{\Gamma}{\letexp{x=e_1}{e_2} \transitml \letexp{x=e_1'}{e_2}}} \and
\LabelRule{tstep\_lett2}{
    ~
}
{\typeEval{\Gamma}{\letexp{x=v_1}{e_2} \transitml e_2[x\mapsto v_1]}} \and
\LabelRule{tstep\_letm1}{
    \typeEval{\Gamma}{M \transitml M'}
}
{\typeEval{\Gamma}{\letexp{\alpha/m = M}{e} \transitml \letexp{\alpha/m = M'}{e}}} \and
\LabelRule{tstep\_letm2}{
    \typeEval{\Gamma}{\fstop{V}{c}}
}
{\typeEval{\Gamma}{\letexp{\alpha/m = V}{e} \transitml e[\alpha\mapsto c,m\mapsto V]}} \and
\LabelRule{tstep\_ext1}{
    \typeEval{\Gamma}{M \transitml M'}
}
{\typeEval{\Gamma}{\extract{M} \transitml \extract{M'}}} \and
\LabelRule{tstep\_ext2}{
    ~
}
{\typeEval{\Gamma}{\extract{\attmod{v}} \transitml v}} 
\end{mathpar}
\caption{Dynamic semantics - Terms \typeEval{\Gamma}{e \transitml e'}}
\label{fig:ml:semantics:dynamic:term}
\end{figure}

\begin{figure}
\begin{mathpar}
\footnotesize
\LabelRule{mstep\_dyn}{
    \typeEval{\Gamma}{e \transitml e'}
}
{\typeEval{\Gamma}{\attmod{e} \transitml \attmod{e'}}} \and
\LabelRule{mstep\_appgn1}{
    \typeEval{\Gamma}{M_1 \transitml M_1'}
}
{\typeEval{\Gamma}{M_1\ M_2 \transitml M_1'\ M_2}} \and
\LabelRule{mstep\_appgn2}{
    \typeEval{\Gamma}{M_2 \transitml M_2'}
}
{\typeEval{\Gamma}{V_1\ M_2 \transitml V_1\ M_2'}} \and
\LabelRule{mstep\_appgn3}{
    \typeEval{\Gamma}{\fstop{V_2}{c_2}}
}
{\typeEval{\Gamma}{(\lambdagn \alpha/m:\sigma.M_1)\ V_2 \transitml M_1[\alpha\mapsto c_2,m\mapsto V_2]}} \and
\LabelRule{mstep\_appap1}{
    \typeEval{\Gamma}{M_1 \transitml M_1'}
}
{\typeEval{\Gamma}{M_1\appapp M_2 \transitml M_1'\appapp M_2}} \and
\LabelRule{mstep\_appap2}{
    \typeEval{\Gamma}{M_2 \transitml M_2'}
}
{\typeEval{\Gamma}{V_1\appapp M_2 \transitml V_1\appapp M_2'}} \and
\LabelRule{mstep\_appap3}{
    \typeEval{\Gamma}{\fstop{V_2}{c_2}}
}
{\typeEval{\Gamma}{(\lambdaap \alpha/m:\sigma.M_1)\ V_2 \transitml M_2[\alpha\mapsto c_2,m\mapsto V_2]}} \and
\LabelRule{mstep\_pair1}{
    \typeEval{\Gamma}{M_1 \transitml M_1'}
}
{\typeEval{\Gamma}{\tuple{M_1,M_2} \transitml \tuple{M_1',M_2}}} \and
\LabelRule{mstep\_pair2}{
    \typeEval{\Gamma}{M_2 \transitml M_2'}
}
{\typeEval{\Gamma}{\tuple{V_1,M_2} \transitml \tuple{V_1,M_2'}}} \and
\LabelRule{mstep\_pi11}{
    \typeEval{\Gamma}{M \transitml M'}
}
{\typeEval{\Gamma}{\prj{1}{M} \transitml \prj{1}{M'} }} \and
\LabelRule{mstep\_pi12}{
    ~
}
{\typeEval{\Gamma}{\prj{1}{\tuple{V_1,V_2}} \transitml V_1}} \and
\LabelRule{mstep\_pi21}{
    \typeEval{\Gamma}{M \transitml M'}
}
{\typeEval{\Gamma}{\prj{2}{M} \transitml \prj{2}{M'}}} \and
\LabelRule{mstep\_pi22}{
    ~
}
{\typeEval{\Gamma}{\prj{2}{\tuple{V_1,V_2}} \transitml V_2}} \and
\LabelRule{mstep\_unpack1}{
    \typeEval{\Gamma}{e \transitml e'}
}
{\typeEval{\Gamma}{\unpack{\alpha,x}{e}{(M:\sigma)} \transitml } \\\\  {\unpack{\alpha,x}{e'}{(M:\sigma)}}} \and
\LabelRule{mstep\_unpack2}{
    ~
}
{\typeEval{\Gamma}{\unpack{\alpha,x}{(\pack{c,v}{\exists \alpha:k.\tau})}{(M:\sigma)} \transitml}\\\\ {M[\alpha\mapsto c, x\mapsto v]}} \and
\LabelRule{mstep\_lett1}{
    \typeEval{\Gamma}{e_1 \transitml e_1'}
}
{\typeEval{\Gamma}{\letexp{x=e_1}{M_2} \transitml \letexp{x=e_1'}{M_2}}} \and
\LabelRule{mstep\_lett2}{
    ~
}
{\typeEval{\Gamma}{\letexp{x=v_1}{M_2} \transitml M_2[x\mapsto v_1]}} \and
\LabelRule{mstep\_letm1}{
    \typeEval{\Gamma}{M_1 \transitml M_1'}
}
{\typeEval{\Gamma}{\letexp{\alpha/m=M_1}{(M_2:\sigma)} \transitml \letexp{\alpha/m=M_1'}{(M_2:\sigma)}}} \and
\LabelRule{mstep\_letm2}{
    \typeEval{\Gamma}{\fstop{V}{c}}
}
{\typeEval{\Gamma}{\letexp{\alpha/m=V}{(M:\sigma)} \transitml M[\alpha\mapsto c, m\mapsto V]}} \and
\LabelRule{mstep\_seal}{
    ~
}
{\typeEval{\Gamma}{(M\seal \sigma) \transitml M}}
\end{mathpar}
\caption{Dynamic semantics - Modules \typeEval{\Gamma}{M \transitml M'}}
\label{fig:ml:semantics:dynamic:module}
\end{figure}

\clearpage
\subsection{Logical relation}
In order to define logical relation, we define some auxiliary notions as in \cite{Crary-POPL-17}
Given a relation $R \in \relDel{\tau,\tau'}$, \rarb{s}{R} contains continuations that agree on values related by $R$.
Conversely, given a relation $S$ on continuations, related continuations by $S$ agree on terms in \rarb{t}{S}.
\purpletext{From \rarb{s}{\_} and \rarb{t}{\_}, we define Pitts closed relations}.

\begin{definition}[Closure]
\label{def:closure}
For $R \in \relDel{\tau,\tau'}$, define
$$\rarb{s}{R} \triangleq \{\tuple{v:\tau \rightarrow\unittype, v':\tau\rightarrow\unittype} \sep 
\forall \tuple{w,w'} \in R: \terminating{vw} \Leftrightarrow \terminating{v'w'}\}.
$$
For $S \in \relDel{\tau \rightarrow \unittype, \tau'\rightarrow\unittype}$, define
\( \rarb{t}{S} \triangleq \{ \tuple{w:\tau,w':\tau}\sep \forall \tuple{v,v'} \in S, \terminating{v w} \Leftrightarrow \terminating{v' w'}\} \).

For $R \in \relDel{\tau,\tau'}$, define 
$$\rarb{ev}{R}  \triangleq \{\tuple{e:\tau,e':\tau'} \sep \terminating{e} \Leftrightarrow \terminating{e'}, 
	 \forall v,v'. e \reduce v \implies e' \reduce v' \implies \tuple{v,v'} \in R \} 
$$
Relation $R \in \relDel{\tau,\tau'}$ is {\em Pitts closed} if $R = \rarb{st}{R}$.
\end{definition}

\purpletext{We have similar definitions for relations defined on closed signatures $\sigma_1$ and $\sigma_2$, where the signatures for continuations are $\Pign \alpha:\sigma_1.\unitsign$ and $\Pign \alpha:\sigma_2.\unitsign$,
which can be abbreviated as $\sigma_1\to\unitsign$ and $\sigma_2\to\unitsign$.
}

Apropos the \textsf{stev} closure, we may be able to infer indirectly that two terms are related by \rarb{stev}{R} when these two terms depend on terms related by \rarb{stev}{Q}.

\begin{definition}
{Suppose \typeEval{x:\varrho}{e:\tau}.
We say that $x$ is {\em active} in $e$ if for all closed $e'$ s.t. $\typeEval{}{e':\varrho}$, \terminating{e[x\mapsto e']} implies \terminating{e'}.}
\end{definition}

\begin{lemma}
\label{lem:ml:monadic}{Suppose $Q \in \relDel{\varrho_1,\varrho_2}$, $R \in \relDel{\tau_1,\tau_2}$, \typeEval{x:\varrho_i}{e_i:\tau_i}, and $x$ is active in $e_1$ and $e_2$.
If for all $\tuple{v_1,v_2} \in Q$, $\tuple{e_1[x\mapsto v_1],e_2[x\mapsto v_2]} \in \rarb{stev}{R}$, then 
$$\forall \tuple{e_1',e_2'} \in \rarb{stev}{Q}. \tuple{e_1[x\mapsto e_1'],e_2[x\mapsto e_2']} \in \rarb{stev}{R}.$$}
\end{lemma}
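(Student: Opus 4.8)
The plan is to prove the statement as an instance of the standard ``bind'' principle for the Pitts-style biorthogonal ($\top\top$) closure, so that the argument is driven entirely by the interplay of $\rarb{s}{\cdot}$, $\rarb{t}{\cdot}$ and $\rarb{ev}{\cdot}$ rather than by the syntax of $e_1,e_2$. First I would record two routine facts. The \emph{evaluation fact}: for a continuation value $k$ and a term $a$, $\terminating{k\,a}$ holds iff $a$ terminates and, writing $a\reduce u$, $\terminating{k\,u}$ holds (this is just call-by-value). The \emph{closure fact}: since $\rarb{s}{\cdot},\rarb{t}{\cdot}$ form a Galois connection, the identity $sts=s$ gives $\rarb{s}{R}=\rarb{s}{(\rarb{st}{R})}$, and likewise for $Q$. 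From these I obtain the key \emph{continuation lemma}: for Pitts-closed $P$, if $\tuple{a_1,a_2}\in\rarb{ev}{P}$ and $\tuple{k_1,k_2}\in\rarb{s}{P}$ then $\terminating{k_1\,a_1}\Leftrightarrow\terminating{k_2\,a_2}$; and conversely, if this holds for \emph{all} $\tuple{k_1,k_2}\in\rarb{s}{P}$ then $\tuple{a_1,a_2}\in\rarb{ev}{P}$. Both directions fall out of the evaluation fact and the definitions of $\rarb{ev}{\cdot}$ and $\rarb{t}{\cdot}$: for the converse, the termination half uses the constant continuation $\lambda\_{:}\tau_i.\unitval$ (it lies in $\rarb{s}{P}$ because both components reduce every argument to $\unitval$), and the value half is exactly the definition of $\rarb{t}{(\rarb{s}{P})}=P$.

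With this machinery, fix $\tuple{e_1',e_2'}\in\rarb{stev}{Q}$. By the converse continuation lemma applied to $P=\rarb{st}{R}$, it suffices to show, for an arbitrary $\tuple{k_1,k_2}\in\rarb{s}{R}$, that $\terminating{k_1\,(e_1[x\mapsto e_1'])}\Leftrightarrow\terminating{k_2\,(e_2[x\mapsto e_2'])}$. The idea is to turn each $k_i$ into a continuation that tests $e_i'$: set $g_i \triangleq \lambda y{:}\varrho_i.\,k_i\,(e_i[x\mapsto y])$, a value of type $\varrho_i\rightarrow\unittype$. I would first check $\tuple{g_1,g_2}\in\rarb{s}{Q}$: for $\tuple{v_1,v_2}\in Q$ we have $g_i\,v_i\reduce k_i\,(e_i[x\mapsto v_i])$; the hypothesis gives $\tuple{e_1[x\mapsto v_1],e_2[x\mapsto v_2]}\in\rarb{ev}{(\rarb{st}{R})}$, and since $\tuple{k_1,k_2}\in\rarb{s}{R}=\rarb{s}{(\rarb{st}{R})}$ the forward continuation lemma yields $\terminating{k_1\,(e_1[x\mapsto v_1])}\Leftrightarrow\terminating{k_2\,(e_2[x\mapsto v_2])}$, i.e.\ $\terminating{g_1\,v_1}\Leftrightarrow\terminating{g_2\,v_2}$. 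Now, using $\rarb{s}{Q}=\rarb{s}{(\rarb{st}{Q})}$ together with $\tuple{e_1',e_2'}\in\rarb{ev}{(\rarb{st}{Q})}$, the forward continuation lemma (this time with $P=\rarb{st}{Q}$ and the continuations $g_i$) gives $\terminating{g_1\,e_1'}\Leftrightarrow\terminating{g_2\,e_2'}$.

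It remains to connect $\terminating{g_i\,e_i'}$ with $\terminating{k_i\,(e_i[x\mapsto e_i'])}$, and this is the one place the activeness hypothesis is indispensable. I claim $\terminating{g_i\,e_i'}\Leftrightarrow\terminating{k_i\,(e_i[x\mapsto e_i'])}$. If $e_i'$ diverges, then $g_i\,e_i'$ diverges (call-by-value), and activeness of $x$ in $e_i$ forces $e_i[x\mapsto e_i']$ to diverge as well (its termination would entail $\terminating{e_i'}$), so both sides fail; if $e_i'\reduce v_i'$ then $g_i\,e_i'\reduce k_i\,(e_i[x\mapsto v_i'])$, so the claim reduces to the co-termination $\terminating{e_i[x\mapsto e_i']}\Leftrightarrow\terminating{e_i[x\mapsto v_i']}$. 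Chaining these equivalences with the previous paragraph closes the continuation test and hence the lemma. \textbf{The main obstacle} I anticipate is exactly this last co-termination fact: I expect to prove it via $\terminating{e_i[x\mapsto e_i']}\Leftrightarrow\terminating{\letexp{y=e_i'}{e_i[x\mapsto y]}}$, observing that the $\textsf{let}$ form evaluates $e_i'$ once to $v_i'$ and then runs $e_i[x\mapsto v_i']$. This reduction relies on determinism and the absence of effects in the calculus (so re-evaluating the copies of $e_i'$ is harmless) together with activeness (to rule out the degenerate case where $x$ is discarded); formalizing it carefully is the delicate step, whereas all the biorthogonality bookkeeping above is routine once the two closure facts are in hand.
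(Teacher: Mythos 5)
The paper states Lemma~\ref{lem:ml:monadic} without proof (it is taken over from Crary's development of the logical relation), so there is no in-paper argument to compare against; judged on its own, your proposal is essentially correct and is the standard ``bind'' principle for the biorthogonal closure, which is surely how the omitted proof goes. The skeleton is right: the Galois identity $\rarb{sts}{R}=\rarb{s}{R}$, the two directions of the continuation lemma for Pitts-closed $P$ (with the constant continuation handling the termination half of the converse), packaging $k_i$ and $e_i$ into $g_i=\lambda y{:}\varrho_i.\,k_i\,(e_i[x\mapsto y])$ and checking $\tuple{g_1,g_2}\in\rarb{s}{Q}$ from the hypothesis on values, and using activeness to propagate divergence of $e_i'$ into divergence of $e_i[x\mapsto e_i']$.

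One point to tighten: in the convergent case $e_i'\reduce v_i'$ you say the claim ``reduces to the co-termination $\terminating{e_i[x\mapsto e_i']}\Leftrightarrow\terminating{e_i[x\mapsto v_i']}$'', but co-termination alone is not sufficient, because the continuation $k_i$ is then applied to the resulting values: you need the stronger fact that $e_i[x\mapsto e_i']$ and $e_i[x\mapsto v_i']$ are Kleene-equal (co-terminate \emph{and} reduce to the same value), so that $\terminating{k_i(e_i[x\mapsto e_i'])}\Leftrightarrow\terminating{k_i(e_i[x\mapsto v_i'])}$ follows from your evaluation fact. Your let-expansion argument does deliver this stronger statement in a pure deterministic call-by-value calculus, so this is a wording slip rather than a gap, but the lemma to formalize is the Kleene equality (a substitutivity-of-evaluation lemma), not mere co-termination. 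Note also that activeness is needed only in the divergent branch; in the convergent branch discarding $x$ is harmless, so the parenthetical invoking activeness for the let-expansion can be dropped.
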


From \cite{Crary-POPL-17} and Lemma wf\_mr in \cite{crary-17-impl}, we have the following lemma about properties of logical interpretations of constructors, kinds and signatures.

\begin{lemma}
\label{lem:ml:crary:interpretation}
Suppose that \typeEval{}{\Gamma\ \ok} and $\tuple{\rho,\rho'} \in \interenv{\Gamma}$. Then
\begin{itemize}
\item If \typeEval{\Gamma}{c:k}, then $\tuple{\rho_L(c),\rho_R(c), \inter{c}{\rho},\inter{c}{\rho'}} \in \inter{k}{\rho}$.
\item If \typeEval{\Gamma}{k_1 \equiv k_2}, then $\inter{k_1}{\rho} = \inter{k_2}{\rho}$
\item If \typeEval{\Gamma}{\sigma:\sign}, then $\inter{\sigma}{\rho} = \inter{\sigma}{\rho'}$.
\item If \typeEval{\Gamma}{\sigma \equiv \sigma':\sign}, then $\inter{\sigma}{\rho} = \inter{\sigma'}{\rho}$.
\end{itemize}
\end{lemma}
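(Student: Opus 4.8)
The plan is to prove the four implications \emph{simultaneously} by induction on the derivations of $\typeEval{\Gamma}{c:k}$, $\typeEval{\Gamma}{k_1\equiv k_2:\kind}$, $\typeEval{\Gamma}{\sigma:\sign}$, and $\typeEval{\Gamma}{\sigma\equiv\sigma':\sign}$. A simultaneous induction is unavoidable because the four judgment forms are mutually recursive: the singleton kind $\singleton{c}$ mentions a constructor, the atomic signatures $\atksign{k}$ and $\atcsign{\tau}$ mention a kind and a type, and the dependent forms thread all of them through a binder. So I would run a single induction whose hypotheses are exactly the four stated conclusions, applied to every immediate sub-derivation, over a well-formed $\Gamma$ and a related pair $\tuple{\rho,\rho'}\in\interenv{\Gamma}$.

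Before the induction proper I would establish two auxiliary facts. First, a \textbf{compositionality (substitution) lemma} for the interpretation function, asserting that $\inter{c[\alpha\mapsto c']}{\rho}$ agrees with the interpretation of $c$ in the environment obtained by extending $\rho,\rho'$ at $\alpha$ with the quadruple $\tuple{\rho_L(c'),\rho_R(c'),\inter{c'}{\rho},\inter{c'}{\rho'}}$, and analogously for kinds and signatures. This is needed in every rule whose conclusion performs a substitution, e.g.\ constructor application \WfcDApp, second projection, and the dependent $\Pi$/$\Sigma$ rules. Second, a \textbf{monotonicity lemma for subkinding}: if $\typeEval{\Gamma}{k\le k':\kind}$ then membership in $\inter{k}{\rho}$ entails membership in $\inter{k'}{\rho}$; this is precisely what the subsumption rule \WfcSub\ consumes.

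With these in hand the induction is mostly routine unfolding of the definition of $\inter{\cdot}{\rho}$ against each rule of Figs.~\ref{fig:apd:well-formed-constructor}, \ref{fig:apd:kind-equivalence}, \ref{fig:apd:well-formed-signature}, and~\ref{fig:apd:signature-equivalence}. The base cases (\WfkB, \WfkU, \WfcInt, \WfcUt, \WfsOne, and the reflexivity/symmetry/transitivity rules for both equivalence judgments) are immediate, and parts (3) and (4) for the atomic signatures \WfsTwo\ and \WfsThree\ follow directly from parts (1) and (2) applied to their kind/type components, since $\tuple{\rho,\rho'}$ are related. The substantive cases are: the singleton kind \WfkS\ and \EqkSg, where part (1) for the underlying constructor supplies the interpretation of $\singleton{c}$; the dependent kinds \WfkF, \WfkP\ together with the constructor introduction/elimination rules, where I would extend $\rho,\rho'$ by a fresh related quadruple for the bound variable, invoke the appropriate induction hypothesis under the extended environment, and close the goal with the compositionality lemma; and the extensionality rules \WfcROne\ and \WfcRTwo, which require showing the interpretation is closed under the $\eta$-style reasoning characteristic of $\Pi$- and $\Sigma$-kinds.

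The hard part will be exactly the interaction of the dependent structure with the four-place relational interpretation of kinds. When crossing a binder one must keep the two worlds $\rho$ and $\rho'$ synchronized and verify that the quadruple built for the bound variable genuinely lands in $\inter{k_1}{\rho}$, so that the \emph{extended} environment is again a member of $\interenv{\Gamma,\alpha:k_1}$; this is where parts (1) and (3) of the lemma must be used to justify one another across the binder, and where a sloppy formulation of compositionality would break. Since the statement is quoted from Crary~\cite{Crary-POPL-17} and corresponds to the mechanized Lemma \texttt{wf\_mr} of \cite{crary-17-impl}, the cleanest route in this paper is simply to appeal to that development; the sketch above is how I would reconstruct the argument were the appeal unavailable.
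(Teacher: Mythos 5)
The paper gives no proof of this lemma at all: it is imported directly from Crary's development, with the sentence preceding the statement citing \cite{Crary-POPL-17} and Lemma \texttt{wf\_mr} of \cite{crary-17-impl}, which is exactly the appeal your final sentence makes, so your proposal coincides with the paper's treatment. Your reconstruction sketch is a plausible account of how the result is actually established in that source (simultaneous induction, compositionality, monotonicity for subkinding), with the one caveat that an induction carrying only the four stated clauses would not close: the singleton-kind and extensionality cases have premises of the form $\Gamma \vdash c_1 \equiv c_2 : k$, so a clause for constructor equivalence must be added to the mutual induction.
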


\paragraph{Precandiate.}
As in \cite{Crary-POPL-17}, we next present simple kinds which are used in the definitions of logical interpretations of dependent kinds. 
A {\em simple kind} is a kind that does not have singleton kinds.
Given a kind $k$, \simple{k} returns a simple kind by replacing singleton kinds in $k$ with \basekind.
A {\em candidate} of a kind is a pre-candidate which is in an interpretation of a kind.
\purpletext{We use $Q$ as a meta-variable for pre-candidates in general, $\Phi$ for pre-candidates over function kinds, $P$ for pre-candidates over pair kinds, and $R$ for pre-candidates over \basekind.}

\begin{align*}
\small
\valSet & \triangleq \{v \sep \exists \tau. \typeEval{}{v:\tau}\}  \\
\conSet & \triangleq \{c \sep \exists k. \typeEval{}{c:k}\} \\
\precan{\basekind} & \triangleq \power{\valSet \times \valSet} \\
{\precan{\unitkind}} & \triangleq \set{\tuple{}}\\
\precan{k_1 \rightarrow k_2} & \triangleq \conSet \times \conSet \times \precan{k_1} \to \precan{k_2}\\
\precan{k_1 \times k_2} & \triangleq \precan{k_1} \times \precan{k_2}\\
\end{align*}

\paragraph{Logical interpretations for kinds and constructors.}
\airforcetext{Following \cite{Crary-POPL-17}, we generalize $\rho$ presented in \S\ref{sec:abstraction} to a mapping that maps constructor variables to tuples of the form \tuple{c,c',Q}, term variables to tuples of the form \tuple{v,v'}, and module variables to tuples of the form \tuple{V,V'}}.
Notice that in the simple language, for any $\alpha \in \dom{\rho}$, $\rho(\alpha) = R \in \relDel{\tau_1,\tau_2}$ for some $\tau_1$ and $\tau_2$.
If we use the notation in this section, then we have that $\rho(\alpha) = \tuple{\tau_1,\tau_2,R}$.
\purpletext{We write $\rho_L$ and $\rho_R$ for the substitutions that map every variable in the domain of $\rho$ to respectively the first element and the second element of the tuple that $\rho$ maps that variable to}.
If we do not have module variables, then, $\rho_L(\_)$ is similar to $\delta_1\gamma_1(\_)$ and $\rho_R(\_)$ is similar to $\delta_2\gamma_2(\_)$ in \S\ref{sec:abstraction}, where $\delta_1$ and $\delta_2$ are type substitutions in $\rho$, and $\gamma_1$ and $\gamma_2$ are term substitutions in $\rho$.

\purpletext{Logical interpretations for kinds and constructors are presented in Fig.~\ref{fig:ml:logrel:kind-con} 
\footnote{The presentation of logical interpretations here is similar to the one in \cite{Crary-POPL-17}. Notice that we can write definitions of logical interpretations in the form of inference rules as in \S\ref{sec:abstraction}.}.}
The interpretation of a kind $k$ is a tuple \tuple{c,c',Q,Q'} where $c$ and $c'$ are closed constructors of the kind $k$, and $Q$ and $Q'$ are candidates relating $c$ and $c'$.
Notice that since pre-candidates are defined only for simple kinds, in the logical interpretations of $\Pi \alpha:k_1.k_2$ and $\Sigma \alpha:k_1.k_2$, we use respectively \precan{\simple{\Pi \alpha:k_1.k_2}} and \precan{\simple{\Sigma \alpha:k_1.k_2}}.
In addition, in the definition for $\Pi \alpha:k_1.k_2$, since $\Phi$ and $\Phi'$ are pre-candidates of $\simple{\Pi \alpha:k_1.k_2}$, we require that $Q$ and $Q'$ are pre-candidates of \simple{k_1}
\footnote{Notice that from definition of \precan{\Pi \alpha:k_1.k_2}, we have that $\Phi \tuple{d,d',Q}$ and $\Phi \tuple{d'',d''',Q'}$ are in \precan{\simple{k_2}}.}.
The definitions for types of \basekind\ (e.g. $\tau_1 \rightarrow \tau_2$) are similar to the ones of the simple language.

\begin{figure*}
\small
\begin{align*}
\inter{\basekind}{\rho} & \triangleq \{ 
        \tuple{\tau,\tau',R,R} \sep \typeEval{}{\tau,\tau':\basekind},  
        R \in \relDel{\tau,\tau'}, 
        \text{$R$ Pitts closed}\}\\
\inter{S(c)}{\rho} & \triangleq \{\tuple{\tau,\tau,\inter{c}{\rho},\inter{c}{\rho}} \sep  
    \typeEval{}{\tau \equiv \rho_L(c):\basekind}, 
    \typeEval{}{\tau' \equiv \rho_R(c):\basekind},
    \inter{c}{\rho} \in \relDel{\tau,\tau'},
    \text{\inter{c}{\rho}\ Pitts closed} 
    \}\\
\inter{\Pi \alpha:k_1.k_2}{\rho} & \triangleq \{\tuple{c,c',\Phi,\Phi'} \sep
    \typeEval{}{c: \rho_L (\Pi \alpha: k_1. k_2)},
    \typeEval{}{c':\rho_R (\Pi \alpha: k_1. k)2)}, \\
    & \hspace{70pt}\Phi, \Phi' \in \precan{\simple{\Pi \alpha:k_1.k_2}}, \\
    & \hspace{70pt} \forall d,d',Q,Q', d'', d'''. {Q, Q' \in \precan{\simple{k_1}}} \implies \\
    & \hspace{90pt} \tuple{d,d',Q,Q'} \in \inter{k_1}{\rho} \implies \\
    & \hspace{110pt}\typeEval{}{d \equiv d'':\rho_L(k_1)} \implies \typeEval{}{d' \equiv   d''' \in \rho_R(k_1)}  \implies \\
    & \hspace{140pt}\tuple{cd, c'd', \Phi\tupletwo{d,d',Q},\Phi'\tupletwo{d'', d''', Q'}} \in \inter{k_2}{\rho,\alpha\mapsto \tuple{d,d',Q}}
     \}    \\
\inter{\Sigma \alpha:k_1.k_2}{\rho} & \triangleq \{\tuple{c,c',P,P'} \sep 
    \typeEval{}{c:\rho_L(\Sigma \alpha:k_1.k_2)},
    \typeEval{}{c':\rho_R(\Sigma \alpha:k_1.k_2)},\\
    & \hspace{70pt} P,P' \in \precan{\simple{\Sigma \alpha:k_1.k_2}} \\
    & \hspace{70pt} \tuple{\prj{1}{c},\prj{1}{c'},\prj{1}{P},\prj{1}{P'}} \in \inter{k_1}{\rho}, \\
    & \hspace{70pt} \tuple{\prj{2}{c},\prj{2}{c'},\prj{2}{P},\prj{2}{P'}} \in \inter{k_2}{\rho,\alpha\mapsto \tuple{\prj{1}{c},\prj{1}{c'},\prj{1}{P}}}
    \}     \\
\inter{\unitkind}{\rho} & \triangleq  \{\tuple{c,c',\tuple{},\tuple{}}\sep 
    \typeEval{}{c:\unitkind}, \typeEval{}{c':\unitkind}
    \}\\ 
\\
\interset{k}{\rho} & \triangleq \set{\tuple{c,c',Q} \sep \tuple{c,c',Q,Q} \in \inter{k}{\rho}}    \\
 \\
\inter{\alpha}{\rho} & \triangleq Q,\ \text{where}\ \rho(\alpha) = \tuple{c,c',Q}
\hspace{30pt}
\inter{\lambda \alpha:k.c}{\rho} \triangleq \lambda\tuple{d,d',Q}\in \interset{k}{\rho}.\inter{c}{\rho,\alpha\mapsto\tuple{d,d',Q}}\\
\inter{c_1c_2}{\rho} & \triangleq \inter{c_1}{\rho}\tuple{\rho_L(c_2),\rho_R(c_2),\inter{c_2}{\rho}}
\hspace{25pt}
\inter{\tuple{c_1,c_2}}{\rho} \triangleq \tuple{\inter{c_1}{\rho},\inter{c_2}{\rho}}\\
\inter{\prj{i}{c}}{\rho} & \triangleq \prj{i}{\inter{c}{\rho}}
\hspace{124pt}
\inter{\unitcon}{\rho} \triangleq \tuple{}\\
\inter{\unittype}{\rho} & \triangleq \set{\tuple{\unitval,\unitval}}\\
\inter{\intType}{\rho} & \triangleq {\set{\tuple{v,v} \sep \typeEval{}{v:\intType}}}\\
\inter{\tau_1 \rightarrow\tau_2}{\rho} & \triangleq \{
    \tuple{v_1,v_2} \sep \typeEval{}{v_1:\rho_L(\tau_1 \rightarrow\tau_2}), 
        \typeEval{}{v_2:\rho_R(\tau_1 \rightarrow\tau_2)}, \forall \tuple{v_1',v_2'} \in \inter{\tau_1}{\rho}. \tuple{v_1v_1',v_2v_2'} \in \rev{\inter{\tau_2}{\rho}}
    \} \\
\inter{\tau_1 \times \tau_2}{\rho}  & \triangleq \{
    \tuple{\tuple{v_1,v_2},\tuple{v_1',v_2'}} \sep 
        \typeEval{}{\tuple{v_1,v_2}:\rho_L(\tau_1\times\tau_2)}, \:
        \typeEval{}{\tuple{v_1',v_2'}:\rho_R(\tau_1\times\tau_2)}, \\
        & \hspace{70pt} \tuple{v_1,v_1'} \in \inter{\tau_1}{\rho}, 
        \tuple{v_2,v_2'} \in \inter{\tau_2}{\rho}        
    \}\\
\inter{\forall\alpha:k.\tau}{\rho} & \triangleq \{\tuple{v,v'}\sep
    \typeEval{}{v:\rho_L(\forall\alpha:k.\tau)}, \:
    \typeEval{}{v':\rho_R(\forall\alpha:k.\tau)},\\
    & \hspace{70pt}\forall \tuple{c,c',Q} \in \interset{k}{\rho}. \tuple{v[c],v'[c']} \in \inter{\tau}{\rho[\alpha\mapsto\tuple{c,c',Q}]}
    \}\\
\inter{\exists\alpha:k.\tau}{\rho} & \triangleq \{\tuple{v,v'}\sep
    \typeEval{}{v:\rho_L(\exists\alpha:k.\tau)},
    \typeEval{}{v':\rho_R(\exists\alpha:k.\tau)},\\
    & \hspace{70pt} \exists\tuple{c,c',Q} \in \interset{k}{\rho},
        \exists v_0,v_0',k',k'',\tau',\tau''.\\
    & \hspace{100pt} v = \pack{c,v_0}{\exists \alpha:k'.\tau'}, \:
            v' = \pack{c',v_0'}{\exists \alpha:k''.\tau''},\\   
    & \hspace{140pt} \tuple{v_0,v_0'} \in \inter{\tau}{\rho,\alpha\mapsto\tuple{c,c',Q}}               
    \}^{\mathsf{st}}
\end{align*}
\vspace{-10pt}
\caption{Logical interpretation (kinds and constructors)}
\label{fig:ml:logrel:kind-con}
\vspace{-10pt}
\end{figure*}

\paragraph{Logical interpretations for signatures.}
Logical interpretations for signatures are presented in Fig.~\ref{fig:ml:logrel:sign}.
The logical interpretation of a signature $\sigma$ of modules is a set of \tuple{V_1,V_2,Q} where the dynamic values in $V_1$ and $V_2$ are related at their types, and the static part (constructors) are related by $Q$.
\purpletext{For example, we consider the signature $\Sigma \alpha:\atksign{k}.\atcsign{\tau}$.
The logical interpretation of this signature with a $\rho$ is a set of \tuple{\tuple{V_1,V_1'},\tuple{V_2,V_2'},\tuple{P,P'}} where (1) $V_1 = \atcmod{c}$, $V_2 = \atcmod{c'}$ for some $c$ and $c'$ s.t. $\tuple{c,c',P,P} \in \inter{k}{\eta}$; and (2) $V_2 = \attmod{v}$, $V_2' = \attmod{v'}$ for some $v$ and $v'$ s.t. $\tuple{v,v'} \in \inter{\tau}{\rho}$ (from the definition of \inter{\atcsign{\tau}}{\rho} we also know that $P' = \tuple{}$)}.

\begin{figure*}
\vspace{-10pt}
\small
\begin{align*}
\inter{\unitsign}{\rho} & \triangleq \{\tuple{\unitmod,\unitmod,\tuple}\}\\
\inter{\atksign{k}}{\rho} & \triangleq \{\tuple{\atcmod{c},\atcmod{c'},Q} \sep
    \tuple{c,c',Q} \in \interset{k}{\rho}
    \}\\
\inter{\atcsign{\tau}}{\rho} & \triangleq   \{\tuple{\attmod{v},\attmod{v'}, \tuple{}} \sep 
    \typeEval{}{v:\rho_L(\tau)}, \typeEval{}{v:\rho_R(\tau)},
    \tuple{v,v'} \in \inter{\tau}{\rho}
    \}\\
\inter{\Pign \alpha:\sigma_1.\sigma_2}{\rho} & \triangleq \{\tuple{V,V',\tuple{}} \sep
    \typeEvalI{}{V: \rho_L(\Pign \alpha:\sigma_1.\sigma_2)},
    \typeEvalI{}{V': \rho_R(\Pign \alpha:\sigma_1.\sigma_2)},\\
    & \hspace{70pt} \forall \tuple{W,W',Q} \in \inter{\sigma_1}{\rho}.
        \tuple{VW,V'W'} \in \riev{\inter{\sigma_2}{\rho,\alpha\mapsto \tuple{\fstoptwo{W},\fstoptwo{W'},Q}}}
    \}\\
\inter{\Piap \alpha:\sigma_1.\sigma_2}{\rho} & \triangleq \{\tuple{V,V',\Phi} \sep
    \typeEvalI{}{V: \rho_L(\Piap \alpha:\sigma_1.\sigma_2)},
    \typeEvalI{}{V': \rho_R(\Piap \alpha:\sigma_1.\sigma_2)},\\
    & \hspace{70pt} \tuple{\fstoptwo{V},\fstoptwo{V'},\Phi} \in \interset{\fstsign{\Piap \alpha:\sigma_1.\sigma_2}}{\rho},\\
    & \hspace{70pt} \forall \tuple{W,W',Q} \in \inter{\sigma_1}{\rho}.
        \tuple{V\appapp W,V'\appapp W'} \in \rpev{\inter{\sigma_2}{\rho,\alpha\mapsto \tuple{\fstoptwo{W},\fstoptwo{W'},Q}}}
    \}\\
\inter{\Sigma \alpha:\sigma_1.\sigma_2}{\rho} & \triangleq \{
    \tuple{V,V',P} \sep
    \typeEvalI{}{V:\rho_L(\Sigma \alpha:\sigma_1.\sigma_2)},
    \typeEvalI{}{V':\rho_R(\Sigma \alpha:\sigma_1.\sigma_2)},\\
    & \hspace{70pt}\exists V_1, V_1', V_2, V_2'. V = \tuple{V_1,V_2}, 
    V' = \tuple{V_1',V_2'},
    \tuple{V_1,V_1',\prj{1}{P}} \in \inter{\sigma_1}{\rho},\\
    & \hspace{100pt}         \tuple{V_2,V_2',\prj{2}{P}} \in \inter{\sigma_2}{\rho,\alpha\mapsto\tuple{\fstoptwo{V_1},\fstoptwo{V_1'}, \prj{1}{P}}}
    \}    \\
\\[-1ex]
\rpev{\inter{\sigma}{\rho}} &\triangleq \{
    \tuple{M,M',Q} \sep \typeEvalP{}{M:\rho_L(\sigma)}, \: \typeEvalP{}{M':\rho_R(\sigma)}, \: \terminating{M} \Leftrightarrow \terminating{M'}, \\
    & \hspace{70pt} \forall V,V'. M \reduce V \implies M' \reduce V' \implies \tuple{V,V',Q} \in \inter{\sigma}{\rho}
    \}\\
\\[-1ex]
\rarb{i}{\inter{\sigma}{\rho}} & \triangleq \rst{\{\tuple{V,V'}\sep \exists Q. \tuple{V,V',Q} \in \inter{\sigma}{\rho} \}}
\end{align*}
\vspace{-10pt}
\caption{Logical interpretation (signatures)}
\label{fig:ml:logrel:sign}
\vspace{-10pt}
\end{figure*}

\begin{definition}
We say that $\tuple{\rho,\rho'} \in \interenv{\Gamma}$ \purpletext{if whenever $\Gamma(\alpha) = k$}, there exists $\rho(\alpha) = \tuple{c_1,c_2,Q}$ and $\rho'(\alpha) = \tuple{c_1',c_2',Q'}$ s.t. \typeEval{}{c_1 \equiv c_1':\rho_L(k)}, \typeEval{}{c_2\equiv c_2':\rho_R(k)}, and $\tuple{c_1,c_2,Q,Q'} \in \inter{k}{\rho}$,

We say that $\rho \in \envfull{\Gamma}$ if $\tuple{\rho,\rho} \in \interenv{\Gamma}$ and 
\begin{itemize}
\item for all $x:\tau \in \Gamma$, there exists $\rho(x) = \tuple{v_1,v_2}$ s.t. $\tuple{v_1,v_2} \in \inter{\tau}{\rho}$,
\item for all $\alpha/m:\sigma \in \Gamma$, there exists $\rho(m) = \tuple{V_1,V_2}$ and $\rho(\alpha) = \tuple{\fstoptwo{V_1},\fstoptwo{V_2},Q}$ s.t. $\tuple{V_1,V_2,Q} \in \inter{\sigma}{\rho}$.
\end{itemize}
\end{definition}

Terms $e$ and $e'$ are {\em logically equivalent} at $\tau$ in $\Gamma$ (written as \typeEval{\Gamma}{e \logeq e':\tau}) if \typeEval{}{\Gamma\ \ok} implies \typeEval{\Gamma}{e, e':\tau}, and for all $\rho \in \envfull{\Gamma}$, $\tuple{\rho_L(e),\rho_R(e')} \in \rev{\inter{\tau}{\rho}}$.
{Notice that equivalence holds vacuously, if $\Gamma$ is not well formed, but we are never interested in such $\Gamma$.}

\begin{theorem}[Abstraction theorem]
\label{thm:ml:abstraction}
Suppose that \typeEval{}{\Gamma \ok}.
If \typeEval{\Gamma}{e:\tau}, then \typeEval{\Gamma}{e \logeq e:\tau}.
\end{theorem}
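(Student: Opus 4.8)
The plan is to strengthen the statement to a \emph{fundamental theorem} that treats every judgment form of the static semantics simultaneously, and prove it by one induction on derivations. The term component is exactly what we want: if \typeEval{\Gamma}{e:\tau} and $\rho \in \envfull{\Gamma}$, then $\tuple{\rho_L(e),\rho_R(e)} \in \rev{\inter{\tau}{\rho}}$, which by definition is logical equivalence \typeEval{\Gamma}{e \logeq e:\tau}. The companion clauses for constructors and kinds are precisely Lemma~\ref{lem:ml:crary:interpretation}, and we add clauses for modules, namely $\tuple{\rho_L(M),\rho_R(M)} \in \rpev{\inter{\sigma}{\rho}}$ when \typeEvalP{\Gamma}{M:\sigma} and $\tuple{\rho_L(M),\rho_R(M)} \in \riev{\inter{\sigma}{\rho}}$ when \typeEvalI{\Gamma}{M:\sigma}. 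Because terms and modules are mutually recursive (terms embed modules through \letexp{\alpha/m=M}{e} and \extract{M}, while modules embed terms), all of these clauses must be carried together in a single induction; maintaining $\rho \in \envfull{\Gamma}$ under the context extensions performed by binding rules relies on the Weakening lemma (Lemma~\ref{lem:ml:weakening}) and on Lemma~\ref{lem:ml:crary:interpretation}.

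For most rules the argument is compositional. The variable rule \WttV\ reads off relatedness directly from $\rho \in \envfull{\Gamma}$; the constant rules \WttInt\ and \WttUnit\ hold because $n$ and \unitval\ are related to themselves by the definitions of \inter{\intType}{\rho} and \inter{\unittype}{\rho}, and a value lies in its own \rev{\inter{\tau}{\rho}} closure. The abstraction rule \WttAbs\ unfolds \inter{\tau_1 \rightarrow \tau_2}{\rho} and applies the IH to $\rho$ extended with an arbitrary related pair of arguments. The elimination rules that sequence evaluation---\WttApp, projection, \WttLtOne, \WttLtTwo, \WttExt, and \WttUp---are all handled by the monadic lemma (Lemma~\ref{lem:ml:monadic}): each subterm is active in the surrounding evaluation context, so relatedness of the parts in the biorthogonal closure lifts to relatedness of the whole. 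The rule \WttPk\ is the first place real abstraction appears: to show two packages are related at $\exists \alpha:k.\tau$ we must \emph{choose} a candidate $Q$ relating the two representation constructors and then verify the bodies are related in $\rho$ extended by $\tuple{c,c',Q}$; the extra obligation is checking that the chosen $Q$ is a legitimate (Pitts closed) candidate of the right kind. The module rules mirror the term rules, with sealing \WfmSeal\ the module-level analogue of \WttPk: since $M \seal \sigma$ steps to $M$, we again supply a representation relation for each opaque component of $\sigma$, which is exactly where data abstraction is enforced.

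The two genuinely hard steps are recursion and sealing. For \WttRc\ we must show that $\fix{\tau}{e}$ is self-related despite possible nontermination; this is why relations at \basekind\ are required to be \emph{Pitts closed} ($R = \rarb{st}{R}$). The biorthogonal definition of the closure (continuations that agree on termination) makes the value relations admissible, so relatedness of the finite unfoldings passes to the limit and the recursion case closes; the thunking $\lambda \_:\unittype.\fix{\tau}{e}$ in rule tstep\_fix\ is what lets us stage these approximations. For the sealing and packing cases the remaining obstacle is constructing the witness candidate and proving it Pitts closed and of the appropriate kind, using \precan{\simple{k}} at higher kinds; everything else is bookkeeping. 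I would follow the overall structure of Crary's proof~\cite{Crary-POPL-17}, so the main effort is organizing the mutual induction across the dozen judgment forms and discharging the admissibility conditions, rather than inventing new machinery.
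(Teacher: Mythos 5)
The paper does not actually prove this theorem: it is imported directly from Crary~\cite{Crary-POPL-17}, the calculus here differing only by the addition of \intType\ and its literals, so the only new obligations are the trivial checks that $\inter{\intType}{\rho}$ is Pitts closed and that integer literals are self-related. Your outline is a faithful reconstruction of the strategy of Crary's proof---one mutual induction over all judgment forms, biorthogonal (Pitts) closure discharging admissibility for general recursion, and the constructor-interpretation clauses supplying the (canonical, not freely chosen) candidates for existential packages and sealing---which is exactly the proof the paper relies on, so there is no divergence to report.
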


{In \cite{Crary-POPL-17}, there are similar results for pure modules and impure modules.  Later we express security in terms of sealed modules, but our security proof only relies on the abstraction theorem for expressions.}

\section{TRNI for the Module Calculus}
\label{sec:ml:trni}

This section recapitulates the development of \S\ref{sec:trni} but using an encoding suited to the module calculus.
The free theorem that typing implies security (Theorem~\ref{thm:ml:free-thm:opaque}) is 
formulated for an open term in context of the public view, as in Theorem~\ref{thm:trni}.
We then develop a ``wrapper'' to encapsulate the typing problem in a closed form.
That could facilitate use of an unmodified ML compiler without recourse to an API for the typechecker.

\subsection{Declassification policy encoding}
\label{sec:ml-trni:encoding}
In this section, we present the encoding for declassification policies by using the module calculus.
Here, a declassification function can be written in the module calculus with recursive functions.
However, for simplicity and for coherent policy, we assume---as in \S\ref{sec:local-policies}---that the applications of declassifiers on confidential input values always terminate.
In \S\ref{sec:trni}, a view is a typing context that declares variables for inputs and for declassifiers.
Here, those are gathered in a signature and the view is a context that declares a module of that signature.

Let $L \subseteq \varPolicy{\policy}$ be a finite list of distinct confidential input variables from \varPolicy{\policy}. 
An empty list is \emptylist. We write $x\concat L$ to concatenate a confidential input variable to $L$.
{In \S\ref{sec:trni} we define operations \enccon{-} and \encpub{-} that 
apply to policy variables and declasifiers, yielding the encoding of policy as typing contexts.  Here we use the same notation, but apply the operations to variable lists
and encode policy as signatures.
First, we define \enccon{L} to return a transparent signature of the policy.}
It is defined inductively as described in Fig.~\ref{fig:ml:signatures}.
As in \S\ref{sec:trni}, we use fresh constructor variables with names that indicate their role in the encoding.
{For a confidential input $x$, basically, the signature is a pair containing information about the kind of its type, its type, and the types of associated declassifiers.
For example, for $x$ that can be declassified via $f$, the signature contains:  (1)  the kind of the type of $x$: \atksign{S(\intType)}, (2) the type of $x$: \atcsign{\intType}, and (3) the type of $f$: \atcsign{\intType \rightarrow \tau_f}.

\begin{figure*}[!t]
\begin{equation*}
\small
\enccon{L} \triangleq  \begin{cases}
    \unitsign & \text{if $L = \emptylist$,}\\
    \Sigma \alpha_x:\atksign{S(\intType)}.\Sigma \alpha:\atcsign{\intType}.\enccon{L'} & \text{if $L = x\concat L'$, $x \not\in \dom{\decPolicy{\policy}}$,}\\
    \Sigma \alpha_f:\atksign{S(\intType)}.\Sigma \alpha_1:\atcsign{\intType}.\Sigma \alpha_2:\atcsign{\intType \rightarrow \tau_f}.\enccon{L'} & \text{if $L=x\concat L'$, $\decPolicy{\policy}(x) = f$,}
\end{cases}
\end{equation*}
\begin{equation*}
\small
\encpub{L}  \triangleq  \begin{cases}
    \unitsign & \text{if $L = \emptylist$,}\\
    \Sigma \alpha_x:\atksign{\basekind}.\Sigma \alpha:\atcsign{\alpha_x}.\encpub{L'}  & \text{if $L = x\concat L'$, $x \not\in \dom{\decPolicy{\policy}}$,}\\
    \Sigma \alpha_f:\atksign{\basekind}.\Sigma \alpha_1:\atcsign{\alpha_f}.\Sigma \alpha_2:\atcsign{\alpha_f \rightarrow \tau_f}.\encpub{L'} \quad\quad & \text{if $L=x\concat L'$, $\decPolicy{\policy}(x) = f$,}
\end{cases}
\end{equation*}
\caption{Transparent and opaque signatures for a policy \policy.}
\label{fig:ml:signatures}
\end{figure*}

\begin{example}[Transparent signature]
\label{ex:ml:trni:con-view}
For \policyoe\ (Example~\ref{ex:policy:odd-even}), since $f = \lambda x:\intType.x \modop  2$ is of the type $\intType \rightarrow \intType$, by applying the third case in \enccon{-} with $\tau_f = \intType$, we get the signature 
$$\enccon{x} = \Sigma \alpha_f:\atksign{S(\intType)}.\Sigma \alpha_1:\atcsign{\intType}.\Sigma \alpha_2:\atcsign{\intType \rightarrow \intType}.\unitsign$$

that can be abbreviated as $\tuple{\atksign{S(\intType)}, \tuple{\atcsign{\intType}, \tuple{\atcsign{\intType \rightarrow \intType}, \unitsign}}}$.


In ML it looks like
$$\code{sig  type t=int  val x:int  val f:int->int  end}.$$



\end{example}

\purpletext{We overload \encpub{L} to get an opaque signature of the policy.
The idea is similar to \enccon{L}, except that here we use constructor variables of the \basekind\ kind for types of confidential inputs and in types of declassification functions.
The definition is described in Fig.~\ref{fig:ml:signatures}}.


\begin{example}[Opaque signature]
\label{ex:ml:trni:pub-view}
For \policyoe\ (Example~\ref{ex:policy:odd-even}), since $f = \lambda x:\intType.x \modop  2$ is of the type $\intType \rightarrow \intType$, by applying the third case in \encpub{-} with $\tau_f = \intType$, we get the signature 
$$\encpub{x} = \Sigma \alpha_f:\atksign{\basekind}.\Sigma \alpha_1:\atcsign{\alpha_f}.\Sigma \alpha_2:\atcsign{\alpha_f \rightarrow \intType}.\unitsign
$$
that can be abbreviated as $\Sigma \alpha_f:\atksign{\basekind}.\tuple{\atcsign{\alpha_f}, \tuple{\atcsign{\alpha_f \rightarrow \intType}, \unitsign}}$.
In ML it looks like \code{sig \ type t \  val x:t \  val f: t->int \  end}.


\end{example}

\purpletext{Hereafter, we abuse \varPolicy{\policy} and use it as a list and we write \sigmaPolCon{\policy} and \sigmaPol{\policy} to mean respectively \enccon{\varPolicy{\policy}} and \encpub{\varPolicy{\policy}}}.


{In order to define TRNI, we define the confidential view and the public view as in \S\ref{sec:trni}.
The confidential view is based on the constructed transparent signature \sigmaPolCon{\policy}, 
and the public view is based on the constructed opaque signature \sigmaPol{\policy}.
}
\begin{equation}\label{eq:defMLviews} 
\conView{\policy}  \triangleq \alpha_\policy/m_\policy:\sigmaPolCon{\policy}
\qquad
\pubViewTerm{\policy}  \triangleq \alpha_\policy/m_\policy:\sigmaPol{\policy}
\end{equation}
To express what in Example~\ref{ex:trni:con-view} and Example~\ref{ex:trni:pub-view} 
(for \policyoe) is written $x_f\, x$,
in the module calculus $x$ is accessed as {\extracttwo{\prjtwo{1}{\prj{2}{m_\policyoe}}}} and $x_f$ is accessed as \extracttwo{\prjtwo{1}{\prjtwo{2}{\prj{2}{m_\policyoe}}}}.

From the definitions, we have that \sigmaPolCon{\policy} and \sigmaPol{\policy} are closed and well-formed signatures,
and \sigmaPolCon{\policy} is a subsignature of \sigmaPol{\policy}.

\subsection{TRNI}
\label{sec:ml-trni:trni}
In order to define an environment $\rho$ for the policy, we define relations $R_x$ and $R_f$
similar to the relations \indval{\alpha_x} and \indval{\alpha_f} in \S\ref{sec:trni}.
\begin{align*}
R_x & = \{\tuple{v_1,v_2}| \typeEval{}{v_1:\intType},\ \typeEval{}{v_2:\intType}\}\\
R_f & = \{\tuple{v_1,v_2}| \typeEval{}{v_1:\intType}, \typeEval{}{v_2:\intType}, \purpletext{\rev{\tuple{f\ v_1, f\ v_2} \in \inter{\tau_f}{\emptyset}}} \}
\end{align*}

Notice that $\emptyset$ in \rev{\inter{\tau}{\emptyset}} is the empty environment.


\purpletext{Given a list $L$ of confidential inputs from \policy\ and an environment $\rho$, we say that $\condenv{\rho}{L}{\policy}$ when }
\begin{itemize}
\item \purpletext{$\rho$ maps $m_\policy$ in \pubViewTerm{\policy} to related module values $V_1$ and $V_2$ for some $V_1$ and $V_2$ s.t. $V_1$ and $V_2$ are of the transparent signature \enccon{\varPolicy{\policy}} and functions in $V_1$ and $V_2$ are declassification functions from the policy, and} 
{the confidential values in $V_1$ and $V_2$ are related by $R_x$, $R_f$, or $R_{f\circ a}$ according to the policy, and }

\item \purpletext{$\rho$ maps $\alpha_\policy$ to a tuple \tuple{c_1,c_2,Q} where $c_1$ and $c_2$ are static parts from respectively $V_1$ and $V_2$, and $Q$ depends on the policy. 
That is if an element in $Q$ is corresponding to an $\alpha_x:\atksign{\basekind}$, then this element is $R_x$, if an element in $Q$ is corresponding to an $\alpha_f:\atksign{\basekind}$, then this element is $R_f$, and if an element in $Q$ is corresponding to an $\alpha_{f\circ a}:\atksign{\basekind}$, then this element is $R_{f\circ a}$}.
\end{itemize}


The definition of $\condenv{\rho}{L}{\policy}$ is as below.
Hereafter, we write $\airforcetext{\rho \respectfull \policy}$ when $\condenv{\rho}{\varPolicy{\policy}}{\policy}$.

\begin{definition}[full environments for $\policy$]\label{def:policyenv}
Given {$L \subseteq \varPolicy{\policy}$}, we define the set \predenv{L}{\policy} of environments by 
$\condenv{\rho}{L}{\policy}$ iff $\dom{\rho} = \set{\alpha_\policy,m_\policy} $ and
\begin{itemize}
\item if $L = \emptylist$ then
$\rho(m_\policy)  = \tuple{\unitmod,\unitmod}$ and 
$\rho(\alpha_\policy)  = \tuple{\unitcon,\unitcon, \tuple{}}$,

\item if $L = x\concat L'$ and $x\not\in \dom{\decPolicy{\policy}}$ then
there are $\tuple{v_1,v_2} \in R_x$ and \condenv{\rho'}{L'}{\policy} with 
\begin{align*}
\rho(m_\policy) & = \tuple{\tuple{\atcmod{\intType},\tuple{\attmod{v_1},V_1'}},\tuple{\atcmod{\intType},\tuple{\attmod{v_2},V_2'}}},\\
\rho(\alpha_\policy) &= \tuple{\tuple{\intType, \tuple{\unitcon,c_1'}},\tuple{\intType, \tuple{\unitcon,c_2'}}, \tuple{R_x, \tuple{\tuple{},Q'}}},
\end{align*}
where $\rho'(m_\policy) = \tuple{V_1',V_2'}$ and $\rho'(\alpha_\policy) = \tuple{c_1',c_2',Q'}$,

\item if $L = x\concat L'$ and $\decPolicy{\policy}(x) = f$ then there are $\tuple{v_1,v_2} \in R_f$ and \condenv{\rho'}{L'}{\policy} with 
\begin{align*}
\rho(m_\policy) & = \langle\tuple{\atcmod{\intType},\tuple{\attmod{v_1},\tuple{\attmod{f},V_1'}}}, \\
	& \hspace{90pt} \tuple{\atcmod{\intType},\tuple{\attmod{v_2},\tuple{\attmod{f},V_2'}}} \rangle,\\
\rho(\alpha_\policy) & = \langle \tuple{\intType,\tuple{\unitcon,\tuple{\unitcon, c_1'}}}, \tuple{\intType,\tuple{\unitcon,\tuple{\unitcon, c_2'}}}, \\
	& \hspace{90pt} \tuple{R_f, \tuple{\tuple{}, \tuple{\tuple{}, Q'}}} \rangle,
\end{align*}

where $\rho'(m_\policy) = \tuple{V_1',V_2'}$ and $\rho'(\alpha_\policy) = \tuple{c_1',c_2',Q'}$.

%

\end{itemize}
\end{definition}

\begin{example}[\condenvalt{\rho}{\policyoe}]
\label{ex:ml:trni:env:oe}
\purpletext{In this example, we present a full environment for \policyoe.}
We first define $R_f$, where $f = \lambda x:\intType.x \modop 2$.
$$R_f = \{\tuple{v_1,v_2}| \typeEval{}{v_1:\intType}, \typeEval{}{v_2:\intType}, (v_1 \modop 2) =_\intType (v_2\modop 2)\}.$$

Following the definition of \condenvalt{\rho}{\policyoe}, we construct $\rho$ as below. Notice that $\tuple{2,4} \in R_f$.
\begin{align*}
\rho(m_\policyoe) & = \langle \tuple{\atcmod{\intType},\tuple{\attmod{2},\tuple{\attmod{\lambda x:\intType.x \modop 2},\unitmod}}}, \\
& \hspace{50pt} \tuple{\atcmod{\intType},\tuple{\attmod{4},\tuple{\attmod{\lambda x:\intType.x \modop 2},\unitmod}}} \rangle,\\
\rho(\alpha_\policyoe) & = \langle \tuple{\intType,\tuple{\unitcon,\tuple{\unitcon,\unitcon}}}, \\
	& \hspace{60pt} \tuple{\intType,\tuple{\unitcon,\tuple{\unitcon, \unitcon}}},\tuple{R_f, \tuple{\tuple{}, \tuple{\tuple{}, \tuple{}}}} \rangle.
\end{align*}

It follows that \condenvalt{\rho}{\policyoe}.
\end{example}

{Next we prove that if $\tau$ is a type in the public view \pubViewTerm{\policy}, then all its logical interpretations are the same.}
\begin{lemma}
\label{lem:ml:interpretation-type:policy}
\tealtext{If \condenvalt{\rho_1}{\policy}, \condenvalt{\rho_2}{\policy}, and \typeEval{\pubViewTerm{\policy}}{\tau:\basekind}, then $\inter{\tau}{\rho_1} = \inter{\tau}{\rho_2}$}.
\end{lemma}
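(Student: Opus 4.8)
The plan is to reduce the claim to two observations: that the logical interpretation of a \emph{type} (a constructor of base kind) reads the environment $\rho$ only through its action on constructor variables, and that the two environments $\rho_1$ and $\rho_2$, while possibly carrying different dynamic data, must agree on exactly that constructor-variable data. Concretely, the public view $\pubViewTerm{\policy}$ from~(\ref{eq:defMLviews}) declares the single twinned variable $\alpha_{\policy}/m_{\policy}$, so the only constructor variable that can occur in a $\tau$ with $\typeEval{\pubViewTerm{\policy}}{\tau:\basekind}$ is $\alpha_{\policy}$. Hence it suffices to establish (i) that $\inter{\tau}{\rho}$ depends on $\rho$ only via $\rho(\alpha_{\policy})$, and (ii) that $\rho_1(\alpha_{\policy}) = \rho_2(\alpha_{\policy})$.

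For (ii) I would argue by induction on the list $\varPolicy{\policy}$, strengthening the statement to: for every prefix $L$ and all $\rho,\rho'$ with $\condenv{\rho}{L}{\policy}$ and $\condenv{\rho'}{L}{\policy}$ one has $\rho(\alpha_{\policy}) = \rho'(\alpha_{\policy})$. Inspecting Definition~\ref{def:policyenv}, the constructor components of $\rho(\alpha_{\policy})$ are built solely from $\intType$ and $\unitcon$, and its relational component solely from the policy-fixed relations $R_x$, $R_f$ (together with empty tuples $\tuple{}$); the confidential values drawn from $R_x$ or $R_f$ enter only the dynamic part $\rho(m_{\policy})$, never $\rho(\alpha_{\policy})$. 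Since the recursive tail of $\rho(\alpha_{\policy})$ is precisely $\rho'(\alpha_{\policy})$ for the sub-environment with $\condenv{\rho'}{L'}{\policy}$, the induction hypothesis closes each of the three cases. This step is essentially bookkeeping.

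For (i) I would prove, by simultaneous structural induction over the definitions of $\inter{k}{\rho}$, $\inter{c}{\rho}$, and $\inter{\tau}{\rho}$ in Fig.~\ref{fig:ml:logrel:kind-con}, that none of these interpretations ever consults $\rho$ on a term or module variable. Every clause references $\rho$ only through $\rho(\alpha)$ for a constructor variable $\alpha$, or through the substitutions $\rho_L,\rho_R$ applied to constructors; and since constructors contain no term or module variables, $\rho_L$ and $\rho_R$ applied to a constructor depend only on the constructor-variable components of $\rho$. The clauses for $\forall \alpha{:}k.\tau$ and $\exists \alpha{:}k.\tau$ extend $\rho$ with a fresh constructor binding and quantify over $\interset{k}{\rho}$, which is where the mutual recursion with kinds appears; the induction hypothesis dispatches these because both the extension and the quantification range depend again only on constructor data. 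Consequently $\inter{\tau}{\rho}$ is a function of the restriction of $\rho$ to constructor variables, i.e. of $\rho(\alpha_{\policy})$ alone in our setting.

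Combining (i) and (ii) yields $\inter{\tau}{\rho_1} = \inter{\tau}{\rho_2}$ immediately. The main obstacle is making (i) watertight: the interpretations of kinds, constructors, and types are defined by mutual recursion (through the singleton kind $S(c)$ and the quantifier cases), so the invariant ``the interpretation ignores the value part of $\rho$'' must be carried simultaneously across all three families rather than proved for types in isolation. Everything else is routine inspection of the definitions, and (ii) is a direct unfolding of Definition~\ref{def:policyenv}.
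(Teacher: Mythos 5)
Your proof is correct, and it shares the paper's first step -- establishing $\rho_1(\alpha_{\policy}) = \rho_2(\alpha_{\policy})$ by unfolding Definition~\ref{def:policyenv} -- but it diverges on the key step. The paper does not prove your observation (i) at all. Instead it shows $\tuple{\rho_1,\rho_2} \in \interenv{\alpha_{\policy}/m_{\policy}:\sigmaPol{\policy}}$ (using Lemma~\ref{lem:ml:consenv:property:one} to discharge the side conditions $\tuple{c_1,c_2,Q,Q} \in \inter{\fstsign{\sigmaPol{\policy}}}{\rho_1}$, etc.), and then invokes the imported coherence result Lemma~\ref{lem:ml:crary:interpretation}: for $\typeEval{\Gamma}{c:\basekind}$ and related environments, $\tuple{\_,\_,\inter{\tau}{\rho_1},\inter{\tau}{\rho_2}} \in \inter{\basekind}{\rho_1}$, whose definition forces the two relation components to coincide. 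Your route replaces that appeal with a direct simultaneous induction showing that $\inter{k}{\rho}$, $\inter{c}{\rho}$, and $\inter{\tau}{\rho}$ read $\rho$ only at constructor variables; since both environments have domain $\set{\alpha_{\policy},m_{\policy}}$ and agree at $\alpha_{\policy}$, the interpretations are literally identical. Your argument is more elementary and self-contained -- it needs neither $\interenv{\cdot}$ nor Lemma~\ref{lem:ml:consenv:property:one} -- but it proves a strictly narrower fact (syntactic agreement on constructor components) and requires carrying the value-independence invariant through the mutual recursion yourself, whereas the paper's appeal to Crary's lemma handles the more general situation where the constructor components are only definitionally equivalent, and reuses machinery the development already depends on elsewhere. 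Both arguments are sound; yours trades generality and reuse for a shorter chain of dependencies.
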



Therefore, we define indistinguishability based on an arbitrary \condenvalt{\rho}{\policy}.

\begin{definition}[Indistinguishability]
\label{def:ml:indis:signature}
Suppose $\rho$ and $\tau$ satisfy  $\condenvalt{\rho}{\policy}$ and \typeEval{\pubViewTerm{\policy}}{\tau:\basekind}.
\begin{itemize}
\item Values $v_1$ and $v_2$ are indistinguishable at $\tau$ (written as $\tuple{v_1,v_2} \in \indval{\tau}$) if $\tuple{v_1,v_2} \in \inter{\tau}{\rho}$.
\item Terms $e_1$ and $e_2$ are indistinguishable at $\tau$ (written as $\tuple{e_1,e_2} \in \indterm{\tau}$) if $\tuple{e_1,e_2} \in \rev{\inter{\tau}{\rho}}$.
\end{itemize}
\vspace{-5pt}
\end{definition}
\begin{example}[Indistinguishability]
\label{ex:ml:trni:ind:oe}
We consider \policyoe\ (Example~\ref{ex:policy:odd-even}). 
As described in Example~\ref{ex:ml:trni:pub-view}, the opaque signature of the policy is $\sigmaPol{\policyoe} = \Sigma \alpha_f:\atksign{\basekind}.\Sigma \alpha_1:\atcsign{\alpha_f}.\Sigma \alpha_2:\atcsign{\alpha_f \rightarrow\intType}.\unitsign$.
Thus, the public view \pubViewTerm{\policyoe} is $\alpha_\policyoe/m_\policyoe:\sigmaPol{\policyoe}$.
Notice that since $\alpha_\policyoe$ and $m_\policyoe$ are twinned, it follows that $\alpha_\policyoe$ is of the kind $\fstsign{\sigmaPol{\policy}} = \Sigma \alpha_f:\basekind.\Sigma \alpha_1:\unitkind.\Sigma \alpha_2:\unitkind.\unitkind$.

We consider the type \prj{1}{\alpha_\policyoe}. 
By a rule for well-formed constructors (rule \WfcPrj1), we have that \typeEval{\pubViewTerm{\policyoe}}{\prj{1}{\alpha_\policyoe}:\basekind}.
Thus, we can define indistinguishability for this type.
As presented in Example~\ref{ex:ml:trni:env:oe}, \condenvalt{\rho}{\policyoe}.
Therefore, we have that $\indval{\prj{1}{\alpha_\policyoe}} = \inter{\prj{1}{\alpha_\policyoe}}{\rho} = R_f$ (the definition of $R_f$ is in Example~\ref{ex:ml:trni:env:oe}).
\end{example}


Next, we define TRNI for the module calculus. 
{The definition here is similar to the one in \S\ref{sec:trni}}.

\begin{definition}[TRNI for the module calculus]
\label{def:ml:trni:signature}
\purpletext{A term $e$ is \TRNI{\policy,\tau} if \typeEval{\conView{\policy}}{e}, and \typeEval{\pubViewTerm{\policy}}{\tau:\basekind}, and for all $\condenvalt{\rho}{\policy}$, it follows that $\tuple{\rho_L(e), \rho_R(e)} \in \indterm{\tau}$}.
\end{definition}

\begin{example}
\trimming{We consider the program $e = (\extracttwo{\prjtwo{1}{\prjtwo{2}{\prj{2}{m_\policyoe}}}})\ (\extracttwo{\prjtwo{1}{\prj{2}{m_\policyoe}}})$, which is corresponding to the program $x_f\ x$ in Example~\ref{ex:trni:program:trni},
as noted following Eqn.~(\ref{eq:defMLviews}).
}
\trimming{We now check $e$ with the definition of TRNI.
We consider an arbitrary \condenvalt{\rho}{\policyoe}.
As described in Example~\ref{ex:ml:trni:env:oe}, $\rho$ is as below, where $\tuple{v_1,v_2} \in R_f$.}
\begin{align*}
\rho(m_\policyoe) & = \langle \tuple{\atcmod{\intType},\tuple{\attmod{v_1},\tuple{\attmod{\lambda x:\intType.x \modop 2},\unitmod}}},\\
		& \hspace{40pt}\tuple{\atcmod{\intType},\tuple{\attmod{v_2},\tuple{\attmod{\lambda x:\intType.x \modop 2},\unitmod}}} \rangle ,\\
\rho(\alpha_\policyoe) & = \tuple{\tuple{\intType,\tuple{\unitcon,\tuple{\unitcon,\unitcon}}}, \tuple{\intType,\tuple{\unitcon,\tuple{\unitcon, \unitcon}}},\tuple{R_f, \tuple{\tuple{}, \tuple{\tuple{}, \tuple{}}}}}.
\end{align*}

\trimming{We have that $\rho_L(e) = f\ v_1 = v_1 \modop 2$ and $\rho_R(e) = f\ v_2 = v_2 \modop 2$.
Since $\tuple{v_1,v_2} \in R_f$, we have that $(v_1 \modop 2) =_\intType (v_2 \modop 2)$. Thus, $\tuple{\rho_L(e),\rho_R(e)} \in \rev{\inter{\intType}{\rho}}$. 
In other words, $\tuple{\rho_L(e),\rho_R(e)} \in \indterm{\intType}$.
Therefore, $e$ is \TRNI{\policyoe,\intType}.}
\end{example}

\subsection{Free theorem: typing in the public view implies security}
To apply the abstraction theorem to get the free theorem, we need 
the following.

\begin{lemma}
\label{lem:ml:consenv:property:two}
\tealtext{Suppose that $\condenvalt{\rho}{\policy}$.
It follows that $\rho \in \interenvfull{\pubViewTerm{\policy}}$.}
\end{lemma}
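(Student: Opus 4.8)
The plan is to reduce the goal to a single inductive statement about the signature encoding $\encpub{\varPolicy{\policy}}=\sigmaPol{\policy}$. Unfolding $\rho\in\interenvfull{\pubViewTerm{\policy}}$ for $\pubViewTerm{\policy}=\alpha_\policy/m_\policy:\sigmaPol{\policy}$ (Eqn.~(\ref{eq:defMLviews})), and recalling that membership in $\interenvfull{\Gamma}$ demands both $\tuple{\rho,\rho}\in\interenv{\Gamma}$ and, for the module hypothesis, $\rho(m_\policy)=\tuple{V_1,V_2}$, $\rho(\alpha_\policy)=\tuple{\fstoptwo{V_1},\fstoptwo{V_2},Q}$ with $\tuple{V_1,V_2,Q}\in\inter{\sigmaPol{\policy}}{\rho}$, I see that everything follows once I establish, writing $\rho(m_\policy)=\tuple{V_1,V_2}$ and $\rho(\alpha_\policy)=\tuple{c_1,c_2,Q}$, the three facts $c_i=\fstoptwo{V_i}$, $\tuple{V_1,V_2,Q}\in\inter{\sigmaPol{\policy}}{\rho}$, and $\tuple{c_1,c_2,Q,Q}\in\inter{\fstsign{\sigmaPol{\policy}}}{\rho}$. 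The last of these yields the requirement $\tuple{\rho,\rho}\in\interenv{\pubViewTerm{\policy}}$, since the only constructor variable is $\alpha_\policy$ and the demanded equivalences $c_i\equiv c_i$ are reflexive.

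First I would strengthen this into an invariant over prefixes and prove it by induction on the list $L$, following the shape of Definition~\ref{def:policyenv}: for every $L$ with $\condenv{\rho}{L}{\policy}$, the analogous three facts hold with $\encpub{L}$ in place of $\sigmaPol{\policy}$. The base case $L=\emptylist$ is immediate from $\inter{\unitsign}{\rho}=\{\tuple{\unitmod,\unitmod,\tuple{}}\}$ and $\fstop{\unitmod}{\unitcon}$. In the step cases I unfold $\inter{\Sigma\alpha:\sigma_1.\sigma_2}{\rho}$ and match each projection of the nested tuples supplied by Definition~\ref{def:policyenv} against the corresponding component of the signature, feeding the tail $\encpub{L'}$ to the induction hypothesis. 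The equalities $c_i=\fstoptwo{V_i}$ come directly from the rules in Fig.~\ref{fig:ml:static-portion:constructor} (namely $\fstop{\atcmod{c}}{c}$, $\fstop{\attmod{e}}{\unitcon}$, $\fstop{\tuple{M_1,M_2}}{\tuple{c_1,c_2}}$) together with the hypothesis, and the $\fstsign{\encpub{L}}$ clause is checked by the same projection bookkeeping.

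Two ingredients carry the real content. For the atomic-kind components I must know that $R_x$ and $R_f$ are genuine candidates, i.e.\ Pitts-closed relations in $\relDel{\intType,\intType}$: for $R_x$ this is trivial because it is the full relation and hence already maximal, so $\rarb{st}{R_x}=R_x$; for $R_f$ it requires a short argument that the preimage of the Pitts-closed relation $\inter{\tau_f}{\emptyset}$ under application of $f$ is again Pitts closed, using that $\tau_f$ is closed so that its interpretation is environment-independent. For the declassifier component the goal $\tuple{\attmod{f},\attmod{f},\tuple{}}\in\inter{\atcsign{\alpha_f\rightarrow\tau_f}}{\rho,\alpha_f\mapsto\tuple{\intType,\intType,R_f}}$ unfolds, via the function interpretation and $\inter{\alpha_f}{\rho,\alpha_f\mapsto\tuple{\intType,\intType,R_f}}=R_f$, to exactly the defining clause of $R_f$ — that $f$ sends $R_f$-related arguments to $\rev{\inter{\tau_f}{\emptyset}}$-related results — so it holds by construction.

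The step I expect to be the main obstacle is the threading of environments through the $\Sigma$-interpretation. When I apply the induction hypothesis to the tail $\encpub{L'}$, the ambient environment has been extended by the head bindings ($\alpha_f\mapsto\tuple{\intType,\intType,R_f}$, etc.), and moreover the inner witness $\rho'$ and the extended environment disagree on $\alpha_\policy,m_\policy$. I resolve this by observing that $\encpub{L'}$ is a closed signature (its constructor variables are all bound within it, and $\tau_f$ is closed), so $\inter{\encpub{L'}}{\eta}$ is independent of $\eta$ and the induction hypothesis transports to the extended environment unchanged. Getting this freshness/closedness bookkeeping exactly right, and confirming that the candidate components $\prj{i}{Q}$ produced by Definition~\ref{def:policyenv} are precisely the ones demanded by the $\Sigma$-interpretation at each level, is where the care is needed; the remaining calculations are routine.
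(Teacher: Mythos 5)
Your proposal is correct and follows essentially the same route as the paper: the same strengthening to an invariant over prefixes $L$ of $\varPolicy{\policy}$ proved by induction following Definition~\ref{def:policyenv}, the same reliance on Pitts-closedness of $R_x$ and $R_f$ (the paper's Lemma~\ref{lem:ml:pitts-closure}), the same use of closedness of $\encpub{L'}$ (via Lemma~\ref{lem:ml:crary:interpretation}) to transport the induction hypothesis across the extended environment, and the same unfolding of the $\Sigma$- and declassifier-components. The only organizational difference is that the paper factors the constructor-level facts ($\tuple{c_1,c_2,Q,Q}\in\inter{\fstsign{\encpub{L}}}{\rho}$ and the typing of the $c_i$) into a separate preliminary induction (Lemma~\ref{lem:ml:consenv:property:one}) rather than carrying them in the same invariant, which is immaterial.
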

%
%
%

\begin{lemma}
\label{lem:ml_trni:typability_impl}
\tealtext{If \typeEval{\pubViewTerm{\policy}}{e:\tau}, then \typeEval{\conView{\policy}}{e}.}
\end{lemma}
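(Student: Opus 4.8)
The plan is to reduce the statement to a \emph{narrowing} property for the module calculus and to exploit the fact---already recorded after Eqn.~(\ref{eq:defMLviews})---that \sigmaPolCon{\policy} is a subsignature of \sigmaPol{\policy}, i.e.\ \typeEval{}{\sigmaPolCon{\policy} \leq \sigmaPol{\policy}:\sign}. The two views $\conView{\policy} = \alpha_\policy/m_\policy:\sigmaPolCon{\policy}$ and $\pubViewTerm{\policy} = \alpha_\policy/m_\policy:\sigmaPol{\policy}$ differ only in the signature attached to the single twinned hypothesis $\alpha_\policy/m_\policy$, and the confidential view carries the \emph{smaller} (transparent) signature. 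Since the conclusion \typeEval{\conView{\policy}}{e} only asserts typability (existence of some type), it suffices to show that the very type $\tau$ derived in the public view is still derivable in the confidential view; narrowing a hypothesis to a subsignature never removes derivations.

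Concretely, I would establish the following narrowing lemma, by simultaneous induction over the derivations of all the static judgment forms of Fig.~\ref{fig:ml:judgment-forms}: if \typeEval{}{\sigma' \leq \sigma:\sign} and \typeEval{}{\Gamma,\alpha/m:\sigma',\Gamma_0\ \ok}, then every judgment derivable under $\Gamma,\alpha/m:\sigma,\Gamma_0$ is derivable under $\Gamma,\alpha/m:\sigma',\Gamma_0$; in particular \typeEval{\Gamma,\alpha/m:\sigma,\Gamma_0}{e:\tau} yields \typeEval{\Gamma,\alpha/m:\sigma',\Gamma_0}{e:\tau}. Our lemma is the instance $\Gamma = \Gamma_0 = \cdot$, $\sigma = \sigmaPol{\policy}$, $\sigma' = \sigmaPolCon{\policy}$. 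The only interesting base cases are the two rules that look up the narrowed hypothesis. For the module variable rule \WfmV, looking up $m$ now yields signature $\sigma'$ rather than $\sigma$, and one application of the subsumption rule \WfmSub (using \typeEval{}{\sigma' \leq \sigma:\sign}) restores \wfmod{\cdot}{P}{m:\sigma}. For the constructor variable rule \WfcV, the twinned variable $\alpha$ now has kind \fstsign{\sigma'} instead of \fstsign{\sigma}; since subsignature implies \typeEval{}{\fstsign{\sigma'} \leq \fstsign{\sigma}:\kind} (a routine induction on the subsignature derivation, whose base case is \SubkSg, \typeEval{}{S(\intType) \leq \basekind:\kind}), the subsumption rule \WfcSub recovers \typeEval{\cdot}{\alpha:\fstsign{\sigma}}. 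All remaining rules are congruences that simply propagate the induction hypothesis, and the narrowed context stays well-formed because a more specific hypothesis is still well-formed (rule \WfeM, together with well-formedness of \sigmaPolCon{\policy}).

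The main obstacle is bookkeeping rather than conceptual: because kinds, constructors, signatures, terms, and modules are mutually defined, the narrowing lemma must be stated and proved as one large simultaneous induction, and one must verify that the extraction operation \fstop{M}{c} and the subsignature-implies-subkind fact interact correctly, so that the kind ascribed to $\alpha_\policy$ in the confidential view can always be subsumed up to its public-view kind wherever $\alpha_\policy$ is used. Once that single structural lemma is in place, the result follows immediately; indeed the same lemma gives the sharper statement \typeEval{\conView{\policy}}{e:\tau} with the public-view type $\tau$ unchanged, which in particular witnesses \typeEval{\conView{\policy}}{e}.
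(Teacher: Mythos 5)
Your proposal is correct in outline, but it takes a genuinely different route from the paper's. The paper does not prove narrowing at all: it forms the applicative functor $\lambdaap\,\alpha_\policy/m_\policy\!:\!\sigmaPol{\policy}.\attmod{e}$, types it in the empty context at $\Piap\,\alpha_\policy\!:\!\sigmaPol{\policy}.\atcsign{\tau}$, weakens into $\conView{\policy}$, applies it to the module variable $m_\policy$ (whose confidential signature $\sigmaPolCon{\policy}$ is subsumed to $\sigmaPol{\policy}$ via \WfmSub{} and Lemma~\ref{lem:ml:signature:property:subsign}), notes that this application beta-reduces to $\attmod{e}$ itself because the substitution $[\alpha_\policy\mapsto\alpha_\policy,\,m_\policy\mapsto m_\policy]$ is the identity, and then invokes Crary's type-preservation theorem plus an inversion lemma (Lemma~\ref{lem:ml:wrapper:conf-view:P-form}) to conclude that $e$ is typable under $\conView{\policy}$. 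So where you re-establish a global structural property of the mutually-defined static semantics, the paper makes a local argument that leans on an already-mechanized deep theorem (subject reduction). Your narrowing lemma is a standard admissible property of singleton-kind module calculi and your two base cases are the right ones --- including the needed monotonicity $\fstsign{\sigma'}\leq\fstsign{\sigma}$, which indeed follows by induction on the subsignature derivation with \SubkSg{} at the leaves --- but the cost is the large simultaneous induction over every judgment form of Fig.~\ref{fig:ml:judgment-forms}, including the equivalence and subkinding judgments and the well-formedness premises threaded through the rules. What your route buys is self-containedness and the sharper conclusion $\conView{\policy}\vdash e:\tau$ with the public-view type preserved verbatim; what the paper's route buys is brevity and reuse of an existing theorem, at the price of a slightly indirect detour through the dynamic semantics.
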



\textbf{Theorem~\ref{thm:ml:free-thm:opaque}.}
{If $\typeEval{\pubViewTerm{\policy}}{e:\tau}$, then $e$ is \TRNI{\policy,\tau}.}
\begin{proof}
Since \typeEval{\pubViewTerm{\policy}}{e:\tau}, from Theorem~\ref{thm:ml:abstraction}, we have that \typeEval{\pubViewTerm{\policy}}{e \logeq e:\tau}.
Thus, for any $\rho \in \envfull{\pubViewTerm{\policy}}$, it follows that: 
$$\tuple{\rho_L(e),\rho_R(e)} \in \rev{\inter{\tau}{\rho}}.$$

We consider an arbitrary $\rho$  s.t. $\condenvalt{\rho}{\policy}$.
\purpletext{From Lemma~\ref{lem:ml:consenv:property:two}, it follows that $\rho \in \envfull{\pubViewTerm{\policy}}$}.
As proven above, we have that $\tuple{\rho_L(e),\rho_R(e)} \in \rev{\inter{\tau}{\rho}}$.
From the definition of indistinguishability, we have that $\tuple{\rho_L(e),\rho_R(e)} \in \indterm{\tau}$.
{In addition, since \typeEval{\pubViewTerm{\policy}}{e:\tau}, from Lemma~\ref{lem:ml_trni:typability_impl}, it follows that \typeEval{\conView{\policy}}{e}.}
Therefore, $e$ is \TRNI{\policy,\tau}.
\end{proof}

\begin{example}[Typing implies TRNI]
\label{ex:ml:trni:program:trni}
We consider the policy \policyoe. As described in Example~\ref{ex:ml:trni:con-view} and Example~\ref{ex:ml:trni:pub-view}, the transparent signature and the opaque signature of the policy are as below.
\begin{align*}
\sigmaPolCon{\policyoe} &= \Sigma \alpha_f:\atksign{S(\intType)}.\Sigma \alpha_1:\atcsign{\intType}.\Sigma \alpha_2:\atcsign{\intType \rightarrow \intType}.\unitsign\\
\sigmaPol{\policyoe} &= \Sigma \alpha_f:\atksign{\basekind}.\Sigma \alpha_1:\atcsign{\alpha_f}.\Sigma \alpha_2:\atcsign{\alpha_f \rightarrow \intType}.\unitsign
\end{align*}

Thus, the confidential view is $\conView{\policyoe} = \alpha_\policyoe/m_\policyoe:\sigmaPolCon{\policyoe}$, and the public view is $\pubViewTerm{\policyoe} = \alpha_\policyoe/m_\policyoe:\sigmaPol{\policyoe}$.
We now look at the program $e = e_1\ e_2$, where $e_1 = (\extracttwo{\prjtwo{1}{\prjtwo{2}{\prj{2}{m_\policyoe}}}})$ and $e_2 = (\extracttwo{\prjtwo{1}{\prj{2}{m_\policyoe}}})$. 
This program is corresponding to the program $x_f\ x$ in Example~\ref{ex:trni:program:trni},
as noted following definition (\ref{eq:defMLviews}). 
We have that \typeEval{\conView{\policyoe}}{e_1: \intType \rightarrow\intType},  \typeEval{\conView{\policyoe}}{e_2: \intType}, \typeEval{\pubViewTerm{\policyoe}}{e_1: \prj{1}{\alpha_\policyoe} \rightarrow \intType}, and  \typeEval{\pubViewTerm{\policyoe}}{e_2: \prj{1}{\alpha_\policyoe}}.
%
Therefore, we have that \typeEval{\conView{\policyoe}}{e:\intType}, and  \typeEval{\pubViewTerm{\policyoe}}{e:\intType} and hence, from Theorem~\ref{thm:ml:free-thm:opaque}, the program is \TRNI{\policyoe,\intType}.

\end{example}

\begin{example}
\label{ex:ml:trni:program:non-trni}
The purpose of this example is similar to the one of Example~\ref{ex:trni:program:non-trni}: to illustrate that if a program is well-typed in the confidential view and is not \TRNI{\policy,\tau} for some $\tau$ well-formed in the public view, then the type of the program in the public view is not equivalent to $\tau$ or the program is not well-typed in the public view. 

We consider the policy $\policyoe$ and the program $\extracttwo{\prjtwo{1}{\prj{2}{m_\policyoe}}}$, which is corresponding to the program $x$ in Example~\ref{ex:trni:program:non-trni}.
This program is not \TRNI{\policyoe,\intType} since $\extracttwo{\prjtwo{1}{\prj{2}{m_\policyoe}}}$ itself is
confidential and cannot be directly declassified. In the public view of the policy, the type of this program is $\prj{1}{\alpha_\policyoe}$ which is not equivalent to $\intType$.

We consider another program: $e = (\extracttwo{\prjtwo{1}{\prj{2}{m_\policyoe}}}) \modop 3$, which is corresponding to the program $x \modop 3$ in Example~\ref{ex:trni:program:non-trni}.
This program is not \TRNI{\policyoe,\prj{1}{\alpha_\policyoe}} since it may map indistinguishable inputs to non-indistinguishable outputs.
\trimming{For example, we consider \condenvalt{\rho}{\policyoe} presented in Example~\ref{ex:ml:trni:ind:oe}.
We have that $\rho_L(e) = 2 \modop 3 = 2$, and  $\rho_R(e) = 4 \modop 3 = 1$.
As described in Example~\ref{ex:ml:trni:ind:oe}, $\indval{\prj{1}{\alpha_\policyoe}} = R_f = \set{\tuple{v_1,v_2} \sep (v_1 \modop 2) =_\intType (v_2 \modop 2)}$.
Therefore $\tuple{1,2} \not \in  \indval{\prj{1}{\alpha_\policyoe}}$.}

\trimming{As explained above, $e$ is not \TRNI{\policyoe, \prj{1}{\alpha_\policyoe}}.}
In the public view, it is not well-typed since the type of \extracttwo{\prjtwo{1}{\prj{2}{m_\policyoe}}} is \prj{1}{\alpha_\policyoe}, which is not equivalent to \intType, and \modop\ expects \intType\ arguments.
\end{example}

\subsection{Wrapper}
\label{sec:ml-trni:wrapper}
{In this section, we will transform an open term to a closed module.
We then prove that if the closed module is well-typed in the empty context, then the original open term is well-typed in the public view and hence, $e$ is TRNI.
Thus we can use our approach with ordinary ML implementations. 
}

If the source programs are already parameterized by one module for their confidential inputs and their declassification functions, then there is no need to modify source programs at all.


We next define a wrapper that wraps $e$ with the information from the public view.        
\begin{align*}
\wrappub{e}  & \triangleq \lambdagn \alpha_\policy,\mPol:\sigmaPol{\policy}.\attmod{e}
\end{align*}

%



{From the construction, we have that if \wrappub{e} is well-typed in the empty context, then the original term is also well-typed in the public view.
In addition, we can infer the type of the original term in the public view.       
These results yield, by Theorem~\ref{thm:ml:free-thm:opaque}, 
that the original term is TRNI when the wrapper is well-typed.}
\begin{theorem}
\label{thm:ml:wrapper}
{If \typeEvalP{}{\wrappub{e}:\Pign \alphaPol:\sigmaPol{\policy}.\atcsign{\tau}}, then $e$ is \TRNI{\policy,\tau}.}
\end{theorem}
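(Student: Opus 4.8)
The plan is to reduce this statement to Theorem~\ref{thm:ml:free-thm:opaque}, which already establishes that typing in the public view implies security. Concretely, it suffices to show that the hypothesis $\typeEvalP{}{\wrappub{e}:\Pign \alphaPol:\sigmaPol{\policy}.\atcsign{\tau}}$ entails $\typeEval{\pubViewTerm{\policy}}{e:\tau}$; once that is in hand, Theorem~\ref{thm:ml:free-thm:opaque} closes the goal directly (and with it the confidential-view obligation of Def.~\ref{def:ml:trni:signature}, since the proof of that theorem already routes through Lemma~\ref{lem:ml_trni:typability_impl}). The key observation driving the reduction is that the body of the wrapper is typed in the context $\alphaPol/\mPol:\sigmaPol{\policy}$, which by Eqn.~(\ref{eq:defMLviews}) is \emph{exactly} the public view $\pubViewTerm{\policy}$. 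So all that is needed is to peel the two outer layers of the wrapper off the typing derivation.

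First I would invert the generative functor. Since $\wrappub{e} = \lambdagn \alphaPol,\mPol:\sigmaPol{\policy}.\attmod{e}$ is a module value whose only introduction rule is \WfmAbsI, a functor-inversion argument (accounting for any preceding uses of the subsumption rule \WfmSub) yields an impure judgment $\typeEvalI{\pubViewTerm{\policy}}{\attmod{e}:\sigma'}$ together with $\typeEval{\alphaPol:\fstsign{\sigmaPol{\policy}}}{\sigma' \leq \atcsign{\tau}:\sign}$ for some result signature $\sigma'$. Because the argument signature in the target type is fixed as $\sigmaPol{\policy}$, the contravariant premise of the generative-functor subsignature rule \SubsDFnGn only forces the argument to be a supersignature of $\sigmaPol{\policy}$, which is handled by narrowing the body's context back to $\mPol:\sigmaPol{\policy}$; the covariant premise gives $\sigma' \leq \atcsign{\tau}$. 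I would then invert the atomic module: $\attmod{e}$ is a dynamic atomic module introduced only by \WfmAtt, and $\sigma' \leq \atcsign{\tau}$ forces $\sigma'$ to be of the form $\atcsign{\tau'}$ with $\typeEval{\pubViewTerm{\policy}}{\tau' \equiv \tau:\basekind}$. Threading through the forgetting rule \WfmRFour and subsumption yields $\typeEval{\pubViewTerm{\policy}}{e:\tau'}$, and then $\typeEval{\pubViewTerm{\policy}}{e:\tau}$ by the type-equivalence rule \WttEq.

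Having obtained $\typeEval{\pubViewTerm{\policy}}{e:\tau}$, I would invoke Theorem~\ref{thm:ml:free-thm:opaque} to conclude that $e$ is \TRNI{\policy,\tau}, which finishes the proof.

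The step I expect to be the main obstacle is the inversion itself, which is not immediate because Crary's module system admits subsumption (\WfmSub) together with the retyping rules \WfmROne--\WfmRThree and the forgetting rule \WfmRFour, so the surface form of a derivation need not match its conclusion. I therefore anticipate that the real content is establishing the two inversion principles relied on above: (i) that a generative functor typed at $\Pign\alphaPol:\sigmaPol{\policy}.\atcsign{\tau}$ has its body impurely typed at a subsignature of $\atcsign{\tau}$ in the context $\alphaPol/\mPol:\sigmaPol{\policy}$, and (ii) that an atomic module $\attmod{e}$ typed at an atomic type signature $\atcsign{\tau'}$ forces $\typeEval{}{e:\tau'}$ up to type equivalence. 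Both are provable by induction on the static-semantics derivation (and are essentially inversion facts available from Crary's metatheory); the only genuinely delicate point is tracking the signature (in)equivalences across subsumption and the contravariance of \SubsDFnGn, which here is benign precisely because the argument signature is pinned to $\sigmaPol{\policy}$.
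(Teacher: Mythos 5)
Your proposal follows essentially the same route as the paper: invert the wrapper's typing derivation down to a typing of $e$ in the public view, then invoke Theorem~\ref{thm:ml:free-thm:opaque}. The two inversion principles you identify are exactly the paper's Lemmas~\ref{lem:ml_trni:inversion_wrapper}, \ref{lem:ml_trni:inversion_impure}, and \ref{lem:ml_trni:inversion_pure} (packaged as Lemma~\ref{lem:ml_trni:inversion}), and you correctly note that the confidential-view obligation is discharged inside Theorem~\ref{thm:ml:free-thm:opaque} via Lemma~\ref{lem:ml_trni:typability_impl}. The one place where you diverge is the treatment of the contravariant argument position of \SubsDFnGn: you propose to narrow the body's context from the supersignature $\sigma \geq \sigmaPol{\policy}$ back to $\sigmaPol{\policy}$, which appeals to a general context-narrowing (strengthening) lemma that the paper never proves. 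The paper instead avoids narrowing altogether by showing that $\sigmaPol{\policy}$ has no proper supersignatures: Lemma~\ref{lem:ml:wrapper:signature:policy:property} establishes that any $\sigma$ with $\typeEval{}{\sigmaPol{\policy} \leq \sigma:\sign}$ is in fact \emph{equivalent} to $\sigmaPol{\policy}$ (because every type component is already at the maximal kind $\atksign{\basekind}$), and then Lemma~\ref{lem:ml:wrapper:kind} transports the residual type equivalence across contexts with the same $\fstsign{\cdot}$. Both routes are sound; the paper's buys independence from the narrowing metatheorem at the cost of a policy-specific structural lemma, while yours would be more generic if narrowing were available off the shelf.
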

\begin{example}
In this example, we combine the ideas presented in \S\ref{sec:extension:global},  \S\ref{sec:ml-trni:encoding}, and \S\ref{sec:ml-trni:wrapper} to encode a complex policy which is inspired by two-factor authentication.
The policy \policyAut\ involves two confidential passwords and two declassifiers \code{checking1} and \code{checking2} written in SML as below, where \code{input1} and \code{input2} are respectively the first input and the second input from a user.
Notice that \code{checking2} takes a tuple of two passwords as its input.

\begin{lstlisting}
fun checking1(password1:int) = 
  if (password1 = input1) then 1 else 0
fun checking2(passwords:int*int) = 
  if ((#1 passwords) = input1) then 
      if ((#2 passwords) = input2) then 1 else 0
  else 2
\end{lstlisting}



Using the ideas presented in \S\ref{sec:extension:global}, we introduce a new variable which corresponding to the tuple of two passwords.
The confidential and public signatures of \policyAut\ in the module calculus are as below, where $f_1$ and $f_2$ are corresponding to \code{checking1} and \code{checking2}.

\begin{align*}
\sigmaPolCon{\policyAut} &= \Sigma \alpha_{f_1}:\atksign{S(\intType)}.\Sigma \alpha_1:\atcsign{\intType}.\Sigma \alpha_2:\atcsign{\intType \rightarrow \intType}.\\
    & \hspace{70pt} \Sigma \alpha_{x_2}:\atksign{S(\intType)}.\Sigma \alpha_3:\atcsign{\intType}.\\
    & \hspace{70pt}\Sigma \alpha_{f_2}:\atksign{S(\intType \times \intType)}.\Sigma \alpha_4:\atcsign{\intType\times \intType}. \\
    & \hspace{70pt}\Sigma \alpha_5:\atcsign{\intType \times \intType \rightarrow \intType}.\unitsign\\
\sigmaPol{\policyAut} &= \Sigma \alpha_{f_1}:\atksign{\basekind}.\Sigma \alpha_1:\atcsign{\alpha_{f_1}}.\Sigma \alpha_2:\atcsign{\alpha_{f_1} \rightarrow \intType}.\\
    & \hspace{70pt} \Sigma \alpha_{x_2}:\atksign{\basekind}.\Sigma \alpha_3:\atcsign{\alpha_{x_2}}.\\
    & \hspace{70pt}\Sigma \alpha_{f_2}:\atksign{\basekind}.\Sigma \alpha_4:\atcsign{\alpha_{f_2}}.\\
    & \hspace{70pt}\Sigma \alpha_5:\atcsign{\alpha_{f_2} \rightarrow \intType}.\unitsign    
\end{align*}

The confidential and public signatures of \policyAut\ in SML are as below.


\begin{lstlisting}
signature traSIG=sig
  type t1 = int
  val password1:t1 
  val checking1:t1->int
  type t2 = int
  val password2:t2
  type t3 = int * int 
  val passwords:t3
  val checking2:t3 ->int			
end
\end{lstlisting}

\begin{lstlisting}
signature opaSIG=sig
  type t1
  val password1:t1 
  val checking1:t1->int
  type t2
  val password2:t2
  type t3
  val passwords:t3
  val checking2:t3->int			
end
\end{lstlisting}


As in \S\ref{sec:extension:global}, we require that for $\rho \respectfull \policyAut$, $\rho_L$ and $\rho_R$ are consistent.
That is when $\rho_L(m_\policyAut) = V_L$, then $V_L.\code{passwords} = \tuple{V_L.\code{password1},V_L.\code{password2}}$ (and we have a similar requirement for $\rho_R$).
In order to define indistinguishability for \policyAut, we need to define 
$\rho\respectfull \policyAut$,\footnote{%
Notice that as noted in the main text, $\rho_L$ and $\rho_R$ are consistent.} 
and hence, we define $R_{f_1}$, $R_{x_2}$, $R_{f_2}$  as below.

\begin{align*}
R_{f_1} & = \{\tuple{v_1,v_1'} \sep \typeEval{}{v_1,v_1':\intType}, \\
        & \hspace{70pt} v_1 = v_1' = \code{input1}\ \vee \\
        & \hspace{70pt} (v_1 \neq \code{input1} \wedge v_1' \neq \code{input1})\}\\
R_{x_2} & = \{\tuple{v_1,v_1'} \sep \typeEval{}{v_2,v_2':\intType}\}\\
R_{f_2} & = \{\tuple{\tuple{v_1,v_2},\tuple{v_1',v_2'}} \sep \typeEval{}{v_1,v_1',v_2,v_2':\intType} \\ 
        & \hspace{70pt} (v_1 = v_1' = \code{input1} \wedge v_2 = v_2' = \code{input2})\ \vee \\
        & \hspace{70pt} (v_1 = v_1' = \code{input1} \wedge v_2 \neq \code{input2} \wedge \\
        & \hspace{170pt} v_2' \neq \code{input2})\ \vee \\
        & \hspace{70pt} (v_1 \neq \code{input1} \wedge v_1' \neq \code{input1})
        \} 
\end{align*}

By using the wrapper presented above, we can check that programs 
$$m_\policyAut.\code{checking2}\ \ m_\policyAut.\code{passwords},$$
and $m_\policyAut.\code{checking1}\ m_\policyAut.\code{password1}$
are \TRNI{\policyAut,\intType}, where $m_\policyAut$ is the module variable in confidential and public views of \policyAut.
\end{example}

\begin{remark}[On wrapper]
\label{rem:ml:wrapper}
We may choose an applicative functor for wrapping the original program. However, w.r.t. this choice, we need to handle more cases in proofs.
Thus, we choose a generative functor.
\end{remark}

\section{Proofs for TRNI for the Module Calculus}
\label{app:ml:trni}

\subsection{Properties of the encoding}
\begin{lemma}
\label{lem:ml:signature:property:private}
For any $L \subseteq \varPolicy{\policy}$, it follows that \typeEval{}{\enccon{L}: \sign}.
\end{lemma}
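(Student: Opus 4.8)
The statement to prove is Lemma~\ref{lem:ml:signature:property:private}: for any $L \subseteq \varPolicy{\policy}$, it follows that \typeEval{}{\enccon{L}:\sign}, i.e., the transparent signature $\enccon{L}$ is well-formed in the empty context.

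The plan is to proceed by induction on the structure of the list $L$, following exactly the three cases in the inductive definition of $\enccon{-}$ given in Fig.~\ref{fig:ml:signatures}. First I would set up the induction hypothesis: for a strictly shorter list $L'$, we have \typeEval{}{\enccon{L'}:\sign}. I would also strengthen the statement slightly to carry a well-formed context $\Gamma$ along (proving \typeEval{\Gamma}{\enccon{L}:\sign} whenever \typeEval{}{\Gamma\ \ok}), since the recursive occurrences of $\enccon{L'}$ appear under the binders introduced by the $\Sigma$-signatures and must therefore be checked in an extended context rather than the empty one. This strengthening is what makes the induction go through cleanly.

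For the base case $L = \emptylist$, the signature is $\unitsign$, and well-formedness follows immediately from rule \WfsOne, which has no premises. For the inductive cases, the key tool is rule \WfsSix for dependent pairs, which requires that $\sigma_1$ be well-formed and that $\sigma_2$ be well-formed in the context extended by $\alpha:\fstsign{\sigma_1}$. In the non-declassifiable case $L = x\concat L'$, I would build $\Sigma \alpha_x:\atksign{S(\intType)}.\Sigma \alpha:\atcsign{\intType}.\enccon{L'}$ by applying \WfsSix\ twice: the atomic kind signature $\atksign{S(\intType)}$ is well-formed by \WfsTwo\ together with \WfkS\ and \WfcInt\ (since $\intType$ has base kind), and the atomic type signature $\atcsign{\intType}$ is well-formed by \WfsThree\ and \WfcInt; the tail $\enccon{L'}$ is then handled by the induction hypothesis in the appropriately extended context. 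In the declassifiable case $\decPolicy{\policy}(x) = f$, the argument is analogous but involves three nested $\Sigma$-bindings, where the third atomic signature $\atcsign{\intType \rightarrow \tau_f}$ requires \WfsThree\ together with \WfcFn\ and the fact that $\tau_f$ is a well-formed closed type (which holds because, by the standing assumption in \S\ref{sec:local-policies}, each declassifier $f$ in the policy is a closed term of type $\intType \rightarrow \tau_f$ for some closed $\tau_f$).

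The main obstacle I anticipate is bookkeeping the contexts correctly across the nested binders, and in particular verifying that the constructor variables $\alpha_x, \alpha, \alpha_f, \alpha_1, \alpha_2$ introduced by the $\Sigma$-bindings do not interfere with the well-formedness of the tail. Because the transparent encoding pins every type component to a concrete closed type ($\intType$ or $\intType \rightarrow \tau_f$) rather than referring to the bound constructor variables, the premises of \WfsThree\ never actually depend on the newly bound variable, so weakening (Lemma~\ref{lem:ml:weakening}) suffices to push the already-established judgments into the extended contexts. This is precisely why the transparent case is easier than the opaque one: there is no genuine dependency to track, so the only real work is mechanically threading \typeEval{}{\Gamma\ \ok} through each application of \WfeC\ and invoking the induction hypothesis on $\enccon{L'}$.
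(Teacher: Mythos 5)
Your proposal is correct and follows essentially the same route as the paper's proof: induction on $L$ with the three cases of the definition of $\enccon{-}$, closing the base case with \WfsOne, building the inductive cases with nested applications of \WfsSix\ (via \WfsTwo, \WfkS, \WfcInt, \WfsThree, and \WfcFn), and using weakening (Lemma~\ref{lem:ml:weakening}) to move the induction hypothesis into the extended contexts. The only cosmetic difference is that the paper does not strengthen the statement to an arbitrary well-formed $\Gamma$; it keeps the empty-context statement and, exactly as you observe in your last paragraph, relies on weakening because the transparent signature has no genuine dependency on the bound constructor variables.
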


\begin{proof}
We prove the lemma by induction on $L$.
We have four cases.

\emcase{Case 1:} $L = []$.
We have that $\enccon{\varPolicy{\policy}} = \unitsign$.
From the \WfsOne\ rule, it follows that \typeEval{}{\unitsign:\sign}.

\emcase{Case 2:} $L = x\concat L'$, $x \not\in\dom{\decPolicy{\policy}}$.
We have that $\enccon{L} = \Sigma \alpha_x:\atksign{S(\intType)}.\Sigma \alpha:\atcsign{\intType}.\enccon{L'}$.

From IH, we have that \typeEval{}{\enccon{L'}:\sign}.
Thus, we have the following derivation.
Notice that $\fstsign{\atcsign{c}} = \unitsign$ for any $c$ and $\fstsign{\atksign{S(\intType)}} = S(\intType)$

$$\footnotesize
\LabelRule{\WfsSix}{
\LabelRuleProof{\WfsTwo}{
\LabelRuleProof{\WfkS}{
\LabelRuleProof{\WfcInt}{~}
{\typeEval{}{\intType:\basekind}}}
{\typeEval{}{S(\intType):\kind}}}
{\typeEval{}{\atksign{S(\intType)}:\sign}} \hspace{50pt}
\LabelRuleProof{\WfsSix}{
\LabelRuleProof{\WfsThree}{
\LabelRuleProof{\WfcInt}{~}
{\typeEval{\alpha_x:S(\intType)}{\intType:\basekind}}}
{\typeEval{\alpha_x:S(\intType)}{\atcsign{\intType}:\sign}}\hspace{10pt}
\EmptyRule{
\EmptyRule{}{\typeEval{}{\enccon{L'}:\sign}\ (\text{from IH)}}}
{\typeEval{\alpha_x:S(\intType),\alpha:\unitsign}{\enccon{L'}:\sign}\ (\text{Lem.~\ref{lem:ml:weakening}})}
}
{{\typeEval{\alpha_x:S(\intType)}{\Sigma \alpha:\atcsign{\intType}.\enccon{L'}:\sign}}}
}
{\typeEval{}{\Sigma \alpha_x:\atksign{S(\intType)}.\Sigma \alpha:\atcsign{\intType}.\enccon{L'}: \sign}}
$$

\emcase{Case 3:} $L = x\concat L'$, $\decPolicy{\policy}(x) = f$, where \typeEval{}{f:\intType \rightarrow \tau}.
We have that $\enccon{L} = \Sigma \alpha_f:\atksign{S(\intType)}.\Sigma \alpha_1:\atcsign{\intType}.\Sigma \alpha_2:\atcsign{\intType \rightarrow \tau}.\enccon{L'}$.

We now look at $\Sigma \alpha_2:\atcsign{\intType \rightarrow \tau}.\enccon{L'}$.
Notice that since \typeEval{}{f:\intType \rightarrow \tau}, we have that \typeEval{}{\intType \rightarrow \tau:\basekind}.
From the \WfsThree\ rule, \typeEval{}{\atcsign{\intType \rightarrow \tau}:\sign}.
Thus, we have that:
$$\small
\EmptyRule{
\LabelRule{\WfsSix}{
\typeEval{}{\atcsign{\intType \rightarrow \tau}:\sign}\\
\EmptyRule{
\EmptyRule{~}{\typeEval{}{\enccon{L'}:\sign}\ \text{(from IH)}}}
{\typeEval{\alpha_2:\unitsign}{\enccon{L'}:\sign\ \text{(Lem.~\ref{lem:ml:weakening})}}}}
{\typeEval{}{\Sigma \alpha_2:\atcsign{\intType \rightarrow \tau}.\enccon{L'}:\sign}}}
{\typeEval{\alpha_f:S(\intType),\alpha_1:\unitsign}{\Sigma \alpha_2:\atcsign{\intType \rightarrow \tau}.\enccon{L'}:\sign}\ \text{(Lem.~\ref{lem:ml:weakening})}}
$$

In addition, by using a reasoning similar to the one in Case 2, we have:
\begin{itemize}
\item \typeEval{}{\atksign{S(\intType)}:\sign},
\item \typeEval{\alpha_f:S(\intType)}{\atcsign{\intType}:\sign}.
\end{itemize}
 
Therefore, we have that:
$$\footnotesize
\LabelRule{\WfsSix}{
\EmptyRule{}{\purpletext{\typeEval{}{\atksign{S(\intType)}:\sign}}}\hspace{50pt}
\LabelRuleProof{\WfsSix}{
    \purpletext{\typeEval{\alpha_f:S(\intType)}{\atcsign{\intType}:\sign}}\\
    \typeEval{\alpha_f:S(\intType),\alpha_1:\unitsign}{\Sigma \alpha_2:\atcsign{\intType \rightarrow \tau}.\enccon{L'}:\sign}
}
{\typeEval{\alpha_f:S(\intType)}{\Sigma \alpha_1:\atcsign{\intType}.\Sigma \alpha_2:\atcsign{\intType\rightarrow\tau}.\enccon{L'}}}
}
{\Sigma \alpha_f:\atksign{S(\intType)}.\Sigma \alpha_1:\atcsign{\intType}.\Sigma \alpha_2:\atcsign{\intType \rightarrow \tau}.\enccon{L'}:\sign}
$$

\end{proof}

\begin{lemma}
\label{lem:ml:signature:property:public}
{For any $L\subseteq \varPolicy{\policy}$, it follows that \typeEval{}{\encpub{L}: \sign}.}
\end{lemma}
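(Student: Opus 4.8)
The plan is to mirror the proof of Lemma~\ref{lem:ml:signature:property:private} almost verbatim, proceeding by induction on the list $L$. The only structural difference is that the opaque encoding \encpub{L} uses atomic kind signatures \atksign{\basekind} where the transparent encoding \enccon{L} uses the singleton \atksign{S(\intType)}, and uses \atcsign{\alpha_x} (resp. \atcsign{\alpha_f \rightarrow \tau_f}) in place of \atcsign{\intType} (resp. \atcsign{\intType \rightarrow \tau_f}). Consequently the well-formedness obligations on the abstract type variables are discharged using rules \WfkB\ and \WfcV\ rather than \WfkS, which if anything makes the argument slightly simpler than the transparent case. There are three cases, matching the three clauses of the definition of \encpub{L} in Fig.~\ref{fig:ml:signatures}.

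For the base case $L = \emptylist$ we have $\encpub{\emptylist} = \unitsign$, and \typeEval{}{\unitsign:\sign} follows immediately from rule \WfsOne. For the inductive step with $L = x\concat L'$ and $x \not\in \dom{\decPolicy{\policy}}$ we have $\encpub{L} = \Sigma \alpha_x:\atksign{\basekind}.\Sigma \alpha:\atcsign{\alpha_x}.\encpub{L'}$, and I would assemble the derivation bottom-up by two applications of rule \WfsSix, checking the required premises in turn. First, \typeEval{}{\atksign{\basekind}:\sign} holds by \WfsTwo\ applied to \typeEval{}{\basekind:\kind} (rule \WfkB). Second, since $\fstsign{\atksign{\basekind}} = \basekind$, the judgment \typeEval{\alpha_x:\basekind}{\atcsign{\alpha_x}:\sign} follows from \WfsThree\ applied to \typeEval{\alpha_x:\basekind}{\alpha_x:\basekind}, which is an instance of \WfcV. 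Third, since $\fstsign{\atcsign{\alpha_x}} = \unitkind$, the remaining premise is \typeEval{\alpha_x:\basekind,\alpha:\unitkind}{\encpub{L'}:\sign}, which is obtained from the induction hypothesis \typeEval{}{\encpub{L'}:\sign} by the weakening lemma (Lemma~\ref{lem:ml:weakening}), applied to extend the empty context by the two hypotheses $\alpha_x:\basekind$ and $\alpha:\unitkind$.

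The remaining case $L = x\concat L'$ with $\decPolicy{\policy}(x) = f$, where \typeEval{}{f:\intType \rightarrow \tau_f} and hence $\tau_f$ is a closed well-formed type, is analogous but uses three applications of \WfsSix\ for the three nested $\Sigma$-binders of $\Sigma \alpha_f:\atksign{\basekind}.\Sigma \alpha_1:\atcsign{\alpha_f}.\Sigma \alpha_2:\atcsign{\alpha_f \rightarrow \tau_f}.\encpub{L'}$. The only point requiring slight care is the premise \typeEval{\alpha_f:\basekind,\alpha_1:\unitkind}{\atcsign{\alpha_f \rightarrow \tau_f}:\sign}: here I would first derive \typeEval{\alpha_f:\basekind}{\alpha_f \rightarrow \tau_f:\basekind} using rule \WfcFn\ from \typeEval{\alpha_f:\basekind}{\alpha_f:\basekind} (rule \WfcV) and \typeEval{\alpha_f:\basekind}{\tau_f:\basekind} (the latter from well-formedness of $\tau_f$ in the empty context by Lemma~\ref{lem:ml:weakening}), and then apply \WfsThree\ and weakening. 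As before the tail \encpub{L'} is handled by the induction hypothesis together with weakening. I do not anticipate any genuine obstacle: the argument is a routine structural induction, and because the opaque encoding never introduces singleton kinds, the well-formedness checks on the abstract components reduce to the variable rule \WfcV\ rather than the singleton-introduction reasoning needed in Lemma~\ref{lem:ml:signature:property:private}.
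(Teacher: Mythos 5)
Your proposal is correct and follows essentially the same route as the paper's own proof: induction on $L$, with the base case discharged by \WfsOne\ and the two inductive cases assembled via repeated applications of \WfsSix, using \WfsTwo\ with \WfkB\ for \atksign{\basekind}, \WfsThree\ with \WfcV\ (and \WfcFn\ for the declassifier type) for the atomic type components, and the induction hypothesis plus the weakening lemma for the tail. No gaps.
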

\begin{proof}
We prove the lemma by induction on $L$.
We have four cases.

\emcase{Case 1:} $L = \emptylist$.
We have that $\encpub{L} = \unitsign$. 
From the \WfsOne\ rule, we have that \typeEval{}{\unitsign:\sign}.

\emcase{Case 2:} $L = x\concat L'$, $x \not\in \dom{\decPolicy{\policy}}$.
We have that $\encpub{L} = \Sigma \alpha_x:\atksign{\basekind}.\Sigma \alpha:\atcsign{\alpha_x}.\encpub{L'}$.
We have the following derivation.
Notice that $\fstsign{\atcsign{c}} = \unitsign$ for any $c$, and $\fstsign{\atksign{\basekind)}} = \basekind$.

$$\footnotesize
\LabelRule{\WfsSix}{
\LabelRuleProof{\WfsTwo}{
{\typeEval{}{\basekind:\kind}}}
{\typeEval{}{\atksign{\basekind}:\sign}} \hspace{50pt}
\LabelRuleProof{\WfsSix}{
\LabelRuleProof{\WfsThree}{
\LabelRuleProof{{\WfcV}}{~}
{\typeEval{\alpha_x:\basekind}{\alpha_x:\basekind}}}
{\typeEval{\alpha_x:\basekind}{\atcsign{\alpha_x}:\sign}}\hspace{10pt}
\EmptyRule{
\EmptyRule{}{\typeEval{}{\enccon{L'}:\sign}\ (\text{from IH)}}}
{\typeEval{\alpha_x:\basekind,\alpha:\unitsign}{\enccon{L'}:\sign}\ (\text{Lem.~\ref{lem:ml:weakening}})}
}
{{\typeEval{\alpha_x:\basekind}{\Sigma \alpha:\atcsign{\alpha_x}.\enccon{L'}:\sign}}}
}
{\typeEval{}{\Sigma \alpha_x:\atksign{\basekind}.\Sigma \alpha:\atcsign{\alpha_x}.\enccon{L'}: \sign}}
$$

\emcase{Case 3:} $L=x\concat L'$, $\decPolicy{\policy}(x) = f$, where \typeEval{}{f:\intType \rightarrow \tau} for some $\tau$.
We have that $\encpub{L} = \Sigma \alpha_f:\atksign{\basekind}.\Sigma \alpha_1:\atcsign{\alpha_f}.\Sigma \alpha_2:\atcsign{\alpha_f \rightarrow \tau}.\encpub{L'}$.

We now look at $\Sigma \alpha_2:\atcsign{\alpha_f \rightarrow \tau}.\enccon{L'}$.
We have that \typeEval{\alpha_f:\basekind}{\alpha_f:\basekind} and \typeEval{}{\tau:\basekind} (notice that \typeEval{}{f:\intType\rightarrow\tau} and hence, \typeEval{}{\intType\rightarrow \tau:\basekind} and hence, \typeEval{}{\tau:\basekind}).
From \WfcFn, we have that \typeEval{\alpha_f:\basekind}{\alpha_f \rightarrow \tau:\basekind}.
{From Lemma~\ref{lem:ml:weakening}}, it follows that \typeEval{\alpha_f:\basekind,\alpha_1:\unitsign}{\alpha_f \rightarrow \tau:\basekind}.
From \WfsThree\, \typeEval{\alpha_f:\basekind,\alpha_1:\unitsign}{\atcsign{\alpha_f \rightarrow \tau}:\sign}.

Thus, we have that:
$$\small
\LabelRule{\WfsSix}{
\typeEval{\alpha_f:\basekind, \alpha_1:\unitsign}{\atcsign{\alpha_f \rightarrow \tau}:\sign}\\
\EmptyRule{
\EmptyRule{~}{\typeEval{}{\enccon{L'}:\sign}\ \text{(from IH)}}}
{\typeEval{\alpha_f:\basekind, \alpha_1:\unitsign,\alpha_2:\unitsign}{\enccon{L'}:\sign\ \text{(Lem.~\ref{lem:ml:weakening})}}}}
{\typeEval{\alpha_f:\basekind, \alpha_1:\unitsign}{\Sigma \alpha_2:\atcsign{\alpha_f \rightarrow \tau}.\enccon{L'}:\sign}}
$$

In addition, by using a reasoning similar to the one in Case 2, we have that:
\begin{itemize}
\item \typeEval{}{\atksign{\basekind}:\sign},
\item \typeEval{\alpha_f:\basekind}{\atcsign{\alpha_f}:\sign}.
\end{itemize}
 
Therefore, we have that:
$$\footnotesize
\LabelRule{\WfsSix}{
\EmptyRule{}{\purpletext{\typeEval{}{\atksign{\basekind}:\sign}}}\hspace{50pt}
\LabelRuleProof{\WfsSix}{
    \purpletext{\typeEval{\alpha_f:\basekind}{\atcsign{\alpha_f}:\sign}}\\
    \typeEval{\alpha_f:\basekind,\alpha_1:\unitsign}{\Sigma \alpha_2:\atcsign{\alpha_f \rightarrow \tau}.\enccon{L'}:\sign}
}
{\typeEval{\alpha_f:\basekind}{\Sigma \alpha_1:\atcsign{\alpha_f}.\Sigma \alpha_2:\atcsign{\alpha_f\rightarrow\tau}.\enccon{L'}}}
}
{\Sigma \alpha_f:\atksign{\basekind}.\Sigma \alpha_1:\atcsign{\alpha_f}.\Sigma \alpha_2:\atcsign{\alpha_f \rightarrow \tau}.\enccon{L'}:\sign}
$$

\end{proof}

\begin{lemma}
\label{lem:ml:signature:property:subsign}
\tealtext{For any $L \subseteq \varPolicy{\policy}$, it follows that \typeEval{}{\enccon{L} \leq \encpub{L}:\sign}.}
\end{lemma}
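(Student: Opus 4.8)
The plan is to prove the statement by induction on the list $L$, peeling off one confidential input at a time. Because the encodings $\enccon{L}$ and $\encpub{L}$ are built entirely from fresh constructor variables together with $\intType$, $\basekind$, and $\singleton{\intType}$, the bound variables never clash with the ambient context, so it is cleanest to strengthen the statement to: for every $\Gamma$ with $\typeEval{}{\Gamma\ \ok}$ (and fresh w.r.t.\ the variables introduced by the encoding), $\typeEval{\Gamma}{\enccon{L} \leq \encpub{L}:\sign}$. This generalization lets me apply the induction hypothesis directly in the extended contexts that accumulate as $\SubsDPr$ descends under the $\Sigma$-binders, so the only weakening I actually invoke is for well-formed signatures and for constructor equivalences, both of which are already covered by Lemma~\ref{lem:ml:weakening}.

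In the base case $L = \emptylist$ both signatures are $\unitsign$, so $\SeqR$ gives $\typeEval{\Gamma}{\unitsign \equiv \unitsign:\sign}$ and $\SubsEq$ yields the subsignature. For the step case with $L = x\concat L'$ and $x \not\in \dom{\decPolicy{\policy}}$, I apply $\SubsDPr$ twice to match $\Sigma \alpha_x:\atksign{\singleton{\intType}}.\Sigma \alpha:\atcsign{\intType}.\enccon{L'}$ against $\Sigma \alpha_x:\atksign{\basekind}.\Sigma \alpha:\atcsign{\alpha_x}.\encpub{L'}$. The first $\SubsDPr$ premise, $\typeEval{\Gamma}{\atksign{\singleton{\intType}} \leq \atksign{\basekind}}$, follows from $\SubkSg$ (giving $\singleton{\intType} \leq \basekind$) followed by $\SubsK$. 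Under the binding $\alpha_x:\fstsign{\atksign{\singleton{\intType}}} = \singleton{\intType}$, the atomic-type component $\typeEval{\Gamma,\alpha_x:\singleton{\intType}}{\atcsign{\intType} \leq \atcsign{\alpha_x}}$ is derived exactly as in the middle branch of Fig.~\ref{fig:derivation:example:sub_signatures}: $\WfcV$ gives $\alpha_x:\singleton{\intType}$, $\EqcSgTwo$ turns this into $\alpha_x \equiv \intType:\basekind$, $\EqcS$ flips it, $\SeqAtc$ lifts it to $\atcsign{\intType} \equiv \atcsign{\alpha_x}:\sign$, and $\SubsEq$ concludes. The tail $\typeEval{\Gamma,\alpha_x:\singleton{\intType},\alpha:\unitkind}{\enccon{L'} \leq \encpub{L'}}$ is then the induction hypothesis in the enlarged context (noting $\fstsign{\atcsign{\intType}} = \unitkind$). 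The remaining well-formedness premise of $\SubsDPr$, namely $\typeEval{\Gamma,\alpha_x:\basekind}{\Sigma \alpha:\atcsign{\alpha_x}.\encpub{L'}:\sign}$, is precisely a sub-derivation appearing in the proof of Lemma~\ref{lem:ml:signature:property:public} (its Case~2), reused here modulo weakening.

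The declassifier case $\decPolicy{\policy}(x) = f$ with $\typeEval{}{f:\intType \rightarrow \tau_f}$ is the same pattern but with three $\SubsDPr$ applications, one per $\Sigma$-binder. The kind step and the $\atcsign{\intType} \leq \atcsign{\alpha_f}$ step are as above; the new ingredient is the function-type component $\typeEval{\Gamma,\alpha_f:\singleton{\intType},\alpha_1:\unitkind}{\atcsign{\intType \rightarrow \tau_f} \leq \atcsign{\alpha_f \rightarrow \tau_f}}$, which I obtain by weakening $\alpha_f \equiv \intType:\basekind$ into this context, combining it with $\typeEval{}{\tau_f:\basekind}$ (from $\typeEval{}{f:\intType \rightarrow \tau_f}$, weakened) via $\EqcR$ and $\EqcFn$ to get $\alpha_f \rightarrow \tau_f \equiv \intType \rightarrow \tau_f:\basekind$, then $\EqcS$, $\SeqAtc$, and $\SubsEq$. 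The final tail again closes by the induction hypothesis, and the well-formedness premises come from Lemma~\ref{lem:ml:signature:property:public}. I expect the main obstacle to be purely bookkeeping rather than conceptual: keeping the accumulating contexts and the $\fstsign{\cdot}$-computed kinds of the bound variables straight across the nested $\SubsDPr$ applications, and making sure each singleton-elimination and congruence fact is stated in the correct context before it is used. The whole argument is a list-indexed generalization of the single derivation already worked out in Fig.~\ref{fig:derivation:example:sub_signatures}.
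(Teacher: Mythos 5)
Your proposal is correct and follows essentially the same route as the paper's proof: induction on $L$, with $\SubsDPr$ applied once per $\Sigma$-binder, the kind component via $\SubkSg$ and $\SubsK$, the atomic-type components via singleton elimination ($\EqcSgTwo$), $\EqcS$, $\SeqAtc$ (plus $\EqcFn$ for the declassifier's function type), and $\SubsEq$, with the well-formedness premises drawn from Lemma~\ref{lem:ml:signature:property:public}. The only difference is presentational: you strengthen the induction hypothesis over an ambient context $\Gamma$, whereas the paper proves the empty-context statement and appeals to Lemma~\ref{lem:ml:weakening} to lift the IH into the extended contexts --- your variant is arguably tidier since the stated weakening lemma does not explicitly list subsignature judgments among the forms it covers.
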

\begin{proof}
We prove the lemma by proving that for any $L$, \typeEval{}{\enccon{L} \leq \encpub{L}:\sign}.
We prove this by induction on $L$.
We have four cases.

\emcase{Case 1:} $L = \emptylist$.
We have that $\enccon{L} = \encpub{L} = \unitsign$.
From \SeqR, we have that \typeEval{}{\unitsign \equiv \unitsign:\sign} and hence, from the \SubsEq\ rule, it follows that \typeEval{}{\unitsign \leq \unitsign:\sign}.

\emcase{Case 2:} $L = x\concat L'$ and $x \not\in \dom{\decPolicy{\policy}}$.
We need to prove that \typeEval{}{\Sigma \alpha_x:\atksign{S(\intType)}.\Sigma \alpha:\atcsign{\intType}.\enccon{L'}\leq \Sigma \alpha_x:\atksign{\basekind}.\Sigma \alpha:\atcsign{\alpha_x}.\enccon{L'}:\sign}.

We first prove that \typeEval{\alpha_f:\atksign{S(\intType)}}{\Sigma \alpha:\atcsign{\intType}.\enccon{L'} \leq \Sigma \alpha:\atcsign{\alpha_x}.\enccon{L'}:\sign}.
To this aim, we prove that
\purpletext{\typeEval{\alpha_x:S(\intType)}{\atcsign{\intType} \leq \atcsign{\alpha_x}:\sign}}. 
$$\footnotesize
\LabelRule{\SubsEq}{
\LabelRuleProof{\SeqAtc}{
\LabelRuleProof{\EqcS}{
\LabelRuleProof{\EqcSgTwo}{
\LabelRuleProof{\WfcV}{ (\alpha_x:S(\intType))(\alpha_x) = S(\intType)}
{\typeEval{\alpha_x:S(\intType)}{\alpha_x:S(\intType)}}}
{\typeEval{\alpha_x:S(\intType)}{\alpha_x \equiv \intType:\basekind}}}
{\typeEval{\alpha_x:S(\intType)}{\intType \equiv \alpha_x:\basekind}}   }
{\typeEval{\alpha_x:S(\intType)}{\atcsign{\intType} \equiv \atcsign{\alpha_x}}}}
{\purpletext{\typeEval{\alpha_x:S(\intType)}{\atcsign{\intType} \leq \atcsign{\alpha_x}:\sign }}}
$$

From IH, we have that \typeEval{}{\enccon{L'} \leq \encpub{L'}:\sign}.
\purpletext{From Lemma~\ref{lem:ml:weakening}}, it follows that \typeEval{\alpha_x:S(\intType),\alpha:\unitkind}{\enccon{L'} \leq \encpub{L'}:\sign}.
In addition, from Lemma~\ref{lem:ml:signature:property:public}, we have that \typeEval{}{\encpub{L'}:\sign}.
From Lemma~\ref{lem:ml:weakening}, it follows that \typeEval{\alpha_x:S(\intType),\alpha:\unitkind}{\encpub{L'}:\sign}.
Since $\fstsign{\atcsign{\intType}} = \fstsign{\atcsign{\alpha_x}} = \unitkind$, we have that:

$$\footnotesize
\LabelRule{\SubsDPr}{
{\purpletext{\typeEval{\alpha_x:S(\intType)}{\atcsign{\intType} \leq \atcsign{\alpha_x}:\sign }}}\\ 
\typeEval{\alpha_x:S(\intType),\alpha:\unitkind}{\enccon{L'} \leq \encpub{L'}:\sign} \\
\typeEval{\alpha_x:S(\intType), \alpha:\unitkind}{\encpub{L'}:\sign}
}
{\typeEval{\alpha_x:{S(\intType)}}{\Sigma \alpha:\atcsign{\intType}.\enccon{L'} \leq \Sigma \alpha:\atcsign{\alpha_x}.\enccon{L'}:\sign}}
$$

We next prove that {\typeEval{}{\atksign{S(\intType)} \leq \atksign{\basekind}:\sign}}.
From \WfcInt, it follows that \typeEval{}{\intType:\basekind}.
From \SubkSg, it follows that \typeEval{}{S(\intType)\leq \basekind:\kind}.
From \SubsK, it follows that \typeEval{}{\atksign{S(\intType)}\leq \atksign{\basekind}:\sign}.


We now prove that $\typeEval{\alpha_x:\basekind}{\Sigma \alpha:\atcsign{\alpha_x}.\encpub{L'}:\sign}$.
Indeed, we have that \typeEval{\alpha_x:\basekind}{\atcsign{\alpha_x}:\sign}.
From Lemma~\ref{lem:ml:signature:property:public}, \typeEval{}{\encpub{L'}:\sign}.
\purpletext{From Lemma~\ref{lem:ml:weakening}}, \typeEval{\alpha_x:\basekind,\alpha:\unitkind}{\encpub{L'}:\sign}.
From \WfsSix, it follows that \typeEval{\alpha_x:\basekind}{\Sigma\alpha:\atcsign{\alpha_x}.\encpub{L'}:\sign}.

Thus, we have that:
$$\footnotesize
\LabelRule{\SubsDPr}{
{\typeEval{}{\atksign{S(\intType)} \leq \atksign{\basekind}:\sign}}\\
\typeEval{\alpha_x:{S(\intType)}}{\Sigma \alpha:\atcsign{\intType}.\enccon{L'} \leq \Sigma \alpha:\atcsign{\alpha_x}.\enccon{L'}:\sign}\\
\typeEval{\alpha_x:\basekind}{\Sigma \alpha:\atcsign{\alpha_x}.\encpub{L'}:\sign}
}
{\typeEval{}{\Sigma \alpha_x:\atksign{S(\intType)}.\Sigma \alpha:\atcsign{\intType}.\enccon{L'} \leq \Sigma \alpha_x:\atksign{\basekind}.\Sigma \alpha:\atcsign{\alpha_x}.\encpub{L'}}}
$$

\emcase{Case 3:} $L = x\concat L'$ and $\decPolicy{\policy}(x) = f$, where \typeEval{}{f:\intType \rightarrow \tau} for some $\tau$.
We need to prove that 
\begin{multline*}
\typeEval{}{\Sigma \alpha_f:\atksign{S(\intType)}.\Sigma \alpha_1:\atcsign{\intType}.\Sigma \alpha_2:\atcsign{\intType \rightarrow \tau}.\enccon{L'} \leq \\
\Sigma \alpha_f:\atksign{\basekind}.\Sigma \alpha_1:\atcsign{\alpha_f}.\Sigma \alpha_2:\atcsign{\alpha_f \rightarrow \tau}.\encpub{L'}:\sign}.
\end{multline*}

We first prove that \typeEval{\alpha_f:S(\intType)}{\atcsign{\intType \rightarrow \tau} \leq \atcsign{\alpha_f \rightarrow \tau}:\sign}.

$$\footnotesize
\LabelRule{\SubsEq}{
\LabelRuleProof{\SeqAtc}{
\LabelRuleProof{\EqcFn}{
\LabelRuleProof{\EqcS}{
\LabelRuleProof{\EqcSgTwo}{
\LabelRuleProof{\WfcV}{ (\alpha_f:S(\intType))(\alpha_f) = S(\intType)}
{\typeEval{\alpha_f:S(\intType)}{\alpha_f:S(\intType)}}}
{\typeEval{\alpha_f:S(\intType)}{\alpha_f \equiv \intType:\basekind}}}
{\typeEval{\alpha_f:S(\intType)}{\intType \equiv \alpha_f:\basekind}} \\
\EmptyRule{
\LabelRuleProof{\EqcR}{\typeEval{}{\tau:\basekind}}
{\typeEval{}{\tau \equiv \tau:\basekind}}
}{\typeEval{\alpha_f:S(\intType)}{\tau \equiv \tau:\basekind}\ \text{(Lem.~\ref{lem:ml:weakening})}}}
{\typeEval{\alpha_f:S(\intType)}{\intType \rightarrow \tau \equiv \alpha_f \rightarrow \tau:\basekind}}}
{\typeEval{\alpha:S(\intType)}{\atcsign{\intType \rightarrow \tau} \equiv \atcsign{\alpha_f \rightarrow \tau}:\sign}}}
{\typeEval{\alpha_f:S(\intType)}{\atcsign{\intType \rightarrow \tau} \leq \atcsign{\alpha_f \rightarrow \tau}:\sign}}
$$

We next prove that \typeEval{\alpha_f:S(\intType),\alpha_1:\unitkind}{\Sigma \alpha_2:\atcsign{\intType \rightarrow \tau}.\enccon{L'} \leq \Sigma \alpha_2:\atcsign{\alpha_f \rightarrow \tau}.\encpub{L'}:\sign}.
We have that 
\begin{itemize}
\item \typeEval{\alpha_f:S(\intType)}{\atcsign{\intType \rightarrow \tau} \leq \atcsign{\alpha_f \rightarrow \tau}:\sign} and hence, \purpletext{from Lemma~\ref{lem:ml:weakening}}, it follows that \typeEval{\alpha_f:S(\intType), \alpha_1:\unitkind}{\atcsign{\intType \rightarrow \tau} \leq \atcsign{\alpha_f \rightarrow \tau}:\sign},

\item \typeEval{}{\enccon{L'} \leq \encpub{L'}:\sign} (from IH) and hence, \purpletext{from Lemma~\ref{lem:ml:weakening}}, it follows that \typeEval{\alpha_f:S(\intType),\alpha_1:\unitkind,\alpha_2:\unitkind}{\enccon{L'} \leq \encpub{L'}:\sign},

\item \typeEval{}{\encpub{L'}:\sign} (from Lemma~\ref{lem:ml:signature:property:public}) and hence, from \purpletext{Lemma~\ref{lem:ml:weakening}}, it follows that \typeEval{\alpha_f:S(\intType),\alpha_1:\unitkind,\alpha_2:\unitkind}{\encpub{L'}:\sign}
\end{itemize}

Since $\fstsign{\atcsign{c}} = \unitkind$ for any $c$, we have that:
$$\footnotesize
\LabelRule{\SubsDPr}
{\typeEval{\alpha_f:S(\intType), \alpha_1:\unitkind}{\atcsign{\intType \rightarrow \tau} \leq \atcsign{\alpha_f \rightarrow \tau}:\sign} \\
\typeEval{\alpha_f:S(\intType),\alpha_1:\unitkind,\alpha_2:\unitkind}{\enccon{L'} \leq \encpub{L'}:\sign}\\
\typeEval{\alpha_f:S(\intType),\alpha_1:\unitkind,\alpha_2:\unitkind}{\encpub{L'}:\sign}
}
{\purpletext{\typeEval{\alpha_f:S(\intType),\alpha_1:\unitkind}{\Sigma \alpha_2:\atcsign{\intType \rightarrow \tau}.\enccon{L'} \leq \Sigma \alpha_2:\atcsign{\alpha_f \rightarrow \tau}.\encpub{L'}:\sign}}}
$$

We have that:
\begin{itemize}
\item \typeEval{\alpha_f:S(\intType)}{\atcsign{\intType} \leq \atcsign{\alpha_f}:\sign} (as in Case 2),

\item \purpletext{\typeEval{\alpha_f:S(\intType),\alpha_1:\unitkind}{\Sigma \alpha_2:\atcsign{\intType \rightarrow \tau}.\enccon{L'} \leq \Sigma \alpha_2:\atcsign{\alpha_f \rightarrow \tau}.\encpub{L'}:\sign}} (as proven above),

\item \typeEval{\alpha_f:S(\intType)}{{\alpha_f \rightarrow \tau}:\basekind}.
From the \WfsThree\ rule, \typeEval{\alpha_f:S(\intType)}{\atcsign{\alpha_f \rightarrow \tau}:\sign}.
From Lemma~\ref{lem:ml:weakening}, \typeEval{\alpha_f:S(\intType), \alpha_1:\unitkind}{\atcsign{\alpha_f \rightarrow \tau}:\sign}.

From Lemma~\ref{lem:ml:signature:property:public}, \typeEval{}{\encpub{L'}:\sign} and hence, from Lemma~\ref{lem:ml:weakening}, \typeEval{\alpha_f:S(\intType), \alpha_1:\unitkind,\alpha_2:\unitkind}{\encpub{L'}:\sign}.

Since \typeEval{\alpha_f:S(\intType), \alpha_1:\unitkind}{\atcsign{\alpha_f \rightarrow \tau}:\sign} and  \typeEval{\alpha_f:S(\intType), \alpha_1:\unitkind,\alpha_2:\unitkind}{\encpub{L'}:\sign}, from the \WfsSix\ rule, it follows that \typeEval{\alpha_f:S(\intType), \alpha_1:\unitkind}{\Sigma \alpha_2.\atcsign{\alpha_f \rightarrow \tau}.\encpub{L'}:\sign}.
\end{itemize}

Since $\fstsign{\atcsign{c}} = \unitkind$ for any $c$, we have that:
$$\footnotesize
\LabelRule{SubDPr}{
\typeEval{\alpha_f:S(\intType)}{\atcsign{\intType} \leq \atcsign{\alpha_f}:\sign}\\
\typeEval{\alpha_f:S(\intType),\alpha_1:\unitkind}{\Sigma \alpha_2:\atcsign{\intType \rightarrow \tau}.\enccon{L'} \leq \Sigma \alpha_2:\atcsign{\alpha_f \rightarrow \tau}.\encpub{L'}:\sign}\\
\typeEval{\alpha_f:S(\intType), \alpha_1:\unitkind}{\Sigma \alpha_2.\atcsign{\alpha_f \rightarrow \tau}.\encpub{L'}:\sign}
}
{\purpletext{\typeEval{\alpha_f:S(\intType)}{\Sigma \alpha_1: \atcsign{\intType}. \Sigma \alpha_2:\atcsign{\intType \rightarrow \tau}.\enccon{L'} \leq \Sigma \alpha_1: \atcsign{\alpha_f}.\Sigma \alpha_2:\atcsign{\alpha_f \rightarrow \tau}.\encpub{L'}:\sign}}}
$$

We have that 
\begin{itemize}
\item \typeEval{}{\atksign{S(\intType)} \leq \atksign{\basekind}:\sign} (as in Case 2), and
\item \typeEval{\alpha_f:S(\intType)}{\Sigma \alpha_1:\atcsign{\intType}.\Sigma \alpha_2:\atcsign{\intType \rightarrow \tau}.\enccon{L'} \leq \Sigma \alpha_1:\atcsign{\alpha_f}.\Sigma \alpha_2:\atcsign{\alpha_f \rightarrow \tau}.\encpub{L'}:\sign} (as proven above),

\item \typeEval{\alpha_f:\basekind}{{\alpha_f \rightarrow \tau}:\basekind}.
From the \WfsThree\ rule, \typeEval{\alpha_f:\basekind}{\atcsign{\alpha_f \rightarrow \tau}:\sign}.
From Lemma~\ref{lem:ml:weakening}, \typeEval{\alpha_f:\basekind,\alpha_1:\unitkind}{\atcsign{\alpha_f \rightarrow \tau}:\sign}.

From Lemma~\ref{lem:ml:signature:property:public}, \typeEval{}{\encpub{L'}:\sign} and hence, from Lemma~\ref{lem:ml:weakening}, \typeEval{\alpha_f:\basekind, \alpha_1:\unitkind,\alpha_2:\unitkind}{\encpub{L'}:\sign}.

Since \typeEval{\alpha_f:\basekind,\alpha_1:\unitkind}{\atcsign{\alpha_f \rightarrow \tau}:\sign} and \typeEval{\alpha_f:\basekind, \alpha_1:\unitkind,\alpha_2:\unitkind}{\encpub{L'}:\sign}, from the \WfsSix\ rule, \typeEval{\alpha_f:\basekind, \alpha_1:\unitkind}{\Sigma \alpha_2:\atcsign{\alpha_f \rightarrow \tau}.\encpub{L'}:\sign}.

Since \typeEval{\alpha_f:\basekind}{\alpha_f:\basekind}, from the \WfsThree\ rule, \typeEval{\alpha_f:\basekind}{\atcsign{\alpha_f}:\sign}.
Since \typeEval{\alpha_f:\basekind}{\atcsign{\alpha_f}:\sign} and \typeEval{\alpha_f:\basekind, \alpha_1:\unitkind}{\Sigma \alpha_2:\atcsign{\alpha_f \rightarrow \tau}.\encpub{L'}:\sign}, from the \WfsSix\ rule, \typeEval{\alpha_f:\basekind}{\Sigma \alpha_1:\atcsign{\alpha_f}.\Sigma \alpha_2:\atcsign{\alpha_f \rightarrow \tau}.\encpub{L'}:\sign}.


\end{itemize}

From the \SubsDPr\ rule, we have that 
\begin{multline*}
\typeEval{}{\Sigma \alpha_f:\atksign{S(\intType)}.\Sigma \alpha_1:\atcsign{\intType}.\Sigma \alpha_2:\atcsign{\intType \rightarrow \tau}.\enccon{L'} \leq \\
\Sigma \alpha_f:\atksign{\basekind}.\Sigma \alpha_1:\atcsign{\alpha_f}.\Sigma \alpha_2:\atcsign{\alpha_f \rightarrow \tau}.\encpub{L'}:\sign}.
\end{multline*}

\end{proof}

\begin{lemma}[Pitts closure]
\label{lem:ml:pitts-closure}
\tealtext{For any $x$, $f$ and $a$ in the policy, it follows that $R_x$, $R_f$ and $R_{f\circ a}$ are Pitts closed.}
\end{lemma}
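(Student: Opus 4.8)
The plan is to prove each relation equals its own biorthogonal, i.e.\ to establish $R = \rarb{st}{R}$ for $R \in \set{R_x, R_f, R_{f\circ a}}$. Since biorthogonal closure is a closure operator, the inclusion $R \subseteq \rarb{st}{R}$ holds for any $R$: if $\tuple{w,w'} \in R$ then for every $\tuple{v,v'} \in \rarb{s}{R}$ we have $\terminating{v\ w} \Leftrightarrow \terminating{v'\ w'}$ directly from the definition of $\rarb{s}{R}$, so $\tuple{w,w'}\in\rarb{st}{R}$. Hence in every case it remains only to establish $\rarb{st}{R}\subseteq R$. For $R_x$ this is immediate: by the definition of $\rarb{t}{S}$, every pair in $\rarb{st}{R_x}$ consists of two closed values of type $\intType$, and $R_x$ already contains \emph{all} such pairs, so $\rarb{st}{R_x}\subseteq R_x$.

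The interesting case is $R_f$; I would treat $R_{f\circ a}$ identically, with the terminating function $f\circ a:\intType\rightarrow\tau$ in place of $f$. First I would invoke the standard property that the value interpretation of any type is Pitts closed, so $\inter{\tau_f}{\emptyset} = \rarb{st}{\inter{\tau_f}{\emptyset}}$~\cite{Crary-POPL-17}. Now take $\tuple{v_1,v_2}\in\rarb{st}{R_f}$; by the typing annotations built into $\rarb{t}{S}$ both $v_i$ are integer values, and since $f$ terminates on integers (our coherence assumption on policies) we have $f\ v_i \reduce u_i$ for some values $u_i$. It suffices to show $\tuple{u_1,u_2}\in\inter{\tau_f}{\emptyset}$, for then $\tuple{f\ v_1,f\ v_2}\in\rev{\inter{\tau_f}{\emptyset}}$ and hence $\tuple{v_1,v_2}\in R_f$. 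Using Pitts closedness of $\inter{\tau_f}{\emptyset}$, it is enough to verify $\tuple{u_1,u_2}\in\rarb{st}{\inter{\tau_f}{\emptyset}}$, i.e.\ that $\terminating{c_1\ u_1}\Leftrightarrow\terminating{c_2\ u_2}$ for every $\tuple{c_1,c_2}\in\rarb{s}{\inter{\tau_f}{\emptyset}}$.

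The key step is to lift each such continuation pair through $f$. Given $\tuple{c_1,c_2}\in\rarb{s}{\inter{\tau_f}{\emptyset}}$ (so each $c_i:\tau_f\rightarrow\unittype$), set $k_i \triangleq \lambda x:\intType.\,c_i\,(f\ x)$, a value of type $\intType\rightarrow\unittype$. I claim $\tuple{k_1,k_2}\in\rarb{s}{R_f}$: for any $\tuple{w_1,w_2}\in R_f$ we have $\tuple{f\ w_1,f\ w_2}\in\rev{\inter{\tau_f}{\emptyset}}$ by the definition of $R_f$, so $f\ w_i\reduce u_i'$ with $\tuple{u_1',u_2'}\in\inter{\tau_f}{\emptyset}$; since $\tuple{c_1,c_2}\in\rarb{s}{\inter{\tau_f}{\emptyset}}$ we get $\terminating{c_1\ u_1'}\Leftrightarrow\terminating{c_2\ u_2'}$, and because $k_i\ w_i \reduce c_i\,(f\ w_i)$ and evaluation is deterministic, $\terminating{k_1\ w_1}\Leftrightarrow\terminating{k_2\ w_2}$. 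Now $\tuple{v_1,v_2}\in\rarb{st}{R_f}$ together with $\tuple{k_1,k_2}\in\rarb{s}{R_f}$ give $\terminating{k_1\ v_1}\Leftrightarrow\terminating{k_2\ v_2}$; since $k_i\ v_i\reduce c_i\,(f\ v_i)\reduce c_i\ u_i$, this is exactly $\terminating{c_1\ u_1}\Leftrightarrow\terminating{c_2\ u_2}$, as required.

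The main obstacle I anticipate is the bookkeeping around termination under the $\beta$-reductions $k_i\ v_i\reduce c_i\,(f\ v_i)\reduce c_i\ u_i$ and the appeal to determinacy of evaluation, together with citing Pitts closedness of $\inter{\tau_f}{\emptyset}$ at the appropriate level of generality. The conceptual content—that $R_f$ is the pullback of the Pitts-closed relation $\inter{\tau_f}{\emptyset}$ along the fixed terminating map $f$, and that pullback along a total map preserves biorthogonal closure—is straightforward, so I expect the difficulty to be entirely in the careful handling of the closures rather than in any new idea.
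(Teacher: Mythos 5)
Your proof is correct, and at the conceptual level it is the same as the paper's: both arguments reduce Pitts closedness of $R_f$ to Pitts closedness of $\inter{\tau_f}{\emptyset}$ (which both of you import from Crary's interpretation lemma) and then transport the biorthogonal closure backwards along the fixed terminating map $f$. The difference is in how that transport is carried out. The paper packages it as an appeal to its Lemma~\ref{lem:ml:monadic}: it observes that $x$ is \emph{active} in $f\ x$ and that $f$ maps $R_f$-related values into $\rarb{stev}{\inter{\tau_f}{\emptyset}}$, and then the lemma immediately yields that $f$ also maps $\rst{R_f}$-related values (viewed as terms in $\rarb{stev}{R_f}$) into $\rarb{ev}{\inter{\tau_f}{\emptyset}}$, whence $\rst{R_f}\subseteq R_f$. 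You instead inline the content of that lemma: you explicitly build the composite continuations $k_i=\lambda x\mathord{:}\intType.\,c_i(f\ x)$, verify $\tuple{k_1,k_2}\in\rarb{s}{R_f}$, and discharge the double orthogonal by hand. Your version is self-contained and avoids the activeness side condition entirely (you work only with values, for which it is vacuous), at the cost of redoing the determinacy-of-evaluation bookkeeping that the paper's general lemma already absorbs; the paper's version is shorter here but leans on a lemma whose proof contains essentially your construction. One small point worth making explicit in your write-up: membership of $\tuple{f\ v_1,f\ v_2}$ in $\rev{\inter{\tau_f}{\emptyset}}$ also requires the co-termination clause $\terminating{f\ v_1}\Leftrightarrow\terminating{f\ v_2}$, which you correctly discharge via the paper's standing assumption that declassifiers terminate on values --- that appeal is doing real work and should not be left implicit.
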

\begin{proof}
\purpletext{We consider $R_x$ first}.
The proof of this case is trivial since any $v_1$ and $v_2$ s.t. \typeEval{}{v_i:\intType}, we have that $\tuple{v_1,v_2} \in R_x$.

\purpletext{We next consider $R_f$, where \typeEval{}{f:\intType \rightarrow \tau}}.
We consider \typeEval{x:\intType}{f\ x}. 
\purpletext{We have that $x$ is active in $f\ x$}.
We now consider arbitrary $v_1$ and $v_2$ s.t. \tuple{v_1,v_2} in $R_f$.
From the definition of $R_f$, we have that $\tuple{f\ v_1,f\ v_2} \in \rev{\inter{\tau}{\emptyset}}$.
{Since \typeEval{}{\tau:\basekind}, from Lemma~\ref{lem:ml:crary:interpretation}, we have that $\tuple{\_,\_,\inter{\tau}{\emptyset},\inter{\tau}{\emptyset}} \in \inter{\basekind}{\emptyset}$ (notice that $\emptyset \in \interenvfull{.}$)}.
Therefore, \inter{\tau}{\emptyset} is Pitts closed, that is $\inter{\tau}{\emptyset} = \rst{\inter{\tau}{\emptyset}}$.
\purpletext{Since $\tuple{f\ v_1,f\ v_2} \in \rev{\inter{\tau}{\emptyset}}$, it follows that $\tuple{f\ v_1,f\ v_2} \in \rarb{stev}{\inter{\tau}{\emptyset}}$}.

We have proven that:
\begin{itemize}
\item $x$ is active in $f\ x$,
\item for all $\tuple{v_1,v_2} \in R_f$, $\tuple{(f\ x)[x\mapsto v_1], (f\ x)[x\mapsto v_2]} \in \rarb{stev}{\inter{\tau}{\emptyset}}$.
\end{itemize}

\purpletext{From Lemma~\ref{lem:ml:monadic}}, for all $\tuple{w_1,w_2} \in \rst{R_f}$, we have that $\tuple{f\ w_1, f\ w_2} \in \rarb{stev}{\inter{\tau}{\emptyset}} = \rarb{ev}{\inter{\tau}{\emptyset}}$. 
From the definition of $R_f$, we have that $\tuple{w_1,w_2} \in R_f$.

%
%
%
%
\end{proof}

\begin{lemma}
\label{lem:ml:consenv:value:property}
For any $L \subseteq \varPolicy{\policy}$ and \condenv{\rho}{L}{\policy}, it follows that
\begin{itemize}
\item \typeEvalP{}{\rho_L(m_\policy):\enccon{L}} and \typeEvalP{}{\rho_R(m_\policy):\enccon{L}}, and 
\item \typeEvalP{}{\rho_L(m_\policy):\encpub{L}} and \typeEvalP{}{\rho_R(m_\policy):\encpub{L}}.
\end{itemize}
\end{lemma}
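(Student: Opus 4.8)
The plan is to prove both statements by a single induction on the list $L$, following exactly the three cases of Definition~\ref{def:policyenv}, and to reduce the opaque-signature claims to the transparent ones. The key structural observation is that $\enccon{L}$ is a subsignature of $\encpub{L}$ (Lemma~\ref{lem:ml:signature:property:subsign}); hence once I have established $\typeEvalP{}{\rho_L(m_\policy):\enccon{L}}$ (and likewise for $\rho_R$), the corresponding judgment for $\encpub{L}$ follows immediately from the subsumption rule \WfmSub\ together with $\typeEval{}{\enccon{L} \leq \encpub{L}:\sign}$. So the real content is the first bullet, the typing of the module values at the transparent signature, and the second bullet is then essentially free.

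For the transparent induction I would proceed as follows. In the base case $L = \emptylist$, Definition~\ref{def:policyenv} gives $\rho_L(m_\policy) = \rho_R(m_\policy) = \unitmod$ and $\enccon{\emptylist} = \unitsign$, so \WfmU\ yields the claim directly. In the case $L = x\concat L'$ with $x \notin \dom{\decPolicy{\policy}}$, the value $\rho_L(m_\policy)$ is the pair $\tuple{\atcmod{\intType},\tuple{\attmod{v_1},V_1'}}$ with $\tuple{v_1,v_2} \in R_x$ and $\rho'(m_\policy) = \tuple{V_1',V_2'}$ for some $\rho'$ with $\condenv{\rho'}{L'}{\policy}$. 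I would type the three pieces separately: $\typeEvalP{}{\atcmod{\intType}:\atksign{S(\intType)}}$ by \WfmAtc, using $\typeEval{}{\intType:S(\intType)}$ from \WfcSg\ and \WfcInt; $\typeEvalP{}{\attmod{v_1}:\atcsign{\intType}}$ by \WfmAtt, using $\typeEval{}{v_1:\intType}$ which comes from $\tuple{v_1,v_2}\in R_x$; and $\typeEvalP{}{V_1':\enccon{L'}}$ from the induction hypothesis applied to $L'$ and $\rho'$. Assembling these with the pairing rule \WfmPr\ then yields the signature $\Sigma \alpha_x:\atksign{S(\intType)}.\Sigma \alpha:\atcsign{\intType}.\enccon{L'} = \enccon{L}$. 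The case $\decPolicy{\policy}(x) = f$ is analogous but with an extra declassifier component: $\rho_L(m_\policy) = \tuple{\atcmod{\intType},\tuple{\attmod{v_1},\tuple{\attmod{f},V_1'}}}$, where $\typeEval{}{v_1:\intType}$ from $\tuple{v_1,v_2}\in R_f$ and $\typeEval{}{f:\intType\rightarrow\tau_f}$ from the policy assumption give $\typeEvalP{}{\attmod{f}:\atcsign{\intType\rightarrow\tau_f}}$ via \WfmAtt; the nested pairs again assemble by repeated use of \WfmPr. The argument for $\rho_R(m_\policy)$ is identical, using the right components of each tuple.

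The step I expect to require care is discharging the side condition $\alpha \notin FV(\sigma_2)$ in every application of \WfmPr. This is exactly where transparency matters: in $\enccon{L}$ every component type is a closed concrete type ($\intType$ or $\intType\rightarrow\tau_f$), so the freshly chosen constructor variables $\alpha_x$, $\alpha_f$, $\alpha_1$, $\alpha_2$ never occur free in the remaining signature, and the side conditions hold trivially. It is precisely the failure of this freshness for $\encpub{L}$, where $\atcsign{\alpha_x}$ and $\atcsign{\alpha_f\rightarrow\tau_f}$ genuinely mention the bound variable, that forces me to obtain the opaque typing indirectly through subsumption rather than by a direct pairing derivation. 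I would also invoke Lemma~\ref{lem:ml:signature:property:private} (well-formedness of $\enccon{L'}$) and Lemma~\ref{lem:ml:signature:property:public} as needed to satisfy the well-formedness premises carried by the typing and subsumption rules, but these are routine once the inductive structure is in place.
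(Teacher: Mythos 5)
Your proposal is correct and follows essentially the same route as the paper: the paper likewise obtains the transparent-signature typing directly from the structure imposed by Definition~\ref{def:policyenv} (which your induction merely makes explicit) and then derives the opaque-signature typing from Lemma~\ref{lem:ml:signature:property:subsign} together with the subsumption rule \WfmSub. Your additional observation about why the $\alpha \notin FV(\sigma_2)$ side condition of \WfmPr\ holds for the transparent signature but not the opaque one is a sound and accurate elaboration of the same argument.
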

\begin{proof}
The first part of the Lemma~\ref{lem:ml:consenv:value:property} is from the definition of \condenv{\rho}{L}{\policy}.
The second part follows from the first part, Lemma~\ref{lem:ml:signature:property:subsign}, {and the subsumption rule}. 
\end{proof}

\begin{lemma}
\label{lem:ml:consenv:property:one}
{
Suppose that $L \subseteq \varPolicy{\policy}$, $\condenv{\rho}{L}{\policy}$, $\rho(\alpha_\policy) = \tuple{c_1,c_2,Q}$, $k = \fstsign{\encpub{L}}$.
It follows that:
\begin{itemize}
\item \typeEval{}{k:\kind},
\item \typeEval{}{c_1:\rho_L(k)}, \typeEval{}{c_2:\rho_R(k)}, and
\item $\tuple{c_1,c_2,Q,Q} \in \inter{k}{\rho}$.
\end{itemize}}
\end{lemma}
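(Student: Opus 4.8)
The plan is to prove all three conclusions simultaneously by induction on the list $L$, following the recursive structure of Definition~\ref{def:policyenv}. First I would observe that $k = \fstsign{\encpub{L}}$ is a closed kind: by Fig.~\ref{fig:ml:signatures} and the definition of $\fstsign$ (Fig.~\ref{fig:ml:static-portion:kind}), $k$ is built only from $\unitkind$, $\basekind$, and $\Sigma$-kinds over fresh bound constructor variables, none of which is $\alpha_\policy$ or $m_\policy$. Consequently $\rho_L(k) = \rho_R(k) = k$, so the two well-typedness obligations reduce to $\typeEval{}{c_i:k}$. Moreover, since $k$ is closed with respect to $\dom{\rho} = \set{\alpha_\policy, m_\policy}$, its interpretation $\inter{k}{\rho}$ is independent of $\rho$ (a standard coincidence property: $\inter{k}{\rho}$ depends only on $\rho$ restricted to the free variables of $k$). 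This observation is what lets me invoke the induction hypothesis, which is stated for the tail environment $\rho'$, inside the locally extended environment that arises when unfolding the interpretation of a $\Sigma$-kind.

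For the base case $L = \emptylist$ we have $k = \unitkind$ and $\rho(\alpha_\policy) = \tuple{\unitcon,\unitcon,\tuple{}}$, so all three claims follow directly from rules \WfkU\ and \WfcUc\ together with the definition of $\inter{\unitkind}{\rho}$. For the inductive steps I compute $k$ explicitly. When $x \notin \dom{\decPolicy{\policy}}$, $k = \Sigma \alpha_x:\basekind.\Sigma \alpha:\unitkind.k'$ with $k' = \fstsign{\encpub{L'}}$, and $\rho(\alpha_\policy) = \tuple{c_1,c_2,Q}$ with $c_i = \tuple{\intType,\tuple{\unitcon,c_i'}}$ and $Q = \tuple{R_x,\tuple{\tuple{},Q'}}$; when $\decPolicy{\policy}(x)=f$ there is one extra $\Sigma \alpha:\unitkind$ layer and $Q = \tuple{R_f,\tuple{\tuple{},\tuple{\tuple{},Q'}}}$.

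In each case I would establish part~(1) from \WfkB, \WfkU, and \WfkP\ (using the IH $\typeEval{}{k':\kind}$ and weakening, Lemma~\ref{lem:ml:weakening}, to move $k'$ into the extended context), and part~(2) from \WfcInt, \WfcUc, and \WfcPr. For part~(3) I unfold the definition of $\inter{\Sigma\alpha:k_1.k_2}{\rho}$ one layer at a time: at the outermost layer the obligation is $\tuple{\intType,\intType,R_x,R_x}\in\inter{\basekind}{\rho}$ (resp.\ with $R_f$), which holds because $R_x$ (resp.\ $R_f$) is a relation in $\relDel{\intType,\intType}$ and is Pitts closed by Lemma~\ref{lem:ml:pitts-closure}; each intervening $\unitkind$ layer is discharged by the definition of $\inter{\unitkind}{\rho}$; and the innermost obligation $\tuple{c_1',c_2',Q',Q'}\in\inter{k'}{\rho''}$, with $\rho''$ the locally extended environment, follows from the IH membership $\tuple{c_1',c_2',Q',Q'}\in\inter{k'}{\rho'}$ via the coincidence property, since $k'$ is closed.

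The main obstacle I anticipate is precisely this reconciliation of environments: the induction hypothesis delivers membership in $\inter{k'}{\rho'}$ for the environment $\rho'$ produced by $\condenv{\rho'}{L'}{\policy}$, whereas unfolding the $\Sigma$-interpretation demands membership in $\inter{k'}{\rho''}$, a syntactically different environment. Making the coincidence argument rigorous requires noting that the bound constructor variables introduced for the head of $L$ are, by the freshness conventions of the encoding, distinct from everything occurring in $\encpub{L'}$, so that $k'$ is genuinely closed and its interpretation is environment-insensitive. Everything else is a routine unfolding of the definitions of $\fstsign$, $\encpub{-}$, and the logical interpretation of kinds.
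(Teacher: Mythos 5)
Your proposal is correct and follows essentially the same route as the paper's proof: induction on $L$ mirroring Definition~\ref{def:policyenv}, with part~(1) from the kind-formation rules plus weakening, part~(2) from \WfcPr\ after noting $k$ is closed so $\rho_L(k)=\rho_R(k)=k$, and part~(3) by unfolding the $\Sigma$-kind interpretation layer by layer, using Pitts-closedness of $R_x$ and $R_f$ (Lemma~\ref{lem:ml:pitts-closure}) at the \basekind\ layer. The environment-reconciliation step you flag as the main obstacle is handled in the paper exactly as you propose, by observing $\tuple{\rho',\rho_2}\in\interenv{.}$ and invoking the coincidence property (Lemma~\ref{lem:ml:crary:interpretation}) on the closed kind $\fstsign{\encpub{L'}}$.
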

\begin{proof}
We prove this lemma by induction on $L$, using the definition of $\condenv{\rho}{L}{\policy}$.

\emcase{Case 1:} $L = \emptylist$.
We have that $\sigma = \unitsign$ and $k = \unitkind$, $\rho(\alpha_\policy) = \tuple{\unitcon,\unitcon,\tuple{}}$.
We can easily check that \typeEval{}{\unitkind:\kind}, \typeEval{}{\unitcon:\unitkind}, and $\tuple{\unitcon,\unitcon, \tuple{},\tuple{}} \in \inter{\unitkind}{\rho}$.

\emcase{Case 2:} $L = x\concat L'$, where $x \not \in \dom{\decPolicy{\policy}}$.
We have that 
\begin{itemize}
\item $\sigma = \Sigma \alpha_x:\atksign{\basekind}.\Sigma \alpha:\atcsign{\alpha_x}.\encpub{L'}$, and
\item $k =  \Sigma \alpha_x:\basekind.\Sigma \alpha:\unitkind.\fstsign{\encpub{L'}}$,
\item $Q = \tuple{R_x,\tuple{\tuple{}, Q'}}$,
\item $\rho(\alpha_\policy) = \tuple{\tuple{\intType, \tuple{\unitcon, c_1'}}, \tuple{\intType,\tuple{\unitcon,c_2'}}, \tuple{R_x,\tuple{\tuple{}, Q'}}}$,
\item \condenv{\rho'}{L'}{\policy}, where $\rho'(\alpha_\policy) = \tuple{c_1',c_2',Q'}$.
\end{itemize}


We need to prove that:
\begin{itemize}
\item \typeEval{}{\Sigma \alpha_x:\basekind.\Sigma \alpha:\unitkind.\fstsign{\encpub{L'}}:\kind},
\item \typeEval{}{\tuple{\intType, \tuple{\unitcon, c_1'}}: \rho_L(\Sigma \alpha_x:\basekind.\Sigma \alpha:\unitkind.\fstsign{\encpub{L'}})},
\item \typeEval{}{\tuple{\intType, \tuple{\unitcon, c_2'}}: \rho_L(\Sigma \alpha_x:\basekind.\Sigma \alpha:\unitkind.\fstsign{\encpub{L'}})},
\item $\tuple{\tuple{\intType,\tuple{\unitcon, c_1'}}, \tuple{\intType,\tuple{\unitcon, c_2'}}, \tuple{R_x,\tuple{\tuple{}, Q'}}} \in \inter{\Sigma \alpha_x:\basekind.\Sigma \alpha:\unitkind.\fstsign{\encpub{L'}}}{\rho}$.
\end{itemize}

We first prove that \typeEval{}{\Sigma \alpha_x:\basekind.\Sigma \alpha:\unitkind.\fstsign{\encpub{L'}}:\kind}.
From IH, we have that \typeEval{}{\fstsign{\encpub{L'}}:\kind}.
From Lemma~\ref{lem:ml:weakening}, \typeEval{\alpha:\unitkind}{\fstsign{\encpub{L'}}:\kind}.
From rule \WfkP, \typeEval{}{\Sigma \alpha:\unitkind.\fstsign{\encpub{L'}}:\kind}.
From Lemma~\ref{lem:ml:weakening}, \typeEval{\alpha_x:\basekind}{\Sigma \alpha:\unitkind.\fstsign{\encpub{L'}}:\kind}.
From rule \WfkP, \typeEval{}{\Sigma \alpha_x:\basekind.\Sigma \alpha:\unitkind.\fstsign{\encpub{L'}}:\kind}.

We next prove that \typeEval{}{\tuple{\intType, \tuple{\unitcon, c_1'}}: \rho_L(\Sigma \alpha_x:\basekind.\Sigma \alpha:\unitkind.\fstsign{\encpub{L'}})} and \typeEval{}{\tuple{\intType, \tuple{\unitcon, c_2'}}: \rho_L(\Sigma \alpha_x:\basekind.\Sigma \alpha:\unitkind.\fstsign{\encpub{L'}})}.
Since their proofs are similar, we only prove here \typeEval{}{\tuple{\intType, \tuple{\unitcon, c_1'}}: \rho_L(\Sigma \alpha_x:\basekind.\Sigma \alpha:\unitkind.\fstsign{\encpub{L'}})}. 
{Notice that since $\Sigma \alpha_x:\basekind.\Sigma \alpha:\unitkind.\fstsign{\encpub{L'}}$ is a closed kind (as proven above)}, we have that $\rho_L(\Sigma \alpha_x:\basekind.\Sigma \alpha:\unitkind.\fstsign{\encpub{L'}}) = \Sigma \alpha_x:\basekind.\Sigma \alpha:\unitkind.\fstsign{\encpub{L'}}$.

We have the following derivations.
Notice that
\begin{itemize}
\item \typeEval{}{\intType:\basekind},
\item if $k'$ is a closed kind, then $k'[\beta \mapsto c'] = k'$ for any $\beta$ and $c'$,

\item \typeEval{}{c_1':\rho'_L(\fstsign{\encpub{L'}})} (from IH) and hence, \typeEval{}{c_1':\fstsign{\encpub{L'}}} (since from IH, \fstsign{\encpub{L'}} is a closed kind).
Thus, \typeEval{}{c_1':\fstsign{\encpub{L'}}[\alpha\mapsto \unitcon]}.

\item \typeEval{}{\fstsign{\encpub{L'}}:\kind} (from IH) and hence, from Lemma~\ref{lem:ml:weakening}, \typeEval{\alpha:\unitkind}{\fstsign{\encpub{L'}}:\kind},

\item \typeEval{}{\Sigma \alpha:\unitkind.\fstsign{\encpub{L'}}:\kind} (from the \WfkP\ rule and \typeEval{\alpha:\unitkind}{\fstsign{\encpub{L'}}:\kind}) and hence \typeEval{\alpha_x:\basekind}{\Sigma \alpha:\unitkind.\fstsign{\encpub{L'}}:\kind} (from Lemma~\ref{lem:ml:weakening}).
\end{itemize}

$$\small
\LabelRule{\WfcPr}{
\typeEval{}{\unitcon:\unitkind} \\
\typeEval{}{c_1':\fstsign{\encpub{L'}}[\alpha\mapsto \unitcon]}\\
\typeEval{\alpha:\unitkind}{\fstsign{\encpub{L'}}:\kind}
}
{\typeEval{}{\tuple{\unitcon,c_1'}:\Sigma \alpha:\unitkind.\fstsign{\encpub{L'}}}}
$$

$$\small
\LabelRule{\WfcPr}{
\typeEval{}{\intType:\basekind} \\
{\typeEval{}{\tuple{\unitcon,c_1'}:\Sigma \alpha:\unitkind.\fstsign{\encpub{L'}}[\alpha_x\mapsto \intType]}}\\
\typeEval{\alpha_x:\basekind}{\Sigma \alpha:\unitkind.\fstsign{\encpub{L'}}:\kind}
}
{\typeEval{}{\tuple{\intType, \tuple{\unitcon, c_1'}}: \Sigma \alpha_x:\basekind.\Sigma \alpha:\unitkind.\fstsign{\encpub{L'}}}}
$$

We now prove that $\tuple{\tuple{\intType,\tuple{\unitcon, c_1'}}, \tuple{\intType,\tuple{\unitcon, c_2'}}, \tuple{R_x,\tuple{\tuple{}, Q'}}} \in \inter{\Sigma \alpha_x:\basekind.\Sigma \alpha:\unitkind.\fstsign{\encpub{L'}}}{\rho}$.
As proven above, we have that \typeEval{}{\tuple{\intType,\tuple{\unitcon, c_1'}}:\rho_L(\Sigma \alpha_x:\basekind.\Sigma \alpha:\unitkind.\fstsign{\encpub{L'}})} and \typeEval{}{\tuple{\intType,\tuple{\unitcon, c_2'}}:\rho_R(\Sigma \alpha_x:\basekind.\Sigma \alpha:\unitkind.\fstsign{\encpub{L'}})}.
We now need to prove that:
\begin{itemize}
\item $\tuple{R_x,\tuple{\tuple{}, Q'}} \in \precan{\simple{\encpub{L}}}$,
\item $\tuple{\intType,\intType, R_x,R_x} \in \inter{\basekind}{\rho}$,
\item \typeEval{}{\tuple{\unitcon,c_1'}:\rho_{1L}(\Sigma \alpha:\unitkind.\fstsign{\encpub{L'}})},
\item \typeEval{}{\tuple{\unitcon,c_2'}:\rho_{1R}(\Sigma \alpha:\unitkind.\fstsign{\encpub{L'}})},
\item $\tuple{\unitcon,\unitcon,\tuple{},\tuple{}} \in \inter{\unitkind}{\rho_1}$
\item $\typeEval{}{c_1':\rho_{2L}(\fstsign{\encpub{L'}})}$,
\item $\typeEval{}{c_2':\rho_{2R}(\fstsign{\encpub{L'}})}$,
\item $\tuple{c_1',c_2',Q',Q'} \in \inter{\fstsign{\encpub{L'}})}{\rho_2}$,
\end{itemize}

where $\rho_1 = \rho, \alpha_x \mapsto \tuple{\intType,\intType, R_x}$ and $\rho_2 = \rho_1, \alpha \mapsto \tuple{\unitcon,\unitcon,\tuple{}}$.

We have that:
\begin{itemize}
\item We have that $R_x \in \precan{\basekind}$, $\tuple{} \in \precan{\unitkind}$.
From IH, $\tuple{c_1',c_2',Q',Q'} \in \inter{\encpub{L'}}{\rho'}$  and hence, $Q' \in \precan{\simple{\fstsign{\encpub{L'}}}}$ (notice that $\simple{\fstsign{\encpub{L'}}} = \fstsign{\encpub{L'}}$ since there is no singleton kind in \fstsign{\encpub{L'}}).
\item From Lemma~\ref{lem:ml:pitts-closure}, $R_x = \rst{R_x}$.
Thus, $\tuple{\intType,\intType, R_x,R_x} \in \inter{\basekind}{\rho}$.

\item As shown in the first derivation in the proof above, we have that \typeEval{}{\tuple{\unitcon, c_1'}:\Sigma \alpha:\unitkind.\fstsign{\encpub{L'}}} and hence, \typeEval{}{\tuple{\unitcon, c_1'}:\rho_{1L}(\Sigma \alpha:\unitkind.\fstsign{\encpub{L'}})} (since $\Sigma \alpha:\unitkind.\fstsign{\encpub{L'}})$ is closed)

\item Similarly, it follows that \typeEval{}{\tuple{\unitcon, c_2'}:\rho_{1R}(\Sigma \alpha:\unitkind.\fstsign{\encpub{L'}})}.

\item {Since $L' \subseteq \varPolicy{\policy}$ and \condenv{\rho'}{L'}{\policy}}, from IH, $\typeEval{}{c_1':\rho'_{L}(k')}$, where $\condenv{\rho'}{L'}{\policy}$ and $k' = \fstsign{\encpub{L'}})$.
\purpletext{Also from IH, \typeEval{}{k'}}.
Thus, $\typeEval{}{c_1':k'}$ and hence, $\typeEval{}{c_1':\rho_{2L}(k')}$.
\item Similarly, $\typeEval{}{c_1':\rho_{2R}(k')}$.

\item {Since $L' \subseteq \varPolicy{\policy}$ and \condenv{\rho'}{L'}{\policy}}, from IH, $\tuple{c_1',c_2',Q',Q'} \in \inter{\fstsign{\encpub{L'}}}{\rho'}$, where $\condenv{\rho'}{L'}{\policy}$.
Also from IH, \typeEval{}{\fstsign{\encpub{L'}}:\kind}.
{Since $\tuple{\rho',\rho_2} \in \interenv{.}$, from Lemma~\ref{lem:ml:crary:interpretation}}, we have that $\inter{\fstsign{\encpub{L'}}}{\rho'} = \inter{\fstsign{\encpub{L'}}}{\rho_2}$.
Therefore, $\tuple{c_1',c_2',Q',Q'} \in \inter{\fstsign{\encpub{L'}}}{\rho_2}$.
\end{itemize}

\emcase{Case 3:} $L = x \concat L'$ and $\decPolicy{\policy}(x) = f$, where \typeEval{}{f:\intType \rightarrow \tau}.
We have that $\encpub{L} = \Sigma \alpha_f:\atksign{\basekind}.\alpha_1:\atcsign{\alpha_f}.\alpha_2:\atcsign{\alpha_f \rightarrow\tau_f}.{\encpub{L'}}$, $k = \Sigma \alpha_f:\basekind.\alpha_1:\unitkind.\alpha_2:\unitkind.\fstsign{\encpub{L'}}$, and $\rho'(\alpha_\policy) = \tuple{c_1',c_2',Q'}$, and 
$$\rho(\alpha_\policy) = \tuple{\tuple{\intType,\tuple{\unitcon,\tuple{\unitcon, c_1'}}}, \tuple{\intType,\tuple{\unitcon,\tuple{\unitcon, c_2'}}},\tuple{R_f, \tuple{\tuple{}, \tuple{\tuple{}, Q'}}}},$$

As in Case 2, we have that \typeEval{}{k:\kind} and \typeEval{}{\tuple{\intType,\tuple{\unitcon,\tuple{\unitcon, c_1'}}}: k}.
We now prove that $\tuple{c_1,c_2,Q,Q} \in \inter{\Sigma \alpha_f:\basekind.\alpha_1:\unitkind.\alpha_2:\unitkind.\fstsign{\encpub{L'}}}{\rho}$.
That is we need to prove that:
\begin{itemize}
\item $\tuple{R_f, \tuple{\tuple{}, \tuple{\tuple{}, Q'}}} \in \precan{\simple{\encpub{L}}}$,
\item $\tuple{\intType,\intType, R_f,R_f} \in \inter{\basekind}{\rho}$,
\item $\tuple{\unitcon,\unitcon, \tuple{},\tuple{}} \in \inter{\unitkind}{\rho_1}$, where $\rho_1 = \rho,\alpha_f \mapsto \tuple{\intType,\intType, R_f}$,
\item $\tuple{\unitcon,\unitcon, \tuple{},\tuple{}} \in \inter{\unitkind}{\rho_2}$, where $\rho_2 = \rho_1,\alpha_1 \mapsto \tuple{\unitcon,\unitcon, \tuple{}}$,
\item $\tuple{c_1',c_2', Q',Q'} \in \inter{\fstsign{\encpub{L'}}}{\rho_3}$, where $\rho_3 = \rho_2,\alpha_2 \mapsto \tuple{\unitcon,\unitcon, \tuple{}}$.
\end{itemize}

The first item can be easily verified (as in Case 2).
From Lemma~\ref{lem:ml:pitts-closure}, $R_f$ is Pitts closed and hence, $\tuple{\intType,\intType, R_f,R_f} \in \inter{\basekind}{\rho}$.
We can easily verify that $\tuple{\unitcon,\unitcon, \tuple{},\tuple{}} \in \inter{\unitkind}{\rho_1}$ and $\tuple{\unitcon,\unitcon, \tuple{},\tuple{}} \in \inter{\unitkind}{\rho_2}$.

We have that $L' \subseteq \varPolicy{\policy}$ and \condenv{\rho'}{L'}{\policy} (since \condenv{\rho}{L}{\policy}), from IH, we have that $\tuple{c_1',c_2',Q',Q'} \in \inter{\fstsign{\encpub{L'}}}{\rho'}$.
Also from IH, we have that \typeEval{}{\fstsign{\encpub{L'}}:\kind}.
Since $\tuple{\rho',\rho_3} \in \interenv{.}$, from Lemma~\ref{lem:ml:crary:interpretation}, we have that $\inter{\fstsign{\encpub{L'}}}{\rho'} = \inter{\fstsign{\encpub{L'}}}{\rho_3}$.
Therefore, we have that $\tuple{c_1',c_2',Q',Q'} \in \inter{\fstsign{\encpub{L'}}}{\rho_3}$.

\end{proof}

\textbf{Lemma~\ref{lem:ml:interpretation-type:policy} (in \S\ref{sec:ml:trni})}
\tealtext{If \condenvalt{\rho_1}{\policy}, \condenvalt{\rho_2}{\policy}, and \typeEval{\pubViewTerm{\policy}}{\tau:\basekind}, then $\inter{\tau}{\rho_1} = \inter{\tau}{\rho_2}$}.
\begin{proof}
From the definition of \condenv{\rho_i}{\varPolicy{\policy}}{\policy},
we have that $\rho_1(\alpha_\policy) = \rho_2(\alpha_\policy) = \tuple{c_1,c_2,Q}$ for some $c_1$, $c_2$, and $Q$.
\purpletext{From Lemma~\ref{lem:ml:consenv:property:one} and the definition of constructor equivalence, we have that:
\begin{itemize}
\item \typeEval{}{c_1 \equiv c_1:\rho_{1L}(\fstsign{\sigmaPol{\policy}})}, \typeEval{}{c_2 \equiv c_2:\rho_{1R}(\fstsign{\sigmaPol{\policy}}))}, and
\item $\tuple{c_1,c_2,Q,Q} \in \inter{\fstsign{\sigmaPol{\policy}})}{\rho_1}$.
\end{itemize}}

In other words, $\tuple{\rho_1,\rho_2} \in \interenv{\alpha_\policy/m_\policy:\sigmaPol{\policy}}$.


\purpletext{Since \typeEval{\pubViewTerm{\policy}}{\tau:\basekind}, from Lemma~\ref{lem:ml:crary:interpretation}, we have that $\tuple{\_,\_,\inter{\tau}{\rho_1},\inter{\tau}{\rho_2}} \in \inter{\basekind}{\rho_1}$.}
From the definition of \inter{\basekind}{\rho} (see Fig.~\ref{fig:ml:logrel:kind-con}), it follows that $\inter{\tau}{\rho_1} = \inter{\tau}{\rho_2}$.
\end{proof}

\textbf{Lemma~\ref{lem:ml:consenv:property:two} (in \S\ref{sec:ml:trni}).}
\tealtext{Suppose that $\condenvalt{\rho}{\policy}$.
It follows that $\rho \in \interenvfull{\pubViewTerm{\policy}}$.}
%
\begin{proof}
We need to prove that \condenvalt{\rho}{\interenv{\alpha_\policy/m_\policy:\encpub{\varPolicy{\policy}}}}.
We claim that for any $L\subseteq \varPolicy{\policy}$ and any \condenv{\rho}{L}{\policy}, it follows that \condenvalt{\rho}{\interenv{\alpha_\policy/m_\policy:\encpub{L}}}.
Then the proof follows directly from the claim.

We now prove the claim.
\purpletext{Suppose that $\rho(\alpha_\policy) = \tuple{c_1,c_2,Q}$ and $\rho(m_\policy) = \tuple{V_1,V_2}$}.
From Lemma~\ref{lem:ml:consenv:property:one}, we have that:
\begin{itemize}
\item \typeEval{}{c_1:\rho_L(k)}, \typeEval{}{c_2:\rho_R(k)} where $k = \fstsign{\encpub{L}}$, and hence, it follows that \typeEval{}{c_1\equiv c_1:\rho_L(k)} and \typeEval{}{c_2\equiv c_2:\rho_R(k)}

\item $\tuple{c_1,c_2,Q,Q} \in \inter{k}{\rho}$.
\end{itemize}

Therefore, $\tuple{\rho,\rho} \in \interenv{\alpha_\policy/m_\policy:\encpub{L}}$. Thus, we only need to prove two following items:
\begin{itemize}
\item $\rho(\alpha_\policy) = \tuple{\fstoptwo{V_1},\fstoptwo{V_2},Q}$,
\item $\tuple{V_1,V_2,Q} \in \inter{\encpub{L}}{\rho}$.
\end{itemize}

We prove these two items by induction on $L$, using the definition of $\condenv{\rho}{L}{\policy}$.
We have four cases.

\emcase{Case 1:} $L = \emptylist$.
We have that $\sigma = \encpub{L} = \unitsign$, \purpletext{$\rho(m_\policy) = \tuple{\unitmod,\unitmod}$ and $\rho(\alpha_\policy) = \tuple{\unitcon,\unitcon, \tuple{}}$}.
In other words, $V_1 = V_2 = \unitmod$ and $Q = \tuple{}$.
Since \typeEval{}{\fstop{\unitmod}{\unitcon}}, from the definition of \inter{\unitsign}{\rho}, we have that $\tuple{V_1,V_2,Q} \in \inter{\sigma}{\rho}$.

\emcase{Case 2:} $L = x\concat L'$, $x\not\in \dom{\decPolicy{\policy}}$.
We have that:
\begin{align*}
\purpletext{\rho(m_\policy)} & = \tuple{\tuple{\atcmod{\intType},\tuple{\attmod{v_1},V_1'}},\tuple{\atcmod{\intType},\tuple{\attmod{v_2},V_2'}}},\\
\purpletext{\rho(\alpha_\policy)} &= \tuple{\tuple{\intType, \tuple{\unitcon,c_1'}},\tuple{\intType, \tuple{\unitcon,c_2'}}, \tuple{R_x, \tuple{\tuple{},Q'}}},
\end{align*}

and $\tuple{v_1,v_2} \in R_x$, and \condenv{\rho'}{L'}{\policy}, where \purpletext{$\rho'(m_\policy) = \tuple{V_1',V_2'}$, $\rho'(\alpha_\policy) = \tuple{c_1',c_2',Q'}$}.
We also have that $\rho = \Sigma \alpha_x:\atksign{\basekind}.\Sigma \alpha:\atcsign{\alpha_x}.\encpub{L'}$.

Since \purpletext{$\rho'(m_\policy) = \tuple{V_1',V_2}$, $\rho'(\alpha_\policy) = \tuple{c_1',c_2',Q'}$, and \condenv{\rho'}{L'}{\policy}}, from IH, we have that:
\begin{itemize}
\item $c_1' = \fstoptwo{V_1'}$, $c_2' = \fstoptwo{V_2'}$, and 
\item \purpletext{$\tuple{V_1',V_2',Q'} \in \inter{\encpub{L'}}{\airforcetext{\rho'}}$}.
\end{itemize}

Therefore, we have that $\fstoptwo{V_1} = \tuple{\intType, \tuple{\unitcon, \fstoptwo{V_1'}}}$, $\fstoptwo{V_2} = \tuple{\intType, \tuple{\unitcon, \fstoptwo{V_2'}}}$.
In other words, $\rho(\alpha_\policy) = \tuple{\fstoptwo{V_1},\fstoptwo{V_2},\tuple{R_x, \tuple{\tuple{},Q'}}}$.

We now need to prove that $\tuple{\tuple{\atcmod{\intType},\tuple{\attmod{v_1},V_1'}}, \tuple{\atcmod{\intType},\tuple{\attmod{v_2},V_2'}}, \tuple{R_x, \tuple{\tuple{},Q'}}} \in \inter{\Sigma \alpha_x:\atksign{\basekind}.\Sigma \alpha:\atcsign{\alpha_x}.\encpub{L'}}{\rho}$.

From the definition of \inter{\Sigma \alpha_x:\atksign{\basekind}.\Sigma \alpha:\atcsign{\alpha_x}.\encpub{L'}}{\rho}, we need to prove that:
\begin{itemize}
\item \typeEvalI{}{V_{1}:\rho_{L}(\encpub{L})},
\item \typeEvalI{}{V_{2}:\rho_{R}(\encpub{L})},
\item $\tuple{\atcmod{\intType},\atcmod{\intType}, R_x} \in \inter{\atksign{\basekind}}{\rho}$,
\item $\tuple{\tuple{\attmod{v_1},V_1'}, \tuple{\attmod{v_2},V_2'}, \tuple{\tuple{},Q'}} \in \inter{\Sigma \alpha:\atcsign{\alpha_x}.\encpub{L'}}{{\rho_1}}$, {where $\rho_1 = \rho, \alpha_x\mapsto \tuple{\intType,\intType,R_x}$}:
    \begin{itemize}
    \item {\typeEvalI{}{\tuple{\attmod{v_1},V_1'}: \Sigma \alpha:\atcsign{\intType}.\encpub{L'}}} (since $\Sigma \alpha:\atcsign{\intType}.\encpub{L'}$ is a closed signature),
    \item {\typeEvalI{}{\tuple{\attmod{v_2},V_2'}: \Sigma \alpha:\atcsign{\intType}.\encpub{L'}}} (since $\Sigma \alpha:\atcsign{\intType}.\encpub{L'}$ is a closed signature),
    \item $\tuple{\attmod{v_1}, \attmod{v_2}, \tuple{}} \in \inter{\atcsign{\alpha_x}}{{\rho_1}}$,
    \item $\tuple{V_1',V_2',Q'}\in \inter{\encpub{L'}}{{\rho_2}}$, where $\rho_2 = \rho,\alpha_x\mapsto \tuple{\intType,\intType,R_x}, \alpha\mapsto \tuple{\fstoptwo{\atcmod{v_1'}},\fstoptwo{\atcmod{v_2'}}, \tuple{}}$.    
    \end{itemize}
\end{itemize}

These items are proven as below.
\begin{itemize}
\item {From {Lemma~\ref{lem:ml:consenv:value:property}}, \typeEvalP{}{V_{1}:\encpub{L}} and hence \typeEvalI{}{V_{1}:\encpub{L}}.
From Lemma~\ref{lem:ml:signature:property:public}, \typeEval{}{\encpub{L}:\sign}.
Thus, \typeEvalI{}{V_{1}:\rho_{L}(\encpub{L})}.}

\item Similarly, we have that \typeEvalI{}{V_{2}:\rho_{R}(\encpub{L})}.

\item From Lemma~\ref{lem:ml:pitts-closure}, we have that $\tuple{{\intType},{\intType}, R_x, R_x} \in \inter{{\basekind}}{\rho}$ and hence, $\tuple{\atcmod{\intType},\atcmod{\intType}, R_x} \in \inter{\atksign{\basekind}}{\rho}$.

\item We have that \typeEvalP{}{\attmod{v_1}:\atcsign{\intType}}.
{From {Lemma~\ref{lem:ml:consenv:value:property}}, \typeEvalP{}{V_1':\enccon{L'}}}.
Since $\alpha \not \in FV(\enccon{L'})$, from the \WfmPr\ rule, \typeEvalP{}{\tuple{\attmod{v_1}, V_1}:\Sigma \alpha:\atcsign{\intType}.\enccon{L'}}.

We have that \typeEval{}{\atcsign{\intType} \leq \atcsign{\intType}:\sign},  \typeEval{\alpha:\unitkind}{\enccon{L'} \leq \encpub{L'}:\sign}, and \typeEval{\alpha:\unitkind}{\encpub{L'}:\sign}.
Thus, 

$$\LabelRule{\SubsDPr}{
    \typeEval{}{\atcsign{\intType} \leq \atcsign{\intType}:\sign}\\
    \typeEval{\alpha:\unitkind}{\enccon{L'} \leq \encpub{L'}:\sign}\\
    \typeEval{\alpha:\unitkind}{\encpub{L'}:\sign}
}
{\typeEval{}{\Sigma \alpha:\atcsign{\intType}.\enccon{L'} \leq \Sigma \alpha:\atcsign{\intType}.\encpub{L'}:\sign}}$$

Since \typeEvalP{}{\tuple{\attmod{v_1}, V_1}:\Sigma \alpha:\atcsign{\intType}.\enccon{L'}}, from \WfmSub, \typeEvalP{}{\tuple{\attmod{v_1}, V_1}:\Sigma \alpha:\atcsign{\intType}.\encpub{L'}}.
From the forgetful rule, \typeEvalI{}{\tuple{\attmod{v_1}, V_1}:\Sigma \alpha:\atcsign{\intType}.\encpub{L'}}.

\item  Similarly, we have that \purpletext{\typeEvalI{}{\tuple{\attmod{v_2},V_2'}: \Sigma \alpha:\atcsign{\intType}.\encpub{L'}}},

\item From the requirement on $v_1$ and $v_2$ in \condenv{\rho}{L}{\policy}, and the definition of \inter{\alpha_x}{\rho, \alpha_x\mapsto \tuple{\intType,\intType,R_x}}, we have that $\tuple{v_1,v_2} \in \inter{\alpha_x}{\rho, \alpha_x\mapsto \tuple{\intType,\intType,R_x}}$ and hence, $\tuple{\attmod{v_1}, \attmod{v_2}, \tuple{}} \in \inter{\atcsign{\alpha_x}}{\rho, \alpha_x\mapsto \tuple{\intType,\intType,R_x}} = \inter{\atcsign{\alpha_x}}{{\rho_1}}$.

\item From IH, we have that $\tuple{V_1',V_2',Q'} \in \inter{\encpub{L'}}{\airforcetext{\rho'}}$.
\purpletext{Since \typeEval{}{\encpub{L'}:\sign}, and $\tuple{\rho', \rho_2} \in \interenv{.}$, from Lemma~\ref{lem:ml:crary:interpretation}, we have that $\inter{\encpub{L'}}{\airforcetext{\rho'}} = \inter{\encpub{L'}}{\rho_2}$.
Therefore, we have that $\tuple{V_1',V_2',Q'} \in \inter{\encpub{L'}}{\rho_2}$.}
\end{itemize}

\emcase{Case 3:} $L = x\concat L'$, $\decPolicy{\policy}(x) = f$, where \typeEval{}{f:\intType \rightarrow\tau}.
We have that:
\begin{align*}
\purpletext{\rho(m_\policy)} & = \tuple{\tuple{\atcmod{\intType},\tuple{\attmod{v_1},\tuple{\attmod{f},V_1'}}},\tuple{\atcmod{\intType},\tuple{\attmod{v_2},\tuple{\attmod{f},V_2'}}}},\\
\purpletext{\rho(\alpha_\policy)} & = \tuple{\tuple{\intType,\tuple{\unitcon,\tuple{\unitcon, c_1'}}}, \tuple{\intType,\tuple{\unitcon,\tuple{\unitcon, c_2'}}},\tuple{R_f, \tuple{\tuple{}, \tuple{\tuple{}, Q'}}}},
\end{align*}

and $\tuple{v_1,v_2} \in R_f$, and \condenv{\rho'}{L'}{\policy}, where \purpletext{$\rho'(m_\policy) = \tuple{V_1',V_2'}$ and $\rho'(\alpha_\policy) = \tuple{c_1',c_2',Q'}$}.
We also have that $\sigma = \Sigma \alpha_f:\atksign{\basekind}.\Sigma \alpha_1:\atcsign{\alpha_f}.\Sigma \alpha_2:\atcsign{\alpha_f \rightarrow \tau}.\encpub{L'}$

{The proof that $\rho(\alpha_\policy) = \tuple{\fstoptwo{V_1},\fstoptwo{V_2},Q}$ is similar to the one in Case 2.}
We now prove that $\tuple{V_1,V_2,Q}, \in \inter{\encpub{L}}{\rho}$.
We need to prove that:
\begin{itemize}
\item \typeEvalI{}{V_{1}:\rho_{L}(\encpub{L})},
\item \typeEvalI{}{V_{2}:\rho_{R}(\encpub{L})},
\item $\tuple{\atcsign{\intType},\atcsign{\intType}, R_f} \in \inter{\atksign{\basekind}}{\rho}$
\item $\tuple{\tuple{\attmod{v_1},\tuple{\attmod{f},V_1}}, \tuple{\attmod{v_2},\tuple{\attmod{f},V_2'}},\tuple{\tuple{}, \tuple{\tuple{}, Q'}}} \in \inter{\Sigma \alpha_1:\atcsign{\alpha_f}.\Sigma \alpha_2:\atcsign{\alpha_f \rightarrow \tau}.\encpub{L'}}{\rho_1}$, where $\rho_1 = \rho, \alpha_f \mapsto \tuple{\intType,\intType, R_f}$   
\end{itemize}

We have that:
\begin{itemize}
\item From {Lemma~\ref{lem:ml:consenv:value:property}}, \typeEvalP{}{V_{1}:\encpub{L}} and hence \typeEvalI{}{V_{1}:\encpub{L}}.
From Lemma~\ref{lem:ml:signature:property:public}, \typeEval{}{\encpub{L}:\sign}.
Thus, \typeEvalI{}{V_{1}:\rho_{L}(\encpub{L})}.

\item Similarly, we have that \typeEvalI{}{V_{2}:\rho_{R}(\encpub{L})}.

\item From Lemma~\ref{lem:ml:pitts-closure}, $R_f$ is Pitts closed.
Thus, $\tuple{\atcsign{\intType},\atcsign{\intType}, R_f} \in \inter{\atksign{\basekind}}{\rho}$.
\end{itemize}

We now prove that 
$$\tuple{\tuple{\attmod{v_1},\tuple{\attmod{f},V_1'}}, \tuple{\attmod{v_2},\tuple{\attmod{f},V_2'}},\tuple{\tuple{}, \tuple{\tuple{}, Q'}}} \in \inter{\Sigma \alpha_1:\atcsign{\alpha_f}.\Sigma \alpha_2:\atcsign{\alpha_f \rightarrow \tau}.\encpub{L'}}{\rho_1}.$$

We need to prove that:
\begin{itemize}
\item \typeEvalI{}{\tuple{\attmod{v_1},\tuple{\attmod{f},V_1'}}: \rho_{1L}({\Sigma \alpha_1:\atcsign{\alpha_f}.\Sigma \alpha_2:\atcsign{\alpha_f \rightarrow \tau}.\encpub{L'}})} and hence, \typeEvalI{}{\tuple{\attmod{v_1},\tuple{\attmod{f},V_1}}: {\Sigma \alpha_1:\atcsign{\intType}.\Sigma \alpha_2:\atcsign{\intType \rightarrow \tau}.\encpub{L'}}},

\item \typeEvalI{}{\tuple{\attmod{v_2},\tuple{\attmod{f},V_2'}}: \rho_{1L}({\Sigma \alpha_1:\atcsign{\alpha_f}.\Sigma \alpha_2:\atcsign{\alpha_f \rightarrow \tau}.\encpub{L'}})} and hence, \typeEvalI{}{\tuple{\attmod{v_1},\tuple{\attmod{f},V_1}}: {\Sigma \alpha_1:\atcsign{\intType}.\Sigma \alpha_2:\atcsign{\intType \rightarrow \tau}.\encpub{L'}}},

\item $\tuple{\attmod{v_1},\attmod{v_2}, \tuple{}} \in \inter{\atcsign{\alpha_f}}{\rho_1}$,

\item $\tuple{\tuple{\attmod{f},V_1'}, \tuple{\attmod{f},V_2'}, \tuple{\tuple{}, Q'}} \in \inter{\Sigma \alpha_2:\atcsign{\alpha_f \rightarrow \tau}.\encpub{L'}}{\rho_2}$, where $\rho_2 = \rho_1,\alpha_1 \mapsto \tuple{\unitcon,\unitcon,\tuple}$.
\end{itemize}

We have that:
\begin{itemize}
\item By using similar reasoning as in Case 2, \typeEvalI{}{\tuple{\attmod{v_1},\tuple{\attmod{f},V_1'}}: {\Sigma \alpha_1:\atcsign{\intType}.\Sigma \alpha_2:\atcsign{\intType \rightarrow \tau}.\encpub{L'}}}.

\item Similarly, \typeEvalI{}{\tuple{\attmod{v_2},\tuple{\attmod{f},V_2'}}: {\Sigma \alpha_1:\atcsign{\intType}.\Sigma \alpha_2:\atcsign{\intType \rightarrow \tau}.\encpub{L'}}}.

\item From the definition of \condenv{\rho}{L}{\policy}, $\tuple{v_1,v_2} \in R_f = \inter{\alpha_f}{\rho, \alpha_f \mapsto \tuple{\intType,\intType,R_f}}$.
Thus, $\tuple{\attmod{v_1},\attmod{v_2}, \tuple{}} \in \inter{\atcsign{\alpha_f}}{\rho, \alpha_f \mapsto \tuple{\intType,\intType,R_f}} = \inter{\atcsign{\alpha_f}}{\rho_1}$.

\end{itemize}

We now prove that $\tuple{\tuple{\attmod{f},V_1'}, \tuple{\attmod{f},V_2'}, \tuple{\tuple{}, Q'}} \in \inter{\Sigma \alpha_2:\atcsign{\alpha_f \rightarrow \tau}.\encpub{L'}}{\rho_2}$.
We need to prove that:
\begin{itemize}
\item \typeEvalI{}{\tuple{\attmod{f},V_1'}: \rho_{2L}(\Sigma \alpha_2:\atcsign{\alpha_f \rightarrow \tau}.\encpub{L'})} and hence, \typeEvalI{}{\tuple{\attmod{f},V_1'}: \Sigma \alpha_2:\atcsign{\intType \rightarrow \tau}.\encpub{L'}} (since $\Sigma \alpha_2:\atcsign{\intType \rightarrow \tau}.\encpub{L'}$ is a closed signature)
\item \typeEvalI{}{\tuple{\attmod{f},V_2'}: \rho_{2R}(\Sigma \alpha_2:\atcsign{\alpha_f \rightarrow \tau}.\encpub{L'})} and hence, \typeEvalI{}{\tuple{\attmod{f},V_2'}: \Sigma \alpha_2:\atcsign{\intType \rightarrow \tau}.\encpub{L'}}

\item $\tuple{\attmod{f},\attmod{f},\tuple{}} \in \inter{\atcsign{\alpha_f\rightarrow\tau}}{\rho_2}$

\item $\tuple{V_1',V_2',Q'} \in \inter{\encpub{L'}}{\rho_3}$, where $\rho_3 = \rho_2,\alpha_2 \mapsto \tuple{\unitcon,\unitcon,\tuple{}}$.
\end{itemize}

We have that:
\begin{itemize}
\item By using similar reasoning as in Case 2, \typeEvalI{}{\tuple{\attmod{f},V_1'}: \rho_{2L}(\Sigma \alpha_2:\atcsign{\alpha_f \rightarrow \tau}.\encpub{L'})}

\item Similarly, \typeEvalI{}{\tuple{\attmod{f},V_2'}: \rho_{2R}(\Sigma \alpha_2:\atcsign{\alpha_f \rightarrow \tau}.\encpub{L'})}

\item We consider $\tuple{w_1,w_2} \in R_f$.
{From the definition of $R_f$, $\tuple{f\ w_1,f\ w_2} \in \termrelation{\tau}{\emptyset}$}.
Thus, we have that $\tuple{f,f} \in \inter{\alpha_f \rightarrow\tau}{\rho_2}$.
In other words, $\tuple{\attmod{f},\attmod{f},\tuple{}} \in \inter{\atcsign{\alpha_f\rightarrow\tau}}{\rho_2}$.

\item We now prove $\tuple{V_1',V_2',Q'} \in \inter{\encpub{L'}}{\rho_3}$.
From IH, $\tuple{V_1',V_2',Q'} \in \inter{\encpub{L'}}{\rho'}$
Since $\tuple{\rho',\rho_3} \in \interenv{.}$ and \typeEval{}{\encpub{L'}:\sign} (Lemma~\ref{lem:ml:signature:property:public}), from Lemma~\ref{lem:ml:crary:interpretation}, we have that $\inter{\encpub{L'}}{\rho'} = \inter{\encpub{L'}}{\rho_3}$.
Thus, $\tuple{V_1',V_2',Q'} \in \inter{\encpub{L'}}{\rho_3}$.

\end{itemize}

\end{proof}

\subsection{Wrapper}

\begin{lemma}
\label{lem:ml:wrapper:conf-view:P-form}
\tealtext{If \typeEvalP{\Gamma}{\attmod{e}:\sigma}, then $\sigma = \atcsign{\tau}$ for some $\tau$ and \typeEval{\Gamma}{e:\tau}.}
\end{lemma}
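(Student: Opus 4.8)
The plan is to prove the statement as an inversion lemma for the pure module judgment, by induction on the derivation of \typeEvalP{\Gamma}{\attmod{e}:\sigma}. First I would note that, among the rules for \wfmod{\Gamma}{P}{M:\sigma} in Fig.~\ref{fig:apd:well-formed-module}, only four can have a conclusion whose subject module is syntactically of the form \attmod{e}: the introduction rule \WfmAtt, the subsumption rule \WfmSub, and the three extensional retyping rules \WfmROne, \WfmRTwo, and \WfmRThree\ (every other rule fixes a different head form for $M$). The case \WfmAtt\ is immediate, since it gives exactly $\sigma = \atcsign{\tau}$ together with \typeEval{\Gamma}{e:\tau}.

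For the subsumption case I would first isolate a small subsignature-inversion sub-lemma: if \typeEval{\Gamma}{\atcsign{\tau'} \leq \sigma:\sign}, then $\sigma = \atcsign{\tau}$ for some $\tau$ with \typeEval{\Gamma}{\tau' \equiv \tau:\basekind}. This is proved by induction on the subsignature derivation. The base case \SubsEq\ reduces to inversion of signature equivalence, where the only rule able to relate an atomic type signature to another signature is \SeqAtc\ (modulo reflexivity, symmetry and transitivity), which forces the right-hand side to be \atcsign{\tau} with $\tau' \equiv \tau$; the case \SubsT\ composes two uses of the sub-lemma; and \SubsK, \SubsDFnGn, \SubsDFnAp, \SubsDPr\ are excluded because their left-hand signatures are not atomic type signatures. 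With this in hand, the \WfmSub\ case follows: the induction hypothesis gives $\sigma'' = \atcsign{\tau''}$ and \typeEval{\Gamma}{e:\tau''}, the sub-lemma turns \typeEval{\Gamma}{\atcsign{\tau''} \leq \sigma} into $\sigma = \atcsign{\tau}$ with $\tau'' \equiv \tau$, and rule \WttEq\ promotes the typing to \typeEval{\Gamma}{e:\tau}.

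The retyping rules carry the real content. For \WfmROne\ and \WfmRTwo\ the relevant premise is itself a pure typing \typeEvalP{\Gamma}{\attmod{e}:\sigma'} with $\sigma'$ an atomic kind signature \atksign{k'}, respectively an applicative functor signature of the form $\Piap\,\alpha{:}\sigma_1.\sigma_2'$; applying the induction hypothesis to this strictly smaller derivation would force $\sigma'$ to be an atomic type signature, a syntactic impossibility, so both cases are vacuous. The genuinely hard case is \WfmRThree, whose premises type the projections \prj{1}{\attmod{e}} and \prj{2}{\attmod{e}} rather than \attmod{e} itself, so the induction hypothesis does not apply directly. Here I would show that \prj{1}{\attmod{e}} is not pure-typable at all, by an auxiliary inversion on \typeEvalP{\Gamma}{\prj{1}{\attmod{e}}:\sigma_1}: the only structural rule producing a projection is \WfmPrj1, whose premise \typeEvalP{\Gamma}{\attmod{e}:\Sigma\,\alpha{:}\sigma_a.\sigma_b} is a smaller derivation to which the main induction hypothesis applies, forcing the impossible identity $\Sigma\,\alpha{:}\sigma_a.\sigma_b = \atcsign{\tau'}$; the cases \WfmSub\ and the retyping rules applied to \prj{1}{\attmod{e}} only shrink the derivation and are absorbed by the induction. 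Hence the premises of \WfmRThree\ are underivable and the case cannot occur.

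I expect the principal obstacle to be handling the interaction between the extensional retyping rules and subsumption cleanly: since \WfmROne--\WfmRThree, \WfmSub, and the projection rules can be stacked arbitrarily, an induction merely on the last rule does not obviously terminate, and the argument is best organised as a strong induction on derivation size, shared between the main statement and the auxiliary claim that \prj{i}{\attmod{e}} is untypable. The subsignature-inversion sub-lemma, though routine, is the other ingredient that must be separated out so that the \WfmSub\ case closes.
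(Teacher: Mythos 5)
Your proposal is correct in its core and is in fact more careful than the paper's own proof, which handles only the \WfmAtt\ and \WfmSub\ cases and dismisses everything else with the remark that ``other rules cannot be applied.'' Your treatment of those two cases coincides with the paper's: the subsignature-inversion sub-lemma you isolate is exactly the paper's Lemma~\ref{lem:ml:sig-sub:atc} combined with Lemma~\ref{lem:ml:sig-eq:atc}, proved the same way (only \SubsEq, \SubsT\ and the reflexivity/symmetry/transitivity/\SeqAtc\ rules can apply). Where you genuinely diverge is in observing that the extensional retyping rules \WfmROne, \WfmRTwo\ and \WfmRThree\ \emph{can} be instantiated with $M=\attmod{e}$, and that if they were applicable they would falsify the conclusion (they assign $\attmod{e}$ an atomic kind, applicative functor, or $\Sigma$ signature), so ruling them out genuinely requires the induction hypothesis; your vacuity arguments for \WfmROne\ and \WfmRTwo\ are exactly right. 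The one under-specified point is \WfmRThree: the auxiliary claim ``$\prj{i}{\attmod{e}}$ is not pure-typable'' is not closed under the rules you induct over, because \WfmRThree\ applied to $\prj{1}{\attmod{e}}$ has premises about $\prj{j}{\prj{1}{\attmod{e}}}$, which falls outside the claim as stated. You need to strengthen it to say that no tower $\prj{i_1}{\cdots\prj{i_n}{\attmod{e}}}$ with $n\ge 1$ is pure-typable; with that strengthening every case of the size-based mutual induction strictly decreases the derivation and the argument closes. In short, your route buys an actually exhaustive case analysis, while the paper's buys brevity at the cost of an unjustified (though ultimately harmless) dismissal of the retyping rules.
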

\begin{proof}
We prove this lemma by induction on the derivation of \typeEvalP{\Gamma}{\attmod{e}:\sigma}.
We consider the last rule applied in the derivation.
We have two cases (since other rules cannot be applied).

\emcase{Case 1:} Rule \WfmAtt.
$$\EmptyRule{
    \typeEval{\Gamma}{e:\tau}
}
{\wfmod{\Gamma}{P}{\attmod{e}:\atksign{\tau}}}$$

The proof follows from the rule.

\emcase{Case 2:} Rule \WfmSub.
$$\EmptyRule{
    \typeEvalP{\Gamma}{\attmod{e}:\sigma'}\\
    \typeEval{\Gamma}{\sigma'\leq \sigma:\sign}
}
{\wfmod{\Gamma}{P}{\attmod{e}:\sigma}}$$

From the rule, we have that \typeEvalP{\Gamma}{\attmod{e}:\sigma'}.
From IH, $\sigma' = \atcsign{\tau'}$ for some $\tau'$ and \typeEval{\Gamma}{e:\tau'}.
Since \typeEval{\Gamma}{\sigma'\leq \sigma:\sign}, from {Lemma~\ref{lem:ml:sig-sub:atc}}, \typeEval{\Gamma}{\sigma' \equiv \sigma:\sign}.
From {Lemma~\ref{lem:ml:sig-eq:atc}}, it follows that $\sigma = \atcsign{\tau}$ for some $\tau$ s.t. \typeEval{\Gamma}{\tau \equiv \tau':\basekind}.
Thus, \typeEval{\Gamma}{e:\tau}.
\end{proof}

\textbf{Lemma~\ref{lem:ml_trni:typability_impl} (in \S\ref{sec:ml:trni}).}
\tealtext{If \typeEval{\pubViewTerm{\policy}}{e:\tau}, then \typeEval{\conView{\policy}}{e}.}
\begin{proof}
First we have that \typeEvalP{}{\lambdaap \alpha_\policy,m_\policy:\sigmaPol{\policy}. \attmod{e}: \Piap \alpha_\policy:\sigmaPol{\policy}. \atksign{\tau}}.
$$\small
\LabelRule{ofm\_lamap}{
{\typeEval{}{\sigmaPol{\policy}:\sign}}\\
\LabelRule{ofm\_dyn}{\typeEval{\alpha_\policy,m_\policy:\sigmaPol{\policy}}{e:\tau}}
{\typeEvalP{\alpha_\policy,m_\policy:\sigmaPol{\policy}}{\attmod{e}:\atksign{\tau}}}
}
{\typeEvalP{}{\lambdaap \alpha_\policy,m_\policy:\sigmaPol{\policy}. \attmod{e}: \Piap \alpha_\policy:\sigmaPol{\policy}. \atksign{\tau}}}
$$

From the weakening lemma (Lemma~\ref{lem:ml:weakening}), we have that \typeEvalP{\alpha_\policy,m_\policy:\sigmaPolCon{\policy}}{\lambdaap \alpha_\policy,m_\policy:\sigmaPol{\policy}. \attmod{e}: \Piap \alpha_\policy:\sigmaPol{\policy}. \atksign{\tau}}.
In addition, we have that \typeEval{\alpha_\policy,m_\policy:\sigmaPolCon{\policy}}{\sigmaPolCon{\policy} \leq \sigmaPol{\policy}:\sign} (Lemma~\ref{lem:ml:signature:property:subsign} and Lemma~\ref{lem:ml:weakening}) and {\typeEval{\alpha_\policy,m_\policy:\sigmaPolCon{\policy}}{\fstop{m_\policy}{\alpha_\policy}}}.
Therefore, we have that: 
$$
\small
\LabelRule{ofm\_appap}{
\typeEvalP{\alpha_\policy,m_\policy:\sigmaPolCon{\policy}}{\lambdaap \alpha_\policy,m_\policy:\sigmaPol{\policy}. \attmod{e}: \Piap \alpha_\policy:\sigmaPol{\policy}. \atksign{\tau}} \\
\typeEvalP{\alpha_\policy,m_\policy:\sigmaPolCon{\policy}}{m:\sigmaPol{\policy}}\\
\typeEval{\alpha_\policy,m_\policy:\sigmaPolCon{\policy}}{\fstop{m_\policy}{\alpha_\policy}}
}
{\typeEvalP{\alpha_\policy,m_\policy:\sigmaPolCon{\policy}}{(\lambdaap \alpha_\policy,m_\policy:\sigmaPol{\policy}. \attmod{e})\ m_\policy:  \atksign{\tau}[\alpha_\policy \mapsto \alpha_\policy]}}
$$

and hence, \typeEvalP{\alpha_\policy,m_\policy:\sigmaPolCon{\policy}}{(\lambdaap \alpha_\policy,m_\policy:\sigmaPol{\policy}. \attmod{e})\ m_\policy:  \atksign{\tau}}.

Since \typeEval{\alpha_\policy,m_\policy:\sigmaPolCon{\policy}}{\fstop{m_\policy}{\alpha_\policy}}, from mstep\_app3, 
$$\typeEval{\alpha_\policy,m_\policy:\sigmaPolCon{\policy}}{(\lambdaap \alpha_\policy,m_\policy:\sigmaPol{\policy}. \attmod{e})\ m_\policy} \transit \typeEval{\alpha_\policy,m_\policy:\sigmaPolCon{\policy}}{e[m_\policy \mapsto m_\policy,\alpha_\policy \mapsto \alpha_\policy]}.$$

And thus,
$$\typeEval{\alpha_\policy,m_\policy:\sigmaPolCon{\policy}}{(\lambdaap \alpha_\policy,m_\policy:\sigmaPol{\policy}. \attmod{e})\ m_\policy:  \atksign{\tau}} \transit \typeEval{\alpha_\policy,m_\policy:\sigmaPolCon{\policy}}{e}.$$

From the type preservation theorem (\cite[Theorem 2.2]{Crary18}), we have that $\typeEvalP{\alpha_\policy,m_\policy:\sigmaPolCon{\policy}}{\attmod{e}:  \atksign{\tau}}$.
From Lemma~\ref{lem:ml:wrapper:conf-view:P-form}, it follows that $\typeEvalP{\alpha_\policy,m_\policy:\sigmaPolCon{\policy}}{{e}}$.
\end{proof}

\begin{lemma}
\label{lem:ml:wrapper:dummy:signature}
\tealtext{It follows that \typeEvalP{}{\dummy:\sigmaPolCon{\policy}}.}
\end{lemma}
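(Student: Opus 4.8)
The plan is to exhibit the dummy module \dummy\ explicitly and then verify its signature by a routine induction that mirrors the inductive structure of $\sigmaPolCon{\policy} = \enccon{\varPolicy{\policy}}$ given in Fig.~\ref{fig:ml:signatures}. I would take \dummy\ to be the module \consdummy{\varPolicy{\policy}} defined by recursion on a list $L \subseteq \varPolicy{\policy}$ of confidential inputs, choosing an arbitrary integer (say $0$) for every confidential value slot and reusing the actual policy declassifier $f$ for every declassifier slot:
\begin{align*}
\consdummy{\emptylist} &\triangleq \unitmod,\\
\consdummy{x\concat L'} &\triangleq \tuple{\atcmod{\intType},\tuple{\attmod{0},\consdummy{L'}}} && \text{if } x\notin\dom{\decPolicy{\policy}},\\
\consdummy{x\concat L'} &\triangleq \tuple{\atcmod{\intType},\tuple{\attmod{0},\tuple{\attmod{f},\consdummy{L'}}}} && \text{if } \decPolicy{\policy}(x)=f.
\end{align*}
The nesting of pairs of atomic modules here is chosen precisely to match the nested $\Sigma$-signature produced by \enccon{-}.

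Then I would prove the slightly more general claim that \typeEvalP{}{\consdummy{L}:\enccon{L}} holds for every $L \subseteq \varPolicy{\policy}$, by induction on $L$; the lemma is the instance $L = \varPolicy{\policy}$. In the base case, \WfmU\ gives \typeEvalP{}{\unitmod:\unitsign}. In the inductive cases I would build the derivation from the pure atomic-module rules: \typeEvalP{}{\atcmod{\intType}:\atksign{\singleton{\intType}}} follows from \WfcInt\ and \WfcSg\ (giving \typeEval{}{\intType:\singleton{\intType}}) composed with \WfmAtc; \typeEvalP{}{\attmod{0}:\atcsign{\intType}} follows from \WttInt\ and \WfmAtt; and \typeEvalP{}{\attmod{f}:\atcsign{\intType\rightarrow\tau_f}} follows from the policy's standing assumption that each $f$ is a closed term of type $\intType\rightarrow\tau_f$ together with \WfmAtt. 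The tail \typeEvalP{}{\consdummy{L'}:\enccon{L'}} is supplied by the induction hypothesis, and the components are assembled with the pair rule \WfmPr. Since every component is typed by a pure judgment, the whole module is pure, so the conclusion is a $P$-judgment as required.

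The only point requiring care—rather than a genuine obstacle—is the side condition $\alpha\notin FV(\sigma_2)$ of \WfmPr\ at each pairing step. Here the transparent encoding works in our favour: in \enccon{L} every component type is a concrete closed type ($\intType$ or $\intType\rightarrow\tau_f$) rather than the abstract constructor variable, so the bound constructor variable $\alpha_x$ (resp.\ $\alpha_f$) never occurs free in the remaining signature, and the condition holds trivially. The well-formedness premises of the pair and atomic rules can likewise be discharged using Lemma~\ref{lem:ml:signature:property:private}, which already establishes \typeEval{}{\enccon{L'}:\sign}. Beyond this bookkeeping the argument is entirely routine, and the constructed \dummy, having the transparent signature, also inherits the opaque signature \sigmaPol{\policy} by Lemma~\ref{lem:ml:signature:property:subsign} and subsumption—which is exactly what makes it usable as an argument to the wrapper functor \wrappub{-}.
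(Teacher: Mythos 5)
Your proposal is correct and takes the same route as the paper: the paper's entire proof of this lemma is the single sentence ``We prove the lemma by induction on $L$,'' leaving both the construction of \dummy\ and the verification implicit. Your explicit definition of \consdummy{L} mirroring the nested $\Sigma$-structure of \enccon{L}, together with the rule-by-rule derivation (including the observation that the $\alpha\notin FV(\sigma_2)$ side condition of \WfmPr\ is trivial because the transparent signature contains only closed concrete types), supplies exactly the details the paper omits.
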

\begin{proof}
We prove the lemma by induction on $L$.
\end{proof}

\begin{lemma}
\label{lemma:ml:wrapper:V-policy:seal}
It follows that \typeEvalI{}{(\dummy \seal \sigmaPol{\policy}):\sigmaPol{\policy}}.
\end{lemma}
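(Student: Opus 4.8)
The plan is to derive the sealed-module judgment by first establishing that \dummy\ has the opaque signature \sigmaPol{\policy}, and then applying the sealing rule \WfmSeal. By that rule, in order to conclude \typeEvalI{}{(\dummy \seal \sigmaPol{\policy}):\sigmaPol{\policy}} it suffices to exhibit a derivation of \typeEvalI{}{\dummy:\sigmaPol{\policy}}, i.e.\ that \dummy\ is an impure module of the opaque signature.

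To obtain \typeEvalI{}{\dummy:\sigmaPol{\policy}}, first I would invoke Lemma~\ref{lem:ml:wrapper:dummy:signature}, which gives \typeEvalP{}{\dummy:\sigmaPolCon{\policy}}---the dummy is a \emph{pure} module of the \emph{transparent} signature. Next I would apply Lemma~\ref{lem:ml:signature:property:subsign}, instantiated with $L = \varPolicy{\policy}$, to get the subsignature relation \typeEval{}{\sigmaPolCon{\policy} \leq \sigmaPol{\policy}:\sign}, recalling that \sigmaPolCon{\policy} and \sigmaPol{\policy} abbreviate \enccon{\varPolicy{\policy}} and \encpub{\varPolicy{\policy}} respectively. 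Feeding these two facts into the subsumption rule \WfmSub\ yields \typeEvalP{}{\dummy:\sigmaPol{\policy}}, and then the forgetful rule \WfmRFour\ promotes this pure judgment to the impure judgment \typeEvalI{}{\dummy:\sigmaPol{\policy}}.

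With \typeEvalI{}{\dummy:\sigmaPol{\policy}} established, the sealing rule \WfmSeal\ immediately closes the goal, producing \typeEvalI{}{(\dummy \seal \sigmaPol{\policy}):\sigmaPol{\policy}}.

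There is no genuine obstacle here: the statement is a short chain of structural rules, and all the substantive content---that the transparent encoding refines the opaque one---is already packaged in Lemma~\ref{lem:ml:signature:property:subsign}. The only point that requires care is the bookkeeping of purity annotations. Since the calculus treats sealing as an effect, the premise of \WfmSeal\ must be an \emph{impure} judgment; this is exactly why the forgetful step \WfmRFour\ has to be interposed after subsumption, rather than applying \WfmSeal\ directly to the pure judgment \typeEvalP{}{\dummy:\sigmaPol{\policy}}.
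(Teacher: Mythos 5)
Your proof is correct and follows exactly the same route as the paper: Lemma~\ref{lem:ml:wrapper:dummy:signature} for \typeEvalP{}{\dummy:\sigmaPolCon{\policy}}, Lemma~\ref{lem:ml:signature:property:subsign} plus \WfmSub\ for \typeEvalP{}{\dummy:\sigmaPol{\policy}}, then \WfmRFour\ and \WfmSeal. Your remark about why the forgetful step must be interposed before sealing is a correct reading of the rule premises.
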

\begin{proof}
From Lemma~\ref{lem:ml:wrapper:dummy:signature}, we have that \typeEvalP{}{\dummy:\sigmaPolCon{\policy}}.
Since \typeEval{}{\sigmaPolCon{\policy} \leq \sigmaPol{\policy}:\sign} (by Lemma~\ref{lem:ml:signature:property:subsign}), from the \WfmSub\ rule, it follows that \typeEvalP{}{\dummy:\sigmaPol{\policy}}.
From the \WfmRFour\ rule, we have that \typeEvalI{}{\dummy:\sigmaPol{\policy}}.
From the \WfmSeal\ rule, we have that \typeEvalI{}{(\dummy\seal \sigmaPol{\policy}):\sigmaPol{\policy}}.
\end{proof}

\begin{lemma}
\label{lem:ml:wrapper:sealing:type-inference}
For any $V$, $\sigma$ and $\sigma'$, if \typeEvalI{}{(V \seal \sigma):\sigma'} then \typeEval{}{\sigma \leq \sigma':\sign}.
\end{lemma}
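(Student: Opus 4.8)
The plan is to argue by induction on the derivation of \typeEvalI{}{(V \seal \sigma):\sigma'}, inspecting the last rule applied. Since the subject is a sealing expression, only three rules can produce an \emph{impure} conclusion about it: \WfmSeal, \WfmSub instantiated with $\kappa = I$, and \WfmRFour. Every other module rule either matches a different syntactic form (application, pairing, unpacking, let-binding, etc.) or concludes a \emph{pure} judgment, so none of them is directly applicable here; the pure ``revealing'' rules will be dealt with by the auxiliary claim below.

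First I would dispatch the two structural cases. For \WfmSeal the conclusion is exactly $(V\seal\sigma):\sigma$, so $\sigma' = \sigma$ and it suffices to prove reflexivity \typeEval{}{\sigma \leq \sigma:\sign}; this follows from \SubsEq applied to \SeqR, using a standard validity (presupposition) property of the system—as in Crary's development—to extract \typeEval{}{\sigma:\sign} from the premise \typeEvalI{}{V:\sigma}. For \WfmSub with $\kappa = I$ the premises are \typeEvalI{}{(V\seal\sigma):\sigma''} and \typeEval{}{\sigma'' \leq \sigma':\sign}; the induction hypothesis gives $\sigma \leq \sigma''$, and \SubsT then yields $\sigma \leq \sigma'$.

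The case \WfmRFour is where the real content lies: it would derive the impure judgment from a \emph{pure} premise \typeEvalP{}{(V\seal\sigma):\sigma'}. To rule it out I would establish an auxiliary lemma stating that no pure judgment \typeEvalP{}{(V\seal\sigma):\sigma''} is derivable at all. The intuition is that sealing is treated as an effect, so \WfmSeal only ever concludes an impure judgment, while every rule that could conclude a pure judgment about a sealing expression—namely \WfmSub with $\kappa = P$, \WfmROne, \WfmRTwo, and \WfmRThree—has among its premises a pure judgment about a structurally related module, so a second induction on the derivation shows that such a pure derivation can never bottom out.

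The main obstacle is the rule \WfmRThree, whose premises are pure judgments not about $(V\seal\sigma)$ itself but about its projections $\prj{1}{(V\seal\sigma)}$ and $\prj{2}{(V\seal\sigma)}$, so the naive induction hypothesis does not apply. The cleanest fix is to generalize the auxiliary lemma to assert the non-derivability of any pure judgment \typeEvalP{}{N:\sigma''} where $N$ is $(V\seal\sigma)$ placed under a (possibly empty) sequence of projections; one then checks that every pure rule applicable to such an $N$—including the pure projection rules \WfmPrj1 and \WfmPrj2—keeps the subject within this class while strictly shrinking the derivation, and the only way to type the bare $(V\seal\sigma)$ at a leaf is via \WfmSeal, which is impure and hence unavailable. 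I expect the bookkeeping for this generalized induction to be the sole delicate point; the two structural cases are routine given the validity and transitivity lemmas already present in the development.
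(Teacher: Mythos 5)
Your proposal is correct and its core is the same as the paper's: induction on the derivation with the \WfmSeal\ case giving $\sigma = \sigma'$ and the \WfmSub\ case closed by the induction hypothesis plus \SubsT. Where you differ is that you are more careful than the paper itself. The paper's proof simply asserts ``we have two cases'' and never mentions \WfmRFour, even though that rule concludes an impure judgment $\typeEvalI{\Gamma}{M:\sigma}$ for an arbitrary $M$ and so is, on its face, applicable to $(V \seal \sigma)$. Excluding it requires exactly the auxiliary fact you identify: that no pure judgment $\typeEvalP{}{(V\seal\sigma):\sigma''}$ is derivable, which in turn needs the generalization to projection contexts in order to handle \WfmRThree\ (whose premises concern $\prj{i}{M}$ rather than $M$). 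Your observation that \WfmSeal\ only ever concludes an impure judgment, so a pure derivation about a sealed module (possibly under projections) can never bottom out, is the right argument and is sound. One small remark on the \WfmSeal\ case: the paper derives $\typeEval{}{\sigma \leq \sigma:\sign}$ directly without invoking a validity lemma; your appeal to a presupposition property to obtain $\typeEval{}{\sigma:\sign}$ is the honest way to license \SeqR\ followed by \SubsEq, and is consistent with Crary's development. In short, your proof is a strict improvement on the one in the paper: same skeleton, but it closes a case the paper leaves implicit.
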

\begin{proof}
We prove the lemma by induction on the derivation of \typeEvalI{}{V \seal \sigma:\sigma'}.
We consider the last rule applied in the derivation.
We have two cases.

\emcase{Case 1:} Rule \WfmSeal. From the rule, we have that $\sigma = \sigma'$ and hence, \typeEval{}{\sigma \leq \sigma':\sign}.

\emcase{Case 2:} Rule \WfmSub. From the rule, we have \typeEvalI{}{(V\seal \sigma):\sigma''} and \typeEval{}{\sigma'' \leq \sigma':\sign}.
Since \typeEvalI{}{(V\seal \sigma):\sigma''}, from IH, \typeEval{}{\sigma \leq \sigma'':\sign}.
Since \typeEval{}{\sigma \leq \sigma'':\sign} and \typeEval{}{\sigma'' \leq \sigma':\sign}, from the \SubsT\ rule, it follows that \typeEval{}{\sigma \leq \sigma':\sign}.
\end{proof}

\begin{lemma}
\label{lem:ml:sig-eq:atc}
\tealtext{Suppose that \typeEval{\Gamma}{\tau:\basekind}.
It follows that:
\begin{itemize}
\item if \typeEval{\Gamma}{\sigma \equiv \atcsign{\tau}:\sign}, then $\sigma = \atcsign{\tau'}$ for some $\tau'$ s.t. \typeEval{\Gamma}{\tau \equiv \tau':\basekind},
\item if \typeEval{\Gamma}{\atcsign{\tau} \equiv \sigma:\sign}, then $\sigma = \atcsign{\tau'}$ for some $\tau'$ s.t. \typeEval{\Gamma}{\tau \equiv \tau':\basekind}.
\end{itemize}}
\end{lemma}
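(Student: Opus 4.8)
The plan is to prove both bullets simultaneously by induction on the derivation of the signature equivalence judgment, since the two statements differ only in the orientation of the equivalence and are interderivable through \SeqS. Concretely, I would establish the symmetric invariant: whenever $\typeEval{\Gamma}{\sigma_1 \equiv \sigma_2:\sign}$, if either $\sigma_1$ or $\sigma_2$ is an atomic type signature then both are, say $\sigma_1 = \atcsign{\tau_1}$ and $\sigma_2 = \atcsign{\tau_2}$, and moreover $\typeEval{\Gamma}{\tau_1 \equiv \tau_2:\basekind}$. This invariant is symmetric in $\sigma_1,\sigma_2$ (using symmetry of constructor equivalence, \EqcS), so once it is proved the two bullets of the lemma follow immediately.

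The induction proceeds on the last rule of Fig.~\ref{fig:apd:signature-equivalence}. First I would dispatch the structural rules that are not applicable to atomic type signatures: \SeqAtk, \SeqDFnGn, \SeqDFnAp, and \SeqDPr all conclude an equivalence whose two sides have a top-level form ($\atksign{\cdot}$, $\Pign$, $\Piap$, or $\Sigma$) syntactically distinct from $\atcsign{\cdot}$; hence in these cases neither side is an atomic type signature and the invariant holds vacuously. The base cases \SeqR and \SeqAtc are direct: for \SeqR the two signatures are identical, and if that common signature is $\atcsign{\tau}$ then inverting \WfsThree on its well-formedness gives $\typeEval{\Gamma}{\tau:\basekind}$, so \EqcR yields $\typeEval{\Gamma}{\tau \equiv \tau:\basekind}$ (the hypothesis $\typeEval{\Gamma}{\tau:\basekind}$ of the lemma also suffices here); for \SeqAtc both sides are already atomic type signatures and the required constructor equivalence is exactly the premise.

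The two interesting cases are the non-syntax-directed rules \SeqS and \SeqT. For \SeqS the premise is the flipped equivalence, and since the stated invariant is symmetric, the induction hypothesis delivers precisely what is needed. For \SeqT, with premises $\typeEval{\Gamma}{\sigma_1 \equiv \sigma'':\sign}$ and $\typeEval{\Gamma}{\sigma'' \equiv \sigma_2:\sign}$ and, say, $\sigma_1 = \atcsign{\tau_1}$, I would apply the induction hypothesis to the first premise to learn that $\sigma'' = \atcsign{\tau''}$ with $\typeEval{\Gamma}{\tau_1 \equiv \tau'':\basekind}$; now $\sigma''$ is itself an atomic type signature, so applying the induction hypothesis to the second premise gives $\sigma_2 = \atcsign{\tau_2}$ with $\typeEval{\Gamma}{\tau'' \equiv \tau_2:\basekind}$, and \EqcT chains these to $\typeEval{\Gamma}{\tau_1 \equiv \tau_2:\basekind}$.

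The main obstacle is exactly this transitivity case: a naive one-directional formulation of the lemma would not let the induction hypothesis fire on the second premise, because after peeling off \SeqT one only knows that the outer side $\sigma_1$ is atomic, not the shape of the intermediate $\sigma''$. Phrasing the invariant so that being an atomic type signature on one side forces the same shape on the other, and carrying it bidirectionally so that \SeqS is handled for free, is what makes the induction close. Everything else reduces to the syntactic distinctness of the signature constructors together with routine applications of \EqcR, \EqcS, and \EqcT.
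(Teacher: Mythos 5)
Your proposal is correct and follows essentially the same route as the paper: the paper also proves the two bullets by a simultaneous induction on the equivalence derivation, using the mutually-stated (i.e.\ symmetric) form to discharge \SeqS, applying the induction hypothesis twice and chaining with \EqcT\ in the \SeqT\ case, and dismissing the remaining rules as inapplicable to $\atcsign{\cdot}$. Your packaging of the two bullets as one symmetric invariant is only a cosmetic difference from the paper's mutual-induction phrasing.
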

\begin{proof}
We prove the lemma by induction on derivation of \typeEval{\Gamma}{\atcsign{\tau} \equiv \sigma:\sign} and \typeEval{\Gamma}{\atcsign{\tau} \equiv \sigma:\sign}.
We consider the last rule applied.
We have four cases (other rules cannot be the last rule of the derivation).

\emcase{Case 1:} \SeqR.
From the rule, we have that $\sigma = \atcsign{\tau}$.
Since \typeEval{\Gamma}{\tau:\basekind}, from the \EqcR rule, we have that \typeEval{\Gamma}{\tau \equiv \tau:\basekind}.

\emcase{Case 2:} \SeqS.
We consider \typeEval{\Gamma}{\atcsign{\tau} \equiv \sigma:\sign} first.
From the rule, we have that \typeEval{\Gamma}{\sigma \equiv \atcsign{\tau}}.
From IH, $\sigma = \atcsign{\tau'}$ for some $\tau'$ s.t. \typeEval{\Gamma}{\tau \equiv \tau':\basekind}.

We now consider \typeEval{\Gamma}{\sigma \equiv \atcsign{\tau}:\sign}.
From the rule, we have that \typeEval{\Gamma}{\atcsign{\tau} \equiv \sigma:\sign}.
From IH, $\sigma = \atcsign{\tau'}$ for some $\tau'$ s.t. \typeEval{\Gamma}{\tau \equiv \tau':\basekind}.

\emcase{Case 3:} \SeqT
We consider \typeEval{\Gamma}{\atcsign{\tau} \equiv \sigma:\sign} first. From the rule, we have that \typeEval{\Gamma}{\atcsign{\tau} \equiv \sigma':\sign} and \typeEval{\Gamma}{\sigma' \equiv \sigma:\sign}.
From IH, it follows that $\sigma' = \atcsign{\tau''}$ for some $\tau''$ s.t. \typeEval{\Gamma}{\tau \equiv \tau'':\basekind}.
From the static semantics, it follows that \typeEval{\Gamma}{\tau'':\basekind}.
Thus, from IH, $\sigma = \atcsign{\tau'}$ for some $\tau'$ s.t. \typeEval{\Gamma}{\tau'' \equiv \tau':\basekind}.
From the \EqcT\ rule, it follows that \typeEval{\Gamma}{\tau \equiv \tau':\basekind}.

We now consider \typeEval{\Gamma}{\sigma \equiv \atcsign{\tau}:\sign}. From the rule, we have that \typeEval{\Gamma}{\sigma \equiv \sigma':\sign} and \typeEval{\Gamma}{\sigma' \equiv \atcsign{\tau}:\sign}.
From IH, it follows that $\sigma' = \atcsign{\tau''}$ for some $\tau''$ s.t. \typeEval{\Gamma}{\tau \equiv \tau'':\basekind}.
From the static semantics, it follows that \typeEval{\Gamma}{\tau'':\basekind}.
Thus, from IH, $\sigma = \atcsign{\tau'}$ for some $\tau'$ s.t. \typeEval{\Gamma}{\tau'' \equiv \tau':\basekind}.
From the \EqcT\ rule, it follows that \typeEval{\Gamma}{\tau \equiv \tau':\basekind}.

\emcase{Case 4:} \SeqAtc.
We consider \typeEval{\Gamma}{\atcsign{\tau} \equiv \sigma:\sign} first.
From the rule, $\sigma = \atcsign{\tau'}$ for some $\tau'$ and \typeEval{\Gamma}{\tau \equiv \tau':\basekind}.

We now consider the case \typeEval{\Gamma}{\sigma \equiv \atcsign{\tau}:\sign}.
From the rule, $\sigma = \atcsign{\tau'}$ for some $\tau'$ and \typeEval{\Gamma}{\tau' \equiv \tau:\basekind}.
From the \EqcS\ rule, it follows that \typeEval{\Gamma}{\tau \equiv \tau':\basekind}.
\end{proof}


\begin{lemma}
\label{lem:ml:sig-sub:atc}
\tealtext{Suppose that \typeEval{\Gamma}{\atcsign{\tau} \leq \sigma:\sign}.
It follows that $\typeEval{\Gamma}{\atcsign{\tau} \equiv \sigma:\sign}$.}
\end{lemma}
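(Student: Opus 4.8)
The plan is to proceed by induction on the derivation of \typeEval{\Gamma}{\atcsign{\tau} \leq \sigma:\sign}, doing a case analysis on the last rule applied. The first thing I would observe is that among the subsignature rules in Fig.~\ref{fig:apd:subsignature}, only \SubsEq and \SubsT can have a conclusion whose left-hand signature has the form \atcsign{\tau}: the rules \SubsK, \SubsDFnGn, \SubsDFnAp, and \SubsDPr conclude subsignature judgments between atomic \emph{kind} signatures, generative functor signatures, applicative functor signatures, and dependent pair signatures respectively, none of which is an atomic type signature. This inversion observation cuts the analysis down to exactly two cases.

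In the \SubsEq case, the premise is \typeEval{\Gamma}{\atcsign{\tau} \equiv \sigma:\sign}, which is precisely the desired conclusion, so this case closes immediately. The interesting case is \SubsT, where we have \typeEval{\Gamma}{\atcsign{\tau} \leq \sigma'':\sign} and \typeEval{\Gamma}{\sigma'' \leq \sigma:\sign} for some intermediate $\sigma''$. Applying the induction hypothesis to the first premise yields \typeEval{\Gamma}{\atcsign{\tau} \equiv \sigma'':\sign}. To be able to use the induction hypothesis on the second premise I first need to know that $\sigma''$ is itself an atomic type signature, and this is exactly where I would invoke the companion inversion Lemma~\ref{lem:ml:sig-eq:atc}: from \typeEval{\Gamma}{\atcsign{\tau} \equiv \sigma'':\sign}, together with \typeEval{\Gamma}{\tau:\basekind} (available from the well-formedness presupposition of the subsignature judgment via rule \WfsThree), it follows that $\sigma'' = \atcsign{\tau''}$ for some $\tau''$ with \typeEval{\Gamma}{\tau \equiv \tau'':\basekind}. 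Now the second premise reads \typeEval{\Gamma}{\atcsign{\tau''} \leq \sigma:\sign}, a strictly smaller instance of the lemma, so the induction hypothesis gives \typeEval{\Gamma}{\atcsign{\tau''} \equiv \sigma:\sign}. Finally I would combine \typeEval{\Gamma}{\atcsign{\tau} \equiv \atcsign{\tau''}:\sign} (obtained from \typeEval{\Gamma}{\tau \equiv \tau'':\basekind} by rule \SeqAtc) with \typeEval{\Gamma}{\atcsign{\tau''} \equiv \sigma:\sign} using transitivity of signature equivalence (rule \SeqT) to conclude \typeEval{\Gamma}{\atcsign{\tau} \equiv \sigma:\sign}.

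The main obstacle is the transitivity case: the induction hypothesis cannot be applied to the second premise until the shape of the intermediate signature $\sigma''$ has been pinned down as an atomic type signature, and supplying that is exactly the role of Lemma~\ref{lem:ml:sig-eq:atc}. Everything else is a direct reading of the rules. The one side point I would verify is that the hypothesis \typeEval{\Gamma}{\tau:\basekind} required by Lemma~\ref{lem:ml:sig-eq:atc} is genuinely available; this follows from the standard invariant that a derivable subsignature judgment presupposes well-formedness of both of its signatures, which for \atcsign{\tau} forces \typeEval{\Gamma}{\tau:\basekind} through rule \WfsThree.
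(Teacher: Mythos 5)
Your proof is correct and follows essentially the same route as the paper's: induction on the derivation, observing that only \SubsEq\ and \SubsT\ can apply, handling \SubsEq\ immediately, and in the \SubsT\ case using the induction hypothesis on the first premise, Lemma~\ref{lem:ml:sig-eq:atc} to identify the intermediate signature as an atomic type signature, the induction hypothesis on the second premise, and transitivity of signature equivalence to conclude. Your additional remark about the well-formedness presupposition needed to invoke Lemma~\ref{lem:ml:sig-eq:atc} is a detail the paper leaves implicit, but it does not change the argument.
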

\begin{proof}
We prove this lemma by induction on the derivation of \typeEval{\Gamma}{\atcsign{\tau} \leq \sigma:\sign}.
We consider the last rule applied.
We have two cases (other rules cannot be the last rule in the derivation).

\emcase{Case 1:} \SubsEq.
From the rule, we have that $\typeEval{\Gamma}{\atcsign{\tau} \equiv \sigma:\sign}$.

\emcase{Case 2:} \SubsT.
From the rule, we have that \typeEval{\Gamma}{\atcsign{\tau} \leq \sigma':\sign} and \typeEval{\Gamma}{\sigma' \leq \sigma:\sign}.
Since \typeEval{\Gamma}{\atcsign{\tau} \leq \sigma':\sign}, from IH, we have that \typeEval{\Gamma}{\atcsign{\tau} \equiv \sigma':\sign}.
From Lemma~\ref{lem:ml:sig-eq:atc}, $\sigma' = \atcsign{\tau'}$ for some $\tau'$ s.t. \typeEval{\Gamma}{\tau \equiv \tau':\basekind}.
Since \typeEval{\Gamma}{\sigma' \leq \sigma:\sign}, we have that \typeEval{\Gamma}{\atcsign{\tau'} \leq \sigma:\sign}.
From IH, we have that \typeEval{\Gamma}{\atcsign{\tau'} \equiv \sigma:\sign}.
Since $\sigma' = \atcsign{\tau'}$, it follows that \typeEval{\Gamma}{\sigma' \equiv \sigma:\sign}.

Since \typeEval{\Gamma}{\atcsign{\tau} \equiv \sigma':\sign}, and \typeEval{\Gamma}{\sigma' \equiv \sigma:\sign}, we have that \typeEval{\Gamma}{\atcsign{\tau} \equiv \sigma:\sign}.
%
%
\end{proof}

\begin{lemma}
\label{lem:ml:wrapper:kindeq:property}
\tealtext{It follows that:
\begin{itemize}
\item if \typeEval{\Gamma}{\basekind \equiv k:\kind} then $k$ is $\basekind$,
\item if \typeEval{\Gamma}{k \equiv \basekind:\kind} then $k$ is $\basekind$,
\item if \typeEval{\Gamma}{\unitkind \equiv k:\kind} then $k$ is $\unitkind$,
\item if \typeEval{\Gamma}{k \equiv \unitkind:\kind} then $k$ is $\unitkind$.
\end{itemize}}
\end{lemma}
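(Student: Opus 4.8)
The plan is to prove all four implications simultaneously by induction on the derivation of the kind-equivalence judgment, with a case analysis on the last rule applied (Fig.~\ref{fig:apd:kind-equivalence}). The argument rests on observing which syntactic shapes of kind each rule can place at the two ends of an equivalence. The three structural rules conclude equivalences of the form \typeEval{\Gamma}{\singleton{c_1} \equiv \singleton{c_2}:\kind} (rule \EqkSg), \typeEval{\Gamma}{\Pi \alpha:k_1.k_3 \equiv \Pi \alpha:k_2.k_4:\kind} (rule \EqkF), and \typeEval{\Gamma}{\Sigma \alpha:k_1.k_3 \equiv \Sigma \alpha:k_2.k_4:\kind} (rule \EqkP); in each case \emph{both} endpoints are a singleton, a $\Pi$, or a $\Sigma$ kind, none of which is syntactically \basekind\ or \unitkind. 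Hence none of these rules can be the last rule of a derivation whose conclusion has \basekind\ or \unitkind\ at either end, so the corresponding cases are vacuous.

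First I would dispatch the base case: if the last rule is \EqkR, the conclusion is \typeEval{\Gamma}{k \equiv k:\kind}, so the two endpoints are literally identical and the required equation holds at once. The interesting cases are \EqkS\ and \EqkT. For \EqkS\ the sub-derivation proves the equivalence with the endpoints swapped, so, for instance, to handle \typeEval{\Gamma}{\basekind \equiv k:\kind} derived by \EqkS\ I would invoke the induction hypothesis for the statement \typeEval{\Gamma}{k \equiv \basekind:\kind}. This is precisely why the four implications must be proved together in a single simultaneous induction: \EqkS\ moves between the first and second statements (and between the third and fourth), so neither member of a pair can be established in isolation.

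For \EqkT, suppose \typeEval{\Gamma}{\basekind \equiv k:\kind} is concluded from premises \typeEval{\Gamma}{\basekind \equiv k'':\kind} and \typeEval{\Gamma}{k'' \equiv k:\kind}. I would first apply the induction hypothesis to the left premise to deduce $k'' = \basekind$; substituting, the right premise becomes \typeEval{\Gamma}{\basekind \equiv k:\kind}, a strictly smaller derivation, to which the induction hypothesis applies again to yield $k = \basekind$. The \unitkind\ statements are handled identically, mutatis mutandis.

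The main obstacle is keeping the simultaneous induction well-founded in the transitivity case, where the induction hypothesis is applied twice: once to reclassify the intermediate kind $k''$ and once to the resulting (smaller) sub-derivation. One must take care that the induction is on derivation height rather than on the structure of $k$, since \EqkT\ does not decrease the endpoint kind. Everything else is routine syntactic inversion; the lemma is in essence a canonical-forms statement for the kinds \basekind\ and \unitkind\ under definitional equivalence, and it is exactly the inversion needed to reason about \fstsign{-} images of the encoded signatures elsewhere in this section.
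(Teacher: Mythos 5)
Your proof is correct and follows essentially the same route as the paper's: a simultaneous induction on the derivation with cases for \EqkR, \EqkS, and \EqkT, the structural rules being vacuous because their conclusions cannot have \basekind\ or \unitkind\ at either endpoint. Your write-up is in fact slightly more explicit than the paper's (which dismisses the transitivity case as ``similar''), but the argument is the same.
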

\begin{proof}
We prove this lemma by induction on the derivation of \typeEval{\Gamma}{\basekind \equiv k:\kind}, \typeEval{\Gamma}{k \equiv \basekind:\kind}, \typeEval{\Gamma}{\unitkind \equiv k:\kind}, and \typeEval{\Gamma}{k \equiv \unitkind:\kind}.
We have three cases.

\emcase{Case 1:} Rule \EqkR.
We consider \unitkind\ first.
From the rule, we have that $k$ is \unitkind.
The proof for \basekind\ is similar.

\emcase{Case 2:} Rule \EqkS.
\begin{itemize}
\item Case \typeEval{\Gamma}{\basekind \equiv k:\kind}. 
From the rule, \typeEval{\Gamma}{k \equiv \basekind:\kind}.
From IH, $k$ is \basekind.

\item Case \typeEval{\Gamma}{k \equiv \basekind:\kind}.
From the rule, \typeEval{\Gamma}{\basekind \equiv k:\kind}. 
From IH, $k$ is \basekind.

\item if \typeEval{\Gamma}{\unitkind \equiv k:\kind}.
From the rule, \typeEval{\Gamma}{k \equiv \unitkind:\kind}.
From IH, $k$ is \unitkind.

\item Case \typeEval{\Gamma}{k \equiv \unitkind:\kind}.
From the rule, \typeEval{\Gamma}{\unitkind \equiv k:\kind}. 
From IH, $k$ is \unitkind.
\end{itemize}

\emcase{Case 3:} Rule \EqkT.
The proof is similar to the proof of Case 2. 
\end{proof}

\begin{lemma}
\label{lem:ml:wrapper:signeq:property}
\tealtext{It follows that:
\begin{itemize}
\item if \typeEval{\Gamma}{\unitkind \equiv \sigma:\sign} then $\sigma$ is $\unitkind$,
\item if \typeEval{\Gamma}{\sigma \equiv \unitkind:\sign} then $\sigma$ is $\unitkind$,
\item if \typeEval{\Gamma}{\atksign{\basekind} \equiv \sigma:\sign} then $\sigma$ is $\atksign{\basekind}$,
\item if \typeEval{\Gamma}{\sigma \equiv \atksign{\basekind}:\sign} then $\sigma$ is $\atksign{\basekind}$.
\end{itemize}}
\end{lemma}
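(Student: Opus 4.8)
The plan is to prove the four implications simultaneously by induction on the derivation of the signature equivalence judgment, casing on the last rule applied, exactly mirroring the structure of the proof of Lemma~\ref{lem:ml:wrapper:kindeq:property} for kinds. Bundling all four items (and both orientations for each of the two shapes, \unitsign\ and \atksign{\basekind}) into a single induction is essential, because the symmetry and transitivity cases feed the induction hypothesis back in the opposite orientation. Inspecting Fig.~\ref{fig:apd:signature-equivalence}, the only rules whose conclusion could derive an equivalence with one of these shapes on a given side are the three general rules \SeqR, \SeqS, \SeqT\ and the congruence rules \SeqAtk, \SeqAtc, \SeqDFnGn, \SeqDFnAp, \SeqDPr; the argument amounts to an inversion showing that only a limited subset of these can actually fire.

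First I would dispatch the uniform cases. In case \SeqR\ the conclusion forces $\sigma$ to be syntactically the given signature, so we are immediately done. In case \SeqS\ the premise is the same statement with the two signatures swapped, which is precisely one of the four items in the opposite orientation, so the induction hypothesis applies directly. In case \SeqT\ the premises are \typeEval{\Gamma}{\sigma'' \equiv \sigma':\sign}-style chains through some intermediate $\sigma''$; applying the induction hypothesis to the premise that mentions our fixed shape pins $\sigma''$ down to that same shape, after which a second appeal to the induction hypothesis on the remaining premise pins down $\sigma$. These three cases are identical in spirit for all four items and account for the bulk of the proof.

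Next I would handle the congruence rules, where the two shapes behave differently. For the \unitsign\ items the key observation is that Fig.~\ref{fig:apd:signature-equivalence} contains \emph{no} congruence rule whose conclusion has \unitsign\ on either side (the structural rules \SeqAtk, \SeqAtc, \SeqDFnGn, \SeqDFnAp, \SeqDPr\ all produce atomic-kind, atomic-type, functor, or pair signatures), so none of them can be the last rule, and the uniform cases already suffice. For the \atksign{\basekind} items, the only congruence rule that can fire is \SeqAtk, whose premise is a kind equivalence \typeEval{\Gamma}{\basekind \equiv k':\kind}; here I would invoke Lemma~\ref{lem:ml:wrapper:kindeq:property} to conclude $k'$ is \basekind, hence $\sigma$ is \atksign{\basekind}. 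Thus this lemma is a direct corollary of the corresponding kind-level lemma together with the congruence-freeness of \unitsign.

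The only real obstacle is bookkeeping rather than mathematics: I must be careful in the transitivity case to apply the induction hypothesis to the correctly oriented premise before the other, and I must justify the inversion claim that no structural rule besides \SeqAtk\ (and none at all for \unitsign) can conclude an equivalence of the fixed shape. That justification is immediate from the syntactic form of each congruence rule's conclusion, so I expect no genuine difficulty, and the reliance on Lemma~\ref{lem:ml:wrapper:kindeq:property} keeps the atomic-kind case short.
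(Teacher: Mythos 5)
Your proof is correct and follows essentially the same route as the paper's: induction on the equivalence derivation with the reflexivity/symmetry/transitivity cases handled uniformly, the observation that no congruence rule concludes an equivalence involving \unitsign, and an appeal to Lemma~\ref{lem:ml:wrapper:kindeq:property} in the \SeqAtk\ case for \atksign{\basekind}. The only cosmetic difference is that you bundle all four items into one mutual induction whereas the paper runs two separate mutual inductions (one per shape), which changes nothing of substance.
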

\begin{proof}
We prove the two first items of the lemma by induction on the derivation of \typeEval{\Gamma}{\unitsign \equiv \sigma:\sign} and \typeEval{\Gamma}{\sigma \equiv \unitsign:\sign}.

\emcase{Case 1a:} Rule \SeqR.
From the rule, $\sigma$ is \unitsign.

\emcase{Case 2a:} Rule \SeqS.
\begin{itemize}
\item Case \typeEval{\Gamma}{\unitsign \equiv \sigma:\sign}. From the rule, \typeEval{\Gamma}{\sigma \equiv \unitsign:\sign}.
From IH, $\sigma$ is \unitsign.

\item Case \typeEval{\Gamma}{\sigma \equiv \unitsign:\sign}. From the rule, \typeEval{\Gamma}{\unitsign \equiv \sigma:\sign}.
From IH, $\sigma$ is \unitsign.
\end{itemize}

\emcase{Case 3a:} Rule \SeqT. The proof is similar to the proof of Case 2a.

We prove the last two items of the lemma by induction on the derivation of  \typeEval{\Gamma}{\atksign{\basekind} \equiv \sigma:\sign}, and \typeEval{\Gamma}{\sigma \equiv \atksign{\basekind}:\sign}.

\emcase{Case 1b:} Rule \SeqR.
From the rule, $\sigma$ is \atksign{\basekind}.

\emcase{Case 2b:} Rule \SeqS.
\begin{itemize}
\item Case \typeEval{\Gamma}{\atksign{\basekind} \equiv \sigma:\sign}.
From the rule, \typeEval{\Gamma}{\sigma \equiv \atksign{\basekind}:\sign}.
From IH, $\sigma$ is \atksign{\basekind}.

\item Case \typeEval{\Gamma}{\sigma \equiv \atksign{\basekind}:\sign}.
From the rule, \typeEval{\Gamma}{\atksign{\basekind} \equiv \sigma:\sign}.
From IH, $\sigma$ is \atksign{\basekind}.
\end{itemize}

\emcase{Case 3b:} Rule \SeqT. 
The proof is similar to the proof of Case 2b.

\emcase{Case 4b:} Rule \SeqAtk.
From the rule, $\sigma' = \atksign{k'}$ for some $k'$ s.t. \typeEval{\Gamma}{\basekind \equiv k':\kind}.
From Lemma~\ref{lem:ml:wrapper:kindeq:property}, $k'$ is $\basekind$.
Thus, $\sigma'$ is $\atksign{\basekind}$.
\end{proof}

\begin{lemma}
\label{lem:ml:wrapper:kindsub:property}
\tealtext{It follows that:
\begin{itemize}
\item if \typeEval{\Gamma}{\unitkind \leq k:\kind} then $k$ is $\unitkind$,
\item if \typeEval{\Gamma}{\basekind \leq k:\kind} then $k$ is $\basekind$.
\end{itemize}}
\end{lemma}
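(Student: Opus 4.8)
The plan is to argue by induction on the derivation of the subkinding judgment, analyzing the last rule applied. The crucial observation is purely syntactic: among the subkinding rules in Fig.~\ref{fig:apd:subkinding}, the conclusions of \SubkSg, \SubkDFn, and \SubkDPr\ have a left-hand side of the form $S(c)$, $\Pi \alpha:k_1.k_2$, and $\Sigma \alpha:k_1.k_2$ respectively, none of which can be syntactically equal to \unitkind\ or \basekind. Hence for either \typeEval{\Gamma}{\unitkind \leq k:\kind} or \typeEval{\Gamma}{\basekind \leq k:\kind}, the last rule in the derivation must be either \SubkE\ or \SubkT, and no inspection of the premises of the structural rules is needed.

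For the \SubkE\ case, the premise is a kind equivalence \typeEval{\Gamma}{\unitkind \equiv k:\kind} (resp.\ \typeEval{\Gamma}{\basekind \equiv k:\kind}), and I would discharge the goal directly by invoking Lemma~\ref{lem:ml:wrapper:kindeq:property}, which already establishes that $k$ must then be \unitkind\ (resp.\ \basekind). For the \SubkT\ case, the premises supply an intermediate kind $k'$ with \typeEval{\Gamma}{\unitkind \leq k':\kind} and \typeEval{\Gamma}{k' \leq k:\kind}. Applying the induction hypothesis to the first premise forces $k'$ to be \unitkind; the second premise then reads \typeEval{\Gamma}{\unitkind \leq k:\kind}, to which the induction hypothesis applies again, yielding that $k$ is \unitkind. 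The \basekind\ case is handled identically, using the corresponding clauses of Lemma~\ref{lem:ml:wrapper:kindeq:property}.

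I do not expect a genuine obstacle here; the statement is a routine inversion lemma in the same spirit as Lemmas~\ref{lem:ml:wrapper:kindeq:property} and~\ref{lem:ml:wrapper:signeq:property}. The only points requiring mild care are (i) confirming that the three structural rules are excluded solely by the shape of their conclusions, and (ii) checking that the induction in the transitivity case is well-founded, which it is, since both appeals to the induction hypothesis are on strict subderivations of the given derivation.
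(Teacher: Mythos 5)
Your proposal matches the paper's proof: the paper also proceeds by induction on the derivation, notes that only \SubkE\ and \SubkT\ can conclude a judgment whose left-hand side is \unitkind\ or \basekind, discharges the \SubkE\ case via Lemma~\ref{lem:ml:wrapper:kindeq:property}, and handles \SubkT\ by two applications of the induction hypothesis to the subderivations. No differences worth noting.
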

\begin{proof}
We prove this lemma by induction on the derivation of \typeEval{\Gamma}{\_ \leq k:\kind}.
Here, we only prove the first part of the lemma.
The proof of the second part is similar.

\emcase{Case 1:}  Rule \SubkE. 
From the rule, \typeEval{\Gamma}{\unitkind \equiv k:\kind}.
From Lemma~\ref{lem:ml:wrapper:kindeq:property}, $k$ is $\unitkind$.

\emcase{Case 2:} Rule \SubkT.
From the rule, \typeEval{\Gamma}{\unitkind \leq k':\kind} and \typeEval{\Gamma}{k' \leq k:\kind}.

Since \typeEval{\Gamma}{\unitkind \leq k':\kind}, from IH, $k'$ is $\unitkind$.
Since \typeEval{\Gamma}{k' \leq k:\kind} and $k'$ is $\unitkind$, from IH, we have that k is $\unitkind$.
\end{proof}

\begin{lemma}
\label{lem:ml:wrapper:signsub:property}
\tealtext{It follows that:
\begin{itemize}
\item if \typeEval{\Gamma}{\unitsign \leq \sigma:\sign} then $\sigma$ is $\unitsign$,
\item if \typeEval{\Gamma}{\atksign{\basekind} \leq \sigma:\sign} then $\sigma$ is $\atksign{\basekind}$.
\end{itemize}}
\end{lemma}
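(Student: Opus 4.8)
The plan is to prove both claims simultaneously by induction on the derivation of the subsignature judgment, inverting on the last rule applied, in exactly the style of the proof of Lemma~\ref{lem:ml:sig-sub:atc}. For each claim the head constructor of the signature on the left of $\leq$ (namely \unitsign\ in the first claim and \atksign{\basekind}\ in the second) rules out most of the subsignature rules, so only a handful of cases survive.

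First I would dispatch the equivalence case: if the last rule is \SubsEq, then \typeEval{\Gamma}{\unitsign \equiv \sigma:\sign} (resp.\ \typeEval{\Gamma}{\atksign{\basekind} \equiv \sigma:\sign}), and Lemma~\ref{lem:ml:wrapper:signeq:property} immediately forces $\sigma$ to be \unitsign\ (resp.\ \atksign{\basekind}). Next, for the transitivity rule \SubsT, there is an intermediate $\sigma'$ with \typeEval{\Gamma}{\unitsign \leq \sigma':\sign} and \typeEval{\Gamma}{\sigma' \leq \sigma:\sign}; applying the induction hypothesis to the first premise yields $\sigma' = \unitsign$, and then applying it again to \typeEval{\Gamma}{\unitsign \leq \sigma:\sign} yields $\sigma = \unitsign$. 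The identical two-fold use of the induction hypothesis handles transitivity in the \atksign{\basekind}\ claim.

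The only genuinely structural case is the atomic-kind rule \SubsK arising in the second claim: there $\sigma = \atksign{k'}$ with \typeEval{\Gamma}{\basekind \leq k':\kind}, and Lemma~\ref{lem:ml:wrapper:kindsub:property} forces $k'$ to be \basekind, so $\sigma$ is \atksign{\basekind}\ as required. All remaining rules—\SubsDFnGn, \SubsDFnAp, \SubsDPr, and also \SubsK in the first claim—cannot be the last step, since each concludes a subsignature judgment whose left-hand side is a functor signature, a pair signature, or an atomic kind signature, none of which can be \unitsign, and none of which (except the just-treated \atksign{\basekind}\ instance of \SubsK) can be \atksign{\basekind}.

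Since the argument is essentially routine inversion, the main obstacle is simply ensuring the auxiliary inversion facts it depends on are in place—namely that signature equivalence preserves the \unitsign\ and \atksign{\basekind}\ shapes (Lemma~\ref{lem:ml:wrapper:signeq:property}) and that subkinding above \basekind\ collapses to \basekind\ (Lemma~\ref{lem:ml:wrapper:kindsub:property}). Both are established immediately above, so no real difficulty remains; the transitivity case is the only spot requiring attention, to invoke the induction hypothesis twice rather than once.
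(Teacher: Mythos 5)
Your proof is correct and follows essentially the same route as the paper: induction on the subsignature derivation, using Lemma~\ref{lem:ml:wrapper:signeq:property} for the \SubsEq\ case, a double application of the induction hypothesis for \SubsT, and Lemma~\ref{lem:ml:wrapper:kindsub:property} for the \SubsK\ case of the second claim. The only cosmetic difference is that you treat the two claims simultaneously where the paper runs the two inductions separately.
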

\begin{proof}
We prove the first part of the lemma by induction on the derivation of \typeEval{\Gamma}{\unitkind \leq \sigma:\sign}.

\emcase{Case 1a:}  Rule \SubsEq. 
From the rule, \typeEval{\Gamma}{\unitsign \equiv \sigma:\sign}.
From Lemma~\ref{lem:ml:wrapper:signeq:property}, $\sigma$ is $\unitsign$.

\emcase{Case 2a:} Rule \SubsT.
From the rule, \typeEval{\Gamma}{\unitsign \leq \sigma':\sign} and \typeEval{\Gamma}{\sigma' \leq \sigma:\sign}.

Since \typeEval{\Gamma}{\unitsign \leq \sigma':\sign}, from IH, $\sigma'$ is $\unitsign$.
Since \typeEval{\Gamma}{\sigma' \leq \sigma:\sign} and $\sigma'$ is $\unitsign$, from IH, we have that $\sigma$ is $\unitsign$.

We now prove the second part of the lemma by induction on the derivation of \typeEval{\Gamma}{\atksign{\basekind} \leq \sigma:\sign}.

\emcase{Case 1b:}  Rule \SubsEq. 
From the rule, \typeEval{\Gamma}{\atksign{\basekind} \equiv \sigma:\sign}.
From Lemma~\ref{lem:ml:wrapper:signeq:property}, $\sigma$ is $\atksign{\basekind}$.

\emcase{Case 2b:} Rule \SubsT.
From the rule, \typeEval{\Gamma}{\atksign{\basekind} \leq \sigma':\sign} and \typeEval{\Gamma}{\sigma' \leq \sigma:\sign}.

Since \typeEval{\Gamma}{\atksign{\basekind} \leq \sigma':\sign}, from IH, $\sigma'$ is $\atksign{\basekind}$.
Since \typeEval{\Gamma}{\sigma' \leq \sigma:\sign} and $\sigma'$ is $\atksign{\basekind}$, from IH, we have that $\sigma$ is $\atksign{\basekind}$.

\emcase{Case 3b:} Rule \SubsK.
From the rule, $\sigma$ is $k'$ s.t. \typeEval{\Gamma}{\basekind \leq k':\kind}.
From Lemma~\ref{lem:ml:wrapper:kindsub:property}, $k' = \basekind$.
\end{proof}

\begin{lemma}
\label{lem:ml:wrapper:signature:policy:property}
For any {$L\subseteq \varPolicy{\policy}$}, if \typeEval{\Gamma}{\encpub{L} \leq \sigma:\sign} for some $\sigma$, then $\typeEval{\Gamma}{\encpub{L} \equiv \sigma:\sign}$.
\end{lemma}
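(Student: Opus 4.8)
The plan is to prove the statement not by a direct induction on $L$ but through a generalization, because the recursive unfolding of $\encpub{L}$ produces intermediate signatures that are not themselves of the form $\encpub{L'}$. For instance, in the non-declassifiable case $\encpub{x\concat L'}=\Sigma \alpha_x:\atksign{\basekind}.\Sigma \alpha:\atcsign{\alpha_x}.\encpub{L'}$, the tail $\Sigma \alpha:\atcsign{\alpha_x}.\encpub{L'}$ lies outside the image of $\encpub{-}$, so the induction hypothesis for a shorter list does not apply to it. I would therefore introduce the class $\mathcal{T}$ of \emph{encoding-tail} signatures, generated by: $\unitsign \in \mathcal{T}$; $\Sigma \alpha:\atksign{\basekind}.\sigma' \in \mathcal{T}$ whenever $\sigma' \in \mathcal{T}$; and $\Sigma \alpha:\atcsign{c}.\sigma' \in \mathcal{T}$ whenever $\sigma' \in \mathcal{T}$ (for any constructor $c$). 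A glance at Fig.~\ref{fig:ml:signatures} shows $\encpub{L}\in\mathcal{T}$ for every $L\subseteq\varPolicy{\policy}$, so it suffices to prove the rigidity statement: if $\sigma_0\in\mathcal{T}$ and $\typeEval{\Gamma}{\sigma_0 \leq \sigma:\sign}$, then $\typeEval{\Gamma}{\sigma_0 \equiv \sigma:\sign}$.

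I would establish this by structural induction on $\sigma_0\in\mathcal{T}$. The base case $\sigma_0=\unitsign$ follows from the first part of Lemma~\ref{lem:ml:wrapper:signsub:property}, which forces $\sigma=\unitsign$, after which rule \SeqR gives the equivalence. For the inductive cases $\sigma_0 = \Sigma \alpha:\sigma_1.\sigma_2$ (with $\sigma_1$ either $\atksign{\basekind}$ or $\atcsign{c}$, and $\sigma_2\in\mathcal{T}$ strictly smaller), the key is an inversion principle for subsignaturing at a $\Sigma$: from $\typeEval{\Gamma}{\Sigma \alpha:\sigma_1.\sigma_2 \leq \sigma}$ one concludes that $\sigma=\Sigma \alpha:\sigma_1'.\sigma_2'$ for some $\sigma_1',\sigma_2'$ with $\typeEval{\Gamma}{\sigma_1 \leq \sigma_1'}$ and $\typeEval{\Gamma,\alpha:\fstsign{\sigma_1}}{\sigma_2 \leq \sigma_2'}$. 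Given this, rigidity of the head is obtained from the already-proved component lemmas: when $\sigma_1=\atksign{\basekind}$, the second part of Lemma~\ref{lem:ml:wrapper:signsub:property} gives $\sigma_1'=\atksign{\basekind}$; when $\sigma_1=\atcsign{c}$, Lemma~\ref{lem:ml:sig-sub:atc} gives $\typeEval{\Gamma}{\sigma_1 \equiv \sigma_1'}$. In either case $\typeEval{\Gamma}{\sigma_1 \equiv \sigma_1'}$, and applying the induction hypothesis to the smaller tail $\sigma_2$ yields $\typeEval{\Gamma,\alpha:\fstsign{\sigma_1}}{\sigma_2 \equiv \sigma_2'}$. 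Reassembling with rule \SeqDPr then delivers $\typeEval{\Gamma}{\Sigma \alpha:\sigma_1.\sigma_2 \equiv \sigma:\sign}$, closing the case.

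The main obstacle is proving the inversion principle, since the derivation of $\typeEval{\Gamma}{\Sigma \alpha:\sigma_1.\sigma_2 \leq \sigma}$ need not end with the structural rule \SubsDPr: it may end with \SubsEq or with the transitivity rule \SubsT. I would treat this by an auxiliary induction on the subsignature derivation, in the same spirit as the proofs of Lemma~\ref{lem:ml:sig-sub:atc} and Lemma~\ref{lem:ml:wrapper:signsub:property}. The \SubsEq case reduces to an analogous inversion for signature equivalence, which follows from \SeqDPr together with the head-shape rigidity recorded in Lemma~\ref{lem:ml:wrapper:signeq:property} and Lemma~\ref{lem:ml:sig-eq:atc}. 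The delicate case is \SubsT, where $\typeEval{\Gamma}{\sigma_0 \leq \sigma''}$ and $\typeEval{\Gamma}{\sigma'' \leq \sigma}$: here I first invert the left premise to learn that $\sigma''$ is again a $\Sigma$ whose head is forced by the component lemmas to retain the shape of $\mathcal{T}$, so that the right premise can be inverted in turn and the component subsignaturings composed by \SubsT. To make this composition legitimate I would record, as a small separate lemma, that $\mathcal{T}$-membership and the head rigidity it provides are preserved under signature equivalence; this is precisely what lets the transitivity step feed back into the induction. Apart from this bookkeeping, every remaining step is a routine application of the already-established equivalence and subsignature rules.
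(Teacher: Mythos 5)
Your proposal is correct, and it reaches the same endpoint through the same component lemmas (Lemma~\ref{lem:ml:wrapper:signsub:property} for the $\atksign{\basekind}$ heads, Lemma~\ref{lem:ml:sig-sub:atc} for the $\atcsign{c}$ heads, reassembly by \SeqDPr), but it organizes the induction differently and is more honest about one step the paper glosses over. The paper inducts directly on $L$ and, in each non-empty case, peels off two or three $\Sigma$-layers at once so that the innermost tail is again $\encpub{L'}$ and the induction hypothesis applies; crucially, it obtains the componentwise decomposition of $\typeEval{\Gamma}{\encpub{L}\leq\sigma:\sign}$ by simply asserting ``from the definition of subsignature'' / ``from the \SubsDPr\ rule'' that $\sigma$ has the matching nested $\Sigma$-shape --- i.e., it treats the $\Sigma$-inversion as immediate, without considering derivations ending in \SubsEq\ or \SubsT. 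Your generalization to the class $\mathcal{T}$ of encoding tails buys a clean one-layer-at-a-time structural induction (so you never need a tail outside the image of $\encpub{-}$ to be handled by an $L$-indexed hypothesis), and it forces you to state and prove the inversion principle explicitly, including the delicate \SubsT\ case where the head rigidity of the intermediate signature (hence equality of the \fstsign{} contexts) is exactly what legitimizes composing the component subsignaturings. In short: the paper's proof is shorter because it inverts several layers at once and leaves the inversion lemma implicit; yours is slightly longer but closes that gap, at the cost of introducing the auxiliary class $\mathcal{T}$ and the equivalence-preservation bookkeeping. Either route is acceptable; if you adopt the paper's organization you should still record your inversion lemma somewhere, since the cases \SubsEq\ and \SubsT\ genuinely arise.
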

\begin{proof}
We prove the lemma by induction on $L$.

\emcase{Case 1:} $L = \emptylist$.
We have that $\encpub{L} = \unitsign$.
Therefore, we have that \typeEval{}{\unitsign \leq \sigma:\sign}.
From Lemma~\ref{lem:ml:wrapper:signsub:property}, we have that $\sigma$ is $\unitsign$.
Thus, \typeEval{\Gamma}{\unitsign \equiv \sigma:\sign}.

\emcase{Case 2:} $L = x \concat L'$ where $x \not \in \dom{\decPolicy{\policy}}$.
We have that $\encpub{L} = \Sigma \alpha_x:\atksign{\basekind}.\Sigma \alpha:\atcsign{\alpha_x}.\encpub{L'}$.
From the definition of subsignature, $\sigma = \Sigma \alpha_x:\sigma_1.\Sigma \alpha:\sigma_2.\sigma_3$.
Without loss of generality, we suppose that $\alpha_x$ and $\alpha$ are not in $\dom{\Gamma}$ (we can change the constructor variables if necessary).
Therefore, we have that $\Gamma, \alpha_x:\basekind\ \ok$ and $\Gamma, \alpha_x:\basekind, \alpha:\unitkind\ \ok$.


Since \typeEval{\Gamma}{\encpub{L} \leq \sigma:\sign}, from the \SubsDPr\ rule, it follows that:
    \begin{itemize}
    \item \typeEval{\Gamma}{\atksign{\basekind} \leq \sigma_1:\sign}
    \item {\typeEval{\Gamma,\alpha_x:\basekind}{\Sigma \alpha:\atcsign{\alpha_x}.\encpub{L'} \leq \Sigma \alpha:\sigma_2.\sigma_3}}, and hence,
        \begin{itemize}
        \item \typeEval{\Gamma,\alpha_x:\basekind}{\atcsign{\alpha_x} \leq \sigma_2:\sign}
        \item \typeEval{\Gamma,\alpha_x:\basekind,\alpha:\unitkind}{\encpub{L'} \leq \sigma_3:\sign}
        \end{itemize}
    \end{itemize}


We have that:
\begin{itemize}
\item \typeEval{\Gamma}{\atksign{\basekind} \leq \sigma_1:\sign}.
From {Lemma~\ref{lem:ml:wrapper:signsub:property}}, $\sigma_1$ is $\atksign{\basekind}$ and hence, \typeEval{\Gamma}{\atksign{\basekind} \equiv \sigma_1:\sign}.

\item \typeEval{\Gamma,\alpha_x:\basekind}{\atcsign{\alpha_x} \leq \sigma_2:\sign}.
From {Lemma~\ref{lem:ml:sig-sub:atc}}, $\typeEval{\Gamma,\alpha_x:\basekind}{\atcsign{\alpha_x} \equiv \sigma_2:\sign}$.

\item \typeEval{\Gamma,\alpha_x:\basekind,\alpha:\unitkind}{\encpub{L'} \leq \sigma_3:\sign}.
From IH, we have that \typeEval{\Gamma,\alpha_x:\basekind,\alpha:\unitkind}{\encpub{L'}\equiv \sigma_3:\sign}.
\end{itemize}

Therefore, we have the following derivation:

$$\footnotesize
\LabelRule{\SeqDPr}{
\typeEval{\Gamma}{\atksign{\basekind} \equiv \sigma_1:\sign}\hspace{50pt}
\LabelRuleProof{\SeqDPr}{
\typeEval{\Gamma,\alpha_x:\basekind}{\atcsign{\alpha_x} \equiv \sigma_2:\sign} \\
\typeEval{\Gamma,\alpha_x:\basekind,\alpha:\unitkind}{\encpub{L'} \equiv \sigma_3:\sign}\\
}
{\typeEval{\Gamma,\alpha_x:\basekind}{\Sigma \alpha:\atcsign{\alpha_x}.\encpub{L'} \equiv \Sigma \alpha:\sigma_2.\sigma_3}}
}
{\typeEval{\Gamma}{\Sigma \alpha_x:\atksign{\basekind}.\Sigma \alpha:\atcsign{\alpha_x}.\encpub{L'} \equiv \Sigma \alpha_x:\sigma_1.\Sigma \alpha:\sigma_2.\sigma_3}}
$$

\emcase{Case 3:} $ = x \concat L'$ and $\decPolicy{\policy}(x) = f$, where \typeEval{}{f:\intType \rightarrow\tau}.
We have that $\encpub{L} = \Sigma \alpha_f:\atksign{\basekind}.\Sigma \alpha_1:\atcsign{\alpha_f}.\Sigma \alpha_2:\atcsign{\alpha_f \rightarrow \tau}.\encpub{L'}$.
From the definition of subsignature, $\sigma =  \Sigma \alpha_f:\sigma_1.\Sigma \alpha_1:\sigma_2.\Sigma \alpha_2:\sigma_3.\encpub{L'}$.
Without loss of generality, we suppose that $\alpha_f$, $\alpha_1$, and $\alpha_2$ are not in $\dom{\Gamma}$ (we can change the constructor variables if it is necessary).
Therefore, we have that $\Gamma, \alpha_f:\basekind\ \ok$, $\Gamma, \alpha_f:\basekind, \alpha_1:\unitkind\ \ok$, and $\Gamma, \alpha_f:\basekind, \alpha_1:\unitkind, \alpha_2:\unitkind\ \ok$.

Since \typeEval{\Gamma}{\encpub{L} \leq \sigma:\sign}, from the \SubsDPr\ rule, it follows that:
\begin{itemize}
\item {\typeEval{\Gamma}{\atksign{\basekind} \leq \sigma_1:\sign}}
\item \typeEval{\Gamma,\alpha_f:\basekind}{\Sigma \alpha_1:\atcsign{\alpha_f}.\Sigma \alpha_2:\atcsign{\alpha_f \rightarrow\tau}.\encpub{L'} \leq \Sigma \alpha_1:\sigma_2.\Sigma \alpha_2:\sigma_3.\sigma_4}, and hence:
    \begin{itemize}
    \item {\typeEval{\Gamma,\alpha_f:\basekind}{\atcsign{\alpha_f} \leq \sigma_2:\sign}}
    \item \typeEval{\Gamma,\alpha_f:\basekind,\alpha_1:\unitkind}{\Sigma \alpha_2:\atcsign{\alpha_f \rightarrow\tau}.\encpub{L'} \leq \Sigma \alpha_2:\sigma_3.\sigma_4:\sign}, and hence,
    \begin{itemize}
    \item {\typeEval{\Gamma,\alpha_f:\basekind,\alpha_1:\unitkind}{\atcsign{\alpha_f \rightarrow\tau} \leq \sigma_3}}
    \item {\typeEval{\Gamma,\alpha_f:\basekind,\alpha_1:\unitkind,\alpha_2:\unitkind}{\encpub{L'} \leq \sigma_4:\sign}}
    \end{itemize}

    \end{itemize}
\end{itemize}

%

As in Case 2, we have that:
\begin{itemize}

\item \typeEval{\Gamma,\alpha_f:\basekind}{\atcsign{\alpha_f} \equiv \sigma_2:\sign},
\item \typeEval{\Gamma,\alpha_f:\basekind,\alpha_1:\unitkind}{\atcsign{\alpha_f \rightarrow\tau} \equiv \sigma_3},
\item \typeEval{\Gamma,\alpha_f:\basekind,\alpha_1:\unitkind,\alpha_2:\unitkind}{\encpub{L'} \equiv \sigma_4:\sign},
\item \typeEval{\Gamma}{\atksign{\basekind} \equiv \sigma_1:\sign},
\end{itemize}

Therefore, we have the following derivations:
$$
\footnotesize
\LabelRule{}{
{\typeEval{\Gamma,\alpha_f:\basekind}{\atcsign{\alpha_f} \equiv \sigma_2:\sign}} \\
\LabelRule{}{~
    {\typeEval{\Gamma,\alpha_f:\basekind,\alpha_1:\unitkind}{\atcsign{\alpha_f \rightarrow\tau} \equiv \sigma_3}} \\
    {\typeEval{\Gamma,\alpha_f:\basekind,\alpha_1:\unitkind,\alpha_2:\unitkind}{\encpub{L'} \equiv \sigma_4:\sign}}
}
{\typeEval{\Gamma,\alpha_f:\basekind,\alpha_1:\unitkind}{\Sigma \alpha_2:\atcsign{\alpha_f \rightarrow\tau}.\encpub{L'} \equiv \Sigma \alpha_2:\sigma_3.\sigma_4:\sign}}
}
{\typeEval{\Gamma,\alpha_f:\basekind}{\Sigma \alpha_1:\atcsign{\alpha_f}.\Sigma \alpha_2:\atcsign{\alpha_f \rightarrow\tau}.\encpub{L'} \equiv \Sigma \alpha_1:\sigma_2.\Sigma \alpha_2:\sigma_3.\sigma_4}}
$$

$$\footnotesize
\LabelRule{}{
    {\typeEval{\Gamma}{\atksign{\basekind} \equiv \sigma_1:\sign}}\\
    \typeEval{\Gamma,\alpha_f:\basekind}{\Sigma \alpha_1:\atcsign{\alpha_f}.\Sigma \alpha_2:\atcsign{\alpha_f \rightarrow\tau}.\encpub{L'} \equiv \Sigma \alpha_1:\sigma_2.\Sigma \alpha_2:\sigma_3.\sigma_4}
}
{\typeEval{\Gamma}{\Sigma \alpha_f:\atksign{\basekind}.\Sigma \alpha_1:\atcsign{\alpha_f}.\Sigma \alpha_2:\atcsign{\alpha_f \rightarrow\tau}.\encpub{L'} \equiv \Sigma \alpha_f:\sigma_1.\Sigma \alpha_1:\sigma_2.\Sigma \alpha_2:\sigma_3.\sigma_4}}
$$

Thus, we have that $\typeEval{\Gamma}{\Sigma \alpha_f:\atksign{\basekind}.\Sigma \alpha_1:\atcsign{\alpha_f}.\Sigma \alpha_2:\atcsign{\alpha_f \rightarrow\tau}.\encpub{L'} \equiv \Sigma \alpha_f:\sigma_1.\Sigma \alpha_1:\sigma_2.\Sigma \alpha_2:\sigma_3.\sigma_4}$.

%
\ifshow
We have that $\encpub{L} = \Sigma \alpha_{f\circ a}: \atksign{\basekind}.\Sigma \alpha_f:\atksign{\basekind}.\Sigma \alpha_1:\atcsign{\alpha_{f\circ a}}.\Sigma \alpha_2:\atcsign{\alpha_{f\circ a} \rightarrow \alpha_{f}}.\Sigma \alpha_3:\atcsign{\alpha_f \rightarrow \tau}.\encpub{L'}$.
From the definition of subsignature, $\sigma = \Sigma \alpha_{f\circ a}:\sigma_1.\Sigma \alpha_f:\sigma_2.\Sigma \alpha_1:\sigma_3.\Sigma \alpha_2:\sigma_4.\Sigma \alpha_3:\sigma_5.\encpub{L'}$.
Since \typeEval{}{\encpub{L} \leq \sigma:\sign}, we have the following derivations.

$$\small
\EmptyRule{
    \redtext{\typeEval{}{\atksign{\basekind}\leq \sigma_1:\sign}} \\
    \typeEval{\alpha_{f\circ a}: \atksign{\basekind}}
    {\Sigma \alpha_f:\atksign{\basekind}.\Sigma \alpha_1:\atcsign{\alpha_{f\circ a}}.\Sigma \alpha_2:\atcsign{\alpha_{f\circ a} \rightarrow \alpha_{f}}.\Sigma \alpha_3:\atcsign{\alpha_f \rightarrow \tau}.\encpub{L'}} \leq \\\\ 
    \Sigma \alpha_f:\sigma_2.\Sigma \alpha_1:\sigma_3.\Sigma \alpha_2:\sigma_4.\Sigma \alpha_3:\sigma_5.\encpub{L'}:\sign
}
{\typeEval{}{\Sigma \alpha_{f\circ a}: \atksign{\basekind}.\Sigma \alpha_f:\atksign{\basekind}.\Sigma \alpha_1:\atcsign{\alpha_{f\circ a}}.\Sigma \alpha_2:\atcsign{\alpha_{f\circ a} \rightarrow \alpha_{f}}.\Sigma \alpha_3:\atcsign{\alpha_f \rightarrow \tau}.\encpub{L'}} \leq \\\\ 
\Sigma \alpha_{f\circ a}:\sigma_1.\Sigma \alpha_f:\sigma_2.\Sigma \alpha_1:\sigma_3.\Sigma \alpha_2:\sigma_4.\Sigma \alpha_3:\sigma_5.\sigma_6:\sign
}$$

$$\small
\EmptyRule{
    \redtext{\typeEval{\alpha_{f\circ a}: {\basekind}}{\atksign{\basekind}\leq \sigma_2:\sign}} \\
    \typeEval{\alpha_{f\circ a}: \atksign{\basekind}, \alpha_f:\atksign{\basekind}}
    {\Sigma \alpha_1:\atcsign{\alpha_{f\circ a}}.\Sigma \alpha_2:\atcsign{\alpha_{f\circ a} \rightarrow \alpha_{f}}.\Sigma \alpha_3:\atcsign{\alpha_f \rightarrow \tau}.\encpub{L'}} \leq \\\\ 
    \Sigma \alpha_1:\sigma_3.\Sigma \alpha_2:\sigma_4.\Sigma \alpha_3:\sigma_5.\sigma_6:\sign
    }
{\typeEval{\alpha_{f\circ a}: \atksign{\basekind}}
    {\Sigma \alpha_f:\atksign{\basekind}.\Sigma \alpha_1:\atcsign{\alpha_{f\circ a}}.\Sigma \alpha_2:\atcsign{\alpha_{f\circ a} \rightarrow \alpha_{f}}.\Sigma \alpha_3:\atcsign{\alpha_f \rightarrow \tau}.\encpub{L'}} \leq \\\\ 
    \Sigma \alpha_f:\sigma_2.\Sigma \alpha_1:\sigma_3.\Sigma \alpha_2:\sigma_4.\Sigma \alpha_3:\sigma_5.\sigma_6:\sign}
$$

$$\small
\EmptyRule{
    \redtext{\typeEval{\alpha_{f\circ a}: \atksign{\basekind}, \alpha_f:\atksign{\basekind}}{\atcsign{\alpha_{f\circ a}} \leq \sigma_3:\sign}} \\
    \typeEval{\alpha_{f\circ a}: \atksign{\basekind}, \alpha_f:\atksign{\basekind}, \alpha_1:\unitsign}{\Sigma \alpha_2:\atcsign{\alpha_{f\circ a} \rightarrow \alpha_{f}}.\Sigma \alpha_3:\atcsign{\alpha_f \rightarrow \tau}.\encpub{L'}} \leq 
    \Sigma \alpha_2:\sigma_4.\Sigma \alpha_3:\sigma_5.\sigma_6:\sign
}
{\typeEval{\alpha_{f\circ a}: \atksign{\basekind}, \alpha_f:\atksign{\basekind}}
    {\Sigma \alpha_1:\atcsign{\alpha_{f\circ a}}.\Sigma \alpha_2:\atcsign{\alpha_{f\circ a} \rightarrow \alpha_{f}}.\Sigma \alpha_3:\atcsign{\alpha_f \rightarrow \tau}.\encpub{L'}} \leq \\\\
    \Sigma \alpha_1:\sigma_3.\Sigma \alpha_2:\sigma_4.\Sigma \alpha_3:\sigma_5.\sigma_6:\sign}
$$

$$\small
\EmptyRule{
    \redtext{\typeEval{\alpha_{f\circ a}: \atksign{\basekind}, \alpha_f:\atksign{\basekind}, \alpha_1:\unitsign}{\atcsign{\alpha_{f\circ a}} \leq \sigma_4:\sign}}\\
    \typeEval{\alpha_{f\circ a}: \atksign{\basekind}, \alpha_f:\atksign{\basekind}, \alpha_1:\unitsign,\alpha_2:\unitsign}{\Sigma \alpha_3:\atcsign{\alpha_f \rightarrow \tau}.\encpub{L'}} \leq \Sigma \alpha_3:\sigma_5.\sigma_6:\sign
}
{\typeEval{\alpha_{f\circ a}: \atksign{\basekind}, \alpha_f:\atksign{\basekind}, \alpha_1:\unitsign}{\Sigma \alpha_2:\atcsign{\alpha_{f\circ a} \rightarrow \alpha_{f}}.\Sigma \alpha_3:\atcsign{\alpha_f \rightarrow \tau}.\encpub{L'}} \leq 
    \Sigma \alpha_2:\sigma_4.\Sigma \alpha_3:\sigma_5.\sigma_6:\sign}
$$

$$
\small
\EmptyRule{
\redtext{\typeEval{\alpha_{f\circ a}: \atksign{\basekind}, \alpha_f:\atksign{\basekind}, \alpha_1:\unitsign,\alpha_2:\unitsign}{\atcsign{\alpha_f \rightarrow \tau} \leq \sigma_5:\sign}}\\
\redtext{\typeEval{\alpha_{f\circ a}: \atksign{\basekind}, \alpha_f:\atksign{\basekind}, \alpha_1:\unitsign,\alpha_2:\unitsign,\alpha_3:\unitsign}{\encpub{L'} \leq \sigma_6:\sign}}
}
{\typeEval{\alpha_{f\circ a}: \atksign{\basekind}, \alpha_f:\atksign{\basekind}, \alpha_1:\unitsign,\alpha_2:\unitsign}{\Sigma \alpha_3:\atcsign{\alpha_f \rightarrow \tau}.\encpub{L'}} \leq \Sigma \alpha_3:\sigma_5.\sigma_6:\sign}$$

As in Case 2, we have that:
\begin{itemize}
\item \redtext{\typeEval{\alpha_{f\circ a}: \atksign{\basekind}, \alpha_f:\atksign{\basekind}, \alpha_1:\unitsign,\alpha_2:\unitsign,\alpha_3:\unitsign}{\encpub{L'} \leq \sigma_6:\sign}}

\item \redtext{\typeEval{\alpha_{f\circ a}: \atksign{\basekind}, \alpha_f:\atksign{\basekind}, \alpha_1:\unitsign,\alpha_2:\unitsign}{\atcsign{\alpha_f \rightarrow \tau} \equiv \sigma_5:\sign}}

\item \redtext{\typeEval{\alpha_{f\circ a}: \atksign{\basekind}, \alpha_f:\atksign{\basekind}, \alpha_1:\unitsign}{\atcsign{\alpha_{f\circ a}} \equiv \sigma_4:\sign}}

\item \redtext{\typeEval{\alpha_{f\circ a}: \atksign{\basekind}, \alpha_f:\atksign{\basekind}}{\atcsign{\alpha_{f\circ a}} \equiv \sigma_3:\sign}}

\item \redtext{\typeEval{\alpha_{f\circ a}: {\basekind}}{\atksign{\basekind}\equiv \sigma_2:\sign}} 

\item \redtext{\typeEval{}{\atksign{\basekind}\equiv \sigma_1:\sign}}
\end{itemize}

Thus, we have that:
$$
\small
\EmptyRule{
\redtext{\typeEval{\alpha_{f\circ a}: \atksign{\basekind}, \alpha_f:\atksign{\basekind}, \alpha_1:\unitsign,\alpha_2:\unitsign}{\atcsign{\alpha_f \rightarrow \tau} \equiv \sigma_5:\sign}}\\
\redtext{\typeEval{\alpha_{f\circ a}: \atksign{\basekind}, \alpha_f:\atksign{\basekind}, \alpha_1:\unitsign,\alpha_2:\unitsign,\alpha_3:\unitsign}{\encpub{L'} \equiv \sigma_6:\sign}}
}
{\typeEval{\alpha_{f\circ a}: \atksign{\basekind}, \alpha_f:\atksign{\basekind}, \alpha_1:\unitsign,\alpha_2:\unitsign}{\Sigma \alpha_3:\atcsign{\alpha_f \rightarrow \tau}.\encpub{L'}} \equiv \Sigma \alpha_3:\sigma_5.\sigma_6:\sign}$$

$$\small
\EmptyRule{
    \redtext{\typeEval{\alpha_{f\circ a}: \atksign{\basekind}, \alpha_f:\atksign{\basekind}, \alpha_1:\unitsign}{\atcsign{\alpha_{f\circ a}} \equiv \sigma_4:\sign}}\\
    \typeEval{\alpha_{f\circ a}: \atksign{\basekind}, \alpha_f:\atksign{\basekind}, \alpha_1:\unitsign,\alpha_2:\unitsign}{\Sigma \alpha_3:\atcsign{\alpha_f \rightarrow \tau}.\encpub{L'}} \equiv \Sigma \alpha_3:\sigma_5.\sigma_6:\sign
}
{\typeEval{\alpha_{f\circ a}: \atksign{\basekind}, \alpha_f:\atksign{\basekind}, \alpha_1:\unitsign}{\Sigma \alpha_2:\atcsign{\alpha_{f\circ a} \rightarrow \alpha_{f}}.\Sigma \alpha_3:\atcsign{\alpha_f \rightarrow \tau}.\encpub{L'}} \equiv 
    \Sigma \alpha_2:\sigma_4.\Sigma \alpha_3:\sigma_5.\sigma_6:\sign}
$$
$$\small
\EmptyRule{
    \redtext{\typeEval{\alpha_{f\circ a}: \atksign{\basekind}, \alpha_f:\atksign{\basekind}}{\atcsign{\alpha_{f\circ a}} \equiv \sigma_3:\sign}} \\
    \typeEval{\alpha_{f\circ a}: \atksign{\basekind}, \alpha_f:\atksign{\basekind}, \alpha_1:\unitsign}{\Sigma \alpha_2:\atcsign{\alpha_{f\circ a} \rightarrow \alpha_{f}}.\Sigma \alpha_3:\atcsign{\alpha_f \rightarrow \tau}.\encpub{L'}} \equiv 
    \Sigma \alpha_2:\sigma_4.\Sigma \alpha_3:\sigma_5.\sigma_6:\sign
}
{\typeEval{\alpha_{f\circ a}: \atksign{\basekind}, \alpha_f:\atksign{\basekind}}
    {\Sigma \alpha_1:\atcsign{\alpha_{f\circ a}}.\Sigma \alpha_2:\atcsign{\alpha_{f\circ a} \rightarrow \alpha_{f}}.\Sigma \alpha_3:\atcsign{\alpha_f \rightarrow \tau}.\encpub{L'}} \equiv \\\\
    \Sigma \alpha_1:\sigma_3.\Sigma \alpha_2:\sigma_4.\Sigma \alpha_3:\sigma_5.\sigma_6:\sign}
$$

$$\small
\EmptyRule{
    \redtext{\typeEval{\alpha_{f\circ a}: {\basekind}}{\atksign{\basekind}\equiv \sigma_2:\sign}} \\
    \typeEval{\alpha_{f\circ a}: \atksign{\basekind}, \alpha_f:\atksign{\basekind}}
    {\Sigma \alpha_1:\atcsign{\alpha_{f\circ a}}.\Sigma \alpha_2:\atcsign{\alpha_{f\circ a} \rightarrow \alpha_{f}}.\Sigma \alpha_3:\atcsign{\alpha_f \rightarrow \tau}.\encpub{L'}} \equiv \\\\ 
    \Sigma \alpha_1:\sigma_3.\Sigma \alpha_2:\sigma_4.\Sigma \alpha_3:\sigma_5.\sigma_6:\sign
    }
{\typeEval{\alpha_{f\circ a}: \atksign{\basekind}}
    {\Sigma \alpha_f:\atksign{\basekind}.\Sigma \alpha_1:\atcsign{\alpha_{f\circ a}}.\Sigma \alpha_2:\atcsign{\alpha_{f\circ a} \rightarrow \alpha_{f}}.\Sigma \alpha_3:\atcsign{\alpha_f \rightarrow \tau}.\encpub{L'}} \equiv \\\\ 
    \Sigma \alpha_f:\sigma_2.\Sigma \alpha_1:\sigma_3.\Sigma \alpha_2:\sigma_4.\Sigma \alpha_3:\sigma_5.\sigma_6:\sign}
$$

$$\small
\EmptyRule{
    \redtext{\typeEval{}{\atksign{\basekind}\equiv \sigma_1:\sign}} \\
    \typeEval{\alpha_{f\circ a}: \atksign{\basekind}}
    {\Sigma \alpha_f:\atksign{\basekind}.\Sigma \alpha_1:\atcsign{\alpha_{f\circ a}}.\Sigma \alpha_2:\atcsign{\alpha_{f\circ a} \rightarrow \alpha_{f}}.\Sigma \alpha_3:\atcsign{\alpha_f \rightarrow \tau}.\encpub{L'}} \equiv \\\\ 
    \Sigma \alpha_f:\sigma_2.\Sigma \alpha_1:\sigma_3.\Sigma \alpha_2:\sigma_4.\Sigma \alpha_3:\sigma_5.\sigma_6:\sign
}
{\typeEval{}{\Sigma \alpha_{f\circ a}: \atksign{\basekind}.\Sigma \alpha_f:\atksign{\basekind}.\Sigma \alpha_1:\atcsign{\alpha_{f\circ a}}.\Sigma \alpha_2:\atcsign{\alpha_{f\circ a} \rightarrow \alpha_{f}}.\Sigma \alpha_3:\atcsign{\alpha_f \rightarrow \tau}.\encpub{L'}} \equiv \\\\ 
\Sigma \alpha_{f\circ a}:\sigma_1.\Sigma \alpha_f:\sigma_2.\Sigma \alpha_1:\sigma_3.\Sigma \alpha_2:\sigma_4.\Sigma \alpha_3:\sigma_5.\sigma_6:\sign
}$$
\fi
\end{proof}

\begin{lemma}
\label{lem:ml:wrapper:kind}
\tealtext{Let $\sigma$ be a signature s.t. \typeEval{\Gamma}{\sigmaPol{\policy} \equiv \sigma:\sign}.
It follows that $\fstsign{\sigma} = \fstsign{\sigmaPol{\policy}}$.}
\end{lemma}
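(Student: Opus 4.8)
The plan is to prove a slightly more general statement by induction: for every list $L \subseteq \varPolicy{\policy}$, every context $\Gamma$, and every $\sigma$ with \typeEval{\Gamma}{\encpub{L} \equiv \sigma:\sign}, we have the \emph{syntactic} equality $\fstsign{\sigma} = \fstsign{\encpub{L}}$. The lemma then follows by instantiating $L = \varPolicy{\policy}$, since $\sigmaPol{\policy} = \encpub{\varPolicy{\policy}}$. The reason equality holds, rather than mere equivalence of kinds, is that $\encpub{L}$ is assembled from only two kinds of atomic signatures: atomic kind signatures, which in $\encpub{L}$ are always $\atksign{\basekind}$ (so their $\textsf{Fst}$ is the rigid kind $\basekind$), and atomic type signatures $\atcsign{\tau}$, whose $\textsf{Fst}$ is $\unitkind$ regardless of $\tau$. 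I would run the induction on $L$, generalized over $\Gamma$ so that the induction hypothesis applies in the extended contexts produced when the outer dependent pairs are decomposed.

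First I would establish a generation (inversion) lemma for pair signatures: if \typeEval{\Gamma}{\Sigma \alpha:\sigma_1.\sigma_2 \equiv \sigma:\sign}, then $\sigma$ is itself a pair signature $\Sigma \alpha:\sigma_1'.\sigma_2'$ with \typeEval{\Gamma}{\sigma_1 \equiv \sigma_1':\sign} and \typeEval{\Gamma,\alpha:\fstsign{\sigma_1}}{\sigma_2 \equiv \sigma_2':\sign}. This is proved by induction on the derivation of the equivalence, handling exactly the rules that can conclude a $\Sigma$-equivalence, namely \SeqR, \SeqS, \SeqT, and \SeqDPr, in the same style as the existing inversion lemmas (Lemma~\ref{lem:ml:sig-eq:atc} and Lemma~\ref{lem:ml:wrapper:signeq:property}): the transitivity case chains two applications of the induction hypothesis, and the symmetry case merely swaps the roles of the two components.

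With this in hand, the base case $L = \emptylist$ is immediate, since $\encpub{\emptylist} = \unitsign$, so by Lemma~\ref{lem:ml:wrapper:signeq:property} we get $\sigma = \unitsign$ and $\fstsign{\sigma} = \unitkind = \fstsign{\unitsign}$. For the step $L = x\concat L'$ with $x\notin\dom{\decPolicy{\policy}}$ we have $\encpub{L} = \Sigma \alpha_x:\atksign{\basekind}.\Sigma \alpha:\atcsign{\alpha_x}.\encpub{L'}$; applying the generation lemma twice peels off the two outer pairs and yields $\sigma = \Sigma \alpha_x:\sigma_1.\Sigma \alpha:\sigma_2.\sigma_3$ together with \typeEval{\Gamma}{\atksign{\basekind} \equiv \sigma_1:\sign}, \typeEval{\Gamma,\alpha_x:\basekind}{\atcsign{\alpha_x} \equiv \sigma_2:\sign}, and \typeEval{\Gamma,\alpha_x:\basekind,\alpha:\unitkind}{\encpub{L'} \equiv \sigma_3:\sign}. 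By Lemma~\ref{lem:ml:wrapper:signeq:property} the first forces $\sigma_1 = \atksign{\basekind}$, so $\fstsign{\sigma_1} = \basekind$; by Lemma~\ref{lem:ml:sig-eq:atc} the second forces $\sigma_2 = \atcsign{\tau'}$ for some $\tau'$ (its well-formedness side condition \typeEval{\Gamma,\alpha_x:\basekind}{\alpha_x:\basekind} holding by \WfcV), so $\fstsign{\sigma_2} = \unitkind$; and the induction hypothesis on $L'$ gives $\fstsign{\sigma_3} = \fstsign{\encpub{L'}}$. Unfolding $\textsf{Fst}$ on pair signatures then yields $\fstsign{\sigma} = \Sigma \alpha_x:\basekind.\Sigma \alpha:\unitkind.\fstsign{\encpub{L'}} = \fstsign{\encpub{L}}$. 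The declassifiable case $\decPolicy{\policy}(x)=f$ is identical but peels off three pairs, the extra one being $\atcsign{\alpha_f \rightarrow \tau_f}$, which again contributes $\unitkind$ via Lemma~\ref{lem:ml:sig-eq:atc}.

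The main obstacle is the pair-generation lemma together with the accompanying context bookkeeping: the tail-component equivalences are derived in contexts extended by $\alpha_x:\basekind$ (and $\alpha:\unitkind$, and so on), so the generalization over $\Gamma$ in the outer induction is what lets the induction hypothesis fire on $\encpub{L'}$. Everything else is routine unfolding of the definition of $\textsf{Fst}$ together with reuse of the already-proved inversion lemmas; the structural fact that makes the resulting kinds equal rather than merely equivalent is precisely that every atomic kind signature in $\encpub{L}$ is $\atksign{\basekind}$, whose $\textsf{Fst}$ is preserved exactly under signature equivalence by Lemma~\ref{lem:ml:wrapper:signeq:property} (which in turn rests on Lemma~\ref{lem:ml:wrapper:kindeq:property}).
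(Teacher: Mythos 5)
Your proposal is correct and follows essentially the same route as the paper: an induction on $L$ for the generalized claim about $\encpub{L}$, pinning down the atomic components via Lemma~\ref{lem:ml:wrapper:signeq:property} and Lemma~\ref{lem:ml:sig-eq:atc} and applying the induction hypothesis to the tail in the extended context. The only difference is that you explicitly state and prove the inversion lemma for $\Sigma$-signature equivalence (covering \SeqR, \SeqS, \SeqT\ as well as \SeqDPr), a step the paper's proof silently elides by reading the decomposition directly off \SeqDPr; your version is slightly more rigorous but not a different argument.
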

\begin{proof}
We claim that for any $L \subseteq \varPolicy{\policy}$ and $\sigma$ s.t. \typeEval{}{\encpub{L} \equiv \sigma:\sign}, it follows that $\fstsign{\sigma} = \fstsign{\encpub{L}}$. The proof then follows directly from the claim.
We prove the claim by induction on $L$.

\emcase{Case 1:} $L = \emptylist$.
We have that $\encpub{L} = \unitsign$.
Since \typeEval{\Gamma}{\sigma \equiv \sigmaPol{\policy}:\sign}, from {Lemma~\ref{lem:ml:wrapper:signeq:property}}, $\sigma = \unitsign$.
From the definition of \fstsign{}, we have that $\fstsign{\sigma} = \fstsign{\encpub{L}} = \unitkind$.

\emcase{Case 2:} $L =  x \concat L'$, where $x \not\in \dom{\decPolicy{\policy}}$.
We have that $\encpub{L} = \Sigma \alpha_x:\atksign{\basekind}.\Sigma \alpha:\atcsign{\alpha_x}.\encpub{L'}$.
Since \typeEval{}{\encpub{L} \equiv \sigma:\sign}, from the \SeqDPr\ rule, $\sigma$ is $\Sigma \alpha_x:\sigma_1.\Sigma \alpha:\sigma_2.\sigma_3$ s.t.
\begin{itemize}
\item \typeEval{\Gamma}{\atksign{\basekind} \equiv \sigma_1:\sign}
\item \typeEval{\Gamma,\alpha_x:\basekind}{\Sigma \alpha:\atcsign{\alpha_x}.\encpub{L'} \equiv \Sigma \alpha:\sigma_2.\sigma_3:\sign}.
    \begin{itemize}
    \item \typeEval{\Gamma,\alpha_x:\basekind}{\atcsign{\alpha_x} \equiv \sigma_2:\sign}
    \item \typeEval{\Gamma, \alpha_x:\basekind, \alpha:\unitkind}{\encpub{L'} \equiv \sigma_3:\sign} (notice that since $\alpha_x:\basekind$, it follows that $\fstsign{\atcsign{\tau}} = \unitkind$).
    \end{itemize}
\end{itemize}

We have that:
\begin{itemize}
\item \typeEval{\Gamma}{\atksign{\basekind} \equiv \sigma_1:\sign}.
From {Lemma~\ref{lem:ml:wrapper:signeq:property}}, $\sigma_1 = \atcsign{\basekind}$.
Thus, $\fstsign{\sigma_1} = \basekind = \fstsign{\atcsign{\basekind}}$.

\item \typeEval{\Gamma,\alpha_x:\basekind}{\atcsign{\alpha_x} \equiv \sigma_2:\sign}.
From {Lemma~\ref{lem:ml:sig-eq:atc}}, $\sigma_2 = \atcsign{\tau}$ for some $\tau$ s.t. \typeEval{\Gamma,\alpha_x:\basekind}{\alpha_x \equiv \tau:\basekind}.
Thus, $\fstsign{\sigma_2} = \unitkind = \fstsign{\atcsign{\alpha_x}}$

\item \typeEval{\Gamma, \alpha_x:\basekind, \alpha:\unitkind}{\encpub{L'} \equiv \sigma_3:\sign}.
From IH, we have that $\fstsign{\encpub{L'}} = \fstsign{\sigma_3}$.
\end{itemize}

Therefore, from the definition of \fstsign{}, we have that $\fstsign{\encpub{L}} = \fstsign{\sigma}$

\emcase{Case 3:} $L = x \concat L'$, where $\decPolicy{\policy}(x) = f$.
The proof is similar to the proof of Case 2.
\end{proof}

\begin{lemma}
\label{lem:ml_trnl:sigma_equiv}
\begin{itemize}
\item {For any $\Gamma$, if \typeEval{\Gamma}{\sigma \equiv \atcsign{\tau}:\sign}, then $\sigma = \atcsign{\tau'}$ for some $\tau'$ s.t. \typeEval{\Gamma}{\tau \equiv \tau':\basekind}.}
\item {For any $\Gamma$, if \typeEval{\Gamma}{\atcsign{\tau} \equiv \sigma :\sign}, then $\sigma = \atcsign{\tau'}$ for some $\tau'$ s.t. \typeEval{\Gamma}{\tau \equiv \tau':\basekind}.}
\end{itemize}
\end{lemma}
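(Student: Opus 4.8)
The statement is identical to Lemma~\ref{lem:ml:sig-eq:atc} except that the hypothesis \typeEval{\Gamma}{\tau:\basekind} has been dropped, so the plan is to recover that hypothesis from the equivalence judgment itself and otherwise reuse the same argument. I would prove both parts simultaneously by induction on the derivation of the signature equivalence, since the symmetry case of each part appeals to the induction hypothesis of the other.

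First I would observe that inversion on the conclusion restricts the last rule to one of \SeqR, \SeqS, \SeqT, \SeqAtc: every other signature-equivalence rule in Figure~\ref{fig:apd:signature-equivalence} (namely \SeqAtk, \SeqDFnGn, \SeqDFnAp, \SeqDPr) concludes an equivalence between signatures of a shape other than \atcsign{\cdot}, so it cannot match either \typeEval{\Gamma}{\sigma \equiv \atcsign{\tau}:\sign} or \typeEval{\Gamma}{\atcsign{\tau} \equiv \sigma:\sign}. The case \SeqAtc reads $\tau'$ directly off the premise \typeEval{\Gamma}{\tau_0 \equiv \tau_0':\basekind}, possibly after a single application of \EqcS to orient the constructor equivalence correctly; the cases \SeqS and \SeqT are the standard symmetry and transitivity arguments, threading the base-kind equivalences through \EqcS and \EqcT exactly as in the proof of Lemma~\ref{lem:ml:sig-eq:atc}.

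The only real difference from Lemma~\ref{lem:ml:sig-eq:atc} is the base case \SeqR. There the conclusion has the form \typeEval{\Gamma}{\atcsign{\tau} \equiv \atcsign{\tau}:\sign} with premise \typeEval{\Gamma}{\atcsign{\tau}:\sign}. Since \WfsThree is the only well-formedness rule in Figure~\ref{fig:apd:well-formed-signature} that concludes a judgment of the form \typeEval{\Gamma}{\atcsign{\cdot}:\sign}, inversion on this premise yields \typeEval{\Gamma}{\tau:\basekind}; rule \EqcR then gives \typeEval{\Gamma}{\tau \equiv \tau:\basekind}, and we take $\tau' = \tau$. Thus the well-formedness premise built into \SeqR supplies precisely the kinding fact that Lemma~\ref{lem:ml:sig-eq:atc} assumed outright.

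I expect the transitivity case \SeqT to be the main (though still routine) obstacle: before the induction hypothesis can be applied to the second premise, one must first use the induction hypothesis on the intermediate signature to learn that it has the shape \atcsign{\cdot}, and only then chain the resulting base-kind equivalences with \EqcT while keeping their orientation consistent with the statement. Alternatively, if a regularity/presupposition lemma for signature equivalence from Crary's metatheory is assumed---so that \typeEval{\Gamma}{\sigma \equiv \atcsign{\tau}:\sign} directly entails \typeEval{\Gamma}{\atcsign{\tau}:\sign}, hence \typeEval{\Gamma}{\tau:\basekind} by inversion on \WfsThree---then both parts follow immediately by invoking Lemma~\ref{lem:ml:sig-eq:atc}, and no induction is needed.
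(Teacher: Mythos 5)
Your proposal is correct and follows essentially the same route as the paper's proof: induction on the equivalence derivation, observing that only \SeqR, \SeqS, \SeqT, and \SeqAtc can apply, with symmetry handled by the mutual induction between the two parts and transitivity by chaining constructor equivalences via \EqcT. Your treatment of the \SeqR base case (extracting \typeEval{\Gamma}{\tau:\basekind} by inversion on \WfsThree) is more explicit than the paper's, which simply calls that case trivial, but the argument is the same.
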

\begin{proof}
We prove the lemma by induction on the derivation of the judgments.
We only have the following cases, since other rules are not applicable.

\emcase{Case 1:} eqs\_refl. The proof is trivial.

\emcase{Case 2:} eqs\_symm. The proofs of two parts are directly from IH.

\emcase{Case 3:} eqs\_trans. The proofs follows from IH and the fact that $\equiv$ of types is transitivity.

\emcase{Case 4:} eqs\_dyn. The proofs follows from the rule.

\end{proof}

\begin{lemma}
\label{lem:ml_trnl:sigma_leq}

\begin{itemize}
\item {For any $\Gamma$, if \typeEval{\Gamma}{\sigma \leq \atcsign{\tau}:\sign}, then $\sigma = \atcsign{\tau'}$ for some $\tau'$ s.t. \typeEval{\Gamma}{\tau \equiv \tau':\basekind}.}
\item {For any $\Gamma$, if \typeEval{\Gamma}{\atcsign{\tau} \leq \sigma :\sign}, then $\sigma = \atcsign{\tau'}$ for some $\tau'$ s.t. \typeEval{\Gamma}{\tau \equiv \tau':\basekind}.}
\end{itemize}
\end{lemma}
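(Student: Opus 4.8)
The plan is to prove both statements by induction on the derivation of the subsignature judgment, in direct parallel with the proof of Lemma~\ref{lem:ml:sig-sub:atc}. The decisive ingredient is a syntactic inversion: if the conclusion of a subsignature derivation has an atomic type signature \atcsign{\tau} on either side, then the last rule applied can only be \SubsEq\ or \SubsT. Indeed, rule \SubsK\ concludes with atomic \emph{kind} signatures, of the form \typeEval{\Gamma}{\atksign{k} \leq \atksign{k'}:\sign}, while \SubsDFnGn, \SubsDFnAp, and \SubsDPr\ each conclude with a functor signature ($\Pign$ or $\Piap$) or a pair signature ($\Sigma$) on both sides; none of these is syntactically an atomic type signature \atcsign{\_}, so none can be the final rule. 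This reduces both parts to the two cases \SubsEq\ and \SubsT.

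For the second statement I would not even need a fresh induction: it follows immediately by composing two results already at hand. By Lemma~\ref{lem:ml:sig-sub:atc}, \typeEval{\Gamma}{\atcsign{\tau} \leq \sigma:\sign} yields \typeEval{\Gamma}{\atcsign{\tau} \equiv \sigma:\sign}; then by the second clause of Lemma~\ref{lem:ml_trnl:sigma_equiv} (equivalently Lemma~\ref{lem:ml:sig-eq:atc}), $\sigma = \atcsign{\tau'}$ for some $\tau'$ with \typeEval{\Gamma}{\tau \equiv \tau':\basekind}, which is exactly what is required.

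For the first statement, since Lemma~\ref{lem:ml:sig-sub:atc} treats only the case where the atomic type signature sits on the \emph{left}, I would first establish its mirror image: that \typeEval{\Gamma}{\sigma \leq \atcsign{\tau}:\sign} implies \typeEval{\Gamma}{\sigma \equiv \atcsign{\tau}:\sign}, proved by the same two-case induction. In the \SubsEq\ case the premise is already \typeEval{\Gamma}{\sigma \equiv \atcsign{\tau}:\sign}, so there is nothing to do. In the \SubsT\ case the premises are \typeEval{\Gamma}{\sigma \leq \sigma'':\sign} and \typeEval{\Gamma}{\sigma'' \leq \atcsign{\tau}:\sign}; the inductive hypothesis on the second gives \typeEval{\Gamma}{\sigma'' \equiv \atcsign{\tau}:\sign}, whence $\sigma'' = \atcsign{\tau''}$ for some $\tau''$ with \typeEval{\Gamma}{\tau \equiv \tau'':\basekind} by Lemma~\ref{lem:ml_trnl:sigma_equiv}, and the inductive hypothesis on the first premise (now of the form \typeEval{\Gamma}{\sigma \leq \atcsign{\tau''}:\sign}) gives \typeEval{\Gamma}{\sigma \equiv \atcsign{\tau''}:\sign}; chaining these equivalences via \SeqT\ closes the case. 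Once the mirror of Lemma~\ref{lem:ml:sig-sub:atc} is in place, the first statement follows by a single application of the first clause of Lemma~\ref{lem:ml_trnl:sigma_equiv}, producing $\sigma = \atcsign{\tau'}$ with \typeEval{\Gamma}{\tau \equiv \tau':\basekind}.

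I expect the only delicate point to be the inversion step: one must check, rule by rule through Fig.~\ref{fig:apd:subsignature}, that the conclusion shape excludes every rule except \SubsEq\ and \SubsT. The remaining work is routine, since the equivalence cases are discharged by Lemma~\ref{lem:ml_trnl:sigma_equiv} and the transitivity case is closed by transitivity of signature and constructor equivalence (\SeqT\ and \EqcT).
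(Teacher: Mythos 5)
Your proposal is correct and follows essentially the same route as the paper: induction on the subsignature derivation, observing that only \SubsEq\ and \SubsT\ can conclude with an atomic type signature, discharging the base case via Lemma~\ref{lem:ml_trnl:sigma_equiv} and the transitivity case via the induction hypothesis plus transitivity of equivalence. Your factoring of part two through the already-proved Lemma~\ref{lem:ml:sig-sub:atc} (and of part one through its mirror) is a harmless repackaging of the same argument, not a genuinely different method.
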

\begin{proof}
We prove both parts of the lemma by induction on the derivation of \typeEval{\Gamma}{\sigma \leq \atcsign{\tau}:\sign} and \typeEval{\Gamma}{\atcsign{\tau} \leq \sigma:\sign}.
We only have the following cases (other rules are not applicable).

\emcase{Case 1:} subs\_refl. The proof follows from Lemma~\ref{lem:ml_trnl:sigma_equiv}.

\emcase{Case 2:} subs\_trans. 
\begin{itemize}
\item Part 1: from the rule, \typeEval{\Gamma}{\sigma \leq \sigma'':\sign} and \typeEval{\Gamma}{\sigma'' \leq \atcsign{\tau}:\sign}.
From IH for \typeEval{\Gamma}{\sigma'' \leq \atcsign{\tau}:\sign}, $\sigma'' = \atcsign{\tau''}$ for some $\tau''$ s.t. \typeEval{\Gamma}{\tau'' \equiv \tau':\basekind}.
Now, we can apply IH on \typeEval{\Gamma}{\sigma \leq \atcsign{\tau''}:\sign} and we have that $\sigma = \atcsign{\tau'}$ for some $\tau'$ s.t. \typeEval{\Gamma}{\tau' \equiv \tau'':\basekind}.
Since $\equiv$ for type is transitivity, we have that \typeEval{\Gamma}{\tau \equiv \tau':\basekind}.

\item Part 2: the proof is similar.
\end{itemize}
\end{proof}

\begin{lemma}
\label{lem:ml_trni:inversion_wrapper}
{If \typeEvalP{}{\wrappub{e}: \Pign \alphaPol:\sigma^*.\atcsign{\tau}}, where \typeEval{}{\sigma^*\equiv \sigmaPol{\policy}:\sign}, then \typeEvalI{\alphaPol/\mPol:\sigma^{**}}{\attmod{e}:\atcsign{\tau'}} for some $\sigma^{**}$ and $\tau'$s.t. \typeEval{\alphaPol/\mPol:\sigma^{**}}{\tau \equiv \tau'} and \typeEval{}{\sigma^{**} \equiv \sigmaPol{\policy}}.}
\end{lemma}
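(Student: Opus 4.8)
The plan is to induct on the derivation of $\typeEvalP{}{\wrappub{e}:\Pign\alphaPol:\sigma^*.\atcsign{\tau}}$, using that $\wrappub{e}$ is syntactically the generative functor $\lambdagn\alphaPol/\mPol:\sigmaPol{\policy}.\attmod{e}$. The only rules whose conclusion can be a \emph{pure} judgment assigning a $\Pign$-signature to a generative functor are \WfmAbsI\ and \WfmSub: the reconstruction rules \WfmROne, \WfmRTwo\ and \WfmRThree\ target atomic-kind, $\Piap$- and $\Sigma$-signatures respectively, and \WfmRFour\ yields an impure judgment. In the \WfmAbsI\ case the conclusion is literally $\Pign\alphaPol:\sigmaPol{\policy}.\sigma'$ with premise $\typeEvalI{\alphaPol/\mPol:\sigmaPol{\policy}}{\attmod{e}:\sigma'}$; matching this against $\Pign\alphaPol:\sigma^*.\atcsign{\tau}$ forces $\sigma^*=\sigmaPol{\policy}$ and $\sigma'=\atcsign{\tau}$, so taking $\sigma^{**}=\sigmaPol{\policy}$ and $\tau'=\tau$ closes the case by reflexivity of $\equiv$.

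The \WfmSub\ case is where the work lies: we have $\typeEvalP{}{\wrappub{e}:\sigma''}$ and $\typeEval{}{\sigma'' \leq \Pign\alphaPol:\sigma^*.\atcsign{\tau}:\sign}$, and to apply the induction hypothesis I first need that $\sigma''$ is itself a generative-functor signature. I would therefore establish, as a preliminary, a structural inversion lemma for $\Pign$-signatures paralleling Lemmas~\ref{lem:ml:sig-eq:atc} and~\ref{lem:ml:sig-sub:atc}: if $\typeEval{\Gamma}{\sigma \leq \Pign\alpha:\sigma_1.\sigma_2:\sign}$ then $\sigma=\Pign\alpha:\sigma_1'.\sigma_2'$ with $\typeEval{\Gamma}{\sigma_1 \leq \sigma_1':\sign}$ and $\typeEval{\Gamma,\alpha:\fstsign{\sigma_1}}{\sigma_2' \leq \sigma_2:\sign}$, together with its equivalence-level companion needed for the \SubsEq\ step. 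Applying this to $\sigma''$ yields $\sigma''=\Pign\alphaPol:\sigma_1.\sigma_2$ with $\typeEval{}{\sigma^* \leq \sigma_1}$ (contravariance in the domain) and $\typeEval{\alphaPol:\fstsign{\sigma^*}}{\sigma_2 \leq \atcsign{\tau}}$ (covariance in the codomain).

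From here both components are handled by results already available. For the domain, chaining $\typeEval{}{\sigmaPol{\policy} \equiv \sigma^*}$ with $\typeEval{}{\sigma^* \leq \sigma_1}$ (via \SubsEq\ and \SubsT) gives $\typeEval{}{\sigmaPol{\policy} \leq \sigma_1}$, whence Lemma~\ref{lem:ml:wrapper:signature:policy:property} (with $L=\varPolicy{\policy}$) yields $\typeEval{}{\sigmaPol{\policy} \equiv \sigma_1}$. For the codomain, Lemma~\ref{lem:ml_trnl:sigma_leq} applied to $\typeEval{\alphaPol:\fstsign{\sigma^*}}{\sigma_2 \leq \atcsign{\tau}}$ forces $\sigma_2=\atcsign{\tau_2}$ with $\typeEval{\alphaPol:\fstsign{\sigma^*}}{\tau \equiv \tau_2}$. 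Thus $\sigma''=\Pign\alphaPol:\sigma_1.\atcsign{\tau_2}$ with $\sigma_1 \equiv \sigmaPol{\policy}$, so the induction hypothesis applies to the strictly smaller derivation $\typeEvalP{}{\wrappub{e}:\sigma''}$ and delivers $\sigma^{**},\tau'$ with $\typeEvalI{\alphaPol/\mPol:\sigma^{**}}{\attmod{e}:\atcsign{\tau'}}$, $\typeEval{\alphaPol/\mPol:\sigma^{**}}{\tau_2 \equiv \tau'}$ and $\typeEval{}{\sigma^{**} \equiv \sigmaPol{\policy}}$. It then remains to transport $\typeEval{\alphaPol:\fstsign{\sigma^*}}{\tau \equiv \tau_2}$ into the context $\alphaPol/\mPol:\sigma^{**}$ and compose with $\typeEval{\alphaPol/\mPol:\sigma^{**}}{\tau_2 \equiv \tau'}$ by \EqcT; this is legitimate because constructor equivalence consults only the static part of the context, $\fstsign{\sigma^*}=\fstsign{\sigmaPol{\policy}}=\fstsign{\sigma^{**}}$ by Lemma~\ref{lem:ml:wrapper:kind} (using $\sigma^*\equiv\sigmaPol{\policy}$ and $\sigma^{**}\equiv\sigmaPol{\policy}$), and a module hypothesis $\alphaPol/\mPol:\sigma^{**}$ supplies exactly $\alphaPol:\fstsign{\sigma^{**}}$ to that static context. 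This produces $\typeEval{\alphaPol/\mPol:\sigma^{**}}{\tau \equiv \tau'}$ and $\typeEval{}{\sigma^{**}\equiv\sigmaPol{\policy}}$, closing the case.

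The main obstacle is the preliminary $\Pign$-inversion lemma for $\leq$, specifically its transitivity (\SubsT) case. There, decomposing $\sigma \leq \sigma'' \leq \Pign\alpha:\sigma_1.\sigma_2$ by two uses of the inductive hypothesis produces codomain inequalities living in \emph{different} static contexts, $\alpha:\fstsign{\sigma_1}$ and $\alpha:\fstsign{\sigma_1''}$ for the intermediate $\sigma_1''$, and $\fstsign{\cdot}$ is not invariant under $\leq$ (e.g.\ $\atksign{S(\intType)} \leq \atksign{\basekind}$ while $S(\intType)\neq\basekind$). Aligning these contexts needs two standard but unstated structural facts about the calculus---that $\sigma_1 \leq \sigma_1''$ entails $\fstsign{\sigma_1} \leq \fstsign{\sigma_1''}$, and a narrowing principle transporting a static judgment from a context $\alpha:k''$ to a smaller $\alpha:k'$ whenever $\typeEval{}{k' \leq k'':\kind}$---which I would import from the metatheory of Crary's system or prove by routine inductions alongside Lemma~\ref{lem:ml:weakening}. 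Once this inversion lemma is in hand, the rest of the argument is bookkeeping with the already-established Lemmas~\ref{lem:ml:wrapper:signature:policy:property},~\ref{lem:ml:wrapper:kind} and~\ref{lem:ml_trnl:sigma_leq}.
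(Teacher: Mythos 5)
Your proposal follows essentially the same route as the paper: induction on the typing derivation with exactly the two cases \WfmAbsI\ and \WfmSub, inversion of the subsignature judgment via \SubsDFnGn, Lemma~\ref{lem:ml:wrapper:signature:policy:property} to recover equivalence on the domain, Lemma~\ref{lem:ml_trnl:sigma_leq} on the codomain, and then the induction hypothesis. You are in fact more careful than the paper's own write-up, which silently assumes the $\Pign$-shape of the intermediate signature and the applicability of \SubsDFnGn\ without stating the structural inversion lemma (and the attendant context-alignment via \fstsign{\cdot}) that you correctly identify as the real work.
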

\begin{proof}
We prove the lemma by induction on the derivation of \typeEvalP{}{\wrappub{e}: \Pign \alphaPol:\sigmaPol{\policy}.\atcsign{\tau}}.

\emcase{Case 1:} ofm\_lamgn.
From the rule, \typeEvalI{\alphaPol/\mPol:\sigma^*}{\attmod{e}:\atcsign{\tau}}.

\emcase{Case 2:} ofm\_subsume.
From the rule, we have \typeEvalP{}{\wrappub{e}: \Pign \alphaPol:\sigma.\sigma'} s.t. \typeEval{}{ \Pign \alphaPol:\sigma.\sigma' \leq \Pign \alphaPol:\sigma^*.\atksign{\tau}:\sign}.
From subs\_pigen, we have that \typeEval{}{\sigma^* \leq \sigma:\sign} and \typeEval{\alphaPol:\fstsign{\sigma^*}}{\sigma' \leq \atcsign{\tau}:\sign}.

\begin{itemize}
\item Since \typeEval{}{\sigma^* \leq \sigma:\sign} and \typeEval{}{\sigma^* \equiv \sigmaPol{\policy}:\sign}, we have that \typeEval{}{\sigmaPol{\policy} \leq \sigma:\sign}.
From Lemma~\ref{lem:ml:wrapper:signature:policy:property}, we have that \typeEval{}{\sigmaPol{\policy} \equiv \sigma:\sign}.

\item Since \typeEval{\alphaPol:\fstsign{\sigma^*}}{\sigma' \leq \atcsign{\tau}:\sign}, from Lemma~\ref{lem:ml_trnl:sigma_leq}, we have that $\sigma' = \atcsign{\tau'}$ for some $\tau'$.
\end{itemize}

Thus, we have that \typeEvalP{}{\wrappub{e}: \Pign \alphaPol:\sigma.\atcsign{\tau'}} s.t. \typeEval{}{\sigma \equiv \sigmaPol{\policy}:\sign}.
From IH, we close this case.
\end{proof}


\begin{lemma}
\label{lem:ml_trni:inversion_impure}
{If \typeEvalI{\alphaPol/\mPol:\sigma}{\attmod{e}:\atcsign{\tau}} for some $\sigma$ s.t. \typeEval{}{\sigma\equiv\sigmaPol{\policy}:\sign}, then \typeEvalP{\alphaPol/\mPol:\sigma}{\attmod{e}:\atcsign{\tau'}} for some $\tau'$ s.t. \typeEval{\alphaPol/\mPol:\sigma}{\tau \equiv \tau':\basekind}.}
\end{lemma}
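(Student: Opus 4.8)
The plan is to proceed by structural induction on the derivation of $\typeEvalI{\alphaPol/\mPol:\sigma}{\attmod{e}:\atcsign{\tau}}$, analyzing the last rule applied. Since the subject is the syntactic form $\attmod{e}$ and the judgment is the \emph{impure} form ($\kappa = I$), only two rules can conclude it: the forgetful rule \WfmRFour\ and the subsumption rule \WfmSub\ instantiated at $\kappa = I$. All other rules producing an impure judgment (\WfmAppI, \WfmUnp, \WfmLetTwo, \WfmSeal) have a subject of a different syntactic shape, and the only introduction rule for $\attmod{e}$, namely \WfmAtt, produces a \emph{pure} judgment. Note that the hypothesis $\typeEval{}{\sigma \equiv \sigmaPol{\policy}:\sign}$ plays no role in the argument; it is merely the context in which the lemma will later be invoked, and the context $\alphaPol/\mPol:\sigma$ is fixed throughout.

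For the base case (\WfmRFour), the premise is exactly $\typeEvalP{\alphaPol/\mPol:\sigma}{\attmod{e}:\atcsign{\tau}}$, a pure judgment at the same signature. I would take $\tau' = \tau$ and close the goal by reflexivity of constructor equivalence (\EqcR); the only nontrivial obligation is then $\typeEval{\alphaPol/\mPol:\sigma}{\tau:\basekind}$. This is supplied by validity of the pure module judgment (a well-formed module has a well-formed signature), whose inversion through rule \WfsThree\ exposes $\typeEval{\alphaPol/\mPol:\sigma}{\tau:\basekind}$; equivalently, one may route through Lemma~\ref{lem:ml:wrapper:conf-view:P-form}, which already inverts pure atomic modules to yield $\typeEval{\Gamma}{e:\tau}$.

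For the inductive case (\WfmSub\ at $\kappa = I$), the premises are $\typeEvalI{\alphaPol/\mPol:\sigma}{\attmod{e}:\sigma''}$ and $\typeEval{\alphaPol/\mPol:\sigma}{\sigma'' \leq \atcsign{\tau}:\sign}$. The first step is to invoke Lemma~\ref{lem:ml_trnl:sigma_leq} (its first part) to conclude that $\sigma'' = \atcsign{\tau''}$ for some $\tau''$ with $\typeEval{\alphaPol/\mPol:\sigma}{\tau \equiv \tau'':\basekind}$: this is what pins the intermediate signature down to atomic type form so that the induction hypothesis becomes applicable. Applying the induction hypothesis to $\typeEvalI{\alphaPol/\mPol:\sigma}{\attmod{e}:\atcsign{\tau''}}$ yields $\typeEvalP{\alphaPol/\mPol:\sigma}{\attmod{e}:\atcsign{\tau'}}$ with $\typeEval{\alphaPol/\mPol:\sigma}{\tau'' \equiv \tau':\basekind}$, and transitivity of constructor equivalence (\EqcT) gives $\typeEval{\alphaPol/\mPol:\sigma}{\tau \equiv \tau':\basekind}$, closing the case.

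The argument is essentially routine; the only real content is the inversion of the subsignature relation against an atomic type signature. The main obstacle I anticipate is precisely ensuring that Lemma~\ref{lem:ml_trnl:sigma_leq} forces the intermediate signature to be atomic, so that the induction hypothesis --- which is stated only for atomic type targets --- can be reused, and then chaining the resulting type equivalences correctly. A secondary, purely bookkeeping point is supplying $\typeEval{\alphaPol/\mPol:\sigma}{\tau:\basekind}$ in the base case, which follows from signature validity together with inversion of \WfsThree.
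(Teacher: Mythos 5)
Your proposal is correct and follows essentially the same route as the paper's own proof: induction on the derivation, with the forgetful rule as the base case and subsumption handled by inverting the subsignature relation via Lemma~\ref{lem:ml_trnl:sigma_leq} before applying the induction hypothesis and chaining equivalences by transitivity. Your additional remarks --- that the hypothesis $\typeEval{}{\sigma\equiv\sigmaPol{\policy}:\sign}$ is not actually used, and that the base case needs $\typeEval{\alphaPol/\mPol:\sigma}{\tau:\basekind}$ for reflexivity --- are accurate details the paper leaves implicit.
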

\begin{proof}
We prove the lemma by induction on the derivation of \typeEvalI{\alphaPol/\mPol:\sigma}{\attmod{e}:\atcsign{\tau}}.
We only have the following case:

\emcase{Case 1:} ofm\_forget. The proof is directly from the rule.

\emcase{Case 2:} ofm\_subsume. 
From the rule, we have that \typeEvalI{\alphaPol/\mPol:\sigma}{\attmod{e}:\sigma'} and \typeEval{\alphaPol/\mPol:\sigma}{\sigma' \leq \atcsign{\tau}:\sign}.
From Lemma~\ref{lem:ml_trnl:sigma_leq}, we have that $\sigma' = \atcsign{\tau''}$ s.t. \typeEval{\alphaPol/\mPol:\sigma}{\tau \equiv \tau'':\basekind}.
Thus, we can apply IH on \typeEvalI{\alphaPol/\mPol:\sigma}{\attmod{e}:\sigma'} and close this case.
\end{proof}

\begin{lemma}
\label{lem:ml_trni:inversion_pure}
{If \typeEvalP{\alphaPol/\mPol:\sigma}{\attmod{e}:\atcsign{\tau}} for some $\sigma$ s.t. \typeEval{}{\sigma\equiv\sigmaPol{\policy}:\sign}, then \typeEval{\alphaPol/\mPol:\sigma}{{e}:{\tau'}} for some $\tau'$ s.t. \typeEval{\alphaPol/\mPol:\sigma}{\tau \equiv \tau':\basekind}.}
\end{lemma}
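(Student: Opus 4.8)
The plan is to invert the pure-module typing derivation for the atomic dynamic module \attmod{e}. The essential observation is that the subject \attmod{e} is a dynamic atomic module, so among the module-formation rules of Fig.~\ref{fig:apd:well-formed-module} only two can yield a conclusion of the form \typeEvalP{\alphaPol/\mPol:\sigma}{\attmod{e}:\atcsign{\tau}}: the introduction rule \WfmAtt and the subsumption rule \WfmSub. The refinement rules \WfmROne, \WfmRTwo, and \WfmRThree conclude atomic-kind, applicative-functor, and pair signatures respectively, none of which is an atomic type signature, and the remaining rules have a different subject. Note that the hypothesis \typeEval{}{\sigma \equiv \sigmaPol{\policy}:\sign} plays no role in this inversion, since the typing of \attmod{e} does not depend on the shape of the context; it is carried only for uniformity with the surrounding inversion lemmas used in Theorem~\ref{thm:ml:wrapper}.

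In fact this inversion has essentially already been carried out: Lemma~\ref{lem:ml:wrapper:conf-view:P-form} states that \typeEvalP{\Gamma}{\attmod{e}:\sigma} forces $\sigma = \atcsign{\tau}$ for some $\tau$ with \typeEval{\Gamma}{e:\tau}. So first I would instantiate that lemma with $\Gamma = \alphaPol/\mPol:\sigma$ and the signature \atcsign{\tau} supplied by the hypothesis. Since \atcsign{-} is syntactically injective, the type produced by the lemma must coincide with $\tau$, giving directly \typeEval{\alphaPol/\mPol:\sigma}{e:\tau}. Taking $\tau' = \tau$ and appealing to reflexivity of constructor equivalence (rule \EqcR, whose side condition \typeEval{\alphaPol/\mPol:\sigma}{\tau:\basekind} is a routine validity property of the derived typing judgment) then yields \typeEval{\alphaPol/\mPol:\sigma}{\tau \equiv \tau':\basekind}, completing the proof.

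Should a self-contained argument be preferred, I would instead induct on the derivation of \typeEvalP{\alphaPol/\mPol:\sigma}{\attmod{e}:\atcsign{\tau}}, with the two cases identified above. The base case \WfmAtt delivers \typeEval{\alphaPol/\mPol:\sigma}{e:\tau} immediately, and one closes with $\tau' = \tau$ and \EqcR. The only case requiring work is \WfmSub: there the subderivation has the form \typeEvalP{\alphaPol/\mPol:\sigma}{\attmod{e}:\sigma'} together with \typeEval{\alphaPol/\mPol:\sigma}{\sigma' \leq \atcsign{\tau}:\sign}. Applying Lemma~\ref{lem:ml_trnl:sigma_leq} (first part)---which records, via Lemmas~\ref{lem:ml:sig-sub:atc} and~\ref{lem:ml:sig-eq:atc}, that subsumption into an atomic type signature is in fact type equivalence---gives $\sigma' = \atcsign{\tau''}$ with \typeEval{\alphaPol/\mPol:\sigma}{\tau \equiv \tau'':\basekind}. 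The induction hypothesis applied to \typeEvalP{\alphaPol/\mPol:\sigma}{\attmod{e}:\atcsign{\tau''}} then yields \typeEval{\alphaPol/\mPol:\sigma}{e:\tau_0} with \typeEval{\alphaPol/\mPol:\sigma}{\tau'' \equiv \tau_0:\basekind}, and transitivity of equivalence (\EqcT) gives \typeEval{\alphaPol/\mPol:\sigma}{\tau \equiv \tau_0:\basekind}; taking $\tau' = \tau_0$ closes the case. The main---indeed the only---obstacle is thus the subsumption case, and it is already discharged by the atomic-signature lemmas established earlier, so no genuinely new work is needed.
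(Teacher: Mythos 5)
Your proposal is correct and the self-contained variant is exactly the paper's proof: induction on the derivation with the two cases \WfmAtt\ (there called {ofm\_dyn}) and \WfmSub, the latter discharged via Lemma~\ref{lem:ml_trnl:sigma_leq} and the induction hypothesis, with the $\sigma\equiv\sigmaPol{\policy}$ hypothesis indeed unused. Your shortcut through Lemma~\ref{lem:ml:wrapper:conf-view:P-form} is also sound (and even yields $\tau'=\tau$ on the nose), but since that lemma is itself proved by the same two-case induction, it is a repackaging rather than a genuinely different argument.
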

\begin{proof}
We prove the lemma by induction on the derivation of \typeEvalP{\alphaPol/\mPol:\sigma}{\attmod{e}:\atcsign{\tau}}.
We have the following cases.

\emcase{Case 1:} ofm\_dyn. The proof is directly from the rule.

\emcase{Case 2:} ofm\_subsume. The proof is similar to the proof of Case 2 in Lemma~\ref{lem:ml_trni:inversion_impure}.
\end{proof}

\begin{lemma}
\label{lem:ml_trni:inversion}
{If \typeEval{}{\wrappub{e}: \Pign \alphaPol:\sigmaPol{\policy}.\atcsign{\tau}}, then \typeEval{\alphaPol/\mPol:\sigma}{e:\tau'} for some $\sigma$ and $\tau'$ s.t. \typeEval{}{\sigma\equiv \sigmaPol{\policy}:\sign} and \typeEval{\alphaPol/\mPol:\sigma}{\tau \equiv \tau':\basekind}.}
\end{lemma}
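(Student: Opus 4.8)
The plan is to peel the typing derivation of the wrapper apart one layer at a time, using the three preceding inversion lemmas (Lemma~\ref{lem:ml_trni:inversion_wrapper}, Lemma~\ref{lem:ml_trni:inversion_impure}, and Lemma~\ref{lem:ml_trni:inversion_pure}) in sequence, and then to reassemble the resulting chain of type equivalences by transitivity. Since $\wrappub{e} = \lambdagn \alphaPol,\mPol:\sigmaPol{\policy}.\attmod{e}$ is a generative functor, which is pure, the hypothesis is the pure module judgment $\typeEvalP{}{\wrappub{e}:\Pign \alphaPol:\sigmaPol{\policy}.\atcsign{\tau}}$, so the first inversion lemma applies directly.

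First I would invoke Lemma~\ref{lem:ml_trni:inversion_wrapper} with $\sigma^{*} = \sigmaPol{\policy}$; the side condition $\typeEval{}{\sigmaPol{\policy} \equiv \sigmaPol{\policy}:\sign}$ holds by reflexivity (rule \SeqR), whose well-formedness premise $\typeEval{}{\sigmaPol{\policy}:\sign}$ is Lemma~\ref{lem:ml:signature:property:public}. This yields $\typeEvalI{\alphaPol/\mPol:\sigma}{\attmod{e}:\atcsign{\tau_1}}$ for some signature $\sigma$ and type $\tau_1$ with $\typeEval{}{\sigma \equiv \sigmaPol{\policy}:\sign}$ and $\typeEval{\alphaPol/\mPol:\sigma}{\tau \equiv \tau_1:\basekind}$. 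Because $\sigma \equiv \sigmaPol{\policy}$, I can then apply Lemma~\ref{lem:ml_trni:inversion_impure} to strip the forgetful step, obtaining $\typeEvalP{\alphaPol/\mPol:\sigma}{\attmod{e}:\atcsign{\tau_2}}$ with $\typeEval{\alphaPol/\mPol:\sigma}{\tau_1 \equiv \tau_2:\basekind}$, and finally Lemma~\ref{lem:ml_trni:inversion_pure} to strip the dynamic-atomic-module step, obtaining $\typeEval{\alphaPol/\mPol:\sigma}{e:\tau'}$ with $\typeEval{\alphaPol/\mPol:\sigma}{\tau_2 \equiv \tau':\basekind}$.

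It then remains to combine the three equivalences $\tau \equiv \tau_1$, $\tau_1 \equiv \tau_2$, and $\tau_2 \equiv \tau'$, all taken in the context $\alphaPol/\mPol:\sigma$ and all at kind \basekind. Two applications of transitivity of constructor equivalence (rule \EqcT) give $\typeEval{\alphaPol/\mPol:\sigma}{\tau \equiv \tau':\basekind}$, and together with $\typeEval{}{\sigma \equiv \sigmaPol{\policy}:\sign}$ this is exactly the claim, with witnesses $\sigma$ and $\tau'$.

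I expect the main obstacle to be bookkeeping rather than a single deep step: each inversion lemma only applies when the context signature is (propositionally) equivalent to $\sigmaPol{\policy}$, so I must carry the invariant $\typeEval{}{\sigma \equiv \sigmaPol{\policy}:\sign}$ forward from the output of Lemma~\ref{lem:ml_trni:inversion_wrapper} into the hypotheses of the other two lemmas, and check that this same $\sigma$ persists unchanged through all three steps (the context is never refined; only the result type is relaxed by the subsumption rule \WfmSub). The genuinely subtle work is already discharged inside Lemma~\ref{lem:ml_trni:inversion_wrapper}, which must handle the subsumption wrapping around the generative functor and appeal to Lemma~\ref{lem:ml:wrapper:signature:policy:property} to turn the contravariant subsignature $\sigmaPol{\policy} \leq \sigma$ back into an equivalence; here I only need to compose its conclusion with the two simpler inversions.
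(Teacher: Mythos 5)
Your proposal is correct and follows essentially the same route as the paper's proof: invoke Lemma~\ref{lem:ml_trni:inversion_wrapper} (instantiated at $\sigma^{*}=\sigmaPol{\policy}$ via reflexivity), then Lemma~\ref{lem:ml_trni:inversion_impure} and Lemma~\ref{lem:ml_trni:inversion_pure}, and chain the resulting type equivalences by transitivity. Your write-up is in fact slightly more explicit than the paper's about carrying the invariant $\typeEval{}{\sigma\equiv\sigmaPol{\policy}:\sign}$ through each step, but the argument is the same.
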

\begin{proof}
Since \typeEval{}{\wrappub{e}: \Pign \alphaPol:\sigmaPol{\policy}.\atcsign{\tau}}, from Lemma~\ref{lem:ml_trni:inversion_wrapper}, we have that \typeEvalI{\alphaPol/\mPol:\sigma''}{\attmod{e}:\atcsign{\tau''}} for some $\sigma''$ and $\tau''$ s.t. \typeEval{}{\sigma''\equiv \sigmaPol{\policy}:\sign} and \typeEval{\alphaPol/\mPol:\sigma''}{\tau\equiv \tau'':\basekind}. 
From Lemma~\ref{lem:ml_trni:inversion_impure} and Lemma~\ref{lem:ml_trni:inversion_pure}, we have that \typeEval{\alphaPol/\mPol:\sigma'}{{e}:{\tau'}} for some $\sigma'$ and $\tau'$ s.t. \typeEval{}{\sigma'\equiv \sigmaPol{\policy}:\sign} and \typeEval{\alphaPol/\mPol:\sigma'}{\tau\equiv \tau':\basekind}. 
\end{proof}

\textbf{Theorem~\ref{thm:ml:wrapper} (in \S\ref{sec:ml:trni})}
{If \typeEvalP{}{\wrappub{e}:\Pign \alphaPol:\sigmaPol{\policy}.\atcsign{\tau}}, then $e$ is \TRNI{\policy,\tau}.}
\begin{proof}
%
%
%
From Lemma~\ref{lem:ml_trni:inversion}, we have that \typeEval{\alphaPol/\mPol:\sigma}{e:\tau'} for some $\sigma$ and $\tau'$ s.t. \typeEval{}{\sigma\equiv \sigmaPol{\policy}:\sign} and \typeEval{\alphaPol/\mPol:\sigma}{\tau \equiv \tau':\basekind}.
From Lemma~\ref{lem:ml:wrapper:kind}, we have that $\fstsign{\sigma} = \fstsign{\sigmaPol{\policy}}$.
In addition, since module variables are not used in the judgments \typeEval{\Gamma}{c:k}, we have that \typeEval{\alpha_\policy:\fstsign{\sigmaPol{\policy}}}{\tau \equiv \tau':\basekind}.
Thus, we have that \typeEval{\alphaPol/\mPol:\sigmaPol{\policy}}{e:\tau}.
From Theorem~\ref{thm:ml:free-thm:opaque}, $e$ is \TRNI{\policy,\tau}.
%
%
%
\end{proof}

%
%

\section{Computation at different levels}
\label{sec:multi_level:encoding}
\renewcommand{\typeEval}[2][]{\ensuremath{#1 \vdash #2}}
\renewcommand{\indval}[2][\obs]{\ensuremath{\textit{I}_{\textit{NI}}^{#1}[\![#2]\!]}}
\renewcommand{\indterm}[2][\obs]{\ensuremath{\textit{I}_{\textit{NI}}^{#1}[\![#2]\!]^{\textit{ev}}}}

\newtext{To facilitate the presentation, in this section, we first present the encoding that support computation at different levels only for noninterference.
Then we extend the encoding to support declassification policies.}

\subsection{Language and abstraction theorem}
\label{sec:multi_level:language}

\paragraph{Language.}
To have an encoding that support multiple levels, we add the universally quantified types $\forall \alpha.\tau$ to the language presented in \S\ref{sec:abstraction}.
In addition, to simplify the encoding, we add the unit type \unitType.
W.r.t. these new types, we have new values: the unit value \unitVal\ of \unitType, and values $\Lambda \alpha.e$ of type $\forall \alpha.\tau$.
Therefore, in addition to the typing rules in \S\ref{sec:abstraction}, we have the following rules.


\begin{mathpar}
\InferRule{~}{\typeEval[\Delta,\Gamma]{\unitVal:\unitType}}\and
%
%
\InferRule{\typeEval[\Delta,\alpha,\Gamma]{e:\tau} }
{\typeEval[\Delta,\Gamma]{\Lambda \alpha.e: \forall \alpha.\tau}}\and
\InferRule{\typeEval[\Delta,\Gamma]{\tau'}\\ \typeEval[\Delta,\Gamma]{\Lambda \alpha.e:\forall \alpha.\tau }}
{\typeEval[\Delta,\Gamma]{e[\tau']: \tau[\tau'/\alpha]}}
\end{mathpar}

Similar to the language in \S\ref{sec:abstraction}, the semantics is a call-by-value semantic.
We write $\lambda \_:\tau.e$ instead of $\lambda x:\tau.e$ when $x$ is not a free variable in $e$.


\paragraph{Abstraction theorem.}
\label{sec:multi_level:background:parametricity}

As in \S\ref{sec:ml:language}, an {\em environment} $\rho$ is a mapping from type variables to tuples of the form $\tuplemt{\tau_1,\tau_2,R}$ where $R \in \Rel{\tau_1,\tau_2}$, and from term variables to tuples of the form $\tuplemt{v,v'}$.
For any $\rho$ s.t. $\alpha\not\in \dom{\rho}$, we write $\rho, \alpha \mapsto \tuplemt{\tau_1,\tau_2,R}$ to mean the environment $\rho'$ s.t. $\dom{\rho'} = \dom{\rho} \cup \{\alpha\}$ and $\rho'(\beta) = \tuplemt{\tau_1,\tau_2,R}$ if $\beta = \alpha$, and $\rho'(\beta) = \rho(\beta)$ otherwise.

For the logical relation, in addition to the rules in Fig.~\ref{fig:logical-relation}, we have other rules described in Fig.~\ref{fig:multi-level:logical-relation}.
As in \S\ref{sec:abstraction}, we have Lemma~\ref{lem:multi_level:logeq:related-term} about the types of related values and terms are described.

\begin{figure}
\vspace{-10pt}
\begin{mathpar}
\small
\InferRule[FR-Unit]
{~}
{\tuplemt{\unitVal,\unitVal} \in \valrelation{\unitType}{\rho}}\and
\InferRule[FR-Par]
{\forall R \in \Rel{\tau_1,\tau_2}. \tuplemt{v_1[\tau_1],v_2[\tau_2]} \in \termrelation{\tau}{\rho,\alpha\mapsto\tuplemt{\tau_1,\tau_2,R}} }
{\tuplemt{v_1,v_2} \in \valrelation{\forall\alpha.\tau}{\rho}}\and
\end{mathpar}
\vspace{-4ex}
\caption{{Logical relation}}
\label{fig:multi-level:logical-relation}
\end{figure}

\begin{lemma}
\label{lem:multi_level:logeq:related-term}
We have that:
\begin{itemize}
\item if $\tuple{v_1,v_2} \in \valrelation{\tau}{\rho}$, then \typeEval{v_1: \rho_L(\tau)}, \typeEval{v_2: \rho_R(\tau)} and
\item if $\tuple{e_1,e_2} \in \termrelation{\tau}{\rho}$, then \typeEval{e_1: \rho_L(\tau)} and \typeEval{e_2: \rho_R(\tau)}.
\end{itemize}
\end{lemma}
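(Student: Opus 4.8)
The plan is to prove Lemma~\ref{lem:multi_level:logeq:related-term} by structural induction on $\tau$, mirroring the proof of Lemma~\ref{lem:logeq:related-term} but with two additional type constructors, $\unitType$ and $\forall\alpha.\tau$, and with the type substitutions $\delta_1,\delta_2$ replaced by the projections $\rho_L,\rho_R$ of the environment. As in the earlier proof, I would first dispose of the second (term) claim: it is immediate from the rule FR-Term, whose premises already record the typing judgments $\typeEval{e_1:\rho_L(\tau)}$ and $\typeEval{e_2:\rho_R(\tau)}$. The whole content therefore lies in the value claim.

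For the value claim, the inductive cases carried over from Lemma~\ref{lem:logeq:related-term} go through essentially verbatim: $\intType$ uses FR-Int and FT-Int (with $\rho_L(\intType)=\rho_R(\intType)=\intType$); $\alpha$ uses FR-Var together with $\rho(\alpha)=\tuplemt{\tau_1,\tau_2,R}$, $R\in\Rel{\tau_1,\tau_2}$, so that membership in $R$ forces $\typeEval{v_i:\rho_i(\alpha)}$; the product case decomposes the related pairs, applies the induction hypothesis componentwise, and reassembles via FT-Pair; and the arrow case instantiates FR-Fun at an arbitrary related argument pair, applies the induction hypothesis and the term claim, and recovers the typing of $v_i$ by inversion on application typing (FT-App). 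The new base case $\unitType$ is trivial: FR-Unit forces $v_1=v_2=\unitVal$, and since $\unitType$ contains no type variables, $\rho_L(\unitType)=\rho_R(\unitType)=\unitType$, so the unit typing rule gives $\typeEval{\unitVal:\unitType}$.

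The genuinely new work is the universal case $\forall\alpha.\tau$, which I expect to be the main obstacle. Starting from $\tuplemt{v_1,v_2}\in\valrelation{\forall\alpha.\tau}{\rho}$, I would use FR-Par, which asserts that for every $R\in\Rel{\tau_1,\tau_2}$ one has $\tuplemt{v_1[\tau_1],v_2[\tau_2]}\in\termrelation{\tau}{\rho,\alpha\mapsto\tuplemt{\tau_1,\tau_2,R}}$. Since $\Rel$ is defined only on closed types, I would instantiate at concrete closed types (e.g. $\tau_1=\tau_2=\intType$ with any relation $R$, such as the empty one), making the chosen instantiation legitimate. Applying the already-established term claim to the extended environment $\rho'=\rho,\alpha\mapsto\tuplemt{\tau_1,\tau_2,R}$ then yields $\typeEval{v_1[\tau_1]:\rho_L(\tau)[\alpha\mapsto\tau_1]}$ and $\typeEval{v_2[\tau_2]:\rho_R(\tau)[\alpha\mapsto\tau_2]}$, using $\rho'_L(\tau)=\rho_L(\tau)[\alpha\mapsto\tau_1]$ and the analogous identity on the right.

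The remaining task is to climb back from the typings of these instantiations to $\typeEval{v_1:\forall\alpha.\rho_L(\tau)}$ and $\typeEval{v_2:\forall\alpha.\rho_R(\tau)}$. Because $v_1,v_2$ are closed values, I would extend the canonical-forms lemma (Lemma~\ref{lem:canonical-form}) with a clause for universal types, concluding $v_i=\Lambda\alpha.e_i$, and then invert the $\Lambda$-introduction and type-application rules to recover the polymorphic type of $v_i$. This inversion is the delicate point: it is the exact analogue of the implicit FT-App inversion used in the arrow case, and one must argue that the body's type under the hypothesis $\alpha$ is precisely $\rho_L(\tau)$ (resp. $\rho_R(\tau)$), not merely a type agreeing with it after the substitution $[\alpha\mapsto\tau_1]$. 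I expect this to follow from uniqueness of typing together with the restricted form of the type-application rule, so that once $v_i=\Lambda\alpha.e_i$ is known its unique polymorphic type $\forall\alpha.\rho_i(\tau)$ is forced; the capture-avoidance bookkeeping (ensuring $\alpha\notin\dom\rho$, so that $\rho_i(\forall\alpha.\tau)=\forall\alpha.\rho_i(\tau)$) is routine but should be stated explicitly.
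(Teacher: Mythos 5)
Your proposal follows the paper's own proof essentially step for step: the term claim is discharged by FR-Term, the value claim goes by structural induction on $\tau$ with the same case analysis (the carried-over cases verbatim, $\unitType$ trivially), and for $\forall\alpha.\tau$ you use the same strategy of instantiating the related values at closed types, applying the term claim under the extended environment $\rho,\alpha\mapsto\tuplemt{\tau_1,\tau_2,R}$, and climbing back to the polymorphic type. The inversion you single out as the delicate point --- recovering $\typeEval{v_i:\forall\alpha.\rho_i(\tau)}$ from the typing of the instantiation, rather than merely a type that agrees after the substitution $[\alpha\mapsto\tau_1]$ --- is exactly the step the paper dispatches with the single phrase ``from the typing rule,'' so your extra care via canonical forms and uniqueness of typing only makes explicit what the paper leaves implicit.
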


We write $\rho \respect \Delta$ to mean that the domain of $\rho$  is $\Delta$ (i.e. $\dom{\rho} = \Delta$).
Let \restrict{\rho} be the restriction of $\rho$ to type variables.
We write $\rho \respect \Delta,\Gamma$ to mean that {$\restrict{\rho} \respect \Delta$ and
\begin{itemize}
\item the set of term variables in \dom{\rho} is equal to $\dom{\Gamma}$ (i.e. $\dom{\Gamma} = \{x \sep x \in \dom{\rho}\}$,
\item for all $x\in \dom{\rho}$, $\rho(x) \in \valrelation{\Gamma(x)}{\rho}$.
\end{itemize}

We define \typeEval[\Delta,\Gamma]{e_1 \logrel e_2:\tau} as in \S\ref{sec:abstraction} and \S\ref{sec:ml:language}.
The following theorem says that the logical equivalence is reflexive. 

\begin{theorem}[Abstraction Theorem]
\label{thm:multi_level:parametricity}
\tealtext{If \typeEval[\Delta,\Gamma]{e:\tau}, then \typeEval[\Delta,\Gamma]{e \logrel e:\tau}.}
\end{theorem}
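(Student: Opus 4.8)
The plan is to prove Theorem~\ref{thm:multi_level:parametricity} exactly as Theorem~\ref{thm:type-abstraction} was proved: by induction on the derivation of $\typeEval[\Delta,\Gamma]{e:\tau}$, establishing that for every $\rho \respect \Delta,\Gamma$ one has $\tuple{\rho_L(e),\rho_R(e)} \in \termrelation{\tau}{\rho}$. All the rules carried over from \S\ref{sec:abstraction} (integers, variables, pairs, projections, $\lambda$-abstraction, application) are discharged verbatim as in the proof of Theorem~\ref{thm:type-abstraction}, since neither the corresponding logical-relation clauses nor the reduction rules for those constructs have changed. Thus the real work is confined to the three new typing rules: unit introduction, type abstraction ($\Lambda\alpha.e$), and type application ($e_0[\tau']$).

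The unit case is immediate: $\rho_L(\unitVal)=\rho_R(\unitVal)=\unitVal$, and FR-Unit followed by FR-Term gives $\tuple{\unitVal,\unitVal}\in\termrelation{\unitType}{\rho}$. For type abstraction, suppose the derivation ends with the $\Lambda$-rule applied to $\typeEval[\Delta,\alpha,\Gamma]{e:\tau}$. Fixing $\rho\respect\Delta,\Gamma$, it suffices by FR-Term to show $\tuple{\Lambda\alpha.\rho_L(e),\Lambda\alpha.\rho_R(e)}\in\valrelation{\forall\alpha.\tau}{\rho}$. By FR-Par I would take arbitrary closed $\tau_1,\tau_2$ and $R\in\Rel{\tau_1,\tau_2}$ and set $\rho'=\rho,\alpha\mapsto\tuple{\tau_1,\tau_2,R}$; since $\alpha$ is fresh and $\Gamma$ is well-formed under $\Delta$, a routine weakening argument shows $\rho'\respect\Delta,\alpha,\Gamma$. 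The induction hypothesis then yields $\tuple{\rho'_L(e),\rho'_R(e)}\in\termrelation{\tau}{\rho'}$, and because $(\Lambda\alpha.\rho_L(e))[\tau_1]\reduce\rho_L(e)[\alpha\mapsto\tau_1]=\rho'_L(e)$ (and symmetrically on the right), FR-Term transports this to $\tuple{(\Lambda\alpha.\rho_L(e))[\tau_1],(\Lambda\alpha.\rho_R(e))[\tau_2]}\in\termrelation{\tau}{\rho'}$, which is precisely the premise of FR-Par.

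For type application, the derivation ends with the elimination rule, so $e=e_0[\tau']$ with $\typeEval[\Delta,\Gamma]{e_0:\forall\alpha.\tau}$ and result type $\tau[\tau'/\alpha]$. Applying the induction hypothesis to $e_0$ gives $\rho_L(e_0)\reduce v_1$, $\rho_R(e_0)\reduce v_2$ with $\tuple{v_1,v_2}\in\valrelation{\forall\alpha.\tau}{\rho}$. Taking $R=\valrelation{\tau'}{\rho}$, Lemma~\ref{lem:multi_level:logeq:related-term} guarantees that $R$ relates closed values of the closed types $\rho_L(\tau')$ and $\rho_R(\tau')$, so $R\in\Rel{\rho_L(\tau'),\rho_R(\tau')}$ and FR-Par yields
\[
\tuple{v_1[\rho_L(\tau')],\,v_2[\rho_R(\tau')]}\in\termrelation{\tau}{\rho,\alpha\mapsto\tuple{\rho_L(\tau'),\rho_R(\tau'),R}}.
\]
It then remains to identify this environment-extended relation with $\termrelation{\tau[\tau'/\alpha]}{\rho}$ and to absorb the leading reductions $\rho_L(e_0[\tau'])=\rho_L(e_0)[\rho_L(\tau')]\reduce v_1[\rho_L(\tau')]$ (and symmetrically) via FR-Term, giving $\tuple{\rho_L(e_0[\tau']),\rho_R(e_0[\tau'])}\in\termrelation{\tau[\tau'/\alpha]}{\rho}$.

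The main obstacle is the compositionality (type-substitution) lemma underlying that last step, namely that for every $\tau$, every $\rho$, and every $\tau'$ well-formed under $\restrict{\rho}$,
\[
\valrelation{\tau[\tau'/\alpha]}{\rho}=\valrelation{\tau}{\rho,\alpha\mapsto\tuple{\rho_L(\tau'),\rho_R(\tau'),\valrelation{\tau'}{\rho}}},
\]
together with the analogous identity for $\termrelation{-}{-}$. I would prove this by induction on the structure of $\tau$; the base cases ($\intType$, $\unitType$, and the two variable cases $\alpha$ and $\beta\neq\alpha$) as well as the $\times$ and $\rightarrow$ cases are mechanical, but the $\forall$ case is delicate, since it requires commuting the syntactic substitution with the environment extension underneath a fresh binder and appealing to $\alpha$-renaming to avoid variable capture. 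Once this lemma is established, the three new cases close exactly as sketched and the induction is complete.
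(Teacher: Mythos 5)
Your proposal is correct, but there is nothing in the paper to compare it against: the paper states Theorem~\ref{thm:multi_level:parametricity} without proof, treating it as the classical abstraction theorem for the calculus extended with \unitType\ and $\forall\alpha.\tau$, and only ever \emph{uses} it (in Theorems~\ref{thm:ni_free} and~\ref{thm:trni_free} and the interface lemmas). What you supply is the standard Girard--Reynolds argument: the cases inherited from \S\ref{sec:abstraction} do go through with only the notational shift from the $(\delta_i,\gamma_i,\rho)$ presentation to the single environment $\rho$ with its projections $\rho_L,\rho_R$, and you correctly isolate the one genuinely new ingredient, the compositionality (type-substitution) lemma
\(\valrelation{\tau[\tau'/\alpha]}{\rho}=\valrelation{\tau}{\rho,\alpha\mapsto\tuple{\rho_L(\tau'),\rho_R(\tau'),\valrelation{\tau'}{\rho}}}\),
needed to close the type-application case. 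Two small points worth making explicit when you write it up: first, $R=\valrelation{\tau'}{\rho}$ qualifies as a member of $\Rel{\rho_L(\tau'),\rho_R(\tau')}$ precisely because Lemma~\ref{lem:multi_level:logeq:related-term} gives the typing side conditions and because $\Rel{-,-}$ here imposes no admissibility or closure requirements (the calculus is terminating, so an arbitrary relation on closed values suffices); second, when you ``transport'' the induction hypothesis through the head reduction $(\Lambda\alpha.\rho_L(e))[\tau_1]\rightarrowtriangle\rho'_L(e)$ via FR-Term, you also need the typing premise $\typeEval{}{(\Lambda\alpha.\rho_L(e))[\tau_1]:\rho'_L(\tau)}$, which follows from the typing rules for $\Lambda$ and type application together with a substitution/weakening argument. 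Neither point is an obstacle; the plan is sound and complete once the compositionality lemma is carried out.
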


\subsection{Noninterference for free}
\label{sec:multi_level:ni_free}
\subsubsection{Overview}
\label{sec:ni_free:overview}
We consider noninterference defined for an arbitrary finite security lattice \tuplemt{\lattice,\smalleq}, where $\lattice$ is the set of security levels, and $\smalleq$ is the order between them.
We use \lvln\ as a function from variables to security levels.

We consider programs that read input values from input channels and generate output values to output channels. 
Before presenting assumptions on inputs, outputs and programs, we introduce some auxiliary notations.
Let $\{\wraptype{}\}_{l \in \lattice}$ be a family of type constructors indexed by $l$.
A {\em wrapped value} of type $\tau$ at level $l$ is a value of type $\wraptype{\tau$}.
A wrapped value $v$ can be unwrapped by $\unwrap\ k\ v$, where $k$ is an appropriate key and $\unwrap\ k\ v$ can be implemented as $v\ k$.
Below are assumptions on input/output channels and programs.
\begin{assumption}
\label{assmt:input_output}
\begin{enumerate}
\item Each input/output channel is associated with a security level.

\item 
The type of an input value from an input channel is \intType.
The type of an output value to an output channel is \intType.\footnote{%
In this section, the type of inputs/outputs is \intType. We choose this since we are sticking with \cite{Li-Zdancewic-05-POPL}. The result presented in this paper can be generalized to accepting confidential inputs of arbitrary types (e.g. \textbf{Bool}, \textbf{String}, etc).}


\item Free variables in programs are considered as inputs from input channels.

\item Input values on an input channel at level $l$ cannot be directly observed. We model this by using \wraptype{\intType} as their type, not \intType.

\item Programs are executed in a context where there are several output channels, each corresponding to a security level. 
A program will computes a tuple of wrapped output values, where each element of the tuple is unwrapped by using \unwrap\ and an appropriate key, and the unwrapped value is sent to a channel.
The assumption about execution of programs in the context is illustrated in the following pseudo program, where $e$ is a program satisfying assumptions, $o$ is the computed tuple, $\code{Output.Channel}_{l}$ is an output channel at $l$, $k_{l}$ is a key to unwrap value at $l$, \prjcode{pr}{l} projects the output value for the output channel $l$.

\begin{lstlisting}[escapechar=\%]
let o = e in 
   %$\code{IO.Channel}_{l_1}$% := %\unwrap% %$k_{l_1}$% (%\prj{l_1}{o}%)
   %$\vdots$%
   %$\code{IO.Channel}_{l_n}$% := %\unwrap% %$k_{l_n}$% (%\prj{l_n}{o}%)
\end{lstlisting}

Therefore, outputs of considered programs are of the type {$\wraptype[l_1]{\intType} \times  \dots\times \wraptype[l_n]{\intType}$}.
\end{enumerate}
\end{assumption}

\subsubsection{Encoding}
\label{sec:ni_free:encoding}
For every level $l$ in the lattice, we introduce a type variable $\alpha_l$.
To protect data of type $\tau$ at level $l \in \lattice$, 
we use a type variable $\alpha_l$ and form the type $\alpha_l \rightarrow \tau$.
In other words, \wraptype{\intType} is $\alpha_l \rightarrow \tau$.
Therefore, an input of type \intType\ at level $l$ is encoded as a value of type $\alpha_l \rightarrow \intType$.

To do computation on values of types $\alpha_l \rightarrow \tau$, programmers have to manage {key} manually and {it is error-prone}.
To support handling values of $\alpha_l \rightarrow \tau$, we propose the following {interfaces} that satisfy monad laws \cite{Wadler-89-FPCA,Petricek:18:PL} (see proof in \S\ref{sec:multi_level:ni_free:monad}).
The detailed implementation of them will be described later.
\begin{itemize}
\item \comp{l}: this interface transfers a wrapped value at $l$ to a continuation and produce an output at $l$,
\item \wrap{l}: this interface is used to wraps terms. 
\end{itemize}

In addition, we have the interface \convup{l}{l'} (for $l \smallst l'$) which is used to convert data from a level $l$ to a higher level $l'$.
Therefore, we have the typing context \tcontextni\ and the term context \contextni\ for NI as described in Fig.~\ref{fig:encoding:ni}.
Examples illustrating \comp{l} and \conv{ll'} are in Example~\ref{ex:multi_level:ni:running_example:comp} and Example~\ref{ex:ni:running_example:conv}.
\footnote{{In these two examples, for illustration purpose, programs only generate single inputs at single levels. 
A program generating multiple outputs for multiple levels is illustrated later in Example~\ref{ex:ni:running_example:multile_levels}.}}

\begin{figure}
\begin{align*}
\tcontextni =&\ \{\alpha_l \sep l \in \lattice\}\\
\contextni =&\ \{x: \alpha_l \rightarrow \intType \sep \text{$x$ is an input} \wedge \lvl{x} = l\}\ \cup \\
   &\ \{\comp{l}: \forall \beta_1,\beta_2. (\alpha_l \rightarrow \beta_1) \rightarrow \big( \beta_1 \rightarrow (\alpha_l \rightarrow \beta_2)  \big) \rightarrow \alpha_l \rightarrow\beta_2 \sep l \in \lattice \}\ \cup \\
   &\ \{\convup{l}{l'}: \forall \beta. (\alpha_l \rightarrow \beta) \rightarrow (\alpha_{l'} \rightarrow \beta) \sep l,l' \in \lattice \wedge l \smallst l'\}\ \cup\\
   &\ \{\wrap{l}: \forall \beta.\beta \rightarrow \alpha_l \rightarrow \beta \sep l \in \lattice\}
\end{align*}
\caption{Contexts for noninterference}
\label{fig:encoding:ni}
\end{figure}

\begin{example}[Computation at a level]
\label{ex:multi_level:ni:running_example:comp}
We consider the lattice \tuplemt{\latticethree,\smalleq} where $\latticethree = \{L,M,H\}$ and $L \smalleq M \smalleq H$.
Suppose that there are three inputs \hinp, \minp\ and \linp\ and their associated levels are respectively $H$, $M$, and $L$.
By replacing \lattice\ in Fig.~\ref{fig:encoding:ni} with \latticethree, we have the typing context \tcontextni\ and the term context \contextni\ for NI defined on \tuplemt{\latticethree,\smalleq}.
Note that the types of $\hinp$, $\minp$ and $\linp$ in \contextni\ are respectively $\alpha_H \rightarrow \intType$, $\alpha_M \rightarrow \intType$, and $\alpha_L \rightarrow \intType$.

We now illustrate \comp\ by having a program that has a computation at $M$: we want to have $\minp + 1$.\footnote{Input $\minp$ is not of the \intType\ type and hence, $\minp+1$ is not well-typed. 
We use it just to express the idea of the computation.
This convention is also used in other examples.}

In order to use \comp{M}, we first wrap \plusOne\ as below. 
$$\wrapPlusOne = \lambda x:\intType.\wrap{M}(\plusOne\ x)$$ 

Then we have the following program $e_1$ {of the type $\alpha_M \rightarrow \intType$} encoding the requirement that the function \plusOne\ is applied on $y$ and the result is at $M$.
$$e_1 = \comp{M}[\intType][\intType]\ {\minp}\ {\wrapPlusOne}$$
\end{example}

\begin{example}[Convert to a higher level]
\label{ex:ni:running_example:conv}
{In this example, we will do the computation $\hinp + \minp$ and the result is at $H$}. 
This example illustrates the usage of \conv{ll'} (for $l \smallst l'$).

Let \add\ be a function of the type $\intType \rightarrow\intType\rightarrow \intType$.
From \add, we construct $\wrapAddC$ of the type $\intType \rightarrow \alpha_H \rightarrow \intType$, 
where $c$ is a variable of the type \intType.
$${\wrapAddC} = \lambda y:\intType.\wrap{H}(\add\ \ c\ \ y)$$ 


The following program $e_2$ of the type $\alpha_H \rightarrow \intType$ computes the sum of $\hinp$ and $\minp$ and the computed sum is at $H$. 
Note that in order to transfer $\minp$ at $M$ to the continuation $\wrapAddC$ by using \comp{H}, we  need to convert $\minp$ from level $M$ to level $H$ by using \convup{M}{H}.
$$e_2 = \comp{H}[\intType][\intType]\ {\hinp} \ (\lambda c:\intType.(\comp{H}[\intType][\intType]\ (\convup{M}{H}\ {\minp})\ \wrapAddC))$$
\end{example}

\subsubsection{Indistinguishability}
\label{sec:multi_level:ni_free:ind}
In this section, we define indistinguishability for an observer $\obs \in \lattice$. \footnote{Following \cite{Bowman-Ahmed-15-ICFP}, we use $\obs$ for observers.}
As mentioned in \S\ref{sec:ni_free:encoding}, $\alpha_l$ is used to protect data at $l$. 
We can think of closed terms related at $\alpha_l$ as keys to open wrapped data at $l$.
Thus, if an observer \obs\ can observe protected data at $l$ (i.e. $l \smalleq \obs$), this observer has the keys.
Otherwise, the observer has no key.
This idea is captured in the following definition, where we use \unitType\ as the type of keys, \footnote{Note that other types can be used.} and $\full{\unitType} = \{\tuplemt{\unitVal,\unitVal}\}$.
\footnote{When another closed type $\tau$ is used as the type of keys, we require that all keys are indistinguishable and hence,  the substitution for $\alpha_l$ is \full{\tau}  when $l \smalleq \obs$.
To emphasize this, in Def.~\ref{def:ni:env}, we use \full{\unitType} instead of the identity relation on \unitType, even though for \unitType, these two relations coincide.
}

\begin{definition}
\label{def:ni:env}
An environment $\rho$ is an {\em environment for noninterference w.r.t. an observer \obs} (denoted by $\rho \respectni$) if $\rho\respect \tcontextni$ and for any $\alpha_l \in \tcontextni$:
$$\rho(\alpha_l) = \begin{cases}
 \tuplemt{\unitType,\unitType,\full{\unitType}} & \text{if $l \smalleq \obs$},\\
 \tuplemt{\unitType,\unitType,\emptyset} & \text{if $l \not\smalleq \obs$.}
\end{cases}$$
\end{definition}

Note that there is only a unique an environment for NI w.r.t. an observer \obs\ and hence, hereafter we will use the environment for NI w.r.t. an observer.
We next define indistinguishability for an observer \obs\ as an instantiation of the logical relation. 
\begin{definition}
\label{def:ni:ind}
Given a type $\tau$ s.t. $\typeEval[\tcontextni]{\tau}$.
The {\em indistinguishability relations on values and terms of $\tau$  for an observer $\obs$} (denoted by resp. \indval{\tau} and \indterm{\tau}) are defined as
\begin{align*}
\indval{\tau}  = \valrelation{\tau}{\rho} \quad\quad 
\indterm{\tau} = \termrelation{\tau}{\rho}
\end{align*}
where $\rho$ is the environment for NI w.r.t the observer \obs.
\end{definition}

\begin{example}[Indistinguishability]
\label{ex:ni:ind}
We consider the lattice \tuplemt{\latticethree,\smalleq} described in Example~\ref{ex:multi_level:ni:running_example:comp}. 
We will describe the indistinguishability relations of the type $\alpha_M \rightarrow \intType$ for an observer at $L$ and for an observer at $H$. Note that $\alpha_M \rightarrow \intType$ is the type of inputs  at level $M$.

The intuition is that the observer at $H$ can observe data at $M$ and hence, two wrapped values at $M$ are indistinguishable to the observer at $H$ if the wrapped values are the same.
This intuition is captured by the definition of indistinguishability as demonstrated below, where $\rho$ is an environment s.t. $\rho \respectni[H]$.
\begin{align*}
\indval[H]{\alpha_M \rightarrow \intType} = & \expl{from Def.~\ref{def:ni:ind}, Lem.~\ref{lem:logeq:related-term}, and rule FR-Fun}\\
   &  \{\tuplemt{\lambda x:\unitType.e_1, \lambda x:\unitType.e_2} \sep \typeEval{\lambda x:\unitType.e_i:\unitType \rightarrow \intType}, \\
   & \hspace{50pt}\forall (v_1,v_2) \in \valrelation{\alpha_M}{\rho}. ((\lambda x:\unitType.e_1)\ v_1, (\lambda x:\unitType.e_2)\ v_2) \in \termrelation{\intType}{\rho} \}\\
 = & \expl{from Def.~\ref{def:ni:env}, rule FR-Term, rule FR-Int}\\
   & \{\tuplemt{\lambda x:\unitType.e_1, \lambda x:\unitType.e_2} \sep \typeEval{\lambda x:\unitType.e_i:\unitType \rightarrow \intType}, \\
   & \hspace{50pt} \exists n_1,n_2.\quad e_1[\unitVal/x] \reduce n_1,\ \ e_2[\unitVal/x]) \reduce n_2\}
\end{align*}

Therefore, for any $n_1$ and $n_2$, $(\lambda \_:\unitType.n_1,\lambda \_:\unitType.n_2) \in \indval[H]{\alpha_M \rightarrow \intType}$ iff $n_1 = n_2$.

While the observer at $H$ can observe data at $M$, the observer at $L$ cannot.
Therefore, to the observer at $L$, all wrapped values at $M$ are indistinguishable.
Below is an demonstration showing that indistinguishability for the observer at $L$ is as desired, where $\rho \respectni[L]$.
\begin{align*}
\indval[L]{\alpha_M \rightarrow \intType} = & \expl{from Def.~\ref{def:ni:ind} and rule FR-Fun}\\
   &  \{\tuplemt{\lambda x:\unitType.e_1, \lambda x:\unitType.e_2} \sep \typeEval{\lambda x:\unitType.e_i:\unitType \rightarrow \intType}, \\
   & \hspace{50pt}\forall (v_1,v_2) \in \valrelation{\alpha_M}{\rho}.((\lambda x:\unitType.e_1)\ v_1, (\lambda x:\unitType.e_2)\ v_2) \in \termrelation{\intType}{\rho} \}\\
 = & \expl{since $\indval[L]{\alpha_M} = \emptyset$, the condition on $v_1$ and $v_2$ holds vacuously}\\
   & \{\tuplemt{\lambda x:\unitType.e_1, \lambda x:\unitType.e_2} \sep \typeEval{\lambda x:\unitType.e_i:\unitType \rightarrow \intType} \}
\end{align*}

Therefore, for any $n_1$ and $n_2$, $(\lambda \_:\unitType.n_1,\lambda \_:\unitType.n_2) \in \indval[H]{\alpha_M \rightarrow \intType}$.
\end{example}

\subsubsection{Typing implies non-interference}
\label{sec:ni_free:theorem_for_free}
\paragraph{Interface for computation.}
The implementations of \comp{l}, \conv{ll'}, and \wrap{l} are respectively \compcr, \convupcr, and \wrapcr\ as below.
The construction of \convupcr\ and \wrapcr\ is straightforward.
On a protected input of type $\unitType \rightarrow \beta_1$ and a continuation of type $\beta_1 \rightarrow (\unitType \rightarrow \beta_2)$, \compcr\ first unfolds the protected input by applying it to the key \unitVal\ and then applies the continuation on the result.

\begin{align*}
\compcr & = \Lambda \beta_1, \beta_2. \lambda x:\unitType \rightarrow \beta_1. \lambda f: \beta_1 \rightarrow (\unitType\rightarrow \beta_2).f(x\ \unitVal)\\
\convupcr & = \Lambda \beta.\lambda x:\unitType \rightarrow \beta.x\\
\wrapcr &= \Lambda \beta.\lambda x:\beta.\lambda \_:\unitType.x
\end{align*}

We next define a full environment for NI which covers not only type variables but also term variables.
Basically, its restriction to type variables (denoted by $\restrict{\rho}$) is the environment for NI w.r.t. an observer \obs. 
In addition, it maps inputs to indistinguishable values and interfaces for computation to the specific implementations just defined.

\begin{definition}
An environment $\rho$ is a {\em full environment for NI w.r.t. an observer \obs} (denoted by $\rho \respectfullni$) if $\restrict{\rho} \respectni$ and
\begin{itemize}
\item for all $x$ of the type $\alpha_l \rightarrow \intType$ (i.e. different from \comp{l} and \convup{l}{l'}), 
$$\rho(x) = \tuple{v_1,v_2} \in \indval{\alpha_l \rightarrow \intType},$$
\item for all $l$, $\rho(\comp{l}) = \tuplemt{\compcr,\compcr}$, 
\item for all $l$ and $l'$ s.t. $l \smallst l'$, $\rho(\convup{l}{l'}) = \tuplemt{\convupcr,\convupcr}$
\item for all $l$, $\rho(\wrap{l}) = \tuplemt{\wrapcr,\wrapcr}$. 

\end{itemize}
\end{definition}

\paragraph{NI for free.}
In order to instantiate the abstraction theorem to prove that a program is NI, we need to prove that \compcr\ is indistinguishable to itself for any observer and so are \convupcr\ and \wrapcr. 
\footnote{Note that Lemma~\ref{lem:substitution:term:comp_convup} is not a direct result of the abstraction theorem since the types of implementations (e.g. \compcr) of defined interfaces are closed types while the types of interfaces (e.g. the type of \comp{l}) are open types.}

\begin{lemma}
\label{lem:substitution:term:comp_convup}
For any \obs, it follows that:
\begin{align*}
\tuplemt{\compcr,\compcr} \in &\ \indval{\forall \beta_1,\beta_2. (\alpha_l \rightarrow \beta_1) \rightarrow \big( \beta_1 \rightarrow (\alpha_l \rightarrow \beta_2)  \big) \rightarrow \alpha_l \rightarrow\beta_2}\\
\tuplemt{\convupcr,\convupcr} \in &\ \indval{\forall \beta. (\alpha_l \rightarrow \beta) \rightarrow (\alpha_{l'} \rightarrow \beta)}\\
\tuplemt{\wrapcr,\wrapcr} \in &\ \indval{\forall \beta. \beta \rightarrow \alpha_{l} \rightarrow \beta}.
\end{align*}
\end{lemma}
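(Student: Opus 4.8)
The plan is to prove each of the three indistinguishability claims by instantiating the logical relation at a universally quantified type and then discharging the resulting obligations directly from the definitions of \compcr, \convupcr, and \wrapcr. Since each interface has a closed type but the relation is indexed by an observer-dependent environment, I will fix an arbitrary $\obs$ and work with the environment $\rho$ for noninterference w.r.t. $\obs$ (the restriction $\restrict{\rho}$), unfolding the rule FR-Par for the outer $\forall$ and then FR-Fun for each arrow. The key point is that the bodies of these implementations are simple enough that the required term-relatedness follows from the reduction rules together with the relation at the argument types.

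First I would handle \wrapcr\ as a warm-up, since it is the simplest. By FR-Par, I must show that for any closed $\tau_1,\tau_2$ and any $R \in \Rel{\tau_1,\tau_2}$, the two instantiations $\wrapcr[\tau_1]$ and $\wrapcr[\tau_2]$ are related in $\valrelation{\beta \rightarrow \alpha_l \rightarrow \beta}{\rho,\beta\mapsto\tuplemt{\tau_1,\tau_2,R}}$. Applying FR-Fun twice, I take an arbitrary related pair $\tuplemt{v_1,v_2}\in R$ and an arbitrary related pair at $\alpha_l$, apply each wrapped function, and observe that both sides reduce (via the abstraction reduction rule) to $v_1$ and $v_2$ respectively, independent of the key. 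Relatedness at $\beta$ then holds because $\tuplemt{v_1,v_2}\in R = \valrelation{\beta}{\rho,\beta\mapsto\tuplemt{\tau_1,\tau_2,R}}$, and FR-Term closes the term-level goal. The treatment of \convupcr\ is analogous and even easier, since it is essentially the identity: after instantiating $\beta$ and unfolding FR-Fun, both sides reduce to the same applied argument, and the hypothesis that the argument pair is related at $\alpha_l \rightarrow \beta$ transfers directly to relatedness at $\alpha_{l'} \rightarrow \beta$ because the body does not inspect the key at all.

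The main work is \compcr. Here, after FR-Par for $\beta_1,\beta_2$ and FR-Fun for the two arrows, I take a related pair of protected inputs $\tuplemt{x_1,x_2}\in\valrelation{\alpha_l\rightarrow\beta_1}{\rho'}$ and a related pair of continuations $\tuplemt{f_1,f_2}\in\valrelation{\beta_1 \rightarrow (\alpha_l\rightarrow\beta_2)}{\rho'}$, and must show the applications are related in $\termrelation{\alpha_l\rightarrow\beta_2}{\rho'}$ where $\rho' = \rho,\beta_1\mapsto\tuplemt{\tau_1,\tau_2,R_1},\beta_2\mapsto\tuplemt{\tau_3,\tau_4,R_2}$. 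Each side reduces to $f_i(x_i\,\unitVal)$; I then need that $x_i\,\unitVal$ produces related results at $\beta_1$ and hence that applying the related continuations gives related results at $\alpha_l\rightarrow\beta_2$, finally wrapping with FR-Term. The delicate step is supplying the key $\unitVal$: I must check $\tuplemt{\unitVal,\unitVal}\in\valrelation{\alpha_l}{\rho'}$, which holds precisely when $l\smalleq\obs$ (where $\rho'(\alpha_l) = \full{\unitType}$), while in the case $l\not\smalleq\obs$ the relation $\valrelation{\alpha_l}{\rho'}$ is empty.

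The hard part will be the case $l\not\smalleq\obs$: there is no related pair of keys to feed in, so the naive approach of "apply $x_i$ to $\unitVal$" does not obviously produce related outputs via the function-relatedness hypothesis on $x_i$. I expect to resolve this by noting that relatedness of the applications is established using FR-Fun at the outer result type $\alpha_l\rightarrow\beta_2$, where the quantification over related argument pairs at $\alpha_l$ is again vacuous when the relation is empty, so the term-relatedness goal holds trivially by the side conditions of FR-Term (well-typedness plus termination, which follows from the assumption that these are total, terminating closed programs). Thus the empty-relation case is discharged by vacuity at the right place in the derivation rather than by exhibiting related keys. I would present the proof by making this case split explicit, carrying out the $l\smalleq\obs$ case by the reduction-and-application chain sketched above, and the $l\not\smalleq\obs$ case by the vacuity argument, and I would rely on Lemma~\ref{lem:multi_level:logeq:related-term} throughout to supply the typing side conditions required by FR-Term.
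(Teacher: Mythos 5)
Your proposal follows the same route as the paper's proof: fix \obs, unfold FR-Par and FR-Fun, and split on whether $l \smalleq \obs$, discharging the $l \not\smalleq \obs$ branch by vacuity of the empty interpretation of $\alpha_l$ and the $l \smalleq \obs$ branch by feeding the related keys $\tuplemt{\unitVal,\unitVal}$ into the function-relatedness hypotheses. Your handling of \compcr\ and \wrapcr\ matches the paper, and you are in fact more explicit than the paper about why the vacuous branch still satisfies FR-Term's typing and termination side conditions.

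The one place your justification is off is \convupcr: the implication from relatedness at $\alpha_l \rightarrow \beta$ to relatedness at $\alpha_{l'} \rightarrow \beta$ does not follow from ``the body does not inspect the key''---it is a statement about the two relations themselves, and it holds only because $l \smallst l'$ forces $l \smalleq \obs$ whenever $l' \smalleq \obs$, so either $\valrelation{\alpha_{l'}}{\rho}$ is empty (target vacuous) or both $\alpha_l$ and $\alpha_{l'}$ are interpreted as $\full{\unitType}$ and the two function relations coincide. This is the security-critical step (it is exactly why conversion is only permitted upward), so the case split on $l' \smalleq \obs$ that the paper performs must appear explicitly; with that correction your argument goes through unchanged.
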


From the definition of $\rho \respectfullni$ and Lemma~\ref{lem:substitution:term:comp_convup}, we have the following corollary.
\begin{corollary}
\label{cor:substitution:term:property}
For any $\rho \respectfullni$, $\rho \respect \tcontextni,\contextni$.
\end{corollary}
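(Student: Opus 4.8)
The plan is to unfold the definition of $\rho \respect \tcontextni,\contextni$ and verify its three requirements directly, drawing each from the matching clause in the definition of $\respectfullni$ together with Lemma~\ref{lem:substitution:term:comp_convup}. Recall that $\rho \respect \tcontextni,\contextni$ demands (i) $\restrict{\rho} \respect \tcontextni$, (ii) that the term variables occurring in $\dom{\rho}$ are exactly $\dom{\contextni}$, and (iii) that $\rho(x) \in \valrelation{\contextni(x)}{\rho}$ for each such term variable $x$. So the whole argument is a case split feeding these three clauses.

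First I would dispatch (i): by hypothesis $\restrict{\rho} \respectni$, and by Definition~\ref{def:ni:env} this already entails $\restrict{\rho} \respect \tcontextni$, so there is nothing further to check. Requirement (ii) is immediate from the shape of the definition of $\respectfullni$, whose clauses assign values precisely to the input variables and to the interface symbols $\comp{l}$, $\convup{l}{l'}$, $\wrap{l}$ that make up $\dom{\contextni}$ as listed in Fig.~\ref{fig:encoding:ni}.

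The substance is requirement (iii), which I would prove by cases on the kind of term variable, after first recording the key observation that $\valrelation{\tau}{\rho}$ consults $\rho$ only through its type-variable component: each rule defining the value relation (FR-Int, FR-Pair, FR-Fun, FR-Unit, FR-Par, and FR-Var) refers to $\rho$ only at type variables. Hence $\valrelation{\tau}{\rho} = \valrelation{\tau}{\restrict{\rho}}$, and since $\restrict{\rho}$ is the unique environment for NI w.r.t.\ $\obs$, Definition~\ref{def:ni:ind} yields $\valrelation{\tau}{\rho} = \indval{\tau}$ for every $\tau$ with $\typeEval[\tcontextni]{\tau}$. With this identification: for an input variable $x$ of type $\alpha_l \rightarrow \intType$, the corresponding clause of $\respectfullni$ states exactly $\rho(x) \in \indval{\alpha_l \rightarrow \intType} = \valrelation{\alpha_l \rightarrow \intType}{\rho}$; and for each interface symbol, $\respectfullni$ fixes $\rho(\comp{l}) = \tuplemt{\compcr,\compcr}$ (and similarly $\convupcr$, $\wrapcr$), whereupon Lemma~\ref{lem:substitution:term:comp_convup} supplies $\tuplemt{\compcr,\compcr} \in \indval{\contextni(\comp{l})} = \valrelation{\contextni(\comp{l})}{\rho}$, and likewise for $\convupcr$ and $\wrapcr$.

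The only delicate point, and the one I would justify most carefully, is the identity $\valrelation{\tau}{\rho} = \indval{\tau}$: it rests both on the term-variable independence of the value relation noted above and on the fact that the type-variable restriction $\restrict{\rho}$ genuinely coincides with the NI environment named in Definition~\ref{def:ni:ind}, the latter holding because $\respectni$ pins down $\restrict{\rho}(\alpha_l)$ uniquely for each $l$. Everything else is bookkeeping, so I do not expect a real obstacle once that identification is made explicit.
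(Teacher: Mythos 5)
Your proposal is correct and follows essentially the same route as the paper, which derives the corollary directly from the definition of $\respectfullni$ together with Lemma~\ref{lem:substitution:term:comp_convup}; the paper leaves the details implicit, whereas you usefully make explicit the identification $\valrelation{\tau}{\rho} = \indval{\tau}$ that justifies transferring the indistinguishability facts into the clauses of $\rho \respect \tcontextni,\contextni$.
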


We next prove that if a program is well-typed in $\tcontextni, \contextni$, then it transforms indistinguishable inputs to indistinguishable outputs.

\begin{theorem}
\label{thm:ni_free}
If \typeEval[\tcontextni,\contextni]{e:\tau}, then for any {$\obs \in \lattice$} and $\rho \respectfullni$, 
$$\tuplemt{\rho_L(e),\rho_R(e)} \in \indterm{\tau}.$$
\end{theorem}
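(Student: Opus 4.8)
The plan is to obtain this as essentially a direct instantiation of the abstraction theorem, with the only genuine work being the reconciliation of the two notions of ``environment'' in play: the generic $\rho \respect \tcontextni,\contextni$ demanded by logical equivalence, and the specific full environment for NI, $\rho \respectfullni$, appearing in the statement.

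First I would apply Theorem~\ref{thm:multi_level:parametricity} to the hypothesis $\typeEval[\tcontextni,\contextni]{e:\tau}$, yielding $\typeEval[\tcontextni,\contextni]{e \logrel e:\tau}$. Unfolding the definition of logical equivalence, this says that for every $\rho \respect \tcontextni,\contextni$ we have $\tuplemt{\rho_L(e),\rho_R(e)} \in \termrelation{\tau}{\rho}$. Next, fix an observer $\obs \in \lattice$ and an arbitrary $\rho \respectfullni$. The crucial bridge is Corollary~\ref{cor:substitution:term:property}, which guarantees $\rho \respect \tcontextni,\contextni$; this is precisely where Lemma~\ref{lem:substitution:term:comp_convup} is needed, since respecting $\contextni$ requires that the interface variables $\comp{l}$, $\convup{l}{l'}$, and $\wrap{l}$ be mapped to pairs of related values, and $\rho$ maps them to $\tuplemt{\compcr,\compcr}$, $\tuplemt{\convupcr,\convupcr}$, and $\tuplemt{\wrapcr,\wrapcr}$, which are self-related at the required (open) types exactly by that lemma. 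With $\rho \respect \tcontextni,\contextni$ in hand, the instantiation above gives $\tuplemt{\rho_L(e),\rho_R(e)} \in \termrelation{\tau}{\rho}$.

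Finally I would identify $\termrelation{\tau}{\rho}$ with $\indterm{\tau}$. By Definition~\ref{def:ni:ind}, $\indterm{\tau} = \termrelation{\tau}{\rho'}$ where $\rho'$ is the environment for NI w.r.t.\ $\obs$, a type-variable-only environment. The key observation, which I would record as a short auxiliary lemma proved by induction on the structure of $\tau$, is that $\valrelation{\tau}{\rho}$ and $\termrelation{\tau}{\rho}$ depend on $\rho$ only through its restriction to type variables, since every clause of the logical relation consults $\rho$ solely at type variables (rule FR-Var, and the extensions by $\alpha$ in the $\forall$-clause). Because $\rho \respectfullni$ forces $\restrict{\rho} \respectni$, and the environment for NI w.r.t.\ $\obs$ is uniquely determined (Definition~\ref{def:ni:env} pins down $\rho(\alpha_l)$ for each $l$), we get $\restrict{\rho} = \rho'$, whence $\termrelation{\tau}{\rho} = \termrelation{\tau}{\restrict{\rho}} = \termrelation{\tau}{\rho'} = \indterm{\tau}$. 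Combining, $\tuplemt{\rho_L(e),\rho_R(e)} \in \indterm{\tau}$, as required.

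The main obstacle is not any single step, each of which is short, but making the type-variable-restriction observation precise in the presence of the universal clause FR-Par, which reintroduces fresh type-variable bindings on both sides via $\alpha \mapsto \tuplemt{\tau_1,\tau_2,R}$. I would need to verify that the induction goes through uniformly at $\forall \alpha.\tau$, so that restricting $\rho$ to type variables commutes with the relation's definition under that quantifier; everything else is citation or routine bookkeeping.
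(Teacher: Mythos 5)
Your proposal is correct and follows essentially the same route as the paper's proof: invoke Corollary~\ref{cor:substitution:term:property} to get $\rho \respect \tcontextni,\contextni$, instantiate the abstraction theorem, and then identify $\termrelation{\tau}{\rho}$ with $\indterm{\tau}$. The only difference is that you spell out the last identification (that the logical relation depends on $\rho$ only through its type-variable restriction, which coincides with the unique NI environment for $\obs$) as an explicit auxiliary lemma, whereas the paper dispatches it with ``from the definition of indistinguishability''; your extra care there is reasonable but does not change the argument.
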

\begin{proof}
As proven in Corollary~\ref{cor:substitution:term:property}, we have that $\rho\respect \tcontextni,\contextni$.
Since \typeEval[\tcontextni,\contextni]{e:\tau}, from Theorem~\ref{thm:multi_level:parametricity}, we have that $\tuplemt{\rho_L(e),\rho_R(e)} \in \termrelation{\tau}{\rho}$.
From the definition of indistinguishability, it follows that $\tuplemt{\rho_L(e),\rho_R(e)} \in \indterm{\tau}$.
\end{proof}

Therefore, we have the following corollary for programs of the type $(\alpha_{l_1} \rightarrow \intType) \times \dots \times (\alpha_{l_n} \rightarrow \intType)$.

\begin{corollary}[NI for free]
\label{cor:ni_free}
If \typeEval[\tcontextni,\contextni]{e:(\alpha_{l_1} \rightarrow \intType) \times \dots \times (\alpha_{l_n} \rightarrow \intType)}, then for any {$\obs \in \lattice$} and $\rho \respectfullni$, 
$$\tuplemt{\rho_L(e),\rho_R(e)} \in \indterm{(\alpha_{l_1} \rightarrow \intType) \times \dots \times (\alpha_{l_n} \rightarrow \intType)}.$$

\end{corollary}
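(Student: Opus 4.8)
The plan is to obtain Corollary~\ref{cor:ni_free} as an immediate specialization of Theorem~\ref{thm:ni_free}, which already establishes the desired conclusion for an \emph{arbitrary} result type $\tau$ well-formed in $\tcontextni$. First I would check that the output type $(\alpha_{l_1}\rightarrow\intType)\times\dots\times(\alpha_{l_n}\rightarrow\intType)$ is such a $\tau$: each $\alpha_{l_i}$ is a type variable in $\tcontextni=\{\alpha_l\mid l\in\lattice\}$ (so long as each $l_i\in\lattice$), and \intType\ together with the arrow and product constructors build a well-formed type over $\tcontextni$. Hence this product type is a legitimate instance of the metavariable $\tau$ appearing in the theorem.

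With $\tau$ fixed to be this product type, the hypothesis $\typeEval[\tcontextni,\contextni]{e:\tau}$ of the corollary coincides verbatim with the hypothesis of Theorem~\ref{thm:ni_free}. I would then invoke that theorem directly: for every observer $\obs\in\lattice$ and every full environment $\rho$ with $\rho\respectfullni$, it yields $\tuplemt{\rho_L(e),\rho_R(e)}\in\indterm{\tau}$, which is exactly the conclusion to be proved. No further computation is required, since the theorem's proof has already discharged the real work, namely that any $\rho\respectfullni$ restricts to an environment respecting $\tcontextni,\contextni$ (Corollary~\ref{cor:substitution:term:property}), so that the abstraction theorem (Theorem~\ref{thm:multi_level:parametricity}) applies.

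Because the derivation is a pure instantiation, there is essentially no technical obstacle in the proof itself. The only point worth confirming is conceptual rather than deductive: that indistinguishability $\indterm{\tau}$ at this particular product type is the right formalization of noninterference. To make that explicit I would, optionally, unfold the relation using the rules FR-Pair and FR-Fun together with Definition~\ref{def:ni:env}, observing that for a component at level $l_i$ with $l_i\not\smalleq\obs$ the interpretation of $\alpha_{l_i}$ is empty, so the observer $\obs$ is constrained on that component only vacuously, whereas for $l_i\smalleq\obs$ the wrapped values must agree. This unfolding matches the intended attacker model of Assumption~\ref{assmt:input_output}, in which each output channel carries a wrapped value at its level; but it is needed only to justify the \emph{reading} of the statement, not its proof.
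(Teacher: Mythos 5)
Your proposal is correct and matches the paper's own treatment: the corollary is stated there as an immediate instantiation of Theorem~\ref{thm:ni_free} with $\tau$ taken to be the product type $(\alpha_{l_1}\rightarrow\intType)\times\dots\times(\alpha_{l_n}\rightarrow\intType)$, which is exactly what you do. The optional unfolding of the relation is not needed for the derivation, as you correctly note.
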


\begin{example}
\label{ex:ni:running_example:multile_levels}
By reusing programs in Example~\ref{ex:multi_level:ni:running_example:comp} and Example~\ref{ex:ni:running_example:conv}, we here present of a program that generates output values for levels $H$, $M$, and $L$: the output of the program in this example is of the type $(\alpha_H \rightarrow \intType) \times (\alpha_M \rightarrow \intType) \times (\alpha_L \rightarrow \intType)$.

To make the paper easier to follow, we recall here $e_1$ and $e_2$ in resp. Example~\ref{ex:multi_level:ni:running_example:comp} and Example~\ref{ex:ni:running_example:conv}. 
Programs $e_1$ and $e_2$ compute resp. $\minp + 1$ and $\hinp+\minp+1$.
\begin{align*}
e_1 &= \comp{M}[\intType][\intType]\ {\minp}\ {\wrapPlusOne}\\
e_2 &= \comp{H}[\intType][\intType]\ {\hinp} \ (\lambda c:\intType.(\comp{H}[\intType][\intType]\ (\convup{M}{H}\ {\minp})\ \wrapAddC))
\end{align*}

Below is the program $e$ that computes output values for all levels.
From Corollary~\ref{cor:ni_free}, this program is NI.
\begin{align*}
e = (\lambda \minp: \alpha_M \rightarrow \intType.\tuple{e_2 ,\tuple{\minp,\linp}})\ e_1
\end{align*}
\end{example}

\subsubsection{Monadic encoding}
\label{sec:multi_level:ni_free:monad}
As mentioned earlier, our encoding is a monadic encoding.
We first recall here the monad laws in the popular infix form \cite{Wadler-89-FPCA}, where \wrapcr\ and \compcr\ are respectively the unit and the bind expressions in monad.
Note that {we write $e \lambdaeq e'$ when $\tuplemt{e,e'} \in \termrelation{\tau}{\emptyset}$} which intuitively means that both $e$ and $e'$ reduce to an equal value.
To simplify the presentation, we do not include types  in these laws (e.g. instead of writing $\wrapcr[\tau]\ e$, we just write $\wrapcr\ e$).

\begin{align*}
(\wrapcr\ e)\ \compcr\ f& \lambdaeq f\ e & \text{Left unit}\\
e\ \compcr\  \wrapcr  & \lambdaeq e & \text{Right unit}\\
(e\ \compcr\ f)\ \compcr\ g  & \lambdaeq e\ \compcr\ (\lambda x.((f\ x)\ \compcr\ g)) & \text{Associativity}
\end{align*}


Next we prove that our encoding satisfies monad laws in Lemma~\ref{lem:ni_free:monad}.
We do not use the infix notation in the lemma. The first, the second and the third parts of the lemma correspond to resp. the left unit, right unit and associativity laws.

\begin{lemma}
\label{lem:ni_free:monad}
For any {closed types} $\tau$ and $\tau'$:
\begin{enumerate}
\item for any $e$ s.t. \typeEval{e:\tau}, for any $f: \tau \rightarrow \unitType \rightarrow \tau'$,
$$\compcr[\tau][\tau']\ (\wrapcr[\tau]\ e)\ f \lambdaeq f\ e$$

\item for any $e$ s.t. \typeEval{e: \unitType\rightarrow \tau}, 
$$\compcr[\tau][\tau']\ e\ \wrapcr[\tau] \lambdaeq e$$

\item for any $e$ s.t. \typeEval{e: \unitType\rightarrow \tau}, \typeEval{f: \tau \rightarrow \unitType \rightarrow \tau''}, \typeEval{f:\tau'' \rightarrow \unitType \rightarrow \tau}
$$\compcr[\tau''][\tau']\ (\compcr[\tau][\tau'']\ e\ f)\ g \lambdaeq \compcr[\tau][\tau']\ e \ \big(\lambda x:\tau.\compcr[\tau''][\tau']\ (f\ x)\ g\big)$$
\end{enumerate}
\end{lemma}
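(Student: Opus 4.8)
The plan is to unfold the definition of $\lambdaeq$. By definition, $t \lambdaeq t'$ at a closed type $\sigma$ means $\tuplemt{t,t'} \in \termrelation{\sigma}{\emptyset}$, which by rule FR-Term amounts to exhibiting values $v,v'$ with $t \reduce v$, $t' \reduce v'$ and $\tuplemt{v,v'} \in \valrelation{\sigma}{\emptyset}$. So for each of the three laws I would (i) compute the call-by-value normal forms of both sides by tracing the reductions of $\compcr$ and $\wrapcr$, and (ii) show the resulting values are logically related.

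First I would record two auxiliary facts used throughout. \emph{Reflexivity}: for any closed value $w$ with $\typeEval{w:\sigma}$ and $\sigma$ closed, $\tuplemt{w,w} \in \valrelation{\sigma}{\emptyset}$. This is immediate from the abstraction theorem (Theorem~\ref{thm:multi_level:parametricity}), which gives $\typeEval{w \logrel w:\sigma}$, hence $\tuplemt{w,w}\in\termrelation{\sigma}{\emptyset}$ under the empty environment; since $w$ is already a value, inversion on FR-Term yields $\tuplemt{w,w}\in\valrelation{\sigma}{\emptyset}$. \emph{Common-value lemma}: if $t_1,t_2$ are closed of closed type $\sigma$ and both reduce to a single common value $w$, then $t_1 \lambdaeq t_2$, by FR-Term together with reflexivity. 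I note that every ``$\reduce v$'' step is legitimate because these are well-typed closed terms of the simply typed calculus with $\forall$ but no recursion, hence strongly normalizing; in fact termination already follows from the abstraction theorem via FR-Term.

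Next I would carry out the reductions. For the \emph{left unit}, $\wrapcr[\tau]\,e$ evaluates $e$ to $v$ and returns $\lambda \_:\unitType.v$; feeding this to $\compcr[\tau][\tau']$ and then to $f$ (with $f \reduce v_f$) produces $v_f\,((\lambda \_:\unitType.v)\,\unitVal)$, which reduces to $v_f\,v$, while the right-hand side $f\,e$ reduces to the very same $v_f\,v$. Both sides share a common value, and the common-value lemma closes the case. \emph{Associativity} is analogous but longer: writing $v_e,v_f,v_g$ for the values of $e,f,g$, a direct trace shows that each side reduces to the value of $v_g\,w'$, where $w'$ is the value of $w\,\unitVal$, $w$ the value of $v_f\,w_e$, and $w_e$ the value of $v_e\,\unitVal$; since the two traces produce the same final value, the common-value lemma again applies.

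The \emph{right unit} law is the main obstacle, because the two sides do \textbf{not} reduce to a common value. Here well-typedness forces $\tau'=\tau$, and computing gives $\compcr[\tau][\tau]\,e\,\wrapcr[\tau] \reduce \lambda \_:\unitType.w$ with $w$ the value of $v_e\,\unitVal$, whereas $e \reduce v_e$; these two values are in general only $\eta$-related, not syntactically equal. I would therefore appeal to the extensionality built into the logical relation: by rule FR-Fun it suffices to relate the two functions on every pair in $\valrelation{\unitType}{\emptyset}$, and since $\unitType$ has the single value $\unitVal$ (rule FR-Unit), that relation is $\{\tuplemt{\unitVal,\unitVal}\}$. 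Applying both functions to $\unitVal$ gives $(\lambda \_:\unitType.w)\,\unitVal \reduce w$ and $v_e\,\unitVal \reduce w$, the same $w$; so by the common-value lemma the images are related at $\tau$, FR-Fun yields $\tuplemt{\lambda \_:\unitType.w,\,v_e}\in\valrelation{\unitType\rightarrow\tau}{\emptyset}$, and FR-Term concludes $\compcr[\tau][\tau]\,e\,\wrapcr[\tau] \lambdaeq e$. This step is precisely where it matters that $\lambdaeq$ is defined through the logical relation rather than by syntactic or reduction equality, and that the key type is chosen to be the single-value type $\unitType$.
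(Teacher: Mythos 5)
Your proposal is correct and follows essentially the same route as the paper: trace the call-by-value reductions of both sides, conclude laws (1) and (3) because both sides reach a common value, and handle the right-unit law by applying both sides to $\unitVal$ and invoking the extensionality of the logical relation at $\unitType\rightarrow\tau$. You are somewhat more explicit than the paper about why the common-value and $\unitVal$-application arguments suffice (reflexivity via the abstraction theorem, FR-Fun, and $\valrelation{\unitType}{\emptyset}=\{\tuplemt{\unitVal,\unitVal}\}$), but the underlying argument is the same.
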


\subsection{Type-based relaxed noninterference for free}
\label{sec:multi_level:trni_free}
\subsubsection{Declassification policies}

Declassification policies considered in this section are similar to the ones in \S\ref{sec:local-policies}, except that since we have multiple levels, for an input at $l$ that can be declassified via $f$, we need to specify which level it can be declassified to. 
Therefore, we define declassification policies as below, where \decSet\ is the set of all closed declassification functions of types $\intType \rightarrow \tau_f$ for some closed $\tau_f$, and $\hookrightarrow$ is used to denote partial functions.

\begin{definition}[Declassification policies]
\label{def:multi_level:policy}
A declassification policy \policy\ is a tuple  $\tuplemt{\tuplemt{\lattice,\smalleq},\inpSet,\lvln,\decMap}$, where \tuplemt{\lattice,\smalleq} is a finite lattice of security levels, $\inpSet$ is a finite set of inputs, $\lvln:\inpSet \rightarrow \lattice$ is a mapping from inputs to security levels, $\decMap: \inpSet \hookrightarrow (\decSet \times \lattice)$ is a partial function from inputs to declassification functions and security levels s.t. if $\decMap(x) = \tuplemt{f,l}$ then $\lvl{x} \not\smalleq l$. 
\end{definition}


\renewcommand{\policyoe}{{\ensuremath{{\policy_{\textit{MOE}}}}}}
\begin{example}[Policy \policyoe]  
\label{ex:local_policy:odd-even}
Consider  policy \policyoe\ given by $\tuplemt{\tuplemt{\latticethree,\smalleq}, \inpSet, \lvln, \decMap}$, where 
\begin{itemize}
\item \tuplemt{\latticethree,\smalleq} is the lattice with three levels $L$, $M$, and $H$ as described in Example~\ref{ex:ni:running_example:comp},

\item $\inpSet = \{\hinp,\minp,\linp\}$: there are three inputs: $\hinp$, $\minp$ and  $\linp$,

\item associated levels of inputs are respectively $H$, $M$, and $L$,

\item input {$\minp$} can be declassified via $f = \lambda x:\intType. x \modop 2$ to level $L$ (i.e. $\decMap({\minp}) = (f,L)$).
\end{itemize}

\end{example}



\subsubsection{Encoding}
\label{sec:multi_level:trni_free:encoding}
In this section, we extend the encoding and the indistinguishability relation for NI presented in \S\ref{sec:multi_level:ni_free} to counterparts for declassification policies.
Through this section, we consider a fixed policy \policy\ given by \tuplemt{\tuplemt{\lattice,\smalleq},\inpSet,\lvln,\decMap}.

{In addition to assumptions in \S\ref{sec:multi_level:ni_free}, for declassification policies, we additionally assume that declassifiers are also inputs of programs: that is for each input $x$ that can be declassified via $f$, we have an input $x_f$ which is corresponding for $f$.}

As in NI, we have type variables $\alpha_l$ for all levels $l\in \lattice$, and we also have interfaces \comp{l}, \convup{ll'}, and \wrap{l}.
The encoding for inputs that cannot be declassified (i.e. inputs that are not in the domain of \decMap) is the same as the encoding for inputs in NI: 
{the type of an input $x$ at level $l$ is $\wraptype{\intType} = \alpha_l \rightarrow \intType$}.

We consider an input $x$ at level $l$ that can be declassified via $f$ of the type $\intType \rightarrow \tau_f$ to level $l'$ where $l' \smallst l$.
As for NI, we may use $\alpha_l \rightarrow \intType$ as the type for this input and require that only observers at $l$ or higher levels have the key to unwrap values at $l$. However, as explained in Example~\ref{ex:ni:ind}, to an observer at $l'$, all wrapped values at $l$ are indistinguishable and we do not want this.
Therefore, we introduce a fresh type variable $\alpha_l^f$ for keys of this input and we require that the observer at $l'$ also has keys.
Since the observer at $l'$ has key, if we use $\alpha_l^f \rightarrow \intType$ as the type for the input, as explained in Example~\ref{ex:ni:ind}, only equal values are indistinguishable.
Therefore, we use $\alpha_l^f \rightarrow \alpha^f$ as the type of the input, where $\alpha_f$ is a fresh variable.

We now look at the input $x_f$ which corresponds to the declassifier $f$ for $x$. 
Following the idea of the monadic encoding, we can use $\alpha^f \rightarrow \alpha_{l'} \rightarrow \tau_f$ as the type for $x_f$ and we will define an interface \compdec{x}\ as below to apply $x_f$ to $x$.
$$\compdec{x}: (\alpha_l^f \rightarrow \alpha^f) \rightarrow (\alpha_f \rightarrow \alpha_{l'} \rightarrow \tau_f) \rightarrow \alpha_{l'}\rightarrow \tau_f$$

This interface can be used to apply different functions of type $\alpha_f \rightarrow \alpha_{l'} \rightarrow \tau_f$ to $x$.\footnote{%
Such functions can only handle $x$ via $x_f$ or handle $x$ parametrically.}
However, we do not need this flexibility. 
Therefore, we use $(\alpha_l^f \rightarrow \alpha^f) \rightarrow \alpha_{l'} \rightarrow \tau_f$ as the type for $x_f$.

In addition to $x_f$, in order to support computation on $x$, we have interface \conv{x}.

The encoding for policy \policy\ is described in Fig.~\ref{fig:encoding:local_policies}.
The additional (type and term) variables and conditions are highlighted in blue.

\begin{figure}
\begin{align*}
\tcontextp =&\ \{\alpha_l \sep l \in \lattice\} \cup {\{\alpha_l^f, {\alpha^f} \sep \exists x \in \inpSet. \decMap(x) = \tuplemt{f,\_} \wedge \lvl{x} = l\}}\\
\contextp =&\ \{x: \alpha_l \rightarrow \intType \sep x\in \inpSet \wedge \wedge \lvl{x} = l \wedge {x \not\in \dom{\decMap}}\}\ \cup \\
   &\ \{\comp{l}: \forall \beta_1,\beta_2. (\alpha_l \rightarrow \beta_1) \rightarrow \big( \beta_1 \rightarrow (\alpha_l \rightarrow \beta_2)  \big) \rightarrow \alpha_l \rightarrow\beta_2 \sep l \in \lattice \}\ \cup \\
   &\ \{\convup{l}{l'}: \forall \beta. (\alpha_l \rightarrow \beta) \rightarrow (\alpha_{l'} \rightarrow \beta) \sep l,l' \in \lattice \wedge l \smallst l'\}\ \cup \\
   &\ \{\wrap{l}: \forall \beta.\beta \rightarrow \alpha_l \rightarrow \beta \sep l \in \lattice\}\ \cup \\
   &\ {\{x: {\alpha_l^f\rightarrow \alpha^f},\quad  x_f: {(\alpha_l^f \rightarrow \alpha^f)} \rightarrow \alpha_{l'} \rightarrow \tau_f, \quad } \\
   & \quad\quad \conv{x}: {(\alpha_l^f \rightarrow \alpha^f)} \rightarrow \alpha_l \rightarrow \intType \sep x \in \inpSet \wedge \decMap(x) = \tuplemt{f,l'}  \wedge \lvl{x} = l\}
\end{align*}
\caption{Contexts for declassification  policies}
\label{fig:encoding:local_policies}
\end{figure}

\begin{example}[Input declassified correctly]
\label{ex:local_policy:encoding:oe:declassification}
We consider the \policyoe\ in Example~\ref{ex:local_policy:odd-even}.
By using the contexts for \policyoe, we can have the program {$\minp_f\ \minp$} where the input $\minp$ at $M$ is declassified correctly to $L$ via the interface {$\minp_f$}.
\end{example}

\begin{example}[Computation on input that can be declassified]
\label{ex:local_policy:encoding:oe:computation}
Since the type of {$\minp$} is $\alpha_M^f$, in order to do a computation on it, we must convert it as illustrated by the following program, where \wrapPlusOne\ is of the type $\intType \rightarrow \alpha_M \rightarrow \intType$ (the description of this function is in Example~\ref{ex:ni:running_example:comp}).
$${\comp{M}[\intType][\intType]\ {(\conv{{\minp}}\ \minp)}\  {\wrapPlusOne}}$$

This program uses {\conv{\minp}} to convert {$\minp$} to a wrapped value and then transfer it to the continuation \wrapPlusOne. 
In the context of the policy, this program is of the type $\alpha_M \rightarrow \intType$, that is {the output of the program is at $M$}.
\end{example}

%

\subsubsection{Indistinguishability}
\label{sec:multi_level:trni_free:ind}
As in \S\ref{sec:multi_level:ni_free:ind}, we first define environments w.r.t. an observer. 
The relational interpretation for $\alpha_l$ is as in the one for NI: $\alpha_l$ is interpreted as pairs of keys to open pairs of protected data.
{For $\alpha_l^f$ which corresponds to $x$ that can be declassified to $l'$, observers that can observe values at $l$ or $l'$ can have the key}.\footnote{%
If only observers who can observe values at $l$ have the key, then to an observer at $l'$, all values at $l$ are indistinguishable (see a similar explanation in Example~\ref{ex:ni:ind}.}
Different from the interpretation for $\alpha_l$,  the interpretation for $\alpha_l^f$ directly captures the notion of indistinguishability for values declassified via $f$.
{In other words, the interpretations of $\alpha^f$ are different for different observers even though they have keys.}

\begin{itemize}
\item if $\obs$ can observe data at $l$ (i.e. $l \smalleq \obs$), then indistinguishable values are equal values,
\item if $\obs$ can observe declassified data at $l'$ but not at $l$ (i.e. $l' \smalleq \obs$ and $l \not\smalleq \obs$), two values $v_1$ and $v_2$ are indistinguishable if they are indistinguishable via $f$,
\item otherwise, any two values are indistinguishable.
\end{itemize}

Let $R_f = \{\tuplemt{v_1,v_2} \sep \tuplemt{f\ v_1, f\ v_2} \in \valrelation{\tau_f}{\emptyset}\}$, where $\emptyset$ is the empty environment, be the relation s.t. two values are related if via $f$ they behave the same.
Let $id_\intType = \{\tuplemt{n,n}\sep n:\intType\}$ be the identity relation on \intType, and $\full{\intType} = \{\tuplemt{n,n'} \sep \typeEval{n,n':\intType}\}$ be the full relation on \intType.
The definition of environments w.r.t. an observer is as below.

\begin{definition}
\label{def:local_policy:env}
An environment $\rho$ is an {\em environment\ for \policy\ w.r.t. an observer \obs} (denoted by $\rho \respectp \policy$) if {$\rho\respect\tcontextp$} and 
\begin{itemize}
\item for any $\alpha_l \in \tcontextp$:
$$\rho(\alpha_l) = \begin{cases}
 \tuplemt{\unitType,\unitType,\full{\unitType}} & \text{if $l \smalleq \obs$,}\\
 \tuplemt{\unitType,\unitType,\emptyset} & \text{if $l \not\smalleq \obs$,}
\end{cases}$$

\item for any $\alpha_l^f \in \tcontextp$ and $\alpha_l^f \in \tcontextp$ which corresponds to $x$ s.t. $\decMap(x) = (f,l')$:
\begin{align*}
\rho(\alpha_l^f) &= \begin{cases}
 \tuplemt{\unitType,\unitType,\full{\unitType}} & \text{if $l \smalleq \obs$ or {$l' \smalleq \obs$}}\\
 \tuplemt{\unitType,\unitType,\emptyset} & \text{{otherwise},}
\end{cases}\\
\rho(\alpha_l^f) &= \begin{cases}
 \tuplemt{\intType,\intType,id_\intType} & \text{if $l \smalleq \obs$,}\\
 \tuplemt{\intType,\intType,R_f} & \text{if $l' \smalleq \obs$ and $l \not\smalleq \obs$,}\\
 \tuplemt{\intType,\intType,\full{\intType}} & \text{if $l' \not\smalleq \obs$ and $l \not\smalleq \obs$.}
\end{cases}
\end{align*}

\end{itemize}
\end{definition}


We next define indistinguishability for \policy\ as an instantiation of the logical relation.
{The definition is based on the environment of $\policy$ w.r.t. an observer \obs.}

\begin{definition}
\label{def:local_policy:ind}
Given a type $\tau$ s.t. $\typeEval[\tcontextp]{\tau}$.
The {\em indistinguishability relations on values and terms of $\tau$ for an observer $\obs$ for \policy} (denoted by resp. \indvalp{\tau}{\policy} and \indtermp{\tau}{\policy}) are defined as
\begin{align*}
\indvalp{\tau}{\policy}  = \valrelation{\tau}{\rho} \quad\quad 
\indtermp{\tau}{\policy} = \termrelation{\tau}{\rho}
\end{align*}
where $\rho$ is the environment for \policy\ w.r.t. the observer \obs.
\end{definition}

\begin{example}[Indistinguishability for \policyoe]
\label{ex:local_policy:ind:oe}
For \policyoe\ presented in Example~\ref{ex:local_policy:odd-even}, the type of the input $\minp$ that can be declassified to $L$ via $f$ is $\alpha_M^f \rightarrow \alpha^f$.
An observer $H$ can observe values at $M$ and hence, two values of $\minp$ are indistinguishable when they are equal.
This is indeed captured in Def.~\ref{def:local_policy:ind} and Def.~\ref{def:local_policy:env}: 
the observer at $H$ has the key to open wrapped values at $M$ ($\indvalp[H]{\alpha_M^f}{\policyoe} = \full{\unitType}$), and 
two equal integer values are indistinguishable ($\indvalp[H]{\alpha^f}{\policyoe} = id_\intType$).

An observer $L$ can only observe values of $\minp$ after $f$.
Therefore, the observer at $L$ has the key to open values ($\indvalp[L]{\alpha_M^f}{\policyoe} = \full{\unitType}$) and two wrapped values $n_1$ and $n_2$ of $\minp$ are indistinguishable if $f(n_1) = f(n_2)$ as expressed by $\indvalp[L]{\alpha_M^f}{\policyoe} = R_f = \{\tuplemt{n_1,n_2} \sep f(n_1) = f(n_2)\}$.
\end{example}

\subsubsection{Typing implies security}
\label{sec:multi_level:trni_free:theorem_for_free}
The implementations for $x_f$ and \conv{x} are as below:
\begin{align*}
\dec{f} &= \lambda x:{\unitType \rightarrow \intType}.\lambda \_:\unitType.f(x\ \unitVal)\\
\convcr & = \lambda x:{\unitType \rightarrow \intType}.x
\end{align*}

\begin{definition}
\label{def:local_policy:env:full}
An environment $\rho$ is a {\em full environment for \policy w.r.t. an observer \obs} (denoted by $\rho \respectfullp \policy$) if {$\restrict{\rho} \respectp \policy$} and

\begin{itemize}
\item for all $x$ of the type $\alpha_l \rightarrow \intType$, $\rho(x) \in \indvalp{\alpha_l \rightarrow \intType}{\policy}$,
\item for all $l$, $\rho(\comp{l}) = \tuplemt{\compcr,\compcr}$, \item for all $l$ and $l'$ s.t. $l \smallst l'$, $\rho(\convup{l}{l'}) = \tuplemt{\convupcr,\convupcr}$,
\item for all $x$ of the type {$\alpha_l^f \rightarrow \alpha^f$}, {$\rho(x) \in \indvalp{\alpha_l^f\rightarrow\alpha^f}{\policy}$},
\item for all $x_f$, $\rho(x_f) = \tuplemt{\dec{f},\dec{f}}$,
\item for all $\conv{x}$, $\rho(\conv{x}) = \tuplemt{\convcr,\convcr}$.
\end{itemize}
\end{definition}

In order to instantiate the abstraction theorem to prove security, we need to prove that \compcr, \convupcr, \dec{f}, \convcr\ are indistinguishable to themselves for any observer.

\begin{lemma}
\label{lem:trni:substitution:term:interface}
For any \obs, it follows that:
\begin{align*}
\tuplemt{\compcr,\compcr} \in &\ \indvalp{\forall \beta_1,\beta_2. (\alpha_l \rightarrow \beta_1) \rightarrow \big( \beta_1 \rightarrow (\alpha_l \rightarrow \beta_2)  \big) \rightarrow \alpha_l \rightarrow\beta_2}{\policy}\\
\tuplemt{\convupcr,\convupcr} \in &\ \indvalp{\forall \beta. (\alpha_l \rightarrow \beta) \rightarrow (\alpha_{l'} \rightarrow \beta)}{\policy}\\
\tuplemt{\wrapcr,\wrapcr} \in &\ \indvalp{\forall \beta. \beta \rightarrow \alpha_{l} \rightarrow \beta}{\policy}\\
\tuplemt{\convcr,\convcr} \in &\ {\indvalp{(\alpha_l^f \rightarrow \alpha^f)\rightarrow \alpha_l \rightarrow \intType}{\policy}}
\end{align*}
\end{lemma}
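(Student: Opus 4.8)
The plan is to unfold the definition of indistinguishability and reduce the four claims to membership in the value logical relation, then exploit the fact that three of the four interfaces have types mentioning only the ``level'' variables $\alpha_l,\alpha_{l'}$, which the policy environment interprets exactly as the noninterference environment does. Concretely, I would fix an arbitrary observer $\obs$ and let $\rho$ be the unique environment for $\policy$ w.r.t.\ $\obs$ (Def.~\ref{def:local_policy:env}); by Def.~\ref{def:local_policy:ind} it then suffices to show that $\tuplemt{\compcr,\compcr}$, $\tuplemt{\convupcr,\convupcr}$, $\tuplemt{\wrapcr,\wrapcr}$ and $\tuplemt{\convcr,\convcr}$ each lie in $\valrelation{\cdot}{\rho}$ at the indicated (open) types. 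The key bookkeeping observation is that both $\rho_L$ and $\rho_R$ send $\alpha_l\mapsto\unitType$, $\alpha_l^f\mapsto\unitType$ and $\alpha^f\mapsto\intType$, so after substitution each interface's open type collapses to the closed type at which the corresponding implementation is literally well-typed; this is what licenses reasoning about the closed terms $\compcr$, $\convupcr$, $\wrapcr$, $\convcr$.

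For $\compcr$, $\convupcr$ and $\wrapcr$ I would observe that their types involve only $\alpha_l$ (and $\alpha_{l'}$), and that $\rho(\alpha_l)$ equals $\tuplemt{\unitType,\unitType,\full{\unitType}}$ when $l\smalleq\obs$ and $\tuplemt{\unitType,\unitType,\emptyset}$ otherwise---exactly as in the noninterference environment of Def.~\ref{def:ni:env}. Hence the argument of Lemma~\ref{lem:substitution:term:comp_convup} transfers verbatim. In each case the skeleton is: apply rule FR-Par to fix the interpretations of the quantified type variables, apply rule FR-Fun to introduce the related arguments, $\beta$-reduce the closed (hence normalizing, since the fragment of \S\ref{sec:multi_level:language} has no recursion) bodies, and conclude via rule FR-Term. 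The two governing relational facts are: for $\convupcr$, the inclusion $\valrelation{\alpha_{l'}}{\rho}\subseteq\valrelation{\alpha_l}{\rho}$ whenever $l\smallst l'$ (immediate, since $l'\smalleq\obs$ forces $l\smalleq\obs$ by transitivity), which lets a map out of $\alpha_l$-keys be re-viewed as a map out of $\alpha_{l'}$-keys; and for $\compcr$, the fact that $\valrelation{\alpha_l\rightarrow\beta_2}{\rho}$ is satisfied vacuously when $l\not\smalleq\obs$ (as then $\valrelation{\alpha_l}{\rho}=\emptyset$), so the only substantive case is $l\smalleq\obs$, where $\tuplemt{\unitVal,\unitVal}$ is available to unwrap and the monadic plumbing chains the relatedness through the continuation.

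The genuinely new case is $\convcr=\lambda x:\unitType\to\intType.\,x$, which must be shown in $\valrelation{(\alpha_l^f\rightarrow\alpha^f)\rightarrow\alpha_l\rightarrow\intType}{\rho}$. By rule FR-Fun I would take $\tuplemt{v_1,v_2}\in\valrelation{\alpha_l^f\rightarrow\alpha^f}{\rho}$, note $\convcr\ v_i\reduce v_i$, and reduce the goal (via rule FR-Term, with the typing side-conditions supplied by Lemma~\ref{lem:multi_level:logeq:related-term}) to the relational inclusion $\valrelation{\alpha_l^f\rightarrow\alpha^f}{\rho}\subseteq\valrelation{\alpha_l\rightarrow\intType}{\rho}$. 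This inclusion is exactly where the design of Def.~\ref{def:local_policy:env} is used, and I would prove it by a case split on the observer. When $l\not\smalleq\obs$ we have $\valrelation{\alpha_l}{\rho}=\emptyset$, so membership in $\valrelation{\alpha_l\rightarrow\intType}{\rho}$ holds vacuously. When $l\smalleq\obs$ we have $\valrelation{\alpha_l}{\rho}=\valrelation{\alpha_l^f}{\rho}=\full{\unitType}$ and crucially $\valrelation{\alpha^f}{\rho}=id_\intType$; so for the unique related key pair $\tuplemt{\unitVal,\unitVal}$ the hypothesis yields $\tuplemt{v_1\ \unitVal,v_2\ \unitVal}\in\termrelation{\alpha^f}{\rho}$, i.e.\ both sides evaluate to one and the same integer, which is precisely membership in $\termrelation{\intType}{\rho}$ by rules FR-Int and FR-Term. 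The background inclusion $\valrelation{\alpha_l}{\rho}\subseteq\valrelation{\alpha_l^f}{\rho}$, holding in every observer branch, is what guarantees the argument key is admissible for the hypothesis.

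The main obstacle I anticipate is the $\convcr$ case---not its length but getting the environment bookkeeping exactly right: one must verify that in every branch of Def.~\ref{def:local_policy:env} the two facts ``the $\alpha_l$-key relation is contained in the $\alpha_l^f$-key relation'' and ``whenever $\alpha_l$ has a nonempty key relation, $\alpha^f$ is interpreted as $id_\intType$'' both hold, so that the downgrade from $\alpha^f$-relatedness to $\intType$-equality is sound precisely when it is needed and vacuous otherwise. The $\compcr$/$\convupcr$/$\wrapcr$ cases are routine given Lemma~\ref{lem:substitution:term:comp_convup}; the only care there is to confirm that the policy environment and the noninterference environment genuinely coincide on each $\alpha_l$, which they do by inspection of Def.~\ref{def:ni:env} and Def.~\ref{def:local_policy:env}.
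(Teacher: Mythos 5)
Your proposal is correct and follows essentially the same route as the paper's proof: the first three memberships are discharged by observing that the policy environment agrees with the noninterference environment on each $\alpha_l$, so the argument of Lemma~\ref{lem:substitution:term:comp_convup} carries over, and the $\convcr$ case is handled by unfolding the identity function and case-splitting on whether $l \smalleq \obs$, using that $\rho(\alpha_l)$ is empty in the negative case and that $\rho(\alpha_l^f)=\full{\unitType}$ with $\alpha^f$ interpreted as $id_\intType$ in the positive case. Your additional remarks on the two relational inclusions make explicit some bookkeeping the paper leaves implicit, but the substance is the same.
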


\begin{lemma}
\label{lem:trni:substitution:term:declassification}
For any \obs, for any $\alpha_l^f$ and $\alpha^f$ corresponding to $x$  s.t. $\lvl{x} = l$ and $\decMap(x) = \tuplemt{f,l'}$:
$${\tuplemt{\dec{f},\dec{f}} \in \indvalp{(\alpha_l^f \rightarrow \alpha_f) \rightarrow \alpha_{l'} \rightarrow \tau_f}{\policy}.}$$
\end{lemma}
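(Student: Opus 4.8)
The plan is to unfold indistinguishability as an instance of the logical relation, namely $\indvalp{\tau}{\policy} = \valrelation{\tau}{\rho}$ where $\rho$ is the environment for $\policy$ w.r.t.\ $\obs$ (Def.~\ref{def:local_policy:ind}, Def.~\ref{def:local_policy:env}), and then to peel off the two arrows of the type $(\alpha_l^f \rightarrow \alpha^f) \rightarrow \alpha_{l'} \rightarrow \tau_f$ by repeated application of rule FR-Fun. First I would record that under $\rho$ the abstract types collapse to their concrete counterparts, $\rho_L(\alpha_l^f) = \rho_R(\alpha_l^f) = \unitType$, $\rho_{L/R}(\alpha^f) = \intType$, and $\rho_{L/R}(\alpha_{l'}) = \unitType$, so that $\dec{f}$ is well-typed at $(\unitType \rightarrow \intType) \rightarrow \unitType \rightarrow \tau_f$, which is exactly $\rho_L$ (and $\rho_R$) applied to the claimed type; this discharges the well-typedness side conditions of FR-Term uniformly.

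By FR-Fun it then suffices to take an arbitrary $\tuplemt{g_1,g_2} \in \valrelation{\alpha_l^f \rightarrow \alpha^f}{\rho}$ and show $\tuplemt{\dec{f}\,g_1, \dec{f}\,g_2} \in \termrelation{\alpha_{l'} \rightarrow \tau_f}{\rho}$. Each $\dec{f}\,g_i$ beta-reduces to $w_i \triangleq \lambda \_:\unitType.\,f(g_i\,\unitVal)$, so FR-Term reduces the goal to $\tuplemt{w_1,w_2} \in \valrelation{\alpha_{l'} \rightarrow \tau_f}{\rho}$, and FR-Fun again to: for all $\tuplemt{k_1,k_2} \in \valrelation{\alpha_{l'}}{\rho}$, $\tuplemt{w_1\,k_1, w_2\,k_2} \in \termrelation{\tau_f}{\rho}$. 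Here the case split on the observer enters. If $l' \not\smalleq \obs$, then $\rho(\alpha_{l'})$ carries the empty relation, so $\valrelation{\alpha_{l'}}{\rho} = \emptyset$ and the condition holds \emph{vacuously}; this is precisely the case where the observer can see neither the raw nor the declassified value. If $l' \smalleq \obs$, then $\valrelation{\alpha_{l'}}{\rho} = \{\tuplemt{\unitVal,\unitVal}\}$ and $w_i\,\unitVal \reduce f(g_i\,\unitVal)$, so the goal becomes $\tuplemt{f(g_1\,\unitVal), f(g_2\,\unitVal)} \in \termrelation{\tau_f}{\rho}$.

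To finish the case $l' \smalleq \obs$ I would use that, by Def.~\ref{def:local_policy:env}, $l' \smalleq \obs$ also forces $\valrelation{\alpha_l^f}{\rho} = \{\tuplemt{\unitVal,\unitVal}\}$, so $\tuplemt{\unitVal,\unitVal}$ is a legal argument to the related pair $\tuplemt{g_1,g_2}$; applying FR-Fun to $g_1,g_2$ gives $\tuplemt{g_1\,\unitVal, g_2\,\unitVal} \in \termrelation{\alpha^f}{\rho}$, hence $g_i\,\unitVal \reduce m_i$ with $\tuplemt{m_1,m_2} \in \valrelation{\alpha^f}{\rho}$ and each $m_i$ an integer. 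A sub-split on $l \smalleq \obs$ then closes it: if $l \smalleq \obs$ the relation at $\alpha^f$ is $id_\intType$, forcing $m_1 = m_2$, so $f\,m_1$ and $f\,m_2$ are literally the same closed term, which the abstraction theorem (Theorem~\ref{thm:multi_level:parametricity}, reflexivity) relates at $\tau_f$; if $l \not\smalleq \obs$ the relation is $R_f$, which by its very definition states $\tuplemt{f\,m_1, f\,m_2}$ are related at $\tau_f$ over the empty environment. In both sub-cases I would transport relatedness from the empty environment to $\rho$ using that $\tau_f$ is closed, and then use call-by-value determinism to replace $f(g_i\,\unitVal)$ by the value of $f\,m_i$ before appealing to FR-Term.

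The main obstacle I anticipate is bookkeeping rather than conceptual: aligning the three-way interplay between the interpretations of $\alpha_l^f$ (the key), $\alpha^f$ (the declassified payload), and $\alpha_{l'}$ (the output key) exactly with Def.~\ref{def:local_policy:env}, and in particular noticing that secrecy for an observer who can see nothing is delivered vacuously through the empty interpretation of $\alpha_{l'}$, so that no noninterference property of $f$ itself is ever required. Two minor supporting facts I would state explicitly are the environment-irrelevance of the logical relation at the closed type $\tau_f$, i.e.\ $\termrelation{\tau_f}{\emptyset} = \termrelation{\tau_f}{\rho}$, and the reconciliation of the definition of $R_f$ (phrased with $\valrelation{\tau_f}{\emptyset}$, though $f\,m_i$ is an application) with the intended evaluation relation, which is legitimate given the termination assumption on declassifier applications from \S\ref{sec:local-policies}.
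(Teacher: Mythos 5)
Your proposal is correct and follows essentially the same route as the paper's proof: unfold the environment for $\policy$ w.r.t.\ $\obs$, apply FR-Fun twice, dispose of $l' \not\smalleq \obs$ vacuously via the empty interpretation of $\alpha_{l'}$, and in the case $l' \smalleq \obs$ use that $\alpha_l^f$ is interpreted as the full relation on $\unitType$ to obtain related payloads at $\alpha^f$, then split on $l \smalleq \obs$ to use either $id_\intType$ plus the abstraction theorem or the definition of $R_f$, transporting to $\rho$ by closedness of $\tau_f$. Your explicit remarks on call-by-value determinism and on reconciling the $\valrelation{\tau_f}{\emptyset}$ in the definition of $R_f$ with the evaluation relation are bookkeeping points the paper leaves implicit, but they do not change the argument.
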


We next prove that if a program is well-typed in \tcontextp, \contextp, then it is type-based relax noninterferent w.r.t. \policy, that is it transforms indistinguishable inputs to indistinguishable outputs.

\begin{theorem}
\label{thm:trni_free}
If \typeEval[\tcontextp,\contextp]{e:\tau}, then for any {$\obs \in \lattice$} and $\rho \respectfullp \policy$, 
$$\tuplemt{\rho_L(e),\rho_R(e)} \in \indtermp{\tau}{\policy}.$$
\end{theorem}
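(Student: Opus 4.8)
The plan is to mimic the proof of Theorem~\ref{thm:ni_free} essentially verbatim, reducing the claim to the abstraction theorem (Theorem~\ref{thm:multi_level:parametricity}). The only genuinely new ingredient is an analogue, for declassification policies, of Corollary~\ref{cor:substitution:term:property}: namely that every full environment for \policy\ respects the encoding contexts,
\[
\rho \respectfullp \policy \implies \rho \respect \tcontextp, \contextp.
\]
First I would establish this implication. By Definition~\ref{def:local_policy:env:full}, $\rho \respectfullp \policy$ forces $\restrict{\rho} \respectp \policy$, and by Definition~\ref{def:local_policy:env} this fixes $\dom{\restrict{\rho}} = \tcontextp$ together with the prescribed interpretation of every type variable; this discharges the type-variable part of $\rho \respect \tcontextp, \contextp$. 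It then remains to check, for each term variable $y \in \dom{\contextp}$, that $\rho(y) \in \valrelation{\contextp(y)}{\rho}$, which I would do by cases following Figure~\ref{fig:encoding:local_policies}. For ordinary inputs $x:\alpha_l \rightarrow \intType$ and for declassifiable inputs $x:\alpha_l^f \rightarrow \alpha^f$, membership is immediate from Definition~\ref{def:local_policy:env:full} combined with Definition~\ref{def:local_policy:ind}, which identifies $\indvalp{\contextp(x)}{\policy}$ with $\valrelation{\contextp(x)}{\rho}$. For the monadic interfaces $\comp{l}$, $\convup{l}{l'}$, $\wrap{l}$ and the conversion $\conv{x}$ the membership is exactly Lemma~\ref{lem:trni:substitution:term:interface}; and for each declassifier variable $x_f$ it is exactly Lemma~\ref{lem:trni:substitution:term:declassification}.

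With this corollary in hand, the theorem follows in three short steps. Fixing an arbitrary $\obs \in \lattice$ and $\rho \respectfullp \policy$, the corollary gives $\rho \respect \tcontextp, \contextp$. Since $\typeEval[\tcontextp,\contextp]{e:\tau}$, the abstraction theorem yields $\typeEval[\tcontextp,\contextp]{e \logrel e:\tau}$, and unfolding logical equivalence at the environment $\rho$ gives $\tuplemt{\rho_L(e), \rho_R(e)} \in \termrelation{\tau}{\rho}$. Finally, because the value and term relations consult $\rho$ only through its action on type variables, they depend only on $\restrict{\rho}$; and $\restrict{\rho}$ is precisely the unique environment for \policy\ w.r.t.\ \obs\ used in Definition~\ref{def:local_policy:ind}. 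Hence $\termrelation{\tau}{\rho} = \indtermp{\tau}{\policy}$, which is the desired conclusion.

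I expect the case analysis inside the corollary to be the only real work, but it presents no obstacle: each case is closed by a single cited lemma or definition. The one point I would treat carefully — the sole substantive difference from the noninterference setting — is the observation that $\termrelation{\tau}{\rho}$ is insensitive to how $\rho$ maps term variables, so that instantiating parametricity with a full environment reproduces exactly the indistinguishability relation of Definition~\ref{def:local_policy:ind}. This insensitivity is what lets the two new lemmas about declassifier interfaces slot transparently into the same argument already used for Theorem~\ref{thm:ni_free}.
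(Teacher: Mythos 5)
Your proposal is correct and follows essentially the same route as the paper's proof: establish $\rho \respect \tcontextp,\contextp$ from the definition of a full environment together with Lemmas~\ref{lem:trni:substitution:term:interface} and~\ref{lem:trni:substitution:term:declassification}, apply the abstraction theorem, and conclude by the definition of indistinguishability. The extra detail you supply on the per-variable case analysis is just an expansion of what the paper states in one line.
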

\begin{proof}
Since $\rho \respectfullp \policy$, from the definition of $\rho \respectfullp \policy$, Lemma~\ref{lem:trni:substitution:term:interface}, and Lemma~\ref{lem:trni:substitution:term:declassification}, it follows that $\rho \respectp \tcontextp, \contextp$.
Since \typeEval[\tcontextp,\contextp]{e:\tau}, from Theorem~\ref{thm:multi_level:parametricity}, we have that $\tuplemt{\rho_L(e),\rho_R(e)} \in \termrelation{\tau}{\rho}$.
From the definition of indistinguishability, it follows that $\tuplemt{\rho_L(e),\rho_R(e)} \in \indtermp{\tau}{\policy}$.
\end{proof}


\begin{corollary}[RNI for free]
\label{cor:trni_free}
{If \typeEval[\tcontextp,\contextp]{e:(\alpha_{l_1} \rightarrow \intType) \times \dots \times (\alpha_{l_n} \rightarrow \intType)}, then for any {$\obs \in \lattice$} and $\rho \respectfullp \policy$, it is that $\tuplemt{\rho_L(e),\rho_R(e)} \in \indtermp{(\alpha_{l_1} \rightarrow \intType) \times \dots \times (\alpha_{l_n} \rightarrow \intType)}{\policy}$.}
\end{corollary}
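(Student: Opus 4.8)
The plan is to recognize that Corollary~\ref{cor:trni_free} is precisely the instance of Theorem~\ref{thm:trni_free} obtained by specializing the result type $\tau$ to the product $(\alpha_{l_1} \rightarrow \intType) \times \dots \times (\alpha_{l_n} \rightarrow \intType)$. No new semantic reasoning is required: all the work has already been carried out in the theorem, whose proof in turn rests on the abstraction theorem (Theorem~\ref{thm:multi_level:parametricity}) together with Lemma~\ref{lem:trni:substitution:term:interface} and Lemma~\ref{lem:trni:substitution:term:declassification}, which establish that the fixed implementations \compcr, \convupcr, \wrapcr, \dec{f}, and \convcr\ are self-related in every interpretation.

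First I would fix an arbitrary observer $\obs \in \lattice$ and an environment $\rho$ with $\rho \respectfullp \policy$. The hypothesis gives \typeEval[\tcontextp,\contextp]{e:(\alpha_{l_1} \rightarrow \intType) \times \dots \times (\alpha_{l_n} \rightarrow \intType)}. Before invoking the theorem I would perform the one small sanity check that the product type is well-formed in $\tcontextp$: since each $l_i \in \lattice$, every type variable $\alpha_{l_i}$ occurs in $\tcontextp = \{\alpha_l \sep l \in \lattice\} \cup \dots$, so the arrow and product formation rules place the whole type over $\tcontextp$, and the typing judgment in the hypothesis is exactly of the shape demanded by Theorem~\ref{thm:trni_free}.

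Then I would simply apply Theorem~\ref{thm:trni_free} with $\tau$ taken to be $(\alpha_{l_1} \rightarrow \intType) \times \dots \times (\alpha_{l_n} \rightarrow \intType)$, which yields directly
$$\tuplemt{\rho_L(e),\rho_R(e)} \in \indtermp{(\alpha_{l_1} \rightarrow \intType) \times \dots \times (\alpha_{l_n} \rightarrow \intType)}{\policy},$$
as desired. Since $\obs$ and $\rho$ were arbitrary, this closes the corollary.

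There is essentially no obstacle here: the statement is a restriction of the theorem to one particular output type. The only point worth flagging is conceptual rather than technical---the product type is chosen precisely to match Assumption~\ref{assmt:input_output}(5), so that relatedness of $\tuplemt{\rho_L(e),\rho_R(e)}$ at the product type unfolds, via rule FR-Pair, into componentwise indistinguishability of each wrapped output channel. If one wished to restate the corollary as genuine TRNI rather than as logical-relation membership, that unfolding, together with the interpretation of each $\alpha_{l_i}$ fixed in Def.~\ref{def:local_policy:env}, would constitute the only additional bookkeeping.
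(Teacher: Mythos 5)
Your proposal is correct and matches the paper exactly: the corollary is stated immediately after Theorem~\ref{thm:trni_free} with no separate proof, being precisely the instance of that theorem at the output type $(\alpha_{l_1} \rightarrow \intType) \times \dots \times (\alpha_{l_n} \rightarrow \intType)$. Your added well-formedness check and the remark on unfolding via FR-Pair are harmless elaborations of the same one-line argument.
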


\subsection{Extensions}
\label{sec:multi_level:extension}
{Similar to the encoding in \S\ref{sec:trni}, the encoding in this section can also be extended to support richer policies.
The ideas behind the extensions are similar to the ones in \S\ref{sec:extension}.
Here we only present two extensions: multiple declassification functions for an input and more inputs involved in a declassification functions (global policies)}.
The remaining extension is about declassifying via equivalent functions can be obtained by combining the idea in \S\ref{sec:extension} and \S\ref{sec:multi_level:extension:local-policy:mult_dec}.

\subsubsection{More declassification functions} 
\label{sec:multi_level:extension:local-policy:mult_dec}
In general an input can be declassified in more than one way.  
To show how this can be accommodated, we present an extension for a policy \policymul\ defined on the lattice \tuplemt{\latticedm,\smalleq}, where $\latticedm = \{H,M_1,M_2,L\}$ and $L \smalleq M_i \smalleq H$ (for $i \in \{1,2\}$).
The input set is $\inpSet = \{\hinp, \minpl\}$, where inputs \hinp\ and $\minpl$ are at resp. $H$ and $M_1$.

The policy allows an input to be declassified via multiple functions to different levels.
Specifically, the policy allows that:
\begin{itemize}
\item input $\hinp$ can be declassified via $f_1$ to $M_1$ or $f_2$ to $M_2$ for some $f_1$ and $f_2$, where \typeEval{f_1: \intType \rightarrow \tau_{f_1}} and \typeEval{f_2:\intType \rightarrow \tau_{f_2}},

\item input $\minpl$ can be declassified via $g_1$ or $g_2$ to $L$ for some $g_1$ and $g_2$, where \typeEval{g_1: \intType \rightarrow \tau_{g1}} and \typeEval{g_2:\intType \rightarrow \tau_{g2}}
\end{itemize}

The policy however, does not allow $\hinp$ to be declassified to $L$ via $g$ and $f_i$.

For this policy, we have the context described in Fig.~\ref{fig:encoding:ext:multi_dec} where the details of interfaces \comp{l}, \convup{l}{l'} and \wrap{l} are as in Fig.~\ref{fig:encoding:local_policies} and are omitted.
As in {\S\ref{sec:multi_level:trni_free}}, we have new type variables for inputs \hinp\ and \minpl\ and interfaces $\hinp_{f_1}$, $\hinp_{f_2}$, $\minpl_{g_1}$, $\minpl_{g_2}$ for declassification functions.

\begin{figure}
\begin{align*}
\tcontextp[\policymul] =&\ \{\alpha_l \sep l \in \latticedm\} \cup {\{\alpha_H^{f_1,f_2}, {\alpha^{f_1,f_2}}, \alpha_{M_1}^{g_1,g_2}, {\alpha^{g_1,g_2}} \}}\\
\contextp[\policymul] =&\ \{\comp{l}: \dots \}\ \cup 
    \{\convup{l}{l'}: \dots \}\ \cup 
    \{\wrap{l}: \dots \}\ \cup \\
   &\ {\{\hinp: {\alpha_H^{f_1,f_2} \rightarrow \alpha^{f_1,f_2}} \}}\ \cup \\
   &\ {\{\hinp_{f_1}: {(\alpha_H^{f_1,f_2} \rightarrow \alpha^{f_1,f_2})} \rightarrow \alpha_{M_1} \rightarrow \tau_{f_1},\quad \hinp_{f_2}: {(\alpha_H^{f_1,f_2} \rightarrow \alpha^{f_1,f_2})} \rightarrow \alpha_{{M_2}} \rightarrow \tau_{f_2} \} }\ \cup \\
   &\ {\{\conv{\hinp}:{(\alpha_H^{f_1,f_2} \rightarrow \alpha^{f_1,f_2})} \rightarrow \alpha_H \rightarrow \intType \}}\ \cup \\
   &\ {\{\minpl: \alpha_{M_1}^{g_1,g_2} \rightarrow \alpha^{g_1,g_2} \}}\ \cup \\
   &\ {\{\minpl_{g_1}: {(\alpha_{M_1}^{g_1,g_2} \rightarrow \alpha^{g_1,g_2})} \rightarrow \alpha_L \rightarrow \tau_{g_1},\quad \minpl_{g_2}: {(\alpha_{{M_1}}^{g_1,g_2} \rightarrow \alpha^{g_1,g_2})} \rightarrow \alpha_L \rightarrow \tau_{g_2} \} }\ \cup \\
   &\ {\{\conv{\minpl}:{(\alpha_{M_1}^{g_1,g_2} \rightarrow \alpha^{g_1,g^2})} \rightarrow \alpha_{M_1} \rightarrow \intType \}}
\end{align*}
\caption{Contexts for policy \policymul}
\label{fig:encoding:ext:multi_dec}
\end{figure}

We next define the full environment $\rho$\ for \policymul\ w.r.t. an observer \obs\ (denoted by $\rho \respectfullp \policymul$).
For such a $\rho$, the mapping for term variables is as in Def.~\ref{def:local_policy:env:full}.
For a type variable $\alpha_l \in \tcontextp[\policymul]$, $\rho(\alpha_l)$ is as in Def.~\ref{def:local_policy:env}.
{For $\alpha_H^{f_1,f_2}$, an observer at $M_1$, $M_2$ or $H$ has the key}.
$${\rho(\alpha_{H}^{f_1,f_2})} = \begin{cases}
   \tuplemt{\unitType,\unitType,\full{\unitType}} & \text{if $obs \in \{M_1,M_2,H\}$}\\
   \tuplemt{\unitType,\unitType,\emptyset} & \text{otherwise}
\end{cases}$$

For $\alpha_{f_1,f_2}$, the definition is straightforward. 
$$
{\rho({\alpha_{f_1,f_2}}) = \begin{cases}
 \tuplemt{\intType,\intType,id_\intType} & \text{if $\obs = H$,}\\
 \tuplemt{\intType,\intType,R_{f_i}} & \text{if $\obs = M_i$,}\\
 \tuplemt{\intType,\intType,\full{\intType}} & \text{if $\obs = L$.}
\end{cases}}
$$

For $\alpha_{M_1}{g_1,g_2}$, all observers have the key (since the input can be declassified to $L$).
$${\rho(\alpha_{M_1}^{g_1,g_2}) = \tuplemt{\unitType,\unitType,\full{\unitType}}}$$

{For {$\alpha^{g_1,g_2}$}, when \obs\ is $L$, since this observer can apply $g_1$ and $g_2$ to \minpl}, two wrapped values are indistinguishable if they cannot be distinguished by both $g_1$ and $g_2$.
In addition, since data at $L$ can flow to $M_2$, for an observer at $M_2$, two wrapped values at $M_1$ are indistinguishable if they are indistinguishable at $L$.
Therefore, $\rho(\alpha^{g_1,g_2})$ is as below:
$$
\rho({\alpha^{g_1,g_2}}) = \begin{cases}
 \tuplemt{\intType,\intType,id_\intType} & \text{if $M_1 \smalleq \obs$,}\\
 \tuplemt{\intType,\intType,R_{g_1,g_2}} & \text{if $M_1 \not\smalleq \obs$,}
\end{cases}
$$

where $R_{g_1,g_2} = \{(n_1,n_2) \sep (g_1\ n_1, g_1\ n_2) \in \termrelation{\tau_{g_1}}{\emptyset} \wedge (g_2\ n_1, g_2\ n_2) \in \termrelation{\tau_{g_2}}{\emptyset}\}$.

We define indistinguishability w.r.t. an \obs\ as an instantiation of the logical relation with an arbitrary $\rho \respectfullp \policymul$.
The implementations of \comp{\_}, \conv{\_}, \convup{l}{l'}, $\hinp_{f_i}$, and $\minpl_{g_i}$ are as in \S\ref{sec:multi_level:trni_free:theorem_for_free}.
We also have that these implementations are indistinguishable to themselves for any observer.
Therefore, from the abstraction theorem, we again obtain that for any program $e$, if \typeEval[\Delta_{\policymul},\Gamma_{\policymul}]{e:\tau}, then this program maps indistinguishable inputs to indistinguishable outputs.
Proofs are in \S\ref{sec:multi_level:proof:extension}.

For example, we consider programs $e_{E1} = \hinp_{f_1}\ \hinp$ and $e_{E2} = \tuple{\minpl_{g_1}\ \minpl}$.
{These programs are well-typed in the context of \policymul, and their types are respectively $\alpha_M \rightarrow \tau_{f_1}$ and $\alpha_L \rightarrow \tau_{g_1}$ and hence, on indistinguishable inputs, their outputs are indistinguishable at resp. $M_1$ and $L$.}

\subsubsection{Global policies}
\label{sec:multi_level:extension:global}
We now consider policies where a declassifier can involve more than one input.
For simplicity, in this subsection, we consider a policy \policyglb\ defined on the lattice \tuplemt{\latticedm,\smalleq} described in {\S\ref{sec:multi_level:extension:local-policy:mult_dec}}.
There are two inputs: $\minpl$\ and $\minpr$ at respectively $M_1$ and $M_2$.
The average of these inputs can be declassified to $L$, i.e. they can be declassified to $L$ via $f = \lambda x:\intType_1 \times \intType_2.(\prj{1}{x}+\prj{2}{x})/2$.\footnote{%
We can extend the encoding presented in this section to have policies where different subsets of \inpSet\ can be declassified and to have more than one declassifier associated with a set of confidential inputs.}
Notice that here we use subscripts for the input type of $f$ to mean that the confidential inputs $\minpl$ and \minpr\ are corresponding to resp. the first and second elements of an input of $f$.

To encode the requirement that $\minpl$ and $\minpr$ can be declassified via $f$, we introduce a new variable $y$, which is corresponding to the tuple of inputs $\minpl$ and $\minpr$.\footnote{{This idea can be generalized to capture the requirement in which there are more than two inputs that can be declassified. When there are $n$ inputs that can be declassified, we just introduce a fresh variable $y$ which is correspond to the $n$-tuple of inputs.}}
Individual inputs cannot be declassified, and only $y$ can be declassified via $f$.
Thus, we have the context described in Fig.~\ref{fig:encoding:ext:global}, where the details of interfaces \comp{l}, \convup{l}{l'} and \wrap{l} are as in {Fig.~\ref{fig:encoding:local_policies}} and are omitted
Note that {we introduce new type variables $\alpha^f$ and $\alpha_{M_1,M_2}^f$, where $\alpha_{M_1,M_2}^f$ is the type of the key to open $y$}. 
Since $\minpl$ cannot be declassified directly, its type is $\alpha_{M_1} \rightarrow \intType$.
Similarly, the type of \minpr\ is $\alpha_{M_2} \rightarrow \intType$.

\begin{figure}
\begin{align*}
\tcontextp[\policyglb] =&\ \{\alpha_l \sep l \in \latticedm\} \cup {\{{\alpha^f, \alpha_{M_1,M_2}^f}\}}\\
\contextp[\policyglb] =&\ \{\comp{l}: \dots \}\ \cup 
    \{\convup{l}{l'}: \dots \}\ \cup 
    \{\wrap{l}: \dots \}\ \cup \\
   &\ {\{y: {\alpha_{M_1,M_2}^{f} \rightarrow \alpha^f}, \quad y_{f}: {(\alpha_{M_1,M_2}^f \rightarrow \alpha^f)} \rightarrow \alpha_{L} \rightarrow {\intType} \}}\ \cup \\
   &\ {\{\minpl: \alpha_{M_1} \rightarrow \intType,\quad \minpr: \alpha_{M_2} \rightarrow \intType\}}
\end{align*}
\caption{Contexts for policy \policyglb}
\label{fig:encoding:ext:global}
\end{figure}

\paragraph{Environment for \policyglb.}
We next define an environment for \policyglb\ (denoted by $\rho \respectp \policyglb$).
The definition for $\rho \respectp \policyglb$ is similar to Def.~\ref{def:local_policy:env}, except for {$\alpha_{M_1,M_2}^f$ and $\alpha^f$}.

For $\alpha_{M_1,M_2}^f$, since $y$ can be declassified to $L$, all observers have the key.
$${\rho(\alpha_{M_1,M_2}^f) = \tuplemt{\unitType,\unitType,\full{\unitType}}}$$

Since $\alpha_f$ is the type of $y$ which is corresponding to both inputs, its concrete type is $\intType\times \intType$.
\begin{itemize}
\item When $\obs = H$ (i.e. the observer $\obs$ can observe both inputs at $M_1$ and $M_2$), the interpretation of $\alpha_f$ is just $id_{\intType \times \intType}$.
\item When $\obs = L$, since the observer can apply $f$ on $y$, two tuples of inputs are indistinguishable if the results of $f$ on them are the same.
\item When $\obs = M_i$, since declassified data at $L$ can be observed at $M_i$, two tuples of inputs are indistinguishable if they are indistinguishable at $L$.
\end{itemize}

Therefore, we define $\rho(\alpha_f)$ as below:
$$\rho(\alpha_f) = \begin{cases}
 \tuplemt{\intType\times \intType,\intType\times \intType,id_{\intType\times\intType}} & \text{if $\obs = H$,}\\
 \tuplemt{\intType\times \intType,\intType\times \intType, {\relDec}} & \text{otherwise,}
\end{cases}$$

where $\relDec = \{\tuplemt{\tuple{v,u},\tuple{v',u'}} \sep \tuplemt{f\ \tuple{v,u}, f\ \tuple{v',u'}} \in \termrelation{\intType}{\emptyset} \}$.

\paragraph{Full environment for \policyglb.}
Different from previous sections, in order to define full environments, we need to encode the correspondence between inputs and argument of the declassifier.
We say that an environment $\rho$ of \policyglb\ is {\em consistent} w.r.t. an \obs\ if $\rho \respectp \policyglb$ and 
\begin{align*}
\prj{1}{{(\rho_L(y)\ \unitVal)}} = \rho_L(\minpl)\ \unitVal & &  \prj{2}{{(\rho_L(y)\ \unitVal)}} = \rho_L(\minpr)\ \unitVal\\
\prj{1}{{(\rho_R(y)\ \unitVal)}} = \rho_R(\minpl)\ \unitVal & &  \prj{2}{{(\rho_R(y)\ \unitVal)}} = \rho_R(\minpr)\ \unitVal
\end{align*}

This additional condition takes care of the correspondence of inputs and the arguments of the designated declassifier.

We construct $\dec{f}$ which is used as the concrete input for $y_f$.
$$\lambda x:{\unitType \rightarrow \intType\times\intType}.\lambda \_:\unitType.f{(x\ \unitVal)}\\$$

{An environment $\rho$ is full for \policyglb\ w.r.t. an \obs\ (denoted by $\rho \respectfullp \policyglb$) if $\rho$ is consistent w.r.t. \obs, it maps $y_f$ to $\tuplemt{\dec{f},\dec{f}}$, and the mapping of other term variables is as in Def.~\ref{def:local_policy:env:full}}.

We then define indistinguishability as an instantiation of the logical relation as in Def.~\ref{def:local_policy:ind}.
We also have the free theorem saying that if \typeEval[\Delta_\policyglb,\Gamma_\policyglb]{e:\tau}, then $e$ maps indistinguishable inputs to indistinguishable outputs.
The proof goes through without changes.

We illustrate this section by considering program $y_f\ y$ where the declassifier is applied correctly.
This program is well-typed in the contexts for \policyglb\ and its type is $\alpha_L \rightarrow \intType$ and hence, this program maps indistinguishable inputs to indistinguishable outputs.


\begin{remark}
We may use two type variables $\alpha_1$ and $\alpha_2$ for the two inputs (i.e. $\minp_1: \alpha_1$ and $\minp_2:\alpha_2$) and use $y_f:\alpha_1 \times \alpha_2 \rightarrow \intType$ for the declassifier.
Since we used two type variables separately, we may define the indistinguishability relations for them as the full relation on \intType.
However, when we define the indistinguishability relations separately, the declassifier $f$ may not be related to itself at $\alpha_1 \times \alpha_2 \rightarrow \intType$.\footnote{In order to use the abstraction theorem to prove security, we need that the declassifier $f$ is indistinguishable to itself at $\alpha_1 \times \alpha_2 \rightarrow \intType$.}
For example, we have $3$ and $4$ are indistinguishable at $\alpha_1$, $4$ and $5$ are indistinguishable at $\alpha_2$ but $f\ \tuple{3,5} \neq f\ \tuple{4,6}$.

To have $f$ related to itself, the indistinguishability relations for $\alpha_1$ and $\alpha_2$ should depend on each other.
For example, suppose that $v_1$ and $v_1'$ are indistinguishable values for $\minp_1$.
Then two values $v_2$ and $v_2'$ for $\minp_2$ are indistinguishable when $f\tuple{v_1,v_2} = f\tuple{v_1',v_2}$.
In other words, whether $v_2$ and $v_2'$ are indistinguishable depends on $v_1$ and $v_1'$.
However, we have difficulty to express such dependency between $\alpha_1$ and $\alpha_2$ in the language presented in \S\ref{sec:multi_level:language} and hence, we introduce a new variable $y$.

\end{remark}

\section{Proofs for multi-level encodings}
\label{sec:multi_level:proof}
\subsection{{Proofs for the encoding for NI}}
\label{sec:multi_level:ni_free:proof}
\textbf{Lemma~\ref{lem:multi_level:logeq:related-term}}.
Suppose that $\rho \respect \Delta$ for some $\Delta$. 
It follows that:
\begin{itemize}
\item if $\tuplemt{v_1,v_2} \in \valrelation{\tau}{\rho}$, then \typeEval{v_1: \rho_L(\tau)}, \typeEval{v_2: \rho_R(\tau)}and
\item if $\tuplemt{e_1,e_2} \in \termrelation{\tau}{\rho}$, then \typeEval{e_1: \rho_L(\tau)} and \typeEval{e_2: \rho_R(\tau)}.
\end{itemize}
\begin{proof}
The second part of the lemma follows directly from rule FR-Term.
We prove the first part of the lemma by induction on structure of $\tau$.

\emcase{Case 1:} $\intType$.
We consider $\tuplemt{v_1,v_2} \in \valrelation{\intType}{\rho}$.
From FR-Int, we have that \typeEval{v_i:\intType}.
Since $\rho_L(\intType) = \rho_R(\intType) = \intType$, we have that \typeEval{v_1:\rho_L(\intType)} and \typeEval{v_2:\rho_R(\intType)}.

\emcase{Case 2:} $\unitType$. The proof is similar to the one of Case 1.
We consider $\tuplemt{v_1,v_2} \in \valrelation{\intType}{\rho}$.
From FR-Int, we have that \typeEval{v_i:\intType}.
Since $\rho_L(\intType) = \rho_R(\intType) = \intType$, we have that \typeEval{v_1:\rho_L(\intType)} and \typeEval{v_2:\rho_R(\intType)}.

\emcase{Case 3:} $\alpha$.
We consider $\tuplemt{v_1,v_2} \in \valrelation{\alpha}{\rho}$.
From the FR-Var rule, $\tuplemt{v_1,v_2} \in \rho(\alpha) \in  \Rel{\tau_1,\tau_2}$.
From the definition of \Rel{\tau_1,\tau_2}, we have that \typeEval{v_1:\rho_L(\alpha)} and \typeEval{v_2:\rho_R(\alpha)}.

\emcase{Case 4:} $\tau_1 \times \tau_2$.
We consider $\tuplemt{v_1,v_2} \in \valrelation{\tau_1 \times \tau_2}{\rho}$.
We then have that $v_1 = \tuplemt{v_{11},v_{12}}$ for some $v_{11}$ and $v_{12}$ and $v_2 = \tuplemt{v_{21}, v_{22}}$ for some $v_{21}$ and $v_{22}$.
From FR-Pair, it follows that $\tuplemt{v_{11},v_{21}} \in \valrelation{\tau_1}{\rho}$ and $\tuplemt{v_{12},v_{22}} \in \valrelation{\tau_2}{\rho}$.
From IH (on $\tau_1$ and $\tau_2$), we have that \typeEval{v_{11}:\rho_L(\tau_1)}, \typeEval{v_{21}:\rho_R(\tau_1)}, \typeEval{v_{12}:\rho_L(\tau_2)}, and \typeEval{v_{22}:\rho_R(\tau_2)}.
From FT-Pair, \typeEval{\tuple{v_{11},v_{12}}:\rho_L(\tau_1)\times \rho_L(\tau_2)} and \typeEval{\tuple{v_{21},v_{22}}:\rho_R(\tau_1)\times \rho_R(\tau_2)}.
Thus, \typeEval{v_1:\rho_L(\tau_1)\times \rho_L(\tau_2)} and \typeEval{v_2:\rho_R(\tau_1)\times \rho_R(\tau_2)}.
In other words, \typeEval{v_1:\rho_L(\tau_1 \times \tau_2)} and \typeEval{v_2:\rho_R(\tau_1 \times \tau_2)}.

\emcase{Case 5:} $\tau_1 \rightarrow \tau_2$.
We consider $\tuplemt{v_1,v_2} \in \valrelation{\tau_1 \rightarrow \tau_2}{\rho}$.
We now look at arbitrary $\tuplemt{v_1',v_2'} \in \valrelation{\tau_1}{\rho}$.
From FR-Fun, it follows that $\tuplemt{v_1v_1',v_2v_2'} \in \termrelation{\tau_2}{\rho}$.
From IH on $\tau_1$ and the second part of the lemma on $\tau_2$, we have that \typeEval{v_1':\rho_L(\tau_1)}, \typeEval{v_2':\rho_R(\tau_1)}, \typeEval{v_1v_1':\rho_L(\tau_2)}, and \typeEval{v_2v_2':\rho_R(\tau_2)}.
From FT-App, we have that \typeEval{v_1:\rho_L(\tau_1 \rightarrow\tau_2)} and \typeEval{v_2:\rho_R(\tau_1 \rightarrow\tau_2)}.

\emcase{Case 6:} $\forall \alpha.\tau$.
We consider an arbitrary $R \in \Rel{\tau_1,\tau_2}$.
Since $\tuplemt{v_1,v_2} \in \valrelation{\forall\alpha.\tau}{\rho}$,
we have that $\tuplemt{v_1[\tau_1],v_2[\tau_2]} \in \termrelation{\tau}{\rho[\tuplemt{\tau_1,\tau_2,R}/\alpha]}$.
From IH, \typeEval{v_1[\tau_1]: \rho_L'(\tau)}, where $\rho' = \rho[\tuplemt{\tau_1,\tau_2,R}/\alpha]$.
From the typing rule, \typeEval{v_1: \rho_L'(\forall \alpha.\tau)}.
Since $\alpha$ is bound in $\tau$, we have that \typeEval{v_1: \rho_L(\forall \alpha.\tau)}.
Similarly, \typeEval{v_2: \rho_R(\forall \alpha.\tau)}.
\end{proof}



\textbf{Lemma~\ref{lem:substitution:term:comp_convup} - Part 1.}
For any \obs, it follows that:
$$\tuplemt{\compcr,\compcr} \in \indval{\forall \beta_1,\beta_2. (\alpha_l \rightarrow \beta_1) \rightarrow \big( \beta_1 \rightarrow (\alpha_l \rightarrow \beta_2)  \big) \rightarrow \alpha_l \rightarrow\beta_2}.$$

\begin{proof}
Let $\rho \respectni$.
We first prove for \comp{l}:
$$\rho(\comp{l}) \in \indval{\forall \beta_1,\beta_2. (\alpha_l \rightarrow \beta_1) \rightarrow \big( \beta_1 \rightarrow (\alpha_l \rightarrow \beta_2)  \big) \rightarrow \alpha_l \rightarrow\beta_2}.$$

From the definition of indistinguishability, we need to prove that:
$$\rho(\comp{l}) \in \valrelation{\forall \beta_1,\beta_2. (\alpha_l \rightarrow \beta_1) \rightarrow \big( \beta_1 \rightarrow (\alpha_l \rightarrow \beta_2)  \big) \rightarrow \alpha_l \rightarrow\beta_2}{\rho},$$
where $\rho \respectni$.
From the definition of the logical relation, we need to prove that for any $R \in Rel(\tau_1,\tau_1')$ and $S \in Rel(\tau_2,\tau_2')$, we have that:
$$\tuplemt{\compcr[\tau_1][\tau_2],\compcr[\tau_1'][\tau_2']} \in
 \termrelation{(\alpha_l \rightarrow \beta_1) \rightarrow \big( \beta_1 \rightarrow (\alpha_l \rightarrow \beta_2)  \big) \rightarrow \alpha_l \rightarrow\beta_2}{\rho'}$$


where $\rho' = \rho,\beta_1\mapsto\tuplemt{\tau_1,\tau_1',R}, \beta_2\mapsto\tuplemt{\tau_2,\tau_2',S}$.

We now need to prove that for all $\tuplemt{v,v'} \in \valrelation{(\alpha_l \rightarrow \beta_1)}{\rho'}$, for all $\tuplemt{f,f'} \in \valrelation{\beta_1 \rightarrow (\alpha_l \rightarrow \beta_2)}{\rho'}$, it follows that 
$$\tuplemt{f(v\ \unitVal), f'(v'\ \unitVal)} \in \termrelation{\alpha_l \rightarrow \beta_2}{\rho'}.$$

We now have two cases:
\begin{itemize}
\item $l \not\smalleq \obs$. We have that $\rho(\alpha_l) = \tuplemt{\unitType,\unitType,\emptyset}$ and hence, $\tuplemt{f(v\ \unitVal), f'(v'\ \unitVal)} \in \termrelation{\alpha_l \rightarrow \beta_2}{\rho'}$ holds vacuously. 

\item $l \smalleq \obs$. We have that $\rho(\alpha_l) = \tuplemt{\unitType,\unitType,\full{\unitType}}$.
Since $\tuplemt{f,f'} \in \valrelation{\beta_1 \rightarrow (\alpha_l \rightarrow \beta_2)}{\rho'}$, we just need to prove that $\tuplemt{v\ \unitVal,v'\ \unitVal} \in \termrelation{\beta_1}{\rho'}$.
And this follows from the fact that $\tuple{v,v'} \in \valrelation{\alpha_l\rightarrow\beta_1}{\rho'}$.
\end{itemize}
\end{proof}

\textbf{Lemma~\ref{lem:substitution:term:comp_convup} - Part 2.}
For any \obs, it follows that:
$$\tuplemt{\convupcr,\convupcr} \in \indval{\forall \beta. (\alpha_l \rightarrow \beta) \rightarrow (\alpha_{l'} \rightarrow \beta)}.$$
\begin{proof}
We need to prove that for any $\rho\respectni$,
$$\tuplemt{\convupcr,\convupcr} \in \valrelation{\forall \beta. (\alpha_l \rightarrow \beta) \rightarrow (\alpha_{l'} \rightarrow \beta)}{\rho}.$$

That is for any closed types $\tau$ and $\tau'$, for any $R \in \Rel{\tau,\tau'}$,
$$\tuplemt{\convupcr[\sigma],\convupcr[\sigma']} \in \valrelation{(\alpha_l \rightarrow \beta) \rightarrow (\alpha_{l'} \rightarrow \beta)}{\rho'},$$

where $\rho' = \rho,\beta\mapsto \tuplemt{\tau,\tau',R}$.

From the definition of \convupcr, we need to prove that for any $(v,v') \in \valrelation{\alpha_l \rightarrow \beta}{\rho'}$, it follows that $(v,v') \in \valrelation{\alpha_{l'} \rightarrow \beta}{\rho'}$.
We have the following cases:
\begin{itemize}

\item $l' \smalleq \obs$. Since $l \smallst l'$, we have that $l \smallst l' \smalleq \obs$.
Since $\rho \respectni$, we have that $\rho(\alpha_l) = \rho(\alpha_l') = {\tuplemt{\unitType,\unitType,\full{\unitType}}}$.
Thus, for any $\tuplemt{v,v'} \in \valrelation{\alpha_l \rightarrow \beta}{\rho'}$,
We have that $\tuplemt{v,v'} \in \valrelation{\alpha_{l'} \rightarrow \beta}{\rho'}$.

\item $l' \not\smalleq \obs$. 
We have that $\rho(\alpha_l') = \tuplemt{\unitType,\unitType,\emptyset}$.
Since the interpretation of $\alpha_{l'}$ is $\emptyset$, for any $\tuplemt{v,v'} \in \valrelation{\alpha_l \rightarrow \beta}{\rho'}$, we have that $\tuplemt{v,v'} \in \valrelation{\alpha_{l'} \rightarrow \beta}{\rho'}$.
\end{itemize}
\end{proof}

\textbf{Lemma~\ref{lem:substitution:term:comp_convup} - Part 3.}
For any \obs, it follows that:
$$\tuplemt{\wrapcr,\wrapcr} \in \indval{\forall \beta. \beta \rightarrow \alpha_l \rightarrow \beta}.$$
\begin{proof}
We need to prove that for any $\rho \respectni$,
$$\tuplemt{\wrapcr,\wrapcr} \in \valrelation{\forall \beta. \beta \rightarrow \alpha_l \rightarrow \beta}{\rho}$$

That is for any $\tau_1$, $\tau_2$ and $R$ s.t. $R \in \Rel{\tau_1,\tau_2}$,
$$\tuplemt{\wrapcr[\tau_1],\wrapcr[\tau_2]} \in \valrelation{\beta \rightarrow \alpha_l \rightarrow \beta}{\rho,\beta\mapsto \tuplemt{\tau_1,\tau_2,R}}$$
 
Let $\rho' = \rho,\beta\mapsto \tuplemt{\tau_1,\tau_2,R}$, we need to prove that for any $(v_1,v_2) \in R$, we have that:
$$\tuplemt{\wrapcr[\tau_1]\ v_1,\wrapcr[\tau_2]\ v_2} \in \valrelation{\alpha_l \rightarrow \beta}{\rho'}.$$

From the definition of \wrapcr, we need to prove that:
$$\tuplemt{\lambda \_:\unitType.v_1,\lambda \_:\unitType.v_2} \in \valrelation{\alpha_l \rightarrow \beta}{\rho'}.$$

When $l \not\smalleq \obs$, $\rho'(\alpha_l) = \tuplemt{\unitType,\unitType,\emptyset}$ and hence, the statement hold vacuously.
When $l \smalleq \obs$, $\rho'(\alpha_l) = \tuplemt{\unitType,\unitType,\full{\unitType}}$ and hence, we only need to prove that $\tuplemt{v_1,v_2} \in R$.
This is trivial from the assumption that $\tuplemt{v_1,v_2} \in R$.
\end{proof}

\textbf{Lemma~\ref{lem:ni_free:monad}.}
For any closed types $\tau$ and $\tau'$:
\begin{enumerate}
\item for any $e$ s.t. \typeEval{e:\tau}, for any $f: \tau \rightarrow \unitType \rightarrow \tau'$,
$$\compcr[\tau][\tau']\ (\wrapcr[\tau]\ e)\ f \lambdaeq f\ e$$

\item for any $e$ s.t. \typeEval{e: \unitType\rightarrow \tau}, 
$$\compcr[\tau][\tau']\ e\ \wrapcr[\tau] \lambdaeq e$$

\item for any $e$ s.t. \typeEval{e: \unitType\rightarrow \tau}, \typeEval{f: \tau \rightarrow \unitType \rightarrow \tau''}, \typeEval{f:\tau'' \rightarrow \unitType \rightarrow \tau}
$$\compcr[\tau''][\tau']\ (\compcr[\tau][\tau'']\ e\ f)\ g \lambdaeq \compcr[\tau][\tau']\ e \ \big(\lambda x:\tau.\compcr[\tau''][\tau']\ (f\ x)\ g\big)$$
\end{enumerate}

\begin{proof}
{We prove (1)}.
From the definition of \wrapcr, $\wrapcr[\tau]\ e \reduce \lambda\_:\unitType.e$.
From the definition of \compcr, we have that:
$$\compcr[\tau][\tau']\ (\wrapcr[\tau]\ e)\ f  \reduce f(\lambda \_:\unitType.e\ \unitVal) \reduce f(e)$$

Thus, $\compcr[\tau][\tau']\ (\wrapcr[\tau]\ e)\ f \lambdaeq f\ e$.

{We prove (2)}.
From the implementation of \wrapcr, $\wrapcr[\tau] \reduce \lambda x:\tau.\lambda \_:\unitType.x$.
Let $v$ be the value s.t. $e \reduce v$ {(note that in our language, all closed terms can be reduced to values)}.
Let $v'$ of the type $\tau$ s.t. $(v\ \unitVal)\reduce v'$.
Therefore, we have that $(e\ \unitVal) \reduce v\ \unitVal \reduce v'$.

From the implementation of \compcr, 
$$\compcr[\tau][\tau']\ e\ \wrapcr[\tau]\reduce  (\lambda x:\tau.\lambda \_:\unitType.x)(v\ \unitVal) \reduce \lambda\_:\unitType.v'$$

Thus, $(\compcr[\tau][\tau']\ e\ \wrapcr[\tau])\ \unitVal \reduce v'$.
As proven above, $(e\ \unitVal) \reduce v\ \unitVal \reduce v'$.
Thus, $\compcr[\tau][\tau']\ e\ \wrapcr[\tau] \lambdaeq e$.

{We prove (3)}.
Let $v$, $v'''$, $v''$ and $v'$ be values s.t.
\begin{itemize}
\item $v$ is of type $\unitType \rightarrow \tau$ and $e \reduce v$,
\item $v'''$ is of type $\tau$ and $v\ \unitVal \reduce v'''$,
\item $v''$ is of the type $\unitType \rightarrow \tau''$ s.t. $f\ v''' \reduce v''$
\item $v'$ is of the type $\unitType \rightarrow \tau'$ s.t. $g(v''\ \unitVal) \reduce v'$.
\end{itemize}

We first look at $\compcr[\tau''][\tau']\ (\compcr[\tau][\tau'']\ e\ f)\ g$.
We have that
$$\compcr[\tau][\tau'']\ e\ f \reduce f(v\ \unitVal) \reduce f(v''') \reduce v''.$$
 
Thus, $\compcr[\tau''][\tau']\ (\compcr[\tau][\tau'']\ e\ f)\ g \reduce g(v''\ \unitVal) \reduce v'$.

We now look at $\compcr[\tau][\tau']\ e \ \big(\lambda x:\tau.\compcr[\tau''][\tau']\ (f\ x)\ g\big)$.
We need to prove that $\compcr[\tau][\tau']\ e \ \big(\lambda x:\tau.\compcr[\tau''][\tau']\ (f\ x)\ g\big) \reduce v'$.
We have that:
\begin{align*}
\compcr[\tau][\tau']\ e \ \big(\lambda x:\tau.\compcr[\tau''][\tau']\ (f\ x)\ g\big) & \reduce \big(\lambda x:\tau.\compcr[\tau''][\tau']\ (f\ x)\ g\big)(v\ \unitVal) \\
   & \reduce \big(\lambda x:\tau.\compcr[\tau''][\tau']\ (f\ x)\ g\big)(v''')\\
   & \reduce \compcr[\tau''][\tau']\ (f\ v''')\ g\\
   & \reduce g(v'' \unitVal) \reduce v'
\end{align*}
\end{proof}

\subsection{{Proofs for the encoding for TRNI}}
\label{sec:multi_level:trni_free:proof}
\textbf{Lemma~\ref{lem:trni:substitution:term:interface}.}
For any \obs, it follows that:
\begin{align*}
\tuplemt{\compcr,\compcr} \in &\ \indvalp{\forall \beta_1,\beta_2. (\alpha_l \rightarrow \beta_1) \rightarrow \big( \beta_1 \rightarrow (\alpha_l \rightarrow \beta_2)  \big) \rightarrow \alpha_l \rightarrow\beta_2}{\policy}\\
\tuplemt{\convupcr,\convupcr} \in &\ \indvalp{\forall \beta. (\alpha_l \rightarrow \beta) \rightarrow (\alpha_{l'} \rightarrow \beta)}{\policy}\\
\tuplemt{\wrapcr,\wrapcr} \in &\ \indvalp{\forall \beta. \beta \rightarrow \alpha_{l} \rightarrow \beta}{\policy}\\
\tuplemt{\convcr,\convcr} \in &\ {\indvalp{(\alpha_l^f \rightarrow \alpha^f) \rightarrow \alpha_l \rightarrow \intType}{\policy}}
\end{align*}
\begin{proof}
The proofs for the first three parts are similar to the corresponding ones in Lemma~\ref{lem:substitution:term:comp_convup}.
We now prove the last part.
Let $\rho$ be an environment s.t. $\rho \respectp \policy$.
From Def.~\ref{def:local_policy:ind}, we need to prove that 
$\tuplemt{\convcr,\convcr} \in \valrelation{{(\alpha_l^f \rightarrow \alpha^f)} \rightarrow \alpha_l \rightarrow \intType}{\rho}$.
That is for any $(v,v') \in \valrelation{{\alpha_l^f \rightarrow \alpha^f}}{\rho}$,
$$\tuplemt{\convcr\ v,\convcr\ v'} \in \valrelation{\alpha_l \rightarrow \intType}{\rho}.$$

From the definition of \convcr, we need to prove that:
$$\tuplemt{v,v'} \in \valrelation{\alpha_l \rightarrow \intType}{\rho}.$$

We have two cases:
\begin{itemize}
\item $l \smalleq \obs$. We have that $\rho(\alpha_l) = {\rho(\alpha_l^f)} = \tuplemt{\unitType,\unitType,\full{\unitType}}$ and $\rho(\alpha_l^f) = \tuplemt{\intType,\intType,id_\intType}$.
{We need to prove that $\tuplemt{v\ \unitVal,v'\ \unitVal} \in \termrelation{\alpha^f}{\rho}$}.

Since $(v,v') \in \valrelation{{\alpha_f^l\rightarrow \alpha^f}}{\rho}$, from $\rho(\alpha_l^f)$ and $\rho(\alpha^f)$ it follows that {$v\ \unitVal = v'\ \unitVal$}.
Therefore, we have that 
$\tuplemt{v,v'} \in \valrelation{\alpha_l \rightarrow \intType}{\rho}$.

\item $l \not\smalleq \obs$.
We have that $\rho(\alpha_l) = \tuplemt{\unitType,\unitType,\emptyset}$ and hence, $\tuplemt{v,v'} \in \valrelation{\alpha_l \rightarrow \intType}{\rho}$ vacuously. 

\end{itemize} 
\end{proof}

\textbf{Lemma~\ref{lem:trni:substitution:term:declassification}.}
For any \obs, for any $\alpha_l^f$ and $\alpha^f$ corresponding to $x$  s.t. $\lvl{x} = l$ and $\decMap(x) = \tuplemt{f,l'}$:
$${\tuplemt{\dec{f},\dec{f}} \in \indvalp{(\alpha_l^f \rightarrow \alpha_f) \rightarrow \alpha_{l'} \rightarrow \tau_f}{\policy}.}$$

\begin{proof}
Let $\rho$ be an environment s.t. $\rho \respectp \policy$.
From Def.~\ref{def:local_policy:ind}, we need to prove that:
$$\tuplemt{\dec{f},\dec{f}} \in \valrelation{{(\alpha_l^f \rightarrow \alpha^f)} \rightarrow \alpha_{l'} \rightarrow \tau_f}{\rho}.$$

That is for any $\tuplemt{v_1,v_2} \in \valrelation{{\alpha_l^f\rightarrow\alpha^f}}{\rho}$, {$\tuplemt{\dec{f}\ v_1,\dec{f}\ v_2} \in \termrelation{\alpha_{l'} \rightarrow \tau_f}{\rho}$}.
From the definition of \dec{f}, we need to prove that
$$\tuplemt{\lambda \_:\unitType.f{(v_1\ \unitVal)},\lambda \_:\unitType.f{(v_2\ \unitVal)}} \in \valrelation{\alpha_{l'} \rightarrow \tau_f}{\rho}$$

When $l' \not\smalleq \obs$, we have that $\rho(\alpha_{l'}) = \tuplemt{\unitType,\unitType,\emptyset}$ and hence, the statement holds vacuously.
We now consider the case $l' \smalleq \obs$.
We have that $\rho(\alpha_{l'}) = \tuplemt{\unitType,\unitType,\full{\unitType}}$ and {$\rho(\alpha_l^f) = \tuplemt{\unitType,\unitType,\full{\unitType}}$}.
Therefore we need to prove that
$$\tuplemt{f({v_1\ \unitVal}),f({v_2\ \unitVal})} \in \termrelation{\tau_f}{\rho}$$

Let $v_i' = v_i\ \unitVal$.
We need to prove that
$$\tuplemt{f({v_1'}),f({v_2'})} \in \termrelation{\tau_f}{\rho}$$

Note that since $\tuplemt{v_1,v_2} \in \valrelation{{\alpha_l^f\rightarrow\alpha^f}}{\rho}$ and {$\rho(\alpha_l^f) = \tuplemt{\unitType,\unitType,\full{\unitType}}$}, we have that 
$$\tuplemt{v_1\ \unitVal,v_2\ \unitVal} \in \termrelation{\alpha^f}{\rho}.$$

In other words, $\tuplemt{v_1',v_2'} \in \valrelation{\alpha^f}{\rho}$.

We have two sub-cases:
\begin{itemize}
\item $l \not\smalleq \obs$. We have that $\rho(\alpha_l^f) = \tuplemt{\intType,\intType,R_f}$.
Since $\tuple{v_1',v_2'} \in \valrelation{\alpha^f}{\rho} = R_f$, we have that $\tuplemt{f\ v_1',f\ v_2'} \in \termrelation{\tau_f}{\emptyset}$.
Therefore, $\tuplemt{f(v_1\ \unitVal),f(v_2\ \unitVal)} \in \termrelation{\tau_f}{\emptyset}$ and hence, $\tuplemt{f(v_1\ \unitVal),f(v_2\ \unitVal)} \in \termrelation{\tau_f}{\rho}$ (note that $\tau_f$ is a closed type).

\item $l \smalleq \obs$. We have that $\rho(\alpha_l^f) = \tuplemt{\intType,\intType,id_\intType}$.
Therefore ${\tuplemt{v_1',v_2'}} \in id_\intType$ and hence, $v_1' = v_2' = v$.
Since \typeEval{f\ v:\tau_f}, from Theorem~\ref{thm:multi_level:parametricity}, we have that $\tuplemt{f\ v, f\ v} \in \termrelation{\tau_f}{\emptyset}$.
Therefore, $\tuplemt{f(v_1\ \unitVal), f(v_2\ \unitVal)} \in \termrelation{\tau_f}{\emptyset}$ and hence, $\tuplemt{f(v_1\ \unitVal), f(v_2\ \unitVal)} \in \termrelation{\tau_f}{\rho}$ (note that $\tau_f$ is a closed type).
\end{itemize}
\end{proof}

\subsection{Proofs for extensions of the multi-level encoding}
\label{sec:multi_level:proof:extension}
\subsubsection{Proofs for the extension with multiple declassifiers}
We prove that the implementations of \comp{\_}, \conv{\_}, \convup{\_}{\_}, $\hinp_{f_i}$, and $\minpl_{g_i}$ are indistinguishable to themselves for any observer.

\begin{lemma}
\label{lem:trni:ext:one:substitution:term:interface}
For any \obs, it follows that:
\begin{align*}
\tuplemt{\compcr,\compcr} \in &\ \indvalp{\forall \beta_1,\beta_2. (\alpha_l \rightarrow \beta_1) \rightarrow \big( \beta_1 \rightarrow (\alpha_l \rightarrow \beta_2)  \big) \rightarrow \alpha_l \rightarrow\beta_2}{\policymul}\\
\tuplemt{\convupcr,\convupcr} \in &\ \indvalp{\forall \beta. (\alpha_l \rightarrow \beta) \rightarrow (\alpha_{l'} \rightarrow \beta)}{\policymul}\\
\tuplemt{\wrapcr,\wrapcr} \in &\ \indvalp{\forall \beta. \beta \rightarrow \alpha_{l} \rightarrow \beta}{\policymul}\\
\tuplemt{\convcr,\convcr} \in &\ \indvalp{{(\alpha_H^{f_1,f_2}\rightarrow \alpha^{f_1,f_2})} \rightarrow \alpha_{H} \rightarrow \intType}{\policymul}\\
\tuplemt{\convcr,\convcr} \in &\ \indvalp{{(\alpha_{M_1}^{g_1,g_2} \rightarrow \alpha^{g_1,g_2})} \rightarrow \alpha_{M_1} \rightarrow \intType}{\policymul}
\end{align*}
\end{lemma}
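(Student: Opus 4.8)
The five parts split into two groups, exactly as in Lemma~\ref{lem:trni:substitution:term:interface}. For the first three parts the plan is to \emph{reuse} the arguments already given for the corresponding parts of Lemma~\ref{lem:substitution:term:comp_convup} (and hence of Lemma~\ref{lem:trni:substitution:term:interface}). The implementations $\compcr$, $\convupcr$, and $\wrapcr$ mention only the plain level variables $\alpha_l$, and for any $\rho \respectp \policymul$ the interpretation $\rho(\alpha_l)$ is $\tuplemt{\unitType,\unitType,\full{\unitType}}$ when $l \smalleq \obs$ and $\tuplemt{\unitType,\unitType,\emptyset}$ otherwise --- precisely the interpretation used in those earlier proofs. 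Since those proofs only inspect $\rho(\alpha_l)$ and never touch the declassification variables, they transfer to \policymul\ unchanged.

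For parts 4 and 5 I would mirror the $\convcr$ case of Lemma~\ref{lem:trni:substitution:term:interface}. Fix $\rho \respectp \policymul$. By Def.~\ref{def:local_policy:ind} it suffices to show $\tuplemt{\convcr,\convcr} \in \valrelation{(\alpha_H^{f_1,f_2}\rightarrow\alpha^{f_1,f_2}) \rightarrow \alpha_H \rightarrow \intType}{\rho}$ (resp. the $M_1$, $g_i$ variant). Unfolding $\convcr = \lambda x:\unitType\rightarrow\intType.x$, the goal reduces to: for every $\tuplemt{v_1,v_2} \in \valrelation{\alpha_H^{f_1,f_2}\rightarrow\alpha^{f_1,f_2}}{\rho}$ we have $\tuplemt{v_1,v_2} \in \valrelation{\alpha_H \rightarrow \intType}{\rho}$. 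I would then case split on whether the base level $H$ (resp. $M_1$) is observable by $\obs$. If $H \not\smalleq \obs$ then $\rho(\alpha_H) = \tuplemt{\unitType,\unitType,\emptyset}$, so membership holds vacuously, the FR-Fun premise ranging over an empty relation. If $H \smalleq \obs$ --- which for $\policymul$ forces $\obs = H$ --- then $\rho(\alpha_H) = \rho(\alpha_H^{f_1,f_2}) = \tuplemt{\unitType,\unitType,\full{\unitType}}$ and $\rho(\alpha^{f_1,f_2}) = \tuplemt{\intType,\intType,id_\intType}$; feeding $\tuplemt{\unitVal,\unitVal} \in \full{\unitType}$ to the hypothesis yields $\tuplemt{v_1\,\unitVal, v_2\,\unitVal} \in \termrelation{\alpha^{f_1,f_2}}{\rho}$, and since the value relation at $\alpha^{f_1,f_2}$ is $id_\intType$ both applications reduce to one and the same integer, which is exactly what places $\tuplemt{v_1,v_2}$ in $\valrelation{\alpha_H \rightarrow \intType}{\rho}$. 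Part 5 is identical after replacing $H$ by $M_1$, noting that $\rho(\alpha_{M_1}^{g_1,g_2}) = \tuplemt{\unitType,\unitType,\full{\unitType}}$ holds unconditionally (since $\minpl$ declassifies down to $L$) and that $\rho(\alpha^{g_1,g_2}) = \tuplemt{\intType,\intType,id_\intType}$ exactly when $M_1 \smalleq \obs$.

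The reductions involved are routine; the only step demanding care is the bookkeeping over the lattice $\latticedm$. Concretely, one must verify in each branch that the interpretation of the \emph{paired} key variable ($\alpha_H^{f_1,f_2}$, $\alpha_{M_1}^{g_1,g_2}$) is $\full{\unitType}$ whenever the plain level key ($\alpha_H$, $\alpha_{M_1}$) is available, so that the identity conversion is relationally sound, and that relatedness at $\alpha^{f_1,f_2}$ / $\alpha^{g_1,g_2}$ collapses to $id_\intType$ in exactly those branches. These facts are built into Def.~\ref{def:local_policy:env} and its \policymul\ refinement, so I anticipate no genuine difficulty --- merely a disciplined case analysis that tracks which of $L, M_1, M_2, H$ the observer $\obs$ can reach.
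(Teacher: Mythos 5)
Your proposal is correct and follows essentially the same route as the paper: the first three parts are discharged by reusing the proofs of Lemma~\ref{lem:substitution:term:comp_convup} (since $\compcr$, $\convupcr$, $\wrapcr$ only depend on the interpretation of the plain $\alpha_l$, which is unchanged for $\policymul$), and the two $\convcr$ parts are handled by unfolding the identity function and case-splitting on whether the observer can reach $H$ (resp.\ $M_1$), using that the paired key variable is interpreted as $\full{\unitType}$ and $\alpha^{f_1,f_2}$ (resp.\ $\alpha^{g_1,g_2}$) collapses to $id_\intType$ in the observable branch. No gaps.
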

\begin{proof}
The proofs for \compcr, \convupcr, and \wrapcr\ are as the corresponding ones in the proof of Lemma~\ref{lem:substitution:term:comp_convup}.
We now prove the first part about \convcr. The proof for the second part about \convcr\ is similar.

Let $\rho$ be an environment s.t. $\rho \respectfullp \policymul$.
From the definition of indistinguishability, we need to prove that
$\tuplemt{\convcr,\convcr} \in \valrelation{{(\alpha_H^{f_1,f_2} \rightarrow \alpha^{f_1,f_2})} \rightarrow \alpha_{H} \rightarrow \intType}{\rho}$.
That is for any $(v,v') \in {\valrelation{\alpha_H^{f_1,f_2}\rightarrow \alpha^{f_1,f_2}}{\rho}}$,
$$\tuplemt{\convcr\ v,\convcr\ v'} \in \valrelation{\alpha_{H} \rightarrow \intType}{\rho}.$$

From the definition of \convcr, we need to prove that:
$$\tuplemt{v,v'} \in \valrelation{\alpha_{H} \rightarrow \intType}{\rho}.$$

We have two cases:
\begin{itemize}
\item $H \smalleq \obs$. We have that $\rho(\alpha_H) = \rho(\alpha_H^{f_1,f_2}) = \tuplemt{\unitType,\unitType,\full{\unitType}}$ and $\rho(\alpha^{f_1,f_2}) = \tuplemt{\intType,\intType,id_\intType}$.
We need to prove that $\tuple{v\ \unitVal,v'\ \unitVal} \in \termrelation{\intType}{\rho}$.

Since $(v,v') \in \valrelation{\alpha_H^{f_1,f_2}\rightarrow \alpha^{f_1,f_2}}{\rho}$, from $\rho(\alpha_H)$  and $\rho(\alpha_H^{f_1,f_2})$, it follows that {$v\ \unitVal = v'\ \unitVal$}.

Therefore, we have that ${\tuplemt{v,v'}} \in \valrelation{\alpha_H \rightarrow \intType}{\rho}$.

\item $H \not\smalleq \obs$.
We have that $\rho(\alpha_H) = \tuplemt{\unitType,\unitType,\emptyset}$ and hence, $\tuplemt{v,v'} \in \valrelation{\alpha_l \rightarrow \intType}{\rho}$ vacuously. 
\end{itemize} 
\end{proof}

\begin{lemma}
\label{lem:trni:ext:one:substitution:term:declassification:one}
For any \obs,
\begin{align*}
\tuplemt{\dec{f_1},\dec{f_1}} & \in \indvalp{{(\alpha_H^{f_1,f_2} \rightarrow \alpha^{f_1,f_2})}  \rightarrow \alpha_{M_1} \rightarrow \tau_{f_1}}{\policymul}\\
\tuplemt{\dec{f_2},\dec{f_2}} & \in \indvalp{{(\alpha_H^{f_1,f_2} \rightarrow \alpha^{f_1,f_2})} \alpha_{M_2} \rightarrow \tau_{f_2}}{\policymul}
\end{align*}
\end{lemma}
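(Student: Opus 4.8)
The plan is to follow the proof of Lemma~\ref{lem:trni:substitution:term:declassification} essentially verbatim, adapting the observer case analysis to the four-point lattice \latticedm\ and exploiting the observer-dependent interpretation of $\alpha^{f_1,f_2}$. I prove the statement for $\dec{f_1}$; the one for $\dec{f_2}$ is symmetric (replace $M_1, f_1, \tau_{f_1}$ throughout by $M_2, f_2, \tau_{f_2}$). Fix \obs\ and a full environment $\rho \respectfullp \policymul$, recalling that the relation $\valrelation{\tau}{\rho}$ for a type $\tau$ without term variables is determined by the type-variable part of $\rho$, which is pinned down by \obs. By Def.~\ref{def:local_policy:ind} it suffices to show $\tuplemt{\dec{f_1},\dec{f_1}} \in \valrelation{(\alpha_H^{f_1,f_2} \rightarrow \alpha^{f_1,f_2}) \rightarrow \alpha_{M_1} \rightarrow \tau_{f_1}}{\rho}$, i.e. that for every $\tuplemt{v_1,v_2} \in \valrelation{\alpha_H^{f_1,f_2} \rightarrow \alpha^{f_1,f_2}}{\rho}$ one has $\tuplemt{\dec{f_1}\ v_1, \dec{f_1}\ v_2} \in \termrelation{\alpha_{M_1} \rightarrow \tau_{f_1}}{\rho}$. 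Unfolding $\dec{f_1}$, this reduces to
$$\tuplemt{\lambda \_:\unitType.f_1(v_1\ \unitVal),\, \lambda \_:\unitType.f_1(v_2\ \unitVal)} \in \valrelation{\alpha_{M_1} \rightarrow \tau_{f_1}}{\rho}.$$

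First I would split on whether $M_1 \smalleq \obs$. In \latticedm\ this holds exactly for $\obs \in \set{M_1,H}$; for $\obs \in \set{L,M_2}$ the environment gives $\rho(\alpha_{M_1}) = \tuplemt{\unitType,\unitType,\emptyset}$, so the membership holds vacuously by rule FR-Fun. In the surviving cases $\obs \in \set{M_1,H}$ we have $\rho(\alpha_{M_1}) = \tuplemt{\unitType,\unitType,\full{\unitType}}$, and since $\set{M_1,H} \subseteq \set{M_1,M_2,H}$ also $\rho(\alpha_H^{f_1,f_2}) = \tuplemt{\unitType,\unitType,\full{\unitType}}$. Instantiating the function relation at the related keys $\tuplemt{\unitVal,\unitVal} \in \full{\unitType}$ and writing $v_i' = v_i\ \unitVal$, the goal becomes $\tuplemt{f_1\ v_1', f_1\ v_2'} \in \termrelation{\tau_{f_1}}{\rho}$; moreover, from $\tuplemt{v_1,v_2} \in \valrelation{\alpha_H^{f_1,f_2} \rightarrow \alpha^{f_1,f_2}}{\rho}$ together with $\rho(\alpha_H^{f_1,f_2}) = \tuplemt{\unitType,\unitType,\full{\unitType}}$ I obtain $\tuplemt{v_1', v_2'} \in \valrelation{\alpha^{f_1,f_2}}{\rho}$.

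It then remains to read off $\valrelation{\alpha^{f_1,f_2}}{\rho} = \rho(\alpha^{f_1,f_2})$ in the two cases. When $\obs = H$ this relation is $id_\intType$, so $v_1' = v_2'$, and the abstraction theorem (Theorem~\ref{thm:multi_level:parametricity}) applied to the closed, well-typed term $f_1\ v_1'$ of the closed type $\tau_{f_1}$ yields $\tuplemt{f_1\ v_1', f_1\ v_1'} \in \termrelation{\tau_{f_1}}{\emptyset} = \termrelation{\tau_{f_1}}{\rho}$. When $\obs = M_1$ the relation is exactly $R_{f_1}$, whose definition directly gives $\tuplemt{f_1\ v_1', f_1\ v_2'} \in \termrelation{\tau_{f_1}}{\emptyset} = \termrelation{\tau_{f_1}}{\rho}$ (using that $\tau_{f_1}$ is closed, so its interpretation does not depend on $\rho$). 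This closes all cases, and the dual argument for $\dec{f_2}$ uses $M_2 \smalleq \obs$ iff $\obs \in \set{M_2,H}$ and $\rho(\alpha^{f_1,f_2}) = R_{f_2}$ at $\obs = M_2$.

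The step I expect to carry the actual content—though it is brief—is the matching in the last paragraph: the encoding is designed so that at $\obs = M_i$ the interpretation of $\alpha^{f_1,f_2}$ is precisely $R_{f_i}$, which is exactly the relation needed for $\dec{f_i}$. Because $\dec{f_1}$ only produces a non-vacuous obligation at observers above $M_1$ (and dually $\dec{f_2}$ only above $M_2$), no case ever forces the combined two-declassifier relation, so the argument never has to reconcile $f_1$ and $f_2$ simultaneously. Everything else is routine bookkeeping over the fixed lattice \latticedm.
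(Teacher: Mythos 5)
Your proposal is correct and follows essentially the same route as the paper's own proof: unfold \dec{f_1}, discharge the $M_1 \not\smalleq \obs$ cases vacuously via the empty interpretation of $\alpha_{M_1}$, and in the remaining cases $\obs \in \set{M_1,H}$ read off $\rho(\alpha^{f_1,f_2})$ as $R_{f_1}$ or $id_\intType$ respectively, closing the latter with the abstraction theorem. The only difference is presentational: you make explicit the observation that each $\dec{f_i}$ only ever produces a non-vacuous obligation at observers above $M_i$, which the paper leaves implicit in its case split.
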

\begin{proof}
We only prove the first part. The proof of the second part is similar.

Let $\rho$ be an environment s.t. $\rho \respectfullp \policymul$.
From the definition of indistinguishability, we need to prove that:
$$\tuplemt{\dec{f_1},\dec{f_1}} \in \valrelation{{(\alpha_H^{f_1,f_2} \rightarrow \alpha^{f_1,f_2})} \rightarrow \alpha_{M_1} \rightarrow \tau_{f_1}}{\rho}.$$

That is for any $\tuplemt{v_1,v_2} \in \valrelation{{\alpha_H^{f_1,f_2} \rightarrow \alpha^{f_1,f_2}}}{\rho}$, $\tuplemt{\dec{f_1}\ v_1,\dec{f_1}\ v_2} \in \termrelation{\alpha_{M_1} \rightarrow \tau_{f_1}}{\rho}$.
From the definition of \dec{f_1}, we need to prove that
$$\tuplemt{\lambda \_:\unitType.f_1(v_1\ \unitVal),\lambda \_:\unitType.f_1(v_2\ \unitVal)} \in \valrelation{\alpha_{M_1} \rightarrow \tau_{f_1}}{\rho}$$

Let $v_i' = v_i\ \unitVal$. We need to prove that:
$${\tuplemt{\lambda \_:\unitType.f_1(v_1'),\lambda \_:\unitType.f_1(v_2'} \in \valrelation{\alpha_{M_1} \rightarrow \tau_{f_1}}{\rho}}$$

When $M_1 \not\smalleq \obs$, we have that $\rho(\alpha_{M_1}) = \tuplemt{\unitType,\unitType,\emptyset}$ and hence, the statement holds vacuously.
We now consider the case $M_1 \smalleq \obs$.
We have that $\rho(\alpha_{M_1}) = \tuplemt{\unitType,\unitType,\full{\unitType}}$.
Therefore we need to prove that
$$\tuplemt{f_1(v_1'),f_1(v_2')} \in \termrelation{\tau_{f_1}}{\rho}$$

Note that since $M_1 \smalleq \obs$, we have that {$\rho(\alpha^{f_1,f_2}) = \tuplemt{\unitType,\unitType, \full{\unitType}}$}.
Since $\tuplemt{v_1,v_2} \in \valrelation{{\alpha_H^{f_1,f_2} \rightarrow \alpha^{f_1,f_2}}}{\rho}$, we have that $\tuplemt{v_1\ \unitVal, v_2\ \unitVal} \in \termrelation{{\alpha^{f_1,f_2}}}{\rho}$.
In other words, $\tuplemt{v_1',v_2'} \in \valrelation{{ \alpha^{f_1,f_2}}}{\rho}$.

We have two sub-cases:
\begin{itemize}
\item $\obs = M_1$. We have that $\rho(\alpha_H^{f_1,f_2}) = \tuplemt{\intType,\intType,R_{f_1}}$.
Since $\tuple{v_1',v_2'} \in \valrelation{{ \alpha^{f_1,f_2}}}{\rho} = R_{f_1}$, we have that $\tuplemt{f_1\ v_1',f_1\ v_2'} \in \termrelation{\tau_{f_1}}{\emptyset}$
 and hence, $\tuplemt{f_1\ v_1',f_1\ v_2'} \in \termrelation{\tau_{f_1}}{\rho}$ (note that $\tau_f$ is a closed type).
Therefore, $\tuplemt{f_1(v_1\ \unitVal),f_1(v_2\ \unitVal)} \in \termrelation{\tau_{f_1}}{\rho}$

\item $\obs = H$. We have that $\rho(\alpha_l^{f_1,f_2}) = \tuplemt{\intType,\intType,id_\intType}$.
Therefore $\tuple{v_1',v_2'} \in \valrelation{{ \alpha^{f_1,f_2}}}{\rho} = id_\intType$ and hence, $v_1' = v_2' = v$.
Since \typeEval{f_1\ v:\tau_f}, from Theorem~\ref{thm:multi_level:parametricity}, we have that $\tuplemt{f_1\ v, f_1\ v} \in \termrelation{\tau_{f_1}}{\emptyset}$ and hence, $\tuplemt{f_1\ v, f_1\ v} \in \termrelation{\tau_{f_1}}{\rho}$ (note that $\tau_f$ is a closed type).
Thus, {$\tuplemt{f_1(v_1\ \unitVal), f_1(v_2\ \unitVal)} \in \termrelation{\tau_{f_1}}{\rho}$}.
\end{itemize}
\end{proof}

\begin{lemma}
\label{lem:trni:ext:one:substitution:term:declassification:two}
For any \obs,
\begin{align*}
\tuplemt{\dec{g_1},\dec{g_1}} & \in \indvalp{{(\alpha_{M_1}^{g_1,g_2} \rightarrow \alpha^{g_1,g_2})} \rightarrow \alpha_{L} \rightarrow \tau_{g_1}}{\policymul}\\
\tuplemt{\dec{g_2},\dec{g_2}} & \in \indvalp{{(\alpha_{M_1}^{g_1,g_2} \rightarrow \alpha^{g_1,g_2})} \rightarrow \alpha_{L} \rightarrow \tau_{g_2}}{\policymul}
\end{align*}
\end{lemma}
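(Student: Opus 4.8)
The plan is to prove both membership statements by the template already established for Lemma~\ref{lem:trni:ext:one:substitution:term:declassification:one}, now instantiated to the declassifiers $g_1,g_2$ of the input $\minpl$, whose source level is $M_1$ and whose target level is $L$. I would prove the claim for $\dec{g_1}$ in full and note that the argument for $\dec{g_2}$ is symmetric. Fix an arbitrary $\rho \respectfullp \policymul$. By Def.~\ref{def:local_policy:ind} it suffices to show $\tuplemt{\dec{g_1},\dec{g_1}} \in \valrelation{(\alpha_{M_1}^{g_1,g_2}\rightarrow\alpha^{g_1,g_2})\rightarrow\alpha_L\rightarrow\tau_{g_1}}{\rho}$.

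First I would unfold the outer arrow with rule FR-Fun: take an arbitrary pair $\tuplemt{v_1,v_2}\in\valrelation{\alpha_{M_1}^{g_1,g_2}\rightarrow\alpha^{g_1,g_2}}{\rho}$, reduce $\dec{g_1}\,v_i \reduce \lambda\_:\unitType.g_1(v_i\ \unitVal)$ by the definition of $\dec{g_1}$, and reduce the goal (via FR-Term) to $\tuplemt{\lambda\_:\unitType.g_1(v_1\ \unitVal),\lambda\_:\unitType.g_1(v_2\ \unitVal)}\in\valrelation{\alpha_L\rightarrow\tau_{g_1}}{\rho}$. The key observation is that $L$ is the bottom of $\latticedm$, so $L\smalleq\obs$ holds for every observer and hence $\rho(\alpha_L)=\tuplemt{\unitType,\unitType,\full{\unitType}}$ always; unlike in the general lemma there is no vacuous branch, and FR-Fun forces me to relate the two bodies on the key $\unitVal$. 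Writing $v_i' = v_i\ \unitVal$ and using that $\rho(\alpha_{M_1}^{g_1,g_2})=\tuplemt{\unitType,\unitType,\full{\unitType}}$, the hypothesis on $\tuplemt{v_1,v_2}$ gives $\tuplemt{v_1',v_2'}\in\valrelation{\alpha^{g_1,g_2}}{\rho}$, so it remains to show $\tuplemt{g_1\ v_1',g_1\ v_2'}\in\termrelation{\tau_{g_1}}{\rho}$.

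Next I would case-split on whether $M_1\smalleq\obs$, reading off $\rho(\alpha^{g_1,g_2})$ from the definition of $\rho\respectp\policymul$ in \S\ref{sec:multi_level:extension:local-policy:mult_dec}. When $M_1\smalleq\obs$ the interpretation is $id_\intType$, so $v_1'=v_2'=v$; since $\typeEval{g_1\ v:\tau_{g_1}}$, the abstraction theorem (Theorem~\ref{thm:multi_level:parametricity}) yields $\tuplemt{g_1\ v,g_1\ v}\in\termrelation{\tau_{g_1}}{\emptyset}$. When $M_1\not\smalleq\obs$ the interpretation is $R_{g_1,g_2}$, so $\tuplemt{v_1',v_2'}\in R_{g_1,g_2}$, and the $g_1$-conjunct in the definition $R_{g_1,g_2}=\{(n_1,n_2)\mid (g_1\ n_1,g_1\ n_2)\in\termrelation{\tau_{g_1}}{\emptyset}\wedge (g_2\ n_1,g_2\ n_2)\in\termrelation{\tau_{g_2}}{\emptyset}\}$ gives exactly $\tuplemt{g_1\ v_1',g_1\ v_2'}\in\termrelation{\tau_{g_1}}{\emptyset}$. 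In both branches, because $\tau_{g_1}$ is closed, $\termrelation{\tau_{g_1}}{\emptyset}=\termrelation{\tau_{g_1}}{\rho}$, which closes the goal. For $\dec{g_2}$ the same derivation applies verbatim, extracting the $g_2$-conjunct of $R_{g_1,g_2}$ instead.

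I do not expect a genuine obstacle. The only subtleties worth flagging are that both declassifiers are handled by the single relation $R_{g_1,g_2}$, which bundles relatedness under $g_1$ and $g_2$ simultaneously, so the proof of each part must pick out the correct conjunct; and that, $L$ being the least level, the ``target level unobservable'' escape that discharges the general lemma vacuously is unavailable here, so every observer case is pushed through to the actual computation $g_i(v\ \unitVal)$.
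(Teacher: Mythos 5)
Your proof is correct and follows essentially the same route as the paper's: the same unfolding via FR-Fun, the same observation that $L \smalleq \obs$ for every observer so no vacuous branch is available, the same case split on $M_1 \smalleq \obs$ with the abstraction theorem handling the $id_\intType$ case and the appropriate conjunct of $R_{g_1,g_2}$ handling the other, and the same closing remark that $\tau_{g_i}$ is closed. No gaps.
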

\begin{proof}
We only prove the first part. The proof of the second part is similar.

Let $\rho$ be an environment s.t. $\rho \respectfullp \policymul$.
From the definition of indistinguishability, we need to prove that:
$$\tuplemt{\dec{g_1},\dec{g_1}} \in \valrelation{{\alpha_{M_1}^{g_1,g_2} \rightarrow \alpha^{g_1,g_2}} \rightarrow \alpha_{L} \rightarrow \tau_{g_1}}{\rho}.$$

That is for any $\tuplemt{v_1,v_2} \in \valrelation{{\alpha_{M_1}^{g_1,g_2} \rightarrow \alpha^{g_1,g_2}}}{\rho}$, $\tuplemt{\dec{g_1}\ v_1,\dec{g_1}\ v_2} \in \termrelation{\alpha_{L} \rightarrow \tau_{g_1}}{\rho}$.
From the definition of \dec{g_1} and the logical relation, we need to prove that
$$\tuplemt{\lambda \_:\unitType.g_1(v_1\ \unitVal),\lambda \_:\unitType.g_1(v_2\ \unitVal)} \in \valrelation{\alpha_{L} \rightarrow \tau_{g_1}}{\rho}$$

Let $v_i'$ s.t. $v_i\ \unitVal \reduce v_i'$.
We need to prove that:
$$\tuplemt{\lambda \_:\unitType.g_1(v_1'),\lambda \_:\unitType.g_1(v_2')} \in \valrelation{\alpha_{L} \rightarrow \tau_{g_1}}{\rho}$$

For all $\obs$, we have that $L \smalleq \obs$ and hence, $\rho(\alpha_{L}) = \tuplemt{\unitType,\unitType,\full{\unitType}}$.
Therefore we need to prove that
$$\tuplemt{g_1(v_1'),g_1(v_2')} \in \termrelation{\tau_{g_1}}{\rho}$$

Note that $\rho(\alpha_{M_1}^{g_1,g_2}) = \tuplemt{\unitType,\unitType,\full{\unitType}}$.
Since $\tuplemt{v_1,v_2} \in \valrelation{{\alpha_{M_1}^{g_1,g_2} \rightarrow \alpha^{g_1,g_2}}}{\rho}$, we have that 
$\tuplemt{v_1',v_2'} \in \valrelation{{\alpha^{g_1,g_2}}}{\rho}$.

We have two sub-cases:
\begin{itemize}
\item $M_1 \not\smalleq \obs$ (i.e. $\obs = L$ or $\obs = M_2$). We have that $\rho(\alpha^{g_1,g_2}) = \tuplemt{\intType,\intType,R_{g_1,g_2}}$.
Since $\tuple{v_1',v_2'} \in \valrelation{\alpha^{g_1,g_2}}{\rho} = R_{g_1,g_2}$, we have that $\tuplemt{g_1\ v_1',g_1\ v_2'} \in \termrelation{\tau_{g_1}}{\emptyset}$ and hence, $\tuplemt{g_1\ v_1', g_1\ v_2'} \in \termrelation{\tau_{g_1}}{\rho}$ (note that $\tau_{g_1}$ is a closed type).

\item $M_1 \smalleq \obs$. We have that $\rho(\alpha^{g_1,g_2}) = \tuplemt{\intType,\intType,id_\intType}$.
Therefore $\tuple{v_1',v_2'} \in id_\intType$ and hence, $v_1' = v_2' = v$.
Since \typeEval{g_1\ v:\tau_f}, from Theorem~\ref{thm:multi_level:parametricity}, we have that $\tuplemt{g_1\ v, g_1\ v} \in \termrelation{\tau_{g_1}}{\emptyset}$ and hence, $\tuplemt{g_1\ v, g_1\ v} \in \termrelation{\tau_{g_1}}{\rho}$ (note that $\tau_{g_1}$ is a closed type).
\end{itemize}
\end{proof}

\begin{theorem}
If \typeEval[\tcontextp,\contextp]{e:\tau}, then for any {$\obs \in \latticedm$} and $\rho \respectfullp \policymul$, 
$$\tuplemt{\rho_L(e),\rho_R(e)} \in \indtermp{\tau}{\policymul}.$$
\end{theorem}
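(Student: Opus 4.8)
The plan is to prove this exactly as the master free theorem Theorem~\ref{thm:trni_free} was proved, since the statement is its verbatim analogue for the richer policy $\policymul$: the only real content is packaged into the three self-relatedness lemmas for the interface and declassifier implementations, and once those are available the theorem reduces to a single application of the abstraction theorem. The overall shape is a two-move argument. First I would show that any full environment $\rho \respectfullp \policymul$ is in fact a relational substitution respecting the two contexts of Fig.~\ref{fig:encoding:ext:multi_dec}, i.e.\ $\rho \respect \tcontextp[\policymul],\contextp[\policymul]$. Second I would invoke parametricity and unfold the definition of indistinguishability.

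For the first move I would check the two clauses of the definition of $\rho \respect \Delta,\Gamma$. The type-variable clause, $\restrict{\rho} \respect \tcontextp[\policymul]$, is immediate from $\restrict{\rho} \respectp \policymul$ together with the fact that $\dom{\rho}$ restricted to type variables is exactly $\tcontextp[\policymul]$. The term-variable clause requires that every $x \in \dom{\contextp[\policymul]}$ satisfies $\rho(x) \in \valrelation{\contextp[\policymul](x)}{\rho}$. For the plain input variables and for the declassifiable inputs $\hinp$ and $\minpl$ this is built directly into the definition of $\respectfullp$; for the monadic interfaces $\comp{l}$, $\convup{l}{l'}$, $\wrap{l}$ and the conversion interfaces $\conv{\hinp}$, $\conv{\minpl}$ it is exactly Lemma~\ref{lem:trni:ext:one:substitution:term:interface}; for the declassifier variables $\hinp_{f_1}, \hinp_{f_2}$ it is Lemma~\ref{lem:trni:ext:one:substitution:term:declassification:one}, and for $\minpl_{g_1}, \minpl_{g_2}$ it is Lemma~\ref{lem:trni:ext:one:substitution:term:declassification:two}. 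Collecting these cases yields $\rho \respect \tcontextp[\policymul],\contextp[\policymul]$.

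For the second move, from $\typeEval[\tcontextp[\policymul],\contextp[\policymul]]{e:\tau}$ and Theorem~\ref{thm:multi_level:parametricity} I would obtain $\tuplemt{\rho_L(e),\rho_R(e)} \in \termrelation{\tau}{\rho}$; then, because $\restrict{\rho}$ is the environment for $\policymul$ w.r.t.\ $\obs$, Def.~\ref{def:local_policy:ind} identifies $\termrelation{\tau}{\rho}$ with $\indtermp{\tau}{\policymul}$, giving the claim.

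The hard part is not this theorem but the self-relatedness lemmas feeding the first move, and in particular the declassifier cases in Lemmas~\ref{lem:trni:ext:one:substitution:term:declassification:one} and~\ref{lem:trni:ext:one:substitution:term:declassification:two}. There one must show that $\dec{f_i}$ (resp.\ $\dec{g_i}$) sends inputs related at $\alpha^{f_1,f_2}$ (resp.\ $\alpha^{g_1,g_2}$) to outputs related at $\tau_{f_i}$ (resp.\ $\tau_{g_i}$) for every observer. This works precisely because the interpretations $\rho(\alpha^{f_1,f_2})$ and $\rho(\alpha^{g_1,g_2})$ are chosen to be the pullbacks $R_{f_i}$ and $R_{g_1,g_2}$ of the logical relation along the declassifiers at exactly the observer levels that can see the declassified data, and to collapse to $id_\intType$ when the observer can see the raw input---in which case the two inputs coincide and relatedness of the outputs follows from parametricity of the declassifier itself. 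Verifying that the observer-dependent split of Def.~\ref{def:local_policy:env} matches the lattice constraints (e.g.\ $L \smalleq M_2$, so that $\alpha^{g_1,g_2}$ must already be interpreted by $R_{g_1,g_2}$ at $M_2$) is the only genuinely delicate bookkeeping; the present theorem then follows as a direct corollary.
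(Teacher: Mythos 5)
Your proposal is correct and follows essentially the same route as the paper: establish $\rho \respect \tcontextp[\policymul],\contextp[\policymul]$ from the definition of $\respectfullp$ together with Lemmas~\ref{lem:trni:ext:one:substitution:term:interface}, \ref{lem:trni:ext:one:substitution:term:declassification:one}, and \ref{lem:trni:ext:one:substitution:term:declassification:two}, then apply Theorem~\ref{thm:multi_level:parametricity} and unfold the definition of indistinguishability. Your closing remarks correctly locate the real work in the self-relatedness lemmas rather than in this theorem itself.
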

\begin{proof}
Since $\rho \respectfullp \policy$, from the definition of $\rho \respectfullp \policymul$, Lemma~\ref{lem:trni:ext:one:substitution:term:interface}, Lemma~\ref{lem:trni:ext:one:substitution:term:declassification:one}, and Lemma~\ref{lem:trni:ext:one:substitution:term:declassification:two}, it follows that $\rho \respectp \tcontextp[\policymul], \contextp[\policymul]$.
Since \typeEval[\Delta_\policymul,\Gamma_\policymul]{e:\tau}, from Theorem~\ref{thm:multi_level:parametricity}, we have that $\tuplemt{\rho_L(e),\rho_R(e)} \in \termrelation{\tau}{\rho}$.
From the definition of indistinguishability, it follows that $\tuplemt{\rho_L(e),\rho_R(e)} \in \indtermp{\tau}{\policymul}$.
\end{proof}

\subsubsection{Proofs for the extension for global policies}
{The proof of the main result for this extension (in Section~\ref{sec:multi_level:extension:global}) is similar to the one in Section~\ref{sec:multi_level:extension:local-policy:mult_dec}.
Here, we only prove that \dec{f} relates to itself}.

\begin{lemma}
\label{lem:trni:glb:substitution:term:declassification}
For any \obs,
\begin{align*}
\tuplemt{\dec{f},\dec{f}} & \in \indvalp{{(\alpha_{M_1,M_2}^f \rightarrow \alpha^{f})} \rightarrow \alpha_{L} \rightarrow \intType}{\policyglb}
\end{align*}
\end{lemma}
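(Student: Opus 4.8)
The plan is to mirror the proof of Lemma~\ref{lem:trni:substitution:term:declassification}, specialized to the global policy $\policyglb$ in which the declassifier $f$ operates on the paired secret of type $\intType\times\intType$, the target level is $L$ (the least element of $\latticedm$, so $L\smalleq\obs$ for every observer), and the payload type is $\tau_f=\intType$. Fix an observer $\obs$ and an environment $\rho\respectp\policyglb$; by Definition~\ref{def:local_policy:ind} it suffices to show $\tuplemt{\dec{f},\dec{f}}\in\valrelation{(\alpha_{M_1,M_2}^f\rightarrow\alpha^f)\rightarrow\alpha_L\rightarrow\intType}{\rho}$. Applying rule FR-Fun, I would take an arbitrary pair $\tuplemt{v_1,v_2}\in\valrelation{\alpha_{M_1,M_2}^f\rightarrow\alpha^f}{\rho}$ and reduce the goal to $\tuplemt{\dec{f}\ v_1,\dec{f}\ v_2}\in\termrelation{\alpha_L\rightarrow\intType}{\rho}$. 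Unfolding the definition of $\dec{f}$, each side evaluates to $\lambda \_:\unitType.f(v_i\ \unitVal)$, so the goal becomes $\tuplemt{\lambda \_:\unitType.f(v_1\ \unitVal),\lambda \_:\unitType.f(v_2\ \unitVal)}\in\valrelation{\alpha_L\rightarrow\intType}{\rho}$.

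Since $L$ is the least element of $\latticedm$ we have $L\smalleq\obs$, hence $\rho(\alpha_L)=\tuplemt{\unitType,\unitType,\full{\unitType}}$; a second use of FR-Fun together with $\full{\unitType}=\{\tuplemt{\unitVal,\unitVal}\}$ reduces the goal to $\tuplemt{f(v_1\ \unitVal),f(v_2\ \unitVal)}\in\termrelation{\intType}{\rho}$. To exploit the hypothesis on $\tuplemt{v_1,v_2}$, I would observe that $\rho(\alpha_{M_1,M_2}^f)=\tuplemt{\unitType,\unitType,\full{\unitType}}$, so feeding the related keys $\tuplemt{\unitVal,\unitVal}$ yields $\tuplemt{v_1\ \unitVal,v_2\ \unitVal}\in\termrelation{\alpha^f}{\rho}$. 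Writing $v_i'$ for the value of $v_i\ \unitVal$, this says $\tuplemt{v_1',v_2'}\in\valrelation{\alpha^f}{\rho}$, and everything now hinges on the interpretation $\rho(\alpha^f)$.

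The final step is a case split on the observer. If $\obs=H$ then $\rho(\alpha^f)=\tuplemt{\intType\times\intType,\intType\times\intType,id_{\intType\times\intType}}$, hence $v_1'=v_2'=v$ for a common tuple $v$; since $\typeEval{f\ v:\intType}$, the abstraction theorem (Theorem~\ref{thm:multi_level:parametricity}) gives $\tuplemt{f\ v,f\ v}\in\termrelation{\intType}{\emptyset}$, which is also $\termrelation{\intType}{\rho}$ because $\intType$ is closed. If $\obs\neq H$ then $\rho(\alpha^f)=\tuplemt{\intType\times\intType,\intType\times\intType,\relDec}$, so $\tuplemt{v_1',v_2'}\in\relDec$, which by definition of $\relDec$ means precisely $\tuplemt{f\ v_1',f\ v_2'}\in\termrelation{\intType}{\emptyset}=\termrelation{\intType}{\rho}$. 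In either case $\tuplemt{f(v_1\ \unitVal),f(v_2\ \unitVal)}\in\termrelation{\intType}{\rho}$, which closes the argument.

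I do not expect a serious obstacle, since the structure is identical to the local-policy declassifier lemma; the only points requiring care are bookkeeping ones. First, I must check that the consistency condition linking the inputs $\minpl,\minpr$ to the aggregate variable $y$ plays no role here — it does not, because $\dec{f}$ is applied to an already-related pair $\tuplemt{v_1,v_2}$ standing for $y$, and consistency is only needed to relate $y$ to the individual inputs, which is irrelevant to relating $\dec{f}$ to itself. Second, I must track that the payload type of the secret is $\intType\times\intType$ rather than $\intType$ when invoking the two interpretations of $\alpha^f$, so that $\relDec$ — defined via $f$ applied to a full tuple — matches the reduced goal exactly; this is the single place where the global encoding genuinely differs from the local one.
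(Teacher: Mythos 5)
Your proposal is correct and follows essentially the same route as the paper's proof: unfold $\dec{f}$, use $L\smalleq\obs$ to reduce to relating $f(v_1\,\unitVal)$ and $f(v_2\,\unitVal)$ at \intType, extract $\tuplemt{v_1',v_2'}\in\valrelation{\alpha^f}{\rho}$ via the full-relation key, and case-split on $\obs=H$ (identity on $\intType\times\intType$) versus $\obs\neq H$ (the relation $\relDec$, which is definitionally what is needed). Your side remarks about the consistency condition being irrelevant here and about the payload type being $\intType\times\intType$ are accurate but not needed.
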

\begin{proof}
Let $\rho$ be an environment s.t. $\rho \respectfullp \policyglb$.
From the definition of indistinguishability, we need to prove that:
$$\tuplemt{\dec{f_1},\dec{f_1}} \in \valrelation{{\alpha_{M_1,M_2}^f \rightarrow \alpha^f)} \rightarrow \alpha_{L} \rightarrow \intType}{\rho}.$$

That is for any $\tuplemt{v_1,v_2} \in \valrelation{{\alpha_{M_1,M_2}^f \rightarrow \alpha^f}}{\rho}$, $\tuplemt{\dec{f}\ v_1,\dec{f}\ v_2} \in \termrelation{\alpha_{L} \rightarrow \intType}{\rho}$.
From the definition of \dec{f}, we need to prove that
$$\tuplemt{\lambda \_:\unitType.f(v_1\ \unitType),\lambda \_:\unitType.f(v_2\ \unitType)} \in \valrelation{\alpha_{L} \rightarrow \intType}{\rho}$$

Let $v_i'$ be s.t. $v_i\ \unitVal \reduce v_i'$.
We need to prove that:
$$\tuplemt{\lambda \_:\unitType.f(v_1'),\lambda \_:\unitType.f(v_2')} \in \valrelation{\alpha_{L} \rightarrow \intType}{\rho}$$

Since $L$ is the smallest element in the lattice, for all \obs, $\rho(\alpha_L) = \full{\unitType}$.
Thus, we need to prove that $f\ v_1' = f\ v_2'$.

Note that $\rho(\alpha_{M_1,M_2}^f) = \tuplemt{\unitType,\unitType,\full{\unitType}}$.
Since  $\tuplemt{v_1,v_2} \in \valrelation{{\alpha_{M_1,M_2}^f \rightarrow \alpha^f}}{\rho}$, we have that $\tuplemt{v_1',v_2'} \in \valrelation{{\alpha^f}}{\rho}$.

\begin{itemize}
\item $\obs = H$.
We have that $\rho(\alpha^f) = \tuplemt{\intType\times \intType,\intType\times \intType,id_{\intType\times\intType}}$.
Therefore,  we have that $v_1' = v_2'$ and hence, $f\ v_1' = f\ v_2'$. 

\item $\obs \in \{M_1,M_2,L\}$. 
We have that $\rho(\alpha_f) = \tuplemt{\intType\times \intType,\intType\times \intType, {\relDec} }$, where 
$$\relDec = \{\tuplemt{v_1,v_2} \sep \tuplemt{f\ v_1, f\ v_2} \in \termrelation{\intType}{\emptyset} \}$$

Since $\tuplemt{v_1',v_2'} \in \valrelation{{\alpha^f}}{\rho} = \relDec$, from the definition of \termrelation{\intType}{\emptyset} we have that $f\ v_1' = f\ v_2'$.
\end{itemize}
\end{proof}

\end{document}